\documentclass[11pt]{article}

\usepackage{amsopn}
\usepackage{amsfonts} 
\usepackage{amsmath}
\usepackage{amsthm}
\usepackage{amssymb} 
\usepackage{amscd}
\usepackage{xspace}
\usepackage{fullpage}
\usepackage[numbers]{natbib}
\usepackage{hyperref}
\hypersetup{pdfborder=0 0 0}


\newcommand{\N}{\ensuremath{\mathbb{N}}\xspace}
\newcommand{\init}{{\rm init}}
\newcommand{\final}{{\rm final}}

\newcommand{\pref}{{\rm pref}}
\newcommand{\suff}{{\rm suff}}
\newcommand{\rect}{{\rm Rect}}
\newcommand{\first}{{\rm first}}
\newcommand{\last}{{\rm last}}
\newcommand{\leftPart}{{\rm Left}}
\newcommand{\rightPart}{{\rm Right}}
\newcommand{\ca}{\textcolor{red}{a}}
\newcommand{\cb}{\textcolor{blue}{b}}

\newcommand{\pmat}{\text{pMat}}
\newcommand{\parikhMat}{\text{ParMat}}
\newcommand{\partition}{\text{Part}}

\theoremstyle{definition}
\newtheorem{definition}{Definition}[section]
\newtheorem{theorem}[definition]{Theorem}
\newtheorem{corollary}[definition]{Corollary}
\newtheorem{fact}[definition]{Fact}
\newtheorem{proposition}[definition]{Proposition}
\newtheorem{remark}[definition]{Remark}
\newtheorem{lemma}[definition]{Lemma}

\newcommand{\cerny}{\v{C}ern\'y}
\usepackage{tikz}
\usetikzlibrary{arrows,automata}

\begin{document}
\title{On some $2$-binomial coefficients of binary words: geometrical interpretation, partitions of integers, and fair words}
\author{G. Richomme}
\author{Gwenaël Richomme\\
LIRMM, Université de Montpellier, CNRS, Montpellier, France\\
and Groupe AMIS, Université de Montpellier Paul-Valéry, Montpellier, France}

\date{\today}
\maketitle

\begin{abstract}
The binomial notation $\binom{w}{u}$ represents the number of occurrences of the word $u$ as a (scattered) subword in $w$.
We first introduce and study possible uses of
a geometrical interpretation of $\binom{w}{ab}$
and $\binom{w}{ba}$ when $a$ and $b$ are distinct letters.
We then study the structure of 
the $2$-binomial equivalence class of a binary word $w$ (two words are $2$-binomially equivalent if
they have the same binomial coefficients, that is, the same numbers of occurrences, for each word of length at most $2$). 
Especially we prove the existence of an isomorphism between
the graph of the $2$-binomial equivalence class of $w$ with respect to a particular
rewriting rule and the lattice of partitions of the integer $\binom{w}{ab}$
with $\binom{w}{a}$ parts and greatest part bounded by $\binom{w}{b}$.
Finally we study binary fair words, the words over $\{a, b\}$ having
the same numbers of occurrences of $ab$ and $ba$
as subwords ($\binom{w}{ab}=\binom{w}{ba}$). In particular, we prove a recent conjecture related to a special case of the least square approximation.
\end{abstract}

\noindent
\textbf{Keywords}: Combinatorics on words; subwords; binomial equivalence; partition of an integer; fair words;
least square approximation

\tableofcontents

\section{Introduction}

A word $u$ is a (scattered) subword of a word $w$
if it occurs as a subsequence of $w$.
For instance, the word ``see" is a subsequence
of the word ``subsequence" = ``s$\cdot$ubs$\cdot$e$\cdot$qu$\cdot$e$\cdot$nce". Actually the word ``see" appears in 6 different ways as a subsequence of the word ``subsequence". 
The binomial notation $\binom{w}{u}$, 
that we call a \textit{binomial coefficient of words},  
represents the number of occurrences of the word $u$ as a subword in $w$: $\binom{subsequence}{see}=6$. 
Basic properties of these coefficients can be found, for instance, in \cite[Chapter 6]{Lothaire1983book}.
Subwords and their numbers of occurrences
are central in many studies. See for instance 
\cite{Fleischmann_Haschk_Hofer_Huch_Mayrock_Nowotka2023TCS,
Leroy_Rigo_Stipulanti2016AAM,
Lothaire1983book,
Lu_Chen_Wen_Wu2024TCS,
Manvel_Meyerowitz_Schwenk_Smith_Stockmeyer1991DM,
Mateescu_Salomaa_Salomaa_Yu2001TIA,
Pin_Silva2014EJC,
Prodinger1979DM,
Richomme_Rosenfeld2023STACS,
Rigo_Stipulanti_Whiteland2024EJC}
and their references therein.
Some generalizations have recently been considered (see \cite{Renard_Rigo_Whiteland2025JAC,Renard_Rigo_Whiteland2025DM}, \cite{Golm_Nahvi_Gabrys_Milenkovic2022ISIT,Rigo_Stipulanti_Whiteland2024ISIT}
and \cite{Golafshan_Rigo2025words}).

Given an integer $k$,
two words $u$ and $v$ are $k$-\textit{binomially equivalent}
if they have the same number of occurrences for all subwords of length at most $k$. 
Terminology ``$k$-binomial equivalence"
was introduced by M.~Rigo and P.~Salimov in 2015 \cite{Rigo_Salimov2015TCS}.
In their ``acknowledgments" part, 
they attribute 
to M.~Rigo and J.-E.~Pin the independent idea of this binomial equivalence.
The $k$-binomial equivalence is denoted $\sim_k$ in this paper.
This relation should not be confused with the Simon's congruence
also often denoted $\sim_k$ and also related to subwords:
two words $u$ and $v$ are equivalent for the Simon's congruence
if they have the same set of subwords of length at most $k$ 
(the number of occurrences of the subwords is not considered).
This congruence is often considered in reconstruction problems
(see, for instance, \cite{Dudik_Schulman2003JCTA,
Lothaire1983book,
Manvel_Meyerowitz_Schwenk_Smith_Stockmeyer1991DM,
Richomme_Rosenfeld2023STACS,Salomaa2003BEATCS} and their references therein).

It may be observed 
that any two words in an equivalence class of the relation
$\sim_k$ have the same lengths. In \cite{Rigo_Salimov2015TCS}, 
the number of classes in 
$A^n/\!\!\sim_2$ is given and for arbitrary integer $k$
the growth of the number of classes in 
$A^n/\!\!\sim_k$ is estimated. This paper has also introduced the notion of binomial complexity and was the starting point of some studies \cite{Lejeune_Leroy_Rigo2020JCTA,
Lu_Chen_Wen_Wu2024TCS,
Rigo_Stipulanti_Whiteland2024EJC}.

Among results on binomial equivalence,
let us mentioned that M.~Lejeune, M.~Rigo and M.~Rosenfeld \cite{Lejeune_Rigo_Rosenfeld2020IJAC} have proved that, given any alphabet $A$, the monoid
$A^*/\!\!\sim_2$ is 
isomorphic to the submonoid, generated by $A$ of the nil-$2$ group.
Even if the terminology ``binomial equivalence" was introduced in 2015, this relation was previously studied in particular cases and
some results concerning the $2$-binomial equivalence 
have been published earlier.
For instance, in 2003,  M.~Dudik and L.~J~Schulman 
have already denoted it $\sim_k$ in \cite{Dudik_Schulman2003JCTA} without naming it.
In 2001, 
A.~Mateescu, A.~Salomaa, J.~Salomaa and S.~Yu 
introduced the notion of Parikh matrix.
This data structure allows to store some binomial coefficients of a word.
These coefficients depend on the considered alphabet.
In the binary case, 
the matrices stores three coefficients that are sufficient to know
all the number of occurrences of subwords of length $2$.
A lot of studies around Parikh matrices
concern the study of words having a same given matrix. 
In this context, two words are Parikh-equivalent if they have the
same Parikh matrix. In the binary case,
two words are Parikh equivalent if and only if
they are $2$-binomially equivalent words.
Hence in the binary case, any result on Parikh equivalence (see, for instance,
\cite{Atanasiu_Martin-Vide_Mateescu2002FI,Fosse_Richomme2004IPL})
is a result 
on $2$-binomial equivalence. 
Maybe the first result on this vein is Theorem~2 in \cite{Prodinger1979DM} (see Theorem~\ref{T_Prodinger_syntactic_congruence_extended}), a 1979 paper in which the Parikh matrix appears, without being named and in the binary case only, more than twenty years before the independent introduction of this notion. In the Parikh matrices literature, 
this paper seems to be mentioned only in 2008 by A.~\cerny\ \cite{Cerny2008IJFCS}.
Concerning the $k$-binomial equivalence,
it is worth observing that in \cite{Cerny2009JALC} A.\cerny\ has introduced
the notion of precedence matrix that stores all the information needed to 
know all the number of occurrences of subwords of length $2$ for an arbitrary alphabet.

This paper of A.\cerny\ \cite{Cerny2009JALC}, written in 2006 and published in 2009, also introduced the notion of \textit{fair words} as the words having the same numbers of occurrences of the subwords $ab$ and $ba$ for all distinct letters $a$ and $b$. He made a conjecture about the number of fair words of length $n$.
Actually a 1979 result by H.~Prodinger \cite{Prodinger1979DM} has provided a formula for the growth of binary fair words, showing that the A.~{\cerny} is false. A.~{\cerny} himself cited
this Prodinger's article in the paper \cite{Cerny2008IJFCS}, written in 2007 and published in 2008. 

\medskip

\textbf{Contributions of this paper}

In \cite{Prodinger1979DM}, H.~Prodinger's explains that 
his approach of fair words 
was motivated by generalizing some aspects of the extended Dyck Language. 
Since Dyck words are usually graphically represented, 
the natural question of the existence of a graphical representation for fair words, and more
generally for some binomial coefficient of words, raises.
Although it seems to be unknown, 
considering a usual graphical representation of a binary word $w$,
the coefficients $\binom{w}{ab}$ and $\binom{w}{ba}$ appear 
to correspond to the areas of two complementary parts of the rectangle
of width $\binom{w}{a}$ (the number of $a$ occurring in $w$) and of height
$\binom{w}{b}$. After recalling in Section~\ref{subsec_bases} this usual graphical representation of a binary word,
Section~\ref{subsec_Interpret} presents the interpretation
of $\binom{w}{ab}$ and $\binom{w}{ba}$ for words $w$ over arbitrary alphabets.
Using a link established in
\cite{Fosse_Richomme2004IPL} 
between the number $\binom{w}{ab}$ of occurrences
of the subword $ab$ in a binary word and the sum
of position of the occurrences of the letter $b$ in $w$,
an interpretation of this sum is provided
using another graphical representation of a binary word.

The aim of Section~\ref{sec_binom_equiv} is to provide examples
of uses of the first geometrical interpretation of $\binom{w}{ab}$.
This is done interpreting some
formulas and a known result, initially stated in the context of
Parikh matrices, 
that characterizes the $2$-binomial equivalence by
another equivalence based on a rewriting rule on words.
Section~\ref{subsec_def_binom_equiv} recalls the definition
of the $k$-binomial equivalence and basic properties.
Section~\ref{subsec_storing} explains with further details
the natural links existing between the
$2$-binomial equivalence and the Parikh and precedence matrices.
In Section~\ref{subsec_interpret_char}, we recall the characterization of the $2$-binomial equivalence 
of binary words using a rewriting rule. We explain how this rule works 
graphically and how the characterization can be proved graphically.
We end formalizing this new proof.

Section~\ref{secPartitions}
considers the structure of equivalence classes for the 
$2$-binomial equivalence in the binary case.
First, in Section~\ref{subsec_first_lattice},
this structure is analysed using the relation induced
by the rewriting rule mentioned in the previous section.
It is proved that
the graph of the relation is a lattice whose
least and greatest elements are characterized.
Remark~\ref{rem_bibli_init_final} presents two previous
uses of these natural representants of $2$-binomial equivalence classes.
In Section~\ref{subsec_bij_lattice_integers}, 
we exhibit another relation using a rewriting rule.
The graph of this relation for the
$2$-binomial equivalence class of a word $w$
is isomorphic to a part of the lattice of the partition
of the integer $\binom{w}{ab}$.
It is worth noting that without this structural aspect,
a link between elements of a $2$-binomial equivalence classes and partitions of integers have already been observed
\cite{Mateescu_Salomaa2004IJFCS,
Mathew_Thomas_Bera_Subramanian2019AAM,
Teh_Kwa2015TCS,Teh2015IJFCS}).

Section \ref{sec_fair_words}
is dedicated to the study of fair words mostly in the binary case.
Graphically a fair word $w$ corresponds
to a word whose graphical representation divides the rectangle of width $\binom{w}{a}$ and height $\binom{w}{b}$ in two parts of same area. A fair word is also a word such that
$\binom{w}{ab} = \frac{1}{2}\binom{w}{a}\binom{w}{b}$.
In Section~\ref{subsec_fair_words_base}, 
basic notions on fair words are presented. These leads
to see that a word is fair if and only the least and greatest elements of the previous lattice is a palindrome.
A new characterization 
of the $2$-binomial equivalence is obtained using 
rewriting rules inspired by palindromic amiability \cite{Atanasiu_Martin-Vide_Mateescu2002FI} but 
replacing palindrome with fair words.
Section~\ref{subsec_language_properties}
extends to arbitrary alphabets a
Prodinger's characterization of the $2$-binomial equivalence
originally stated for binary alphabet.
Section~\ref{subsecSturm}
considers fair balanced words.
Balanced words are factors of Sturmian words 
that are known to correspond to digitalized straight lines. 
Hence balanced words represents segments.
A natural question is thus
the existence of fair balanced words:
their natural graphical representation would correspond
to a segment digitalization separating the rectangle of the
representation in two parts of same area.
Answering positively, we show that
for any fair word $w$, there exists a $2$-binomially
equivalent balanced word and, moreover, we prove that
a balanced word is fair if and only if it is a palindrome.
Section~\ref{subsec_number_fair_words}, the last part
of Section~\ref{sec_fair_words}, is concerned
by the numbers of fair words of each length.
The first values of these numbers have been
computed by A.~\cerny\ \cite{Cerny2009JALC}. 
This sequence is already known in the 
On-Line Encyclopedia of Integers \cite{OEISA222955} (sequence A222955)
but enumerating another family of words related to
the least squares optimization method. We prove that this
family of words is also the family of fair words. 
Using once again the connection between $\binom{w}{ab}$
and the sum of position of $b$s in $w$, we also solve a
2023 conjecture presented in \cite{OEISA222955}.

In Section~\ref{sec_conclusion}, 
we conclude. 
First, we summarize all characterizations of
the $2$-binomial equivalence recalled or proved in the paper.
Secondly, we present several questions arising from the observation that fair words may be considered as a generalization
of palindromes since palindromes are fair.

\section{\label{secInterpret}Geometrical interpretations of two binomial coefficients of words}

\subsection{\label{subsec_bases}Basic notions and a geometrical representation of a binary word}

We assume that readers are familiar with Combinatorics on Words (see for instance \cite{Lothaire1983book,Lothaire2002book})
but we need to specify some notation.
Let $w = w_1 \cdots w_n$ be a word over an alphabet $A$: 
$w_i \in A$ for $1 \leq i \leq n$.
The length of $w$, denoted $|w|$, is the integer $n$.
The \textit{empty word}, denoted $\varepsilon$, is the word of length $0$.
The \textit{mirror image} or \textit{reverse} of $w$ is the word
$\widetilde{w} = w_n \cdots w_1$.

A word $u$ is a \textit{factor} of a word $w$ if there exist words $p$ and $s$ such
that $w = pus$. 
If $p = \varepsilon$, $u$ is a \textit{prefix} of $w$.
If $s = \varepsilon$, $u$ is a \textit{suffix} of $w$.
If $u \neq w$, $u$ is a \textit{proper} factor, prefix or suffix of $w$.
For any integer $i$, $0 \leq i \leq |w|$, 
let $\pref_{i}(w)$ be the prefix of length $i$ of $w$ 
and let $\suff_{i}(w)$ be the prefix of length $i$ of $w$:
$\pref_0(w) = \suff_0(w) = \varepsilon$ and, for $i \geq 1$,
$\pref_i(w) = w_1 \cdots w_i$, 
$\suff_i(w) = w_{|w|-i+1}\cdots w_{|w|}$.

For any set $S$, $\#S$ denotes its cardinality.
Let $A$ be an alphabet.
Given a letter $\alpha \in A$, 
$|w|_\alpha$ denotes the number of occurrences of $a$ in $w$ that is
$\#\{ i \mid w_i = \alpha, 1 \leq i \leq n\}$.
Given a letter $\alpha \in A$
and an integer $i$,  $1 \leq i \leq |w|_\alpha$,
let $\pref_{i,\alpha}(w)$ denote the prefix $p$
of $w$ ending with $\alpha$ and containing 
exactly $i$ occurrences of $\alpha$ ($|p|_\alpha = i$).
Also let $\suff_{i,\alpha}(w)$ denote the suffix $s$
of $w$ beginning with $\alpha$ and containing 
exactly $i$ occurrences of $\alpha$ ($|s|_\alpha = i$).
For instance,
for $w = aabaaba$, $\pref_{2,a}(w) = aa$
and $\suff_{2,a}(w) = aba$.

Let $A = \{a_1, \ldots, a_k\}$ be a $k$-letter alphabet.
Assume $A$ is ordered with $a_1 < a_2 < \cdots < a_k$.
For any word $w$, the \textit{Parikh vector} of $w$ with respect to this order
is the $k$-uple $(|w|_{a_1}, \ldots, |w|_{a_k})$, denoted $\Psi(w)$.
In the rest of the paper, we will essentially
consider binary alphabets $\{a, b\}$ and $\{0, 1\}$
with the natural orders $a < b$ and $0 < 1$.
The lexicographical order on $A^*$ that extends the order $<$ will be denoted $<_{lex}$.

The sequence $(\Psi(\pref_i(w))_{0 \leq i \leq |w|}$ 
can be interpreted as a sequence of points in the grid $\N^k$ and,
joining the neighbour points, 
the sequence $((\Psi(\pref_{i-1}(w)),\Psi(\pref_{i}(w)))_{1 \leq i \leq |w|}$
is a usual representation of $w$ has a broken line that we call the \textit{line representation} of $w$.
Figure~\ref{fig_aabaabbababba}
provides this line representation of the binary word $aabaabbababba$.
The letter $a$ is represented by an horizontal segment and the letter $b$ is represented by a
vertical line. The line representation of $w$ joins the point $(0,0)$ to the point $(|w|_a, |w|_b) = (7,6)$ in the grid.

\begin{figure}[!ht]
\begin{center}
\begin{tikzpicture}[scale=0.8]
\begin{scope}
\draw [dashed] (0,0) grid (7,6);
\draw [red] (0,0) -- (1,0) node[midway,below]{a};
\draw [red] (1,0) -- (2,0) node[midway,below]{a};
\draw [blue] (2,0) -- (2,1) node[midway,right]{b};
\draw [red] (2,1) -- (3,1) node[midway,below]{a};
\draw [red] (3,1) -- (4,1) node[midway,below]{a};
\draw [blue] (4,1) -- (4,2) node[midway,right]{b};
\draw [blue] (4,2) -- (4,3) node[midway,right]{b};
\draw [red] (4,3) -- (5,3) node[midway,below]{a};
\draw [blue] (5,3) -- (5,4) node[midway,right]{b};
\draw [red] (5,4) -- (6,4) node[midway,below]{a};
\draw [blue] (6,4) -- (6,5) node[midway,right]{b};
\draw [blue] (6,5) -- (6,6) node[midway,right]{b};
\draw [red] (6,6) -- (7,6) node[midway,below]{a};
\end{scope}
\end{tikzpicture}
\end{center}
\caption{\label{fig_aabaabbababba}word \ca\ca\cb\ca\ca\cb\cb\ca\cb\ca\cb\cb\ca}
\end{figure}

\subsection{\label{subsec_Interpret}Geometrical representations of 
\texorpdfstring{$\binom{w}{ab}$ and $\binom{w}{ba}$}{two binomial coefficients} }

A word $u = u_1 \cdots u_k$ (with $u_i$ a letter for each $i$, $1 \leq i \leq k$)
is a (scattered) \textit{subword}
of a word $w$ if there exist words
$(x_i)_{0 \leq i \leq k}$
such that $w = x_0 u_1 x_1 \cdots u_k x_k$.
The increasing sequence
$(|x_0 u_1 x_1 \cdots x_{i-1}u_i|)_{1 \leq i \leq k}$
denotes an \textit{occurrence} of $u$ in $w$. 
The number of occurrences of $u$ as a subword in $w$
is denoted $\binom{w}{u}$. 
This classical notation as a binomial coefficient of words is
inspired by the fact that, for any letter $\alpha$ and any integers $n$ and $k$, it holds
$\binom{a^n}{a^k} = \binom{n}{k}$.
By convention $\binom{w}{\varepsilon} = 1$.
For example, $\binom{aabaaba}{aba} = 10$.
Note that for any letter $\alpha$, $\binom{w}{\alpha} = |w|_\alpha$.

Let $a$, $b$ be two different letters of an alphabet $A$.
Consider an occurrence $(k, \ell)$ of $ab$ in a word $w$ over $A$.
There exist words $x$ and $y$
such that $xayb$ is a prefix of $w$
with $|xa|=k$ and
$|xayb|=\ell$.
Observe that $xa = \pref_k(w)$ and $xayb = \pref_\ell(w)$.
Let $i = |xa|_a$
and $j = |xayb|_b$.
Observing that $xayb = \pref_{j, b}(w)$, we note that
we have associated with the occurrence of the subword $ab$
an element of the set
$$\leftPart(w) = \{ (i, j) \mid 1 \leq j \leq |w|_b, 1 \leq i \leq |\pref_{j, b}(w)|_a\}.$$

Conversely given any $(i, j)$ in $\leftPart(w)$,
there exist unique words $x$ and $y$
such that $\pref_{j, b}(w) = xayb$ (in particular $|xayb|_b = j$) with $|xa|_a = i$.
To the element $(i, j)$ we have associated the occurrence
$(|xa|, |xayb|)$ of $ab$ with a subword of $w$.
Hence there is a natural bijection between occurrences of $ab$ as a subword of $w$ and
elements of $\leftPart(w)$.
This proves:

\begin{lemma}
\label{L_interpretation_binom_ab}
For any word $w$ over an alphabet 
with $\{a, b\} \subseteq A$, $a \neq b$,
$$\binom{w}{ab} = \#\leftPart(w)$$
\end{lemma}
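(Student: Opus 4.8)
The plan is to establish the claimed equality $\binom{w}{ab} = \#\leftPart(w)$ by exhibiting an explicit bijection between the set of occurrences of $ab$ as a subword of $w$ and the set $\leftPart(w)$, exactly along the lines already sketched in the paragraphs preceding the statement. First I would fix notation: an occurrence of $ab$ in $w$ is a pair $(k,\ell)$ with $1 \le k < \ell \le |w|$, $w_k = a$, $w_\ell = b$. Given such an occurrence, I would write $w$'s prefix of length $\ell$ in the canonical form $xayb$ where $|xa| = k$ and $|xayb| = \ell$, so that $xa = \pref_k(w)$ and $xayb = \pref_\ell(w)$; this decomposition is forced, hence unique. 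Then I would set $i = |xa|_a$ and $j = |xayb|_b$ and note that, since $w_\ell = b$ is the last letter of $\pref_\ell(w)$ and $\pref_\ell(w)$ contains exactly $j$ occurrences of $b$, we have $\pref_\ell(w) = \pref_{j,b}(w)$. It remains to check the defining inequalities of $\leftPart(w)$: clearly $1 \le j \le |w|_b$, and since $xa$ is a (non-strict) prefix of $\pref_{j,b}(w)$ ending in $a$, we get $1 \le i \le |\pref_{j,b}(w)|_a$. So the map $\Phi\colon (k,\ell) \mapsto (i,j)$ is well-defined into $\leftPart(w)$.

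Next I would construct the inverse map. Given $(i,j) \in \leftPart(w)$, the word $\pref_{j,b}(w)$ is well-defined (its defining condition $1 \le j \le |w|_b$ holds) and ends with $b$; write it as $p b$. Since $1 \le i \le |pb|_a = |p|_a$, the prefix $p$ contains at least $i$ occurrences of $a$, so $\pref_{i,a}(p)$ is defined; write $\pref_{i,a}(p) = xa$ and $p = xa y$ for the appropriate (unique) $y$, giving $\pref_{j,b}(w) = xayb$. Set $k = |xa|$ and $\ell = |xayb| = |\pref_{j,b}(w)|$. Then $w_k = a$, $w_\ell = b$, and $k < \ell$, so $(k,\ell)$ is an occurrence of $ab$; call this map $\Psi\colon (i,j) \mapsto (k,\ell)$. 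Checking $\Psi \circ \Phi = \mathrm{id}$ and $\Phi \circ \Psi = \mathrm{id}$ is then a routine unwinding of the definitions: both directions rest on the uniqueness of the decomposition $\pref_{j,b}(w) = xayb$ with $|xa|_a = i$, which in turn follows from the uniqueness of $\pref_{i,a}(\cdot)$ and $\pref_{j,b}(\cdot)$ as prefixes determined by a prescribed count of a given letter and a prescribed terminal letter. Since a bijection between two finite sets forces their cardinalities to agree, and $\binom{w}{ab}$ is by definition the number of occurrences of $ab$ in $w$, the identity $\binom{w}{ab} = \#\leftPart(w)$ follows.

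The only genuinely delicate point — and what I expect to be the "main obstacle," though it is more bookkeeping than mathematics — is verifying cleanly that $\Phi$ and $\Psi$ land in the right sets and are mutually inverse without ever dividing by zero or referring to an undefined prefix. Concretely, one must be careful that $\pref_{i,a}$ and $\pref_{j,b}$ are only defined when the relevant letter-counts are at least $1$ and at most the total count of that letter in the word; the definitions in $\leftPart(w)$ were engineered precisely so that these side conditions hold, so the argument is safe, but it is worth spelling this out. Aside from that, there is nothing subtle: the proof is a single explicit bijection, and since the author has already laid out both directions of the correspondence in the text immediately before the lemma, the formal proof merely needs to assemble those observations and invoke the fact that equinumerous finite sets have equal cardinality.
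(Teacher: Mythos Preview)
Your proposal is correct and follows precisely the same route as the paper: the paper's proof is exactly the bijection $\Phi$ and its inverse $\Psi$ that you describe, laid out in the two paragraphs immediately preceding the lemma. Your write-up is somewhat more explicit about the well-definedness side conditions on $\pref_{i,a}$ and $\pref_{j,b}$, but the mathematical content is identical.
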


Let us denote $\rect(w)$ the rectangle of width $|w|_a$ and
height $|w|_b$:
$$\rect(w) = \{ (i, j) \mid 1 \leq i \leq |w|_a, 1 \leq j \leq |w|_b\}.$$
The previous lemma supports the \textit{geometrical interpretation} of $\binom{w}{ab}$ 
as the area of the part of $\rect(w)$ which is at the left of the line representation of $\pi_{a,b}(w)$ where 
$\pi_{a,b}$ is the \textit{projection} morphism from $A^*$
to $\{a,b\}^*$ defined by
$\pi_{a,b}(a)=a$, $\pi_{a,b}(b)=b$ and $\pi_{a,b}(c)=\varepsilon$
for $c \in A\setminus\{a,b\}$
(term ``left" should be understood while drawing the broken line from point $(0,0)$ to point $(|w|_a, |w|_b)$). 
See Figure~\ref{fig_area}. At first observations, we can verify the well-known facts: for any word $w$, $\binom{w}{ab} \leq |w|_a|w|_b$~; for any $i$, $0 \leq i \leq |w|_a|w|_b$, there exists a word $w$ with $\binom{w}{ab} = i$.

\begin{figure}[!ht]
\begin{center}
\begin{tikzpicture}[scale=0.8]
\begin{scope}
\draw [<->] (-1,0) -- (-1,6) node[midway,left]{height $|w|_b$};
\draw [<->] (0,-1) -- (7, -1) node[midway,below]{width $|w|_a$};

\draw (0,0) -- (0,6);
\draw (0,6) -- (7,6);
\draw (7,6) -- (7,0);
\draw (7,0) -- (0,0);
\draw (2, 5) node[below right] {\LARGE $\binom{w}{ab}$};
\draw (5, 2) node[below right] {\LARGE $\binom{w}{ba}$};
\draw [red] (0,0) -- (1,0) node[midway,below]{a};
\draw [red] (1,0) -- (2,0) node[midway,below]{a};
\draw [blue] (2,0) -- (2,1) node[midway,right]{b};
\draw [red] (2,1) -- (3,1) node[midway,below]{a};
\draw [red] (3,1) -- (4,1) node[midway,below]{a};
\draw [blue] (4,1) -- (4,2) node[midway,right]{b};
\draw [blue] (4,2) -- (4,3) node[midway,right]{b};
\draw [red] (4,3) -- (5,3) node[midway,below]{a};
\draw [blue] (5,3) -- (5,4) node[midway,right]{b};
\draw [red] (5,4) -- (6,4) node[midway,below]{a};
\draw [blue] (6,4) -- (6,5) node[midway,right]{b};
\draw [blue] (6,5) -- (6,6) node[midway,right]{b};
\draw [red] (6,6) -- (7,6) node[midway,below]{a};
\end{scope}
\end{tikzpicture}
\end{center}
\caption{\label{fig_area}$\binom{w}{ab}$ and $\binom{w}{ba}$ for word $w =$\ca\ca\cb\ca\ca\cb\cb\ca\cb\ca\cb\cb\ca}
\end{figure}

Let $\rightPart(w) = \{ (i, j) \mid 1 \leq j \leq |w|_b, |\pref_{j, b}(w)|_a + 1 \leq i \leq |w|_a\}$. With a similar reasoning as previously, one can get the following lemma (see also Figure~\ref{fig_area}).

\begin{lemma}
\label{L_interpretation_binom_ba}
For any word $w$ over an alphabet 
with $\{a, b\} \subseteq A$, $a \neq b$,
$$\binom{w}{ba} = \#\rightPart(w)$$
\end{lemma}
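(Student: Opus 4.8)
The plan is to mirror exactly the bijective argument already carried out for Lemma~\ref{L_interpretation_binom_ab}, but now tracking occurrences of $ba$ instead of $ab$. Consider an occurrence $(k,\ell)$ of $ba$ in $w$. There exist words $x$ and $y$ with $xbya$ a prefix of $w$, $|xb|=k$ and $|xbya|=\ell$; so $xb=\pref_k(w)$ and $xbya=\pref_\ell(w)$. Set $j=|xb|_b$, so that $xb=\pref_{j,b}(w)$, and set $i=|xbya|_a$. Since $y$ contains no $b$, we have $|\pref_{j,b}(w)|_a=|x|_a$, and $a$ occurs in $xbya$ exactly $|x|_a+|y|_a+1$ times, hence $i=|x|_a+|y|_a+1\ge |\pref_{j,b}(w)|_a+1$; also clearly $i\le |w|_a$ and $1\le j\le |w|_b$. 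Thus $(i,j)\in\rightPart(w)$, and we have associated an element of $\rightPart(w)$ to each occurrence of $ba$.

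For the converse, given $(i,j)\in\rightPart(w)$, let $p=\pref_{j,b}(w)$, which ends in $b$ and contains exactly $j$ occurrences of $b$. Write $p=xb$ where $x$ is the (unique) prefix of $p$ of length $|p|-1$. Because $|\pref_{j,b}(w)|_a+1\le i\le |w|_a$, the word $w$ has strictly more than $|x|_a$ occurrences of $a$ after position $|p|$, so there is a well-defined $(i-|x|_a)$-th occurrence of $a$ in the suffix of $w$ following $p$; equivalently, there are unique words $y,z$ with $w=xb\,y\,a\,z$, $y$ containing no $b$, and $|xbya|_a=i$. The pair $(|xb|,|xbya|)$ is then an occurrence of $ba$ in $w$, and one checks that the two maps just described are mutually inverse. (The only point needing a word of care is uniqueness of the factorization $p=xb$ and of the split $w=xb\,y\,a\,z$: the first is immediate since $p$ is fixed and ends in $b$; the second follows because fixing $j$ fixes $p=xb$, and fixing $i$ then fixes which occurrence of $a$ beyond $p$ is used, hence $y$ and $z$.)

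Putting the two directions together yields a bijection between the occurrences of $ba$ as a subword of $w$ and the elements of $\rightPart(w)$, which proves $\binom{w}{ba}=\#\rightPart(w)$. I do not anticipate a genuine obstacle here: the argument is a routine transcription of the proof preceding Lemma~\ref{L_interpretation_binom_ab}, with the roles of $a$ and $b$ in the two-letter pattern swapped. The one place to be slightly careful—and the ``hard part'', such as it is—is verifying that $\leftPart(w)$ and $\rightPart(w)$ partition $\rect(w)$ row by row, i.e.\ that in the $j$-th row the $a$-count $|\pref_{j,b}(w)|_a$ is exactly the threshold separating the two sets; this is what makes the index ranges in the definitions of $\leftPart$ and $\rightPart$ complementary, and it is precisely what guarantees, consistently with Lemmas~\ref{L_interpretation_binom_ab} and the obvious identity $\binom{w}{ab}+\binom{w}{ba}=|w|_a|w|_b$, that no occurrence is counted twice or missed.
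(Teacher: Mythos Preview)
Your overall strategy is exactly what the paper intends (it merely says ``with a similar reasoning as previously''), and the bijection you set up---sending the occurrence $(k,\ell)$ of $ba$ to $(i,j)=(|\pref_\ell(w)|_a,\,|\pref_k(w)|_b)$---is the correct one.

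There is, however, a real slip: the assertion ``$y$ contains no $b$'' is false and should be removed in both places. In the forward direction it is superfluous anyway: once you set $j=|xb|_b$ you already have $xb=\pref_{j,b}(w)$, hence $|\pref_{j,b}(w)|_a=|x|_a$ with no hypothesis on $y$ (and indeed $y=w_{k+1}\cdots w_{\ell-1}$ is arbitrary). In the converse direction the constraint actually breaks the construction: for $w=baba$ and $(i,j)=(2,1)\in\rightPart(w)$ there is \emph{no} factorization $w=b\,y\,a\,z$ with $y$ free of $b$ and $|bya|_a=2$, so your inverse map, as literally written, is undefined there. Simply drop the condition: the factorization $w=xb\,y\,a\,z$ with $xb=\pref_{j,b}(w)$ and the displayed $a$ equal to the $i$-th occurrence of $a$ in $w$ is already unique, and with that correction the two maps are mutually inverse.
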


It follows the definitions that $\leftPart(w)$ and $\rightPart(w)$
form a partition of the rectangle $\rect(w)$. Hence, as presented by Figure~\ref{fig_area}, we get the well-known relation:

\begin{equation}\label{eq_relation_base1}
\binom{w}{ab}+ \binom{w}{ba} = |w|_a \times |w|_b 
\end{equation}

Let recall that $\leftPart(w) = \{ (i, j) \mid 1 \leq j \leq |w|_b, 1 \leq i \leq |\pref_{j, b}(w)|_a\}$. In this definition the values $\pref_{j, b}(w)$ describe all prefixes $pb$ of $w$. Since $|pb|_a = |p|_a$, Lemma~\ref{L_interpretation_binom_ab} can be translated to the next relation:

\begin{equation}\label{eq_calcul_binom_ab_par_prefixes}
\binom{w}{ab} = \sum_{pb \text{~prefix of~} w} |p|_a 
\end{equation}

This relation corresponds to the computation of $\#\leftPart(w)$ adding the numbers of squares occurring at the left of each $b$ in $\rect(w)$ (some figures are given in the appendix to support this idea). Of course, one can compute $\#\leftPart(w)$ adding the numbers of squares occurring above each $a$ in $\rect(w)$. 
We can observe that
$\leftPart(w) = \{ (i, j) \mid 1 \leq i \leq |w|_a, |\pref_{i,a}(w)|_b+1 \leq j \leq |w|_b\}$ or, since $|w|_b-|\pref_{i,a}(w)|_b = |\suff_{i,a}(w)|_b$,
$\leftPart(w) = \{ (i, |w|_b-j+1) \mid 1 \leq i \leq |w|_a, 
1 \leq j \leq |\suff_{i,a}(w)|_b\}$. In other terms,

\begin{equation}\label{eq_calcul_binom_ab_par_sufixes}
\binom{w}{ab} = \sum_{as \text{~suffix of~} w} |s|_b
\end{equation}

We can get similar formulas for $\#\rightPart(w)$ and $\binom{w}{ba}$. In particular
$\displaystyle
\binom{w}{ba} = \sum_{bs \text{~suffix of~} w} |s|_a$
and
$\displaystyle
\binom{w}{ba} = \sum_{pa \text{~prefix of~} w} |p|_b$

\subsection{\label{subsec_sum_positions}A link with the sum of positions of the letter \texorpdfstring{$b$}{b}}

Formula~\eqref{eq_calcul_binom_ab_par_prefixes} is linked to the formula given in the next lemma.
Let $S_b(w)$ be the sum of positions of the occurrences of the letter $b$ in the word $w$: $S_b(w) = \sum_{pb \text{~prefix of~} w} |pb|$.

\begin{lemma}[\cite{Fosse_Richomme2004IPL}]
\label{L_lien_Sb_et_binom_ab}
Given a word $w$ over an alphabet $A$ and given a letter $b$ of $A$,
\begin{equation}
S_b(w) = \frac{|w|_b(|w|_b+1)}{2} + \sum_{a \in A\setminus\{b\}} \binom{w}{ab} \label{eq_lien_Sb_et_binom_ab}
\end{equation}
\end{lemma}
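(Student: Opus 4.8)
The plan is to prove Lemma~\ref{L_lien_Sb_et_binom_ab} by a direct double-counting argument, decomposing each position $|pb|$ of an occurrence of $b$ into the number of letters before it that equal $b$ (or are $b$ itself), plus the number of letters before it that equal some $a \neq b$. Concretely, fix the word $w = w_1 \cdots w_n$ and let $j_1 < j_2 < \cdots < j_m$ be the positions of the $m = |w|_b$ occurrences of $b$ in $w$. Then $S_b(w) = \sum_{t=1}^m j_t$. For each such position $j_t$, I would write $j_t = \#\{ s \mid s \le j_t, w_s = b\} + \sum_{a \in A \setminus \{b\}} \#\{ s \mid s \le j_t, w_s = a\}$, since every index $s \le j_t$ contributes exactly once according to the letter sitting at position $s$. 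The first term is just $t$ (there are exactly $t$ occurrences of $b$ among $w_1, \dots, w_{j_t}$), so summing over $t$ gives $\sum_{t=1}^m t = \frac{|w|_b(|w|_b+1)}{2}$, which is the first summand of~\eqref{eq_lien_Sb_et_binom_ab}.

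For the remaining terms, I would sum over $t$ and swap the order of summation: $\sum_{t=1}^m \sum_{a \ne b} \#\{ s \le j_t \mid w_s = a\} = \sum_{a \ne b} \sum_{t=1}^m \#\{ s \le j_t \mid w_s = a\}$. The inner sum $\sum_{t=1}^m \#\{ s \le j_t \mid w_s = a\}$ counts pairs $(s, j_t)$ with $w_s = a$, $w_{j_t} = b$, and $s < j_t$ (note $s \ne j_t$ since $w_s \ne w_{j_t}$), which is exactly the number of occurrences of the subword $ab$ in $w$, i.e. $\binom{w}{ab}$. Putting the two pieces together yields $S_b(w) = \frac{|w|_b(|w|_b+1)}{2} + \sum_{a \in A \setminus \{b\}} \binom{w}{ab}$, as claimed.

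Alternatively, and perhaps more in the spirit of the surrounding development, one can derive this from~\eqref{eq_calcul_binom_ab_par_prefixes}. In the binary case, for a prefix $pb$ of $w$ one has $|pb| = |pb|_a + |pb|_b = |p|_a + |pb|_b$, and as $pb$ ranges over the prefixes of $w$ ending in $b$, the quantity $|pb|_b$ ranges over $1, 2, \dots, |w|_b$; summing, $S_b(w) = \sum_{pb \text{ prefix of } w} |pb| = \sum_{pb} |p|_a + \sum_{k=1}^{|w|_b} k = \binom{w}{ab} + \frac{|w|_b(|w|_b+1)}{2}$ using~\eqref{eq_calcul_binom_ab_par_prefixes}. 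For a general alphabet $A$, the same idea works with $|pb| = |pb|_b + \sum_{a \ne b} |pb|_a = |pb|_b + \sum_{a \ne b} |p|_a$ and then invoking~\eqref{eq_calcul_binom_ab_par_prefixes} once for each $a \neq b$.

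I do not expect a genuine obstacle here: the statement is an elementary counting identity and both routes above are short. The only point requiring a little care is the bookkeeping in the double-counting argument — making sure that the decomposition of $j_t$ is exhaustive and non-overlapping over the letters of $A$, and that the swap of summation order is justified (it is, since all sums are finite). The cleanest write-up is probably the derivation from~\eqref{eq_calcul_binom_ab_par_prefixes}, since that formula is already available from the excerpt and handles the multi-letter case transparently by splitting $|pb|$ according to the Parikh vector.
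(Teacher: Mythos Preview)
Your proposal is correct, and your second route---splitting $|pb| = |pb|_b + \sum_{a \neq b} |p|_a$ and invoking Formula~\eqref{eq_calcul_binom_ab_par_prefixes} for each $a \neq b$---is exactly the argument the paper gives. Your first route (the explicit double-counting over positions $j_t$) is just the same computation unrolled without appealing to~\eqref{eq_calcul_binom_ab_par_prefixes}, so there is no meaningful difference in approach.
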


The proof of this lemma is a direct consequence of formula~\eqref{eq_calcul_binom_ab_par_prefixes}. 
Since $|pb| = \sum_{a \in A} |pb|_a$ for any word $p$, 
let us observe that:
$$S_b(w) = \sum_{a \in A\setminus\{b\}}\sum_{pb \text{~prefix of~} w} |pb|_a +\sum_{pb \text{~prefix of~} w} |pb|_b$$
By Formula~\eqref{eq_calcul_binom_ab_par_prefixes}, 
the first sum is $\sum_{a \in A\setminus\{b\}}\binom{w}{ab}$. The second sum is $\sum_{i = 1}^{|w|_b} i = \frac{|w|_b(|w|_b+1)}{2}$.

Let $S_b'(w) = \sum_{bs \text{~suffix of~} w} |bs|$.
Observe that $S_b'(w) = S_b(\widetilde{w})$ and $\binom{w}{ba} = \binom{\widetilde{w}}{ab}$. Thus the following result is a corollary of Lemma~ \ref{L_lien_Sb_et_binom_ab}.

\begin{lemma}
\label{L_lien_Sb'_et_binom_ab}
Given a word $w$ over an alphabet $A$ and given a letter $b$ of $A$,
\begin{equation}
S_b'(w) = \frac{|w|_b(|w|_b+1)}{2} + \sum_{a \in A\setminus\{b\}} \binom{w}{ba} \label{eq_lien_Sb'_et_binom_ba}
\end{equation}
\end{lemma}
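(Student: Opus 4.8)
The plan is to obtain Lemma~\ref{L_lien_Sb'_et_binom_ab} as an immediate corollary of Lemma~\ref{L_lien_Sb_et_binom_ab} by applying it to the reverse word $\widetilde{w}$, after checking the two elementary identities $S_b'(w) = S_b(\widetilde{w})$ and $\binom{w}{ba} = \binom{\widetilde{w}}{ab}$ already announced in the excerpt. First I would verify $S_b'(w) = S_b(\widetilde{w})$: if $bs$ is a suffix of $w$ with $|bs| = m$, then reversing gives a prefix $\widetilde{s}b$ of $\widetilde{w}$ of the same length $m$, and this correspondence between the occurrences of $b$ counted (from the right) in $S_b'(w)$ and the occurrences of $b$ counted (from the left) in $S_b(\widetilde{w})$ is a bijection preserving the contributed value, so the two sums are equal. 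Note also $|w|_b = |\widetilde{w}|_b$, since reversal does not change letter counts.

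Next I would justify $\binom{w}{ba} = \binom{\widetilde{w}}{ab}$: an occurrence of the subword $ba$ in $w$ at positions $(k,\ell)$ with $k < \ell$ corresponds, under $i \mapsto |w| - i + 1$, to a decreasing pair of positions in $\widetilde{w}$ carrying the letters $a$ then $b$ read left to right, i.e.\ an occurrence of $ab$ in $\widetilde{w}$; this is clearly a bijection. (This is the standard fact that reversing a word interchanges $\binom{\cdot}{ab}$ and $\binom{\cdot}{ba}$, and more generally sends $\binom{w}{u}$ to $\binom{\widetilde{w}}{\widetilde{u}}$.) Since $a$ ranges over $A \setminus \{b\}$ in both formulas, applying Lemma~\ref{L_lien_Sb_et_binom_ab} to $\widetilde{w}$ yields
\[
S_b(\widetilde{w}) = \frac{|\widetilde{w}|_b(|\widetilde{w}|_b+1)}{2} + \sum_{a \in A\setminus\{b\}} \binom{\widetilde{w}}{ab},
\]
and substituting $S_b(\widetilde{w}) = S_b'(w)$, $|\widetilde{w}|_b = |w|_b$, and $\binom{\widetilde{w}}{ab} = \binom{w}{ba}$ gives exactly Equation~\eqref{eq_lien_Sb'_et_binom_ba}.

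There is no real obstacle here: the lemma is a symmetrization of the preceding one under the mirror-image involution, and the only thing to be careful about is that the two auxiliary identities are stated cleanly and that the summation index set $A \setminus \{b\}$ is manifestly unchanged by the substitution. If one wanted a self-contained argument instead, one could also prove it directly by mimicking the computation following Lemma~\ref{L_lien_Sb_et_binom_ab}, writing $S_b'(w) = \sum_{bs\text{ suffix of }w} |bs| = \sum_{a \in A\setminus\{b\}} \sum_{bs\text{ suffix of }w} |bs|_a + \sum_{bs\text{ suffix of }w}|bs|_b$, identifying the first double sum with $\sum_{a}\binom{w}{ba}$ via the suffix formula $\binom{w}{ba} = \sum_{bs\text{ suffix of }w}|s|_a$ recorded at the end of Section~\ref{subsec_Interpret}, and the second sum with $\sum_{i=1}^{|w|_b} i$; but the reversal argument is shorter and is the one I would present.
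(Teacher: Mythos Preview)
Your proposal is correct and matches the paper's own argument exactly: the paper also derives the lemma as a corollary of Lemma~\ref{L_lien_Sb_et_binom_ab} applied to $\widetilde{w}$, using the identities $S_b'(w)=S_b(\widetilde{w})$ and $\binom{w}{ba}=\binom{\widetilde{w}}{ab}$. Your additional remarks (the bijection details and the alternative direct computation) are valid but go beyond what the paper writes out.
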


Consider a binary word $w $ over $\{a,b\}$.
Summing Relations~\eqref{eq_lien_Sb_et_binom_ab}
and \eqref{eq_lien_Sb'_et_binom_ba}, we get
$S_b(w) + S_b'(w) = |w|_b(|w|_b+1) + \binom{w}{ab} + \binom{w}{ba}$. 
From $|w|= |w|_a+|w|_b$ and relation~\eqref{eq_relation_base1}, we deduce:

\begin{equation}
S_b(w) + S_b'(w) = |w|_b(|w|+1) \label{eq_lien_entre_sommes}
\end{equation}

\begin{figure}[!ht]
\begin{center}
\begin{tikzpicture}[scale=0.8]
\begin{scope}
\draw [<->] (-1,0) -- (-1,6) node[midway,left]{height $|w|_b$};
\draw [<->] (0,-1) -- (13, -1) node[midway,below]{width $|w|$};

\draw (0,0) -- (0,6);
\draw (0,6) -- (13,6);
\draw (13,6) -- (13,0);
\draw (13,0) -- (0,0);
\draw (2, 5) node[below right] {\LARGE $S_b(w)-\frac{|w|_b}{2}$};
\draw (8, 2) node[below right] {\LARGE $S_b'(w)-\frac{|w|_b}{2}$};
\draw [red] (0,0) -- (1,0) node[midway,below]{a};
\draw [red] (1,0) -- (2,0) node[midway,below]{a};
\draw [blue] (2,0) -- (3,1) node[midway,right]{b};
\draw [red] (3,1) -- (4,1) node[midway,below]{a};
\draw [red] (4,1) -- (5,1) node[midway,below]{a};
\draw [blue] (5,1) -- (6,2) node[midway,right]{b};
\draw [blue] (6,2) -- (7,3) node[midway,right]{b};
\draw [red] (7,3) -- (8,3) node[midway,below]{a};
\draw [blue] (8,3) -- (9,4) node[midway,right]{b};
\draw [red] (9,4) -- (10,4) node[midway,below]{a};
\draw [blue] (10,4) -- (11,5) node[midway,right]{b};
\draw [blue] (11,5) -- (12,6) node[midway,right]{b};
\draw [red] (12,6) -- (13,6) node[midway,below]{a};
\end{scope}
\end{tikzpicture}
\end{center}
\caption{\label{fig_area_sums}Geometrical interpretations of $S_b(w)$ and $S_b'(w)$}
\end{figure}
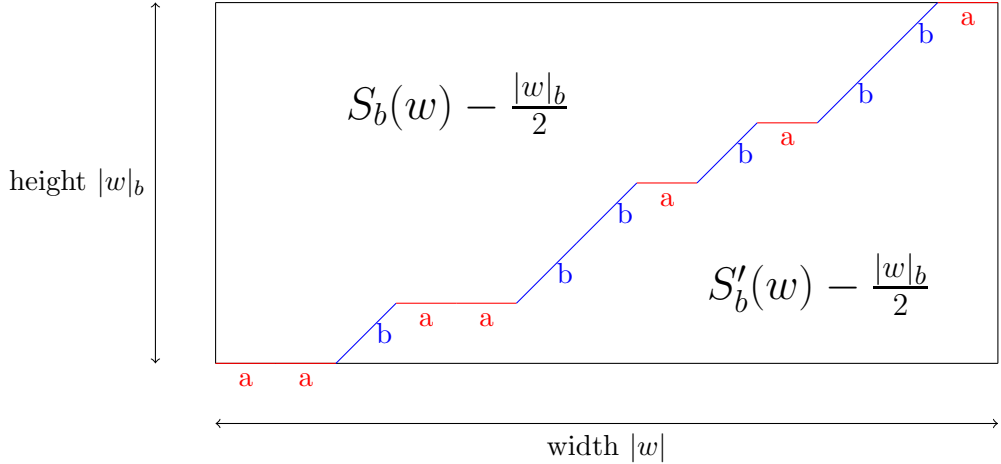

Figure~\ref{fig_area_sums} provides an illustration of a geometrical interpretation of this Formula~\eqref{eq_lien_entre_sommes}. 
We will not formalize this interpretation. 
But let us explain it. 
Here a binary word over $\{a, b\}$ 
is interpreted replacing occurrences of the letter $a$
with an horizontal segment as in the previous line interpretation
but replacing occurrences of the letter $b$
with a diagonal segment. 
As a consequence for the $i$th $b$ ($1 \leq i \leq |w|_b$), comparatively with the previous representation of a word, the 
area of the left part increases by $i-1$ squares and an half
while area of the right part increases by $|w|_b-i$ squares and an half.
Globally, both the left and right parts of the rectangle
increase by $\frac{|w|_b^2}{2}$.
So these left and right areas are respectively 
$\frac{|w|_b^2}{2}+\binom{w}{ab}$ and 
$\frac{|w|_b^2}{2}+\binom{w}{ba}$, or, 
by Equations~\eqref{eq_lien_Sb_et_binom_ab}
and \eqref{eq_lien_Sb'_et_binom_ba}, 
$S_b(w)-\frac{|w|_b}{2}$ and
$S_b'(w)-\frac{|w|_b}{2}$. 
The area of the whole rectangle is $|w|\times|w|_b$ (the rectangle has width $|w|$ and height $|w|_b$).
Since it is also $(S_b(w)-\frac{|w|_b}{2})+(S_b'(w)-\frac{|w|_b}{2})$,
we get Formula~\eqref{eq_lien_entre_sommes}.

\section{\label{sec_binom_equiv}About the \texorpdfstring{$2$}{2}-binomial equivalence}

\subsection{\label{subsec_def_binom_equiv}The \texorpdfstring{$k$}{k}-binomial equivalence}

Let $k$ be an integer. 
The $k$-\textit{binomial equivalence} \cite{Rigo_Salimov2015TCS} over an alphabet $A$ 
is defined as follows: 
for two words $u$ and $v$, $u \sim_k v$
if and only if $\binom{u}{x} = \binom{v}{x}$ 
for all words $x$ with $|x| \leq k$.
As shown by \cite[Lemma 1]{Manvel_Meyerowitz_Schwenk_Smith_Stockmeyer1991DM}, condition
$|x| \leq k$ may be replaced with $|x| = k$.
It is well-known (and easy to check) that, for any letter $\alpha$
and any word $w$:
\begin{equation}
\label{eq_binom_aa}
\binom{w}{\alpha\alpha} = \binom{w}{\alpha}\left(\binom{w}{\alpha}-1\right) = |w|_\alpha (|w|_\alpha-1).
\end{equation}

It is also well-known (see, for instance, \cite{Lejeune_Leroy_Rigo2020JCTA}) that, 
for any $k \geq 1$, relation $\sim_k$ is a \textit{congruence},
that is, for any words $u$, $v$, $x$ and $y$, $u \sim_k v$
$\Leftrightarrow$ $xuy \sim_k xvy$. 
For $\sim_2$, the proof of this fact can be illustrated (or done) using Figure~\ref{figure_congruence}.

\begin{figure}[!ht]
\begin{center}
\begin{tikzpicture}[scale=0.8]
\begin{scope}
\draw (0,0) -- (0,6);
\draw (0,6) -- (7,6);
\draw (7,6) -- (7,0);
\draw (7,0) -- (0,0);
\draw (2.75, 3.5) node[below right] {$\binom{u}{ab}$};
\draw (-0.1, 1) node[below right] {$\binom{x}{ab}$};
\draw (5.9, 6) node[below right] {$\binom{y}{ab}$};
\draw (0, 5.5) node[below right] {$|x|_a|y|_b$};
\draw (0, 3) node[below right] {$|x|_a|u|_b$};
\draw (3, 5.5) node[below right] {$|u|_a|y|_b$};
\draw (0,0) -- (2,1) node[midway,below]{$x$};
\fill [lightgray,opacity=0.5] (0,0) -- (2,1) -- (0,1) -- cycle ;
\fill [lightgray,opacity=0.5] (0,1) -- (2,1) -- (2,4) -- (0, 4) -- cycle ;
\fill [lightgray,opacity=0.5] (0,4) -- (2,4) -- (2,6) -- (0, 6) -- cycle ;
\fill [lightgray,opacity=0.5] (2,4) -- (6,4) -- (6,6) -- (2, 6) -- cycle ;
\fill [lightgray,opacity=0.5] (2,1) -- (2,4) -- (6,4) -- cycle ;
\fill [lightgray,opacity=0.5] (6,4) -- (6,6) -- (7,6) -- cycle ;
\draw [dashed] (0,1) -- (7,1) ;
\draw [dashed] (0,4) -- (7,4) ;
\draw [dashed] (2,0) -- (2,6) ;
\draw [dashed] (6,0) -- (6,6) ;
\draw [<->] (-1,0) -- (-1,1) node[midway,left]{$|x|_b$};
\draw [<->] (-1,1) -- (-1,4) node[midway,left]{$|u|_b$};
\draw [<->] (-1,4) -- (-1,6) node[midway,left]{$|y|_b$};
\draw [<->] (0,7) -- (2,7) node[midway,below]{$|x|_a$};
\draw [<->] (2,7) -- (6,7) node[midway,below]{$|u|_a$};
\draw [<->] (6,7) -- (7,7) node[midway,below]{$|x|_a$};
\draw [blue] (2,1) -- (6,4) node[midway,right]{$u$};
\draw (6,4) -- (7,6) node[midway,below]{$y$};
\end{scope}
\end{tikzpicture}
\end{center}
\caption{\label{figure_congruence}Number of occurrences of $ab$ in the concatenation of three words}
\end{figure}
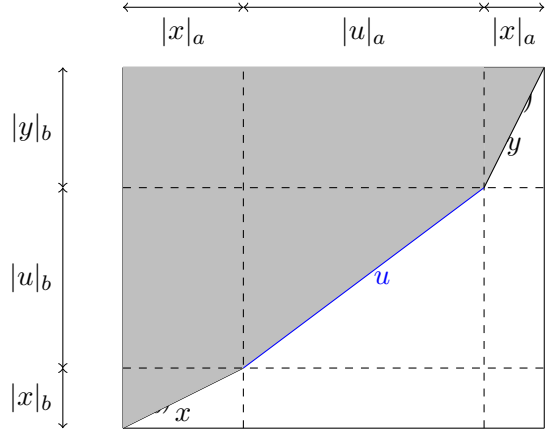

Actually, this figure illustrates the next equation (for $x$, $u$, $y$ words 
and $a$ and $b$ distinct letters): this provides an example of use of the geometrical interpretation provided earlier.

\begin{equation}
\binom{xuy}{ab} = 
\binom{u}{ab}+\binom{x}{ab}+\binom{y}{ab}+
|x|_a|u|_b+|x|_a|y|_b+|u|_a|y|_b \label{eq_binom_3_words}
\end{equation}

Using this equation (and the fact that $|xuy|_\alpha = |x|_\alpha + |u|_\alpha + |v|_\alpha$), 
considering any word $v$, we can verify that 
$u \sim_2 v$ (that is, $|u|_a = |v|_a$, $|u|_b = |v|_b$ and 
$\binom{u}{ab} = \binom{v}{ab}$) if and only if 
$xuy \sim_2 xvy$ (that is, $|xuy|_a = |xvy|_a$, $|xuy|_b = |xvy|_b$ and 
$\binom{xuy}{ab} = \binom{xvy}{ab}$).

Equation~\ref{eq_binom_3_words} can be seen as an extension of the next formula for the concatenation of two words $u$ and $v$.

\begin{equation}
\binom{uv}{ab} = 
\binom{u}{ab}+\binom{v}{ab}+
|u|_a|v|_b \label{eq_binom_2_words}
\end{equation}

\subsection{\label{subsec_storing}Storing and computing the \texorpdfstring{$2$}{2}-binomial equivalence}

The literature contains several works studying characterization of $2$-binomially equivalent words. In particular, A.~{\cerny} \cite{Cerny2009JALC} showed that two words over an arbitrary alphabet are $2$-binomially equivalent if and only if they have the same precedence matrix (see below for the definition). In the binary case, it is also known (see for instance \cite{Fosse_Richomme2004IPL})
that two words are $2$-binomially equivalent
if and only if they are Parikh-equivalent, that is, if they have the same Parikh matrix (see also below for the definition). 
Actually these results are direct consequences of the 
definition of precedence and Parikh matrices and Formulas~\eqref{eq_relation_base1} and \eqref{eq_binom_aa}.
Indeed precedence and Parikh matrices are essentially data structures to store binomial coefficients of some subwords 
of a given a word $w$: in the previous cases, 
they store sufficient information to recover all numbers of occurrences as subwords of words of length $1$ or $2$. 
These matrices are interesting since 
they can be easily computed from the precedence and Parikh matrices of the letters of $w$. We explain and complement this vision.

In \cite{Cerny2009JALC}, 
A.~{\cerny} introduced some arrays that he called \textit{precedence  matrices} or \textit{$p$-matrices}. For a word $w$ over a $k$-letter alphabet 
$A = \{a_1, \ldots, a_k\}$, 
the $p$-matrix associated with the word $w$, denoted $\pmat(w)$, is the $k\times k$ array defined by:
$$\pmat_{ij}(w) = \left\{
\begin{tabular}{l}
$|w|_{a_i}$ if $i = j$\\
$\displaystyle \binom{w}{a_ia_j}$ if $i \neq j$\\
\end{tabular}
\right.
$$

For instance, in the binary case,
$\displaystyle \pmat(w) :=	
	\left[\begin{array}{cc}
			\binom{w}{a}&\binom{w}{ab}\\[.1cm]
			\binom{w}{ba}&\binom{w}{b}
			\end{array}
	\right]$=
	$\displaystyle 
	\left[\begin{array}{cc}
			|w|_{a}&\binom{w}{ab}\\[.1cm]
			\binom{w}{ba}&|w|_{b}
			\end{array}
	\right]$

Actually A.~{\cerny} defined the precedence matrix of a word $w$
from the precedence matrices of letters
and a particular product of $k\times k$
that he denoted $\circ$:
for two $k\times k$ arrays $A$ and $B$, 
$$(A\circ B)_{i, j} = \left\{
\begin{tabular}{l}
$A_{i,i}+B_{i,i}$ if $i = j$\\
$A_{i,j}+B_{i,j}+A_{i,i}B_{j,j}$ if $i \neq j$\\
\end{tabular}
\right.
$$

He observed that, by a simple induction, one can prove:

\begin{lemma}[\cite{Cerny2009JALC}]
\label{L_p_matrix}
For any word $w = w_1 \cdots w_n$ ($w_i \in A$), 
$$\pmat(w) = \pmat(w_1) \circ \cdots \circ \pmat(w_n).$$
\end{lemma}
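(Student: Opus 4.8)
The plan is to prove $\pmat(w) = \pmat(w_1)\circ\cdots\circ\pmat(w_n)$ by induction on $n = |w|$, treating $\circ$ as an associative operation so that the iterated product is unambiguous (or, alternatively, fixing the parsing as $(\cdots((\pmat(w_1)\circ\pmat(w_2))\circ\pmat(w_3))\circ\cdots)$ and inducting on the length of a prefix). The base cases $n=0$ (where $\pmat(\varepsilon)$ is the zero matrix, which is a neutral element for $\circ$) and $n=1$ are immediate from the definition of $\pmat$. For the inductive step, write $w = w' c$ with $w'= w_1\cdots w_{n-1}$ and $c = w_n \in A$; by the induction hypothesis it suffices to prove the single-letter concatenation identity
\[
\pmat(w'c) = \pmat(w') \circ \pmat(c).
\]

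The core of the argument is therefore verifying this identity entry by entry. For a diagonal entry $i=j$, one needs $|w'c|_{a_i} = |w'|_{a_i} + |c|_{a_i}$, which is the obvious additivity of letter counts (and matches the diagonal rule of $\circ$). For an off-diagonal entry $i\neq j$, one needs
\[
\binom{w'c}{a_i a_j} = \binom{w'}{a_i a_j} + \binom{c}{a_i a_j} + |w'|_{a_i}\,|c|_{a_j},
\]
which is precisely Equation~\eqref{eq_binom_2_words} applied with $u = w'$, $v = c$, $a = a_i$, $b = a_j$ (here $\binom{c}{a_i a_j} = 0$ since $c$ is a single letter, but keeping it makes the identity match the $\circ$ rule verbatim). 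So the single-letter step is nothing more than a specialization of Formula~\eqref{eq_binom_2_words}, and the full statement follows by unwinding the induction: $\pmat(w_1\cdots w_n) = \pmat(w_1\cdots w_{n-1}) \circ \pmat(w_n) = \big(\pmat(w_1)\circ\cdots\circ\pmat(w_{n-1})\big) \circ \pmat(w_n)$.

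Strictly speaking, to write the iterated product without parentheses one should first check that $\circ$ is associative; I would either include a short remark that associativity of $\circ$ is itself a routine verification on each entry (the diagonal is ordinary addition, and on an off-diagonal entry both $(A\circ B)\circ C$ and $A\circ(B\circ C)$ expand to $A_{i,j}+B_{i,j}+C_{i,j} + A_{i,i}B_{j,j} + A_{i,i}C_{j,j} + B_{i,i}C_{j,j}$), or I would simply fix the left-to-right parsing and avoid the issue, since Lemma~\ref{L_p_matrix} only asserts the value of that particular product. The main — and only real — obstacle here is bookkeeping: making sure the off-diagonal case of $\circ$ is matched exactly by Equation~\eqref{eq_binom_2_words} with the roles of $a_i$ and $a_j$ assigned consistently (note $\binom{w}{a_i a_j}$ sits in position $(i,j)$, and the cross term in $\circ$ is $A_{i,i}B_{j,j} = |w'|_{a_i}|c|_{a_j}$, matching the $|u|_a|v|_b$ term), after which the induction is entirely formal.
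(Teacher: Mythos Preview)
Your proof is correct and is exactly the ``simple induction'' the paper alludes to (the paper does not spell it out further): induct on the length, and for the inductive step match the diagonal of $\circ$ with additivity of letter counts and the off-diagonal with Formula~\eqref{eq_binom_2_words}. Your remark on associativity is a welcome addition the paper omits.
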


Let us observe that using Equation~\ref{eq_relation_base1}, one can consider a variant of this approach replacing the values in the bottom triangle of $\pmat(w)$ by $0$. Let define :
$$\pmat_{ij}'(w) = \left\{
\begin{tabular}{l}
$0$ if $i > j$\\
$|w|_{a_i}$ if $i = j$\\
$\displaystyle \binom{w}{a_ia_j}$ if $i < j$\\
\end{tabular}
\right.
$$
In the binary case,
$\displaystyle \pmat'(w) :=	
	\left[\begin{array}{cc}
			\binom{w}{a}&\binom{w}{ab}\\[.1cm]
			0&\binom{w}{b}
			\end{array}
	\right]$=
	$\displaystyle 
	\left[\begin{array}{cc}
			|w|_{a}&\binom{w}{ab}\\[.1cm]
			0&|w|_{a}
			\end{array}
	\right]$.
For arbitrary alphabets, we let readers verify that Lemma~\ref{L_p_matrix} is still valid replacing $\pmat$ with $\pmat'$. 
In the binary case (with an immediate extension to arbitrary alphabet), it may be as much simple to store information in the vector $(|w|_{a}, |w|_{b}, \binom{w}{ab})$ (adapting the operation $\circ$ for computation from vectors associated with letters).

Given a word $w$ over a $k$-letter alphabet $A = \{a_1, \ldots, a_k\}$,
A.~Mateescu et al. defined in 2001 \cite{Mateescu_Salomaa_Salomaa_Yu2001TIA},
the \textit{Parikh matrix} of a word $w$
as the $(k+1)\times(k+1)$ array defined as follows:
for $1 \leq i \leq k+1$, $1\leq j \leq k+1$,
$$\parikhMat_{ij}(w) = \left\{
\begin{tabular}{l}
$1$ if $i = j$\\
$0$ if $i > j$\\
$\displaystyle \binom{w}{a_i\cdots a_j}$ if $i < j$\\
\end{tabular}
\right.
$$
In the binary case $\parikhMat(w)$ := 
	$\displaystyle 
	\left[\begin{array}{ccc} 1 & \binom{w}{a}&\binom{w}{ab}\\[.1cm]
			0& 1 & \binom{w}{b}\\[.1cm]
			0 & 0 & 1
			\end{array}
	\right]$ = 
	$\displaystyle 
	\left[\begin{array}{ccc} 1 & |u|_{a}&\binom{u}{ab}\\[.1cm]
			0& 1 & |u|_{b}\\[.1cm]
			0 &0 & 1
			\end{array}
	\right]$.
This matrix in the binary case already appears in an article \cite{Prodinger1979DM} written by H.~Prodinger in 1979 more than 20 years before the introduction of the terminology ``Parikh matrix". In this paper, he proves in the binary case the following lemma 
in which the matrix product is the usual one.

\begin{lemma}[\cite{Mateescu_Salomaa_Salomaa_Yu2001TIA} and, for the binary case, \cite{Prodinger1979DM}]
\label{L_parikh_matrix}
For any word $w = w_1 \cdots w_n$ ($w_i \in A$), 
$$\parikhMat(w) = \parikhMat(w_1) \cdots \parikhMat(w_n).$$
\end{lemma}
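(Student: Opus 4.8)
The plan is to prove $\parikhMat(w) = \parikhMat(w_1)\cdots\parikhMat(w_n)$ by induction on $n=|w|$. For the base case, if $n=0$ then $w=\varepsilon$ and $\parikhMat(\varepsilon)$ is the identity matrix (all binomial coefficients $\binom{\varepsilon}{a_i\cdots a_j}$ vanish since these subwords are nonempty, while the diagonal is all ones), which is the empty product; if $n=1$ the statement is a tautology. So the real content is the inductive step, where it clearly suffices to prove the single-letter multiplication rule: for any word $v$ and any letter $c$,
\begin{equation}
\parikhMat(vc) = \parikhMat(v)\cdot\parikhMat(c).
\end{equation}
Once this is in hand, writing $w = (w_1\cdots w_{n-1})w_n$ and applying the induction hypothesis to $w_1\cdots w_{n-1}$ finishes the argument.

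To establish the single-letter rule I would work entrywise. Fix the $k$-letter alphabet $A=\{a_1,\dots,a_k\}$ and let $P=\parikhMat(v)$, $C=\parikhMat(c)$. The matrix $C$ is the identity except for exactly one off-diagonal entry: if $c=a_\ell$ then $C_{\ell,\ell+1}=1$ and all other off-diagonal entries are $0$. Hence $(PC)_{i,j} = \sum_{m} P_{i,m}C_{m,j} = P_{i,j}\cdot C_{j,j} + P_{i,j-1}\cdot C_{j-1,j} = P_{i,j} + P_{i,j-1}\,[\,c = a_{j-1} = a_\ell\text{ and } j-1 = \ell\,]$. So the claim reduces to the combinatorial identity
\begin{equation}
\binom{va_\ell}{a_i\cdots a_j} = \binom{v}{a_i\cdots a_j} + \binom{v}{a_i\cdots a_{j-1}}\cdot[j-1=\ell],
\end{equation}
valid for $1\le i\le j\le k+1$, together with the easy checks on the diagonal ($|va_\ell|_{a_i} = |v|_{a_i} + [i=\ell]$, matching $P_{i,i}+P_{i-1,i}\cdot 0$ when $i=j$... actually here $(PC)_{i,i}=P_{i,i}C_{i,i}=P_{i,i}$ since $C$ is upper unitriangular, but wait — the diagonal entries $\parikhMat_{i,i}$ are all $1$, and the Parikh-vector information $|w|_{a_i}$ sits on the \emph{first superdiagonal} $\parikhMat_{i,i+1}$, so the diagonal is trivially preserved and the superdiagonal is handled by the identity above with $i$ and $j=i+1$). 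The combinatorial identity itself is the standard recurrence for binomial coefficients of words: counting occurrences of $a_i\cdots a_j$ as a subword of $va_\ell$ by whether or not the last letter $a_\ell$ is used to realize the final letter $a_j$ of the pattern — if $a_\ell\ne a_j$ it is never used, and if $a_\ell=a_j$ the occurrences using it are in bijection with occurrences of $a_i\cdots a_{j-1}$ in $v$. This is exactly a special case of the general Pascal-type rule $\binom{ua}{xb} = \binom{u}{xb} + [a=b]\binom{u}{x}$ recorded in \cite[Chapter 6]{Lothaire1983book}, restricted to the "staircase" patterns $a_i\cdots a_j$ that the Parikh matrix tracks.

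The only genuinely delicate point — and the one I would be most careful about — is the indexing bookkeeping: the Parikh matrix is $(k+1)\times(k+1)$, its rows and columns are indexed $1,\dots,k+1$, and the entry $(i,j)$ with $i<j$ records the pattern $a_i a_{i+1}\cdots a_{j-1}$ (a word of length $j-i$ over $A$, not $\{a_i,\dots,a_j\}$), so one must be scrupulous that the "$\ell$" appearing in $C$ as $C_{\ell,\ell+1}$ corresponds to the letter $a_\ell$ being the pattern-letter indexed $j-1$, not $j$. Verifying that the shift in the matrix product lines up precisely with the shift in the word-recurrence is entirely mechanical but easy to get off by one; I would simply check it on the boundary cases $j=i+1$ (superdiagonal, which must reproduce $|va_\ell|_{a_i}$) and $j=k+1$. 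In the binary case displayed in the excerpt one sees the whole argument in miniature: multiplying $\parikhMat(v)$ on the right by $\parikhMat(a)$ adds the old $b$-count into the $\binom{\cdot}{ab}$ slot only when... in fact right-multiplying by $\parikhMat(a)=\left[\begin{smallmatrix}1&1&0\\0&1&0\\0&0&1\end{smallmatrix}\right]$ adds column $2$ into column... here it adds $|v|_a$'s row-contribution appropriately; right-multiplying by $\parikhMat(b)$ bumps $\binom{\cdot}{ab}$ by $|v|_a$ — precisely Formula~\eqref{eq_binom_2_words} with $v$ replaced by a single letter. No further ingredients are needed; the induction then closes immediately.
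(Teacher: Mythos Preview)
The paper does not supply its own proof of this lemma: it is stated as a known result with citations to \cite{Mateescu_Salomaa_Salomaa_Yu2001TIA} and, for the binary case, \cite{Prodinger1979DM}. Your induction on $|w|$ reducing to the single-letter right-multiplication rule, together with the Pascal-type recurrence $\binom{ua}{xb}=\binom{u}{xb}+[a=b]\binom{u}{x}$, is exactly the standard argument from those references and is correct.

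One minor point: you correctly catch midstream that the paper's definition $\parikhMat_{ij}(w)=\binom{w}{a_i\cdots a_j}$ is off by one in its indexing (the displayed $3\times 3$ binary matrix makes clear that the $(i,j)$ entry actually records $\binom{w}{a_i\cdots a_{j-1}}$), but your first displayed combinatorial identity still carries the paper's indexing before you fix it; in a clean write-up you would adopt the corrected convention from the outset and drop the running self-corrections. The substance of your argument needs no change.
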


To end this section, we summarize the two results we have mentioned at the beginning of this section and recall also a result linked with Section~\ref{subsec_sum_positions}.

\begin{theorem}[\cite{Cerny2009JALC}]
\label{T_equiv_precedence_matrix}
Given two words $u$ and $v$ over an alphabet $A$, $u \sim_2 v$ if and only if they have the same precedence matrix.
\end{theorem}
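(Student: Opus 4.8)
The plan is to prove both directions of the equivalence in Theorem~\ref{T_equiv_precedence_matrix} by reducing everything to the coefficients that a precedence matrix actually stores. Recall that $\pmat(w)$ records $|w|_{a_i}$ on the diagonal and $\binom{w}{a_i a_j}$ off the diagonal, so having the same $p$-matrix is precisely the assertion that $u$ and $v$ agree on $\binom{\cdot}{\alpha}$ for every letter $\alpha$ and on $\binom{\cdot}{\alpha\beta}$ for every ordered pair of \emph{distinct} letters $\alpha,\beta$. The forward direction $u\sim_2 v \Rightarrow \pmat(u)=\pmat(v)$ is then immediate: by definition of $\sim_2$, $u$ and $v$ agree on $\binom{\cdot}{x}$ for all words $x$ with $|x|\le 2$, and the entries of the $p$-matrix are a subset of exactly these coefficients (those with $x$ of length $1$, or $x=\alpha\beta$ with $\alpha\ne\beta$). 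So this direction needs no work beyond unwinding the two definitions.

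For the converse, suppose $\pmat(u)=\pmat(v)$. I would argue that this forces $u\sim_2 v$, i.e. $\binom{u}{x}=\binom{v}{x}$ for every $x$ with $|x|\le 2$. The cases $|x|=0$ and $|x|=1$ are handled by convention ($\binom{w}{\varepsilon}=1$) and by the diagonal entries ($\binom{w}{\alpha}=|w|_\alpha$), respectively. For $|x|=2$, write $x=\alpha\beta$. If $\alpha\ne\beta$ the claim is exactly the equality of off-diagonal $p$-matrix entries. If $\alpha=\beta$, I invoke Formula~\eqref{eq_binom_aa}, namely $\binom{w}{\alpha\alpha}=|w|_\alpha(|w|_\alpha-1)$, so that $\binom{u}{\alpha\alpha}$ depends only on $|u|_\alpha=\pmat_{ii}(u)$; since $\pmat_{ii}(u)=\pmat_{ii}(v)$ we get $\binom{u}{\alpha\alpha}=\binom{v}{\alpha\alpha}$. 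This exhausts all $x$ with $|x|\le 2$, so $u\sim_2 v$.

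The only mild subtlety — and it is genuinely mild — is that the definition of $\sim_2$ quantifies over \emph{all} words $x$ of length $\le 2$, including the ``diagonal'' words $\alpha\alpha$ which the precedence matrix does \emph{not} store directly; the point of Equation~\eqref{eq_binom_aa} is precisely that these diagonal coefficients are not independent data but are determined by the letter counts already on the diagonal of $\pmat$. One could also phrase the converse via Lemma~\ref{L_p_matrix}, but that is unnecessary here since we are comparing the matrices of $u$ and $v$ directly rather than reconstructing them letter by letter. Thus the proof is essentially a bookkeeping argument: the $p$-matrix carries exactly the information $\{|w|_\alpha\}_\alpha \cup \{\binom{w}{\alpha\beta}\}_{\alpha\ne\beta}$, and Formula~\eqref{eq_binom_aa} shows this is equivalent to the full data $\{\binom{w}{x}\}_{|x|\le 2}$ defining $\sim_2$. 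I expect no real obstacle; the main care point is simply to treat the $\alpha=\beta$ sub-case of length-two words explicitly so the argument is complete.
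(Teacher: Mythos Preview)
Your proof is correct and is exactly the argument the paper has in mind: the text preceding Theorem~\ref{T_equiv_precedence_matrix} says these results ``are direct consequences of the definition of precedence and Parikh matrices and Formulas~\eqref{eq_relation_base1} and~\eqref{eq_binom_aa}'', and your write-up simply expands this remark, handling the $\alpha\alpha$ case via~\eqref{eq_binom_aa} just as the paper intends. (You rightly omit Formula~\eqref{eq_relation_base1}, which is only needed for the Parikh-matrix variant since the full precedence matrix already stores both $\binom{w}{ab}$ and $\binom{w}{ba}$.)
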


\begin{theorem}[\cite{Fosse_Richomme2004IPL}]
\label{T_equiv_parikh_matrix_and_sum_positions}
Given two words $u$ and $v$ over $\{a,b\}$, 
$u \sim_2 v$ if and only if they have the same Parikh matrix
if and only if $\Psi(u) = \Psi(v)$ and $S_b(u) = S_b(v)$.
\end{theorem}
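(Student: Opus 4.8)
The plan is to establish the chain of equivalences in two halves. For the first equivalence ($u \sim_2 v$ iff $\parikhMat(u) = \parikhMat(v)$), I would argue directly from the definitions. If $u \sim_2 v$, then by definition $\binom{u}{x} = \binom{v}{x}$ for every word $x$ of length at most $2$; in particular $|u|_a = |v|_a$, $|u|_b = |v|_b$, and $\binom{u}{ab} = \binom{v}{ab}$, and these three quantities are exactly the three nontrivial entries of the Parikh matrix (the rest being fixed $0$'s and $1$'s), so $\parikhMat(u) = \parikhMat(v)$. Conversely, if the Parikh matrices agree, then $|u|_a = |v|_a$, $|u|_b = |v|_b$, and $\binom{u}{ab} = \binom{v}{ab}$ agree. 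To conclude $u \sim_2 v$ I must recover the remaining length-$\leq 2$ coefficients: $\binom{u}{\varepsilon} = 1 = \binom{v}{\varepsilon}$ trivially, $\binom{u}{a} = |u|_a$ and $\binom{u}{b} = |u|_b$ agree, $\binom{u}{aa}$ and $\binom{u}{bb}$ are determined by $|u|_a$ and $|u|_b$ via Equation~\eqref{eq_binom_aa}, and finally $\binom{u}{ba}$ agrees because by Equation~\eqref{eq_relation_base1} we have $\binom{u}{ba} = |u|_a|u|_b - \binom{u}{ab}$, and the right-hand side is the same for $u$ and $v$. Hence $u \sim_2 v$.

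For the second equivalence ($u \sim_2 v$ iff $\Psi(u) = \Psi(v)$ and $S_b(u) = S_b(v)$), I would again use only the earlier formulas. The Parikh vector $\Psi(w) = (|w|_a, |w|_b)$ records the length-$1$ information. If $u \sim_2 v$ then $\Psi(u) = \Psi(v)$ clearly, and by Lemma~\ref{L_lien_Sb_et_binom_ab} applied in the binary case we have $S_b(w) = \frac{|w|_b(|w|_b+1)}{2} + \binom{w}{ab}$, so since both $|w|_b$ and $\binom{w}{ab}$ are $\sim_2$-invariant, $S_b(u) = S_b(v)$. Conversely, if $\Psi(u) = \Psi(v)$ then $|u|_a = |v|_a$ and $|u|_b = |v|_b$; and then the same formula from Lemma~\ref{L_lien_Sb_et_binom_ab} rearranges to $\binom{w}{ab} = S_b(w) - \frac{|w|_b(|w|_b+1)}{2}$, so equality of $S_b$ together with equality of $|w|_b$ forces $\binom{u}{ab} = \binom{v}{ab}$. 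Thus $u$ and $v$ agree on $|w|_a$, $|w|_b$, and $\binom{w}{ab}$, which by the argument of the previous paragraph gives $u \sim_2 v$.

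Putting the two halves together yields the full biconditional chain. There is no real obstacle here: the statement is essentially a bookkeeping consequence of Equations~\eqref{eq_binom_aa}, \eqref{eq_relation_base1} and Lemma~\ref{L_lien_Sb_et_binom_ab} combined with the fact that, over a binary alphabet, the data $(|w|_a, |w|_b, \binom{w}{ab})$ determines all binomial coefficients of length at most $2$. The one point to state carefully is that reduction: the Manvel et al.\ observation lets us restrict to $|x| = 2$, but it is cleaner simply to enumerate the finitely many words $x$ with $|x| \le 2$ over $\{a,b\}$ — namely $\varepsilon, a, b, aa, ab, ba, bb$ — and check each, which is what the paragraphs above do.
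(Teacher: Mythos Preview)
Your argument is correct and is exactly the approach the paper indicates: the paper does not give a formal proof of this theorem but states explicitly that the Parikh-matrix equivalence is a ``direct consequence of the definition of precedence and Parikh matrices and Formulas~\eqref{eq_relation_base1} and \eqref{eq_binom_aa}'', while the $S_b$ part is precisely Lemma~\ref{L_lien_Sb_et_binom_ab}. Your write-up simply spells out these ingredients, so there is nothing to add.
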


\subsection{\label{subsec_interpret_char}On a characterization using a rewriting rule}

In the previous section, we have recalled that the $2$-binomial equivalence on binary word can be characterized using precedence matrices by Parikh matrices. 
We consider another known characterization of $\sim_2$ in the binary case defined using a rewriting rule.
In this section, we prove this result using the geometrical interpretation of $\binom{w}{ab}$. 
Since the combinatorial proof in \cite{Fosse_Richomme2004IPL}
is shorter, the aim of the proof is more to show how the geometrical interpretation
can be used 
(Note that the intermediary notion of $\Psi$-decomposition will be re-used in Section~\ref{subsecSturm}).

Given a relation $R$ on a set $E$,
we recall that the \textit{reflexive closure} of $R$
is the relation $\{(x,y) \in E\times E \mid (x, y) \in E \text{~or~} (y, x) \in E\}$.
Also the \textit{transitive closure} of $R$  
is the set of all $(x,y) \in E\times E$ 
for which there exist elements $z_0$, $z_1$, \ldots, $z_k$
such that $x = z_0$, $y = z_k$ and $(z_i, z_{i+1}) \in R$ for each $i$, $0 \leq i < k$.
The reflexive and transitive closure of $R$, usually denoted $R^*$,
is the transitive closure of the reflexive closure of $R$.
Such a relation is an equivalence relation.

For two words $u$ and $v$ over the binary alphabet $\{a, b\}$,
let denote\footnote{In \cite{Fosse_Richomme2004IPL} the relation $\rightarrow$ was denoted $\equiv$. The notation is changed in order to 
take care of the natural orientation of the definition}
$u \rightarrow v$ the fact that $u$ and $v$ can be decomposed 
$u = x ab y ba z$ and $v = x ba y ab z$ for some words $x$, $y$ and $z$. 
We also let denote $v \leftarrow u$ this fact.
Relation $\leftarrow$ is the mirror relation of $\rightarrow$: $u \leftarrow v$ if and only if $v \rightarrow u$.

The reflexive closure of these relations is denoted $\equiv$ ($\equiv = \leftarrow \cup \rightarrow$).
The reflexive and transitive closure of relation $\rightarrow$
is denoted $\equiv^*$. 
For instance, $abababa \rightarrow baaabba \rightarrow baabaab$ which implies that $baaabba \equiv abababa$ and $baabaab \equiv^* abababa$.
The whole class of equivalence of
$abababa$ by $\equiv^*$ is $\{aabbbaa, abababa, baaabba, abbaaab, baabaab\}$. 

\begin{theorem}\cite{Fosse_Richomme2004IPL}
\label{T_equivalence_reecrite_sim2}
For two binary words $u$ and $v$, 
$u \equiv^* v$ if and only if $u \sim_2 v$.
\end{theorem}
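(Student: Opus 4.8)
The plan is to prove each implication separately. For the direction $u \equiv^* v \Rightarrow u \sim_2 v$, it suffices by transitivity to treat a single rewriting step $u \rightarrow v$. Here $u = xabyba z$ and $v = xbayab z$, so clearly $|u|_a = |v|_a$ and $|u|_b = |v|_b$; it remains to check $\binom{u}{ab} = \binom{v}{ab}$. I would apply Equation~\eqref{eq_binom_3_words} (iterated, or equivalently the geometric decomposition of Figure~\ref{figure_congruence}) to split $\binom{u}{ab}$ into contributions from the five blocks $x$, $ab$, $y$, $ba$, $z$ plus cross terms $|\cdot|_a|\cdot|_b$; the factors $ab$ and $ba$ contribute $\binom{ab}{ab}=1$ and $\binom{ba}{ab}=0$ respectively, and contribute identically to the cross terms since $|ab|_a=|ba|_a=1$ and $|ab|_b=|ba|_b=1$. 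Swapping $ab \leftrightarrow ba$ therefore changes $\binom{\cdot}{ab}$ only through the single internal term $\binom{ab}{ab}-\binom{ba}{ab} = 1 - 0$ arising from each of the two swapped factors — but the $ab$ occurring after $y$ gains what the $ba$ loses, so the net change is $0$. (This is precisely what the figure displays.) Hence $\binom{u}{ab} = \binom{v}{ab}$ and $u \sim_2 v$; since $\sim_2$ is an equivalence relation, $\equiv^* \subseteq \sim_2$.

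For the converse $u \sim_2 v \Rightarrow u \equiv^* v$, the strategy is to pass through a canonical representative. Using Lemma~\ref{L_interpretation_binom_ab} and Equation~\eqref{eq_calcul_binom_ab_par_prefixes}, a word $w$ with fixed Parikh vector $(p,q)$ is determined $\sim_2$-equivalently by the value $\binom{w}{ab}$, which geometrically is the area to the left of the staircase in $\rect(w)$; this area equals $\sum_j c_j$ where $c_j = |\pref_{j,b}(w)|_a$ records the $x$-coordinate of the $j$th horizontal level, and $0 \le c_1 \le c_2 \le \cdots \le c_q \le p$. So the $\sim_2$-class of $w$ corresponds bijectively to the partition $(c_q \ge \cdots \ge c_1)$ of $\binom{w}{ab}$ into at most $q$ parts each at most $p$. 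I would show that the rewriting rule $\rightarrow$ acts transitively (as its reflexive–transitive closure) on all words realizing a given such partition: intuitively, each application of $ab\,y\,ba \to ba\,y\,ab$ moves one unit of "area" from a higher step to a lower step while conserving the multiset of step-heights, and conversely every two words with the same staircase profile differ by a sequence of such local moves. Concretely, I would prove by induction on $|w|$ that any $w$ can be rewritten via $\equiv^*$ into the sorted representative $a^{i_1}b a^{i_2} b \cdots$ with the $b$'s pushed as far left (or right) as their profile allows, by locating an adjacent "out of order" pattern — a block $b$ to the left of a block $a$ that the target profile requires swapped — and realizing that swap as an instance of $ab\cdots ba \mapsto ba \cdots ab$; then $u \sim_2 v$ forces $u$ and $v$ to have the same canonical form, whence $u \equiv^* v$.

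The main obstacle is the converse direction, specifically verifying that $\rightarrow$ is rich enough: a single step requires an $ab$ genuinely to the \emph{left} of a $ba$ (patterns of the form $ab\cdots ba$), and it is not a priori obvious that whenever $w$ is not yet canonical such a configurable pattern exists and that applying it strictly decreases a suitable potential (e.g.\ the number of inversions, or the $\ell^1$-distance of the step-vector to the sorted one). I would handle this by a careful case analysis on the factorization of $w$ at the first place it deviates from canonical form, showing one can always find factors $x, y, z$ with $w = x\,ab\,y\,ba\,z$ whose swap moves $w$ strictly closer — in the chosen potential — to the canonical representative while staying inside the $\sim_2$-class (which the first implication guarantees is preserved). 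Termination of the potential then gives the canonical form, and uniqueness of the canonical form per $\sim_2$-class closes the argument. (I note the paper flags that a shorter combinatorial proof exists in \cite{Fosse_Richomme2004IPL}; the point here is the geometric rendering, so I would keep the bookkeeping light and lean on Figures~\ref{fig_area} and~\ref{figure_congruence}.)
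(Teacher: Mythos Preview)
Your ``only if'' argument is essentially the paper's: a single rewriting step preserves $\binom{\cdot}{ab}$ because the two local swaps cancel (Equation~\eqref{eq_baisse_1} in the paper, or equivalently your block decomposition via Equation~\eqref{eq_binom_3_words}).

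For the ``if'' direction your high-level strategy --- rewrite to a canonical representative and show it is determined by the $\sim_2$-class --- is valid and \emph{different} from the paper's main proof. The paper's main argument introduces the geometric distance $d(u,v) = \#(\leftPart(u)\,\Delta\,\leftPart(v))$, uses the maximal $\Psi$-decomposition of $(u,v)$ to locate one block where the path of $u$ lies above that of $v$ and another where it lies below, and shows that a single $\equiv$ step can be chosen to decrease $d$ by exactly $2$; induction on $d$ finishes. Your canonical-form route does appear in the paper, but only later and as an alternative (Theorem~\ref{T_lattice} and the remark following it): iterate $\rightarrow$ until no factor in $ab\{a,b\}^*ba$ remains; the resulting word lies in $b^*a^*b^* \cup b^*a^+ba^+b^*$ and is uniquely determined by $(|w|_a,|w|_b,\binom{w}{ab})$ by Lemma~\ref{L_reconstruct_final}.

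Your execution, however, has a genuine error. You assert that a step $ab\,y\,ba \to ba\,y\,ab$ ``conserv[es] the multiset of step-heights'' and that ``the $\sim_2$-class of $w$ corresponds bijectively to the partition $(c_q \ge \cdots \ge c_1)$''. Both are false. A rewriting step changes \emph{two} of the heights $c_j = |\pref_{j,b}(w)|_a$: the $b$ coming from $ab\to ba$ loses one preceding $a$, and the $b$ coming from $ba\to ab$ gains one. Only the sum $\sum_j c_j = \binom{w}{ab}$ is preserved, not the multiset. Distinct words in a single $\sim_2$-class therefore carry \emph{different} partitions of the same integer (this is exactly Lemma~\ref{L_bijection_partition}). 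Consequently your ``sorted representative'' is ill-defined: for every word one already has $c_1 \le c_2 \le \cdots \le c_q$, so there is nothing to sort, and no potential built from the partition shape alone will do the job you want. The correct normal form is not a reordering of the $c_j$ but the unique word with no factor in $ab\{a,b\}^*ba$; termination is by the lexicographic order (Fact~\ref{fact_ordre}) and uniqueness by Lemma~\ref{L_reconstruct_final}. With that correction your outline becomes precisely the alternative proof of Section~\ref{subsec_first_lattice}.
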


\begin{proof}[Proof of the only if part of Theorem~\ref{T_equivalence_reecrite_sim2}]
The first part of our proof lies first on next equation
which is quite immediate.
Figure~\ref{fig_interpretation_baisse_aire} explains 
it, showing graphically that when switching the factor $ab$ in $xaby$ to obtain $xbay$, the area at the left of the line representation of the word decreases by 1.

\begin{equation}
\binom{xaby}{ab}=\binom{xbay}{ab}-1 \label{eq_baisse_1}
\end{equation}

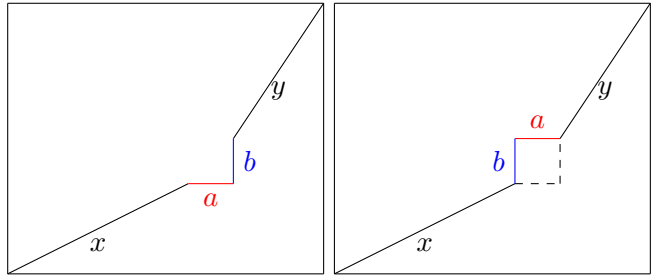
\begin{figure}[!ht]
\begin{center}
\begin{tikzpicture}[scale=0.6]
\begin{scope}
\draw (0,0) -- (0,6);
\draw (0,6) -- (7,6);
\draw (7,6) -- (7,0);
\draw (7,0) -- (0,0);
\draw (0,0) -- (4,2)  node[midway,below]{$x$};
\draw [red] (4,2) -- (5,2) node[midway,below]{$a$};
\draw [blue] (5,2) -- (5,3) node[midway,right]{$b$};
\draw (5,3) -- (7,6)  node[midway,below]{$y$};
\end{scope}
\end{tikzpicture}
\begin{tikzpicture}[scale=0.6]
\begin{scope}
\draw (0,0) -- (0,6);
\draw (0,6) -- (7,6);
\draw (7,6) -- (7,0);
\draw (7,0) -- (0,0);
\draw (0,0) -- (4,2)  node[midway,below]{$x$};
\draw [blue] (4,2) -- (4,3) node[midway,left]{$b$};
\draw [red] (4,3) -- (5,3) node[midway,above]{$a$};
\draw [dashed] (4,2) -- (5,2);
\draw [dashed] (5,2) -- (5,3);
\draw (5,3) -- (7,6)  node[midway,below]{$y$};
\end{scope}
\end{tikzpicture}
\end{center}
\caption{\label{fig_interpretation_baisse_aire}From $uabv$ to $ubav$}
\end{figure}

Thus switching simultaneously a factor $ab$ to $ba$ and a (non-overlapping) factor $ba$ to factor $ab$ leads to 
an interpretation of the relation 
$\binom{xabybaz}{ab}=\binom{xbayabz}{ab}$ which is a direct consequence of Equation~\eqref{eq_baisse_1} (see Figure~\ref{fig_passage_xabybaz_xbayabz}).

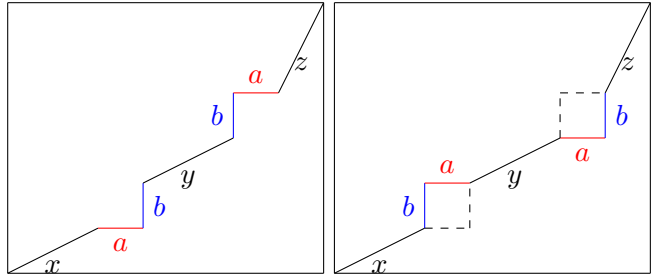
\begin{figure}[!ht]
\begin{center}
\begin{tikzpicture}[scale=0.6]
\begin{scope}
\draw (0,0) -- (0,6);
\draw (0,6) -- (7,6);
\draw (7,6) -- (7,0);
\draw (7,0) -- (0,0);
\draw (0,0) -- (2,1)  node[midway,below]{$x$};
\draw [red] (2,1) -- (3,1) node[midway,below]{$a$};
\draw [blue] (3,1) -- (3,2) node[midway,right]{$b$};
\draw (3,2) -- (5,3)  node[midway,below]{$y$};
\draw [blue] (5,3) -- (5,4) node[midway,left]{$b$};
\draw [red] (5,4) -- (6,4) node[midway,above]{$a$};
\draw (6,4) -- (7,6)  node[midway,below]{$z$};
\end{scope}
\end{tikzpicture}
\begin{tikzpicture}[scale=0.6]
\begin{scope}
\draw (0,0) -- (0,6);
\draw (0,6) -- (7,6);
\draw (7,6) -- (7,0);
\draw (7,0) -- (0,0);
\draw (0,0) -- (2,1)  node[midway,below]{$x$};
\draw [blue] (2,1) -- (2,2) node[midway,left]{$b$};
\draw [red] (2,2) -- (3,2) node[midway,above]{$a$};
\draw [dashed] (2,1) -- (3,1) ;
\draw [dashed] (3,1) -- (3,2) ;
\draw (3,2) -- (5,3)  node[midway,below]{$y$};
\draw [red] (5,3) -- (6,3) node[midway,below]{$a$};
\draw [blue] (6,3) -- (6,4) node[midway,right]{$b$};
\draw [dashed] (5,3) -- (5,4) ;
\draw [dashed] (5,4) -- (6,4) ;
\draw (6,4) -- (7,6)  node[midway,below]{$z$};

\end{scope}
\end{tikzpicture}
\end{center}
\caption{\label{fig_passage_xabybaz_xbayabz}From $xabybaz$ to $xbayabz$}
\end{figure}

Since we also have $|xabybaz|_a = |xbayabz|_a$ and
$|xabybaz|_b = |xbayabz|_b$, it follows:
if $u \rightarrow v$ than $u \sim_2 v$.
Consequently: if $u \equiv^* v$ than $u \sim_2 v$.
\end{proof}

\medskip

The idea for the proof of the converse 
is illustrated by Figure~\ref{fig_2_words} that represents two 2-binomially equivalent words.
When $u  \sim_2 v$, 
it may be observed that sometimes 
the line representation of the word $u$ must be above of the line representation of $v$ and at other times
the line representation of the word $u$ must be below of the line representation of $v$.
This implies that $u$ has two non-overlapping
factors $ab$ and $ba$, and,
replacing these factors with $ba$ and $ab$ 
respectively produces
a word $u'$, that is $u' \equiv u$ ($u \rightarrow u'$ or 
$u' \rightarrow u$), which is closer
to $v$ than $u$ (``closer" meaning that the area delimited by the line representations of $u$ and $v$
is strictly smaller than the area delimited by the line representation of $u'$ and $v$). Since $u' \sim_2 v$, we can repeat the analysis. Thus, by induction, 
we can prove that $u \equiv^* v$.

\begin{figure}[!ht]
\begin{center}
\begin{tikzpicture}[scale=0.4]
\begin{scope}
\draw [dotted] (0,0) -- (16,0);
\draw [dotted] (16,0) -- (16,11);
\draw [dotted] (16,11) -- (0,11);
\draw [dotted] (0,0) -- (0,11);
\draw [dotted]  grid (16,11);
\draw [red] (0,0) -- (1,0) ;
\draw [red] (1,0) -- (2,0) ;
\draw [red] (2,0) -- (2,1) ;
\draw [red] (2,1) -- (3,1) ;
\fill [gray!20,opacity=0.5] (0,0) -- (2,0) -- (2,1) -- (0,1) -- cycle ;
\fill [gray!20,opacity=0.5] (1,1) -- (3,1) -- (3,2) -- (1,2) -- cycle ;
\fill [gray!20,opacity=0.5] (4,3) -- (5,3) -- (5,5) -- (4,5) -- cycle ;
\fill [gray!20,opacity=0.5] (5,4) -- (6,4) -- (6,7) -- (5,7) -- cycle ;
\fill [gray!20,opacity=0.5] (7,7) -- (9,7) -- (9,8) -- (7,8) -- cycle ;
\fill [gray!20,opacity=0.5] (9,8) -- (14,8) -- (14,9) -- (9,9) -- cycle ;
\fill [gray!20,opacity=0.5] (10,9) -- (14,9) -- (14,10) -- (10,10) -- cycle ;
\fill [gray!20,opacity=0.5] (14,10) -- (16,10) -- (16,11) -- (14,11) -- cycle ;
\draw [red] (3,1) -- (3,2) ;
\draw [red] (3,2) -- (3,3) ;
\draw [red] (3,3) -- (4,3) ;
\draw [dashed] (0,0) -- (0,1) ;
\draw [dashed] (0,1) -- (1,1) ;
\draw [dashed] (1,1) -- (1,2) ;
\draw [dashed] (1,2) -- (2,2) ;
\draw [dashed] (2,2) -- (3,2) ;
\draw [dashed] (3,2) -- (3,3) ;
\draw [dashed] (3,3) -- (4,3) ;
\draw [red] (4,3) -- (5,3) ;
\draw [red] (5,3) -- (5,4) ;
\draw [red] (5,4) -- (6,4) ;
\draw [red] (6,4) -- (6,5) ;
\draw [red] (6,5) -- (6,6) ;
\draw [red] (6,6) -- (6,7) ;
\draw [dashed](4,3) -- (4,4) ;
\draw [dashed](4,4) -- (4,5) ;
\draw [dashed](4,5) -- (5,5) ;
\draw [dashed](5,5) -- (5,6) ;
\draw [dashed](5,6) -- (5,7) ;
\draw [dashed](5,7) -- (6,7) ;
\draw [red] (7,8) -- (8,8) ;
\draw [red] (7,7) -- (7,8) ;
\draw [red] (8,8) -- (9,8) ;
\draw [dashed] (6,7) -- (7,7) ;
\draw [red] (6,7) -- (7,7) ;
\draw [dashed] (7,7) -- (8,7) ;
\draw [dashed] (8,7) -- (9,7) ;
\draw [dashed] (9,7) -- (9,8) ;
\draw [red] (9,8) -- (9,9) ;
\draw [red] (9,9) -- (10,9) ;
\draw [red] (10,9) -- (10,10) ;
\draw [red] (10,10) -- (11,10) ;
\draw [red] (11,10) -- (12,10) ;
\draw [red] (12,10) -- (13,10) ;
\draw [red] (13,10) -- (14,10) ;
\draw [dashed] (9,8) -- (10,8) ;
\draw [dashed] (10,8) -- (11,8) ;
\draw [dashed] (11,8) -- (12,8) ;
\draw [dashed] (12,8) -- (13,8) ;
\draw [dashed] (13,8) -- (14,8) ;
\draw [dashed] (14,8) -- (14,9) ;
\draw [dashed] (14,9) -- (14,10);
\draw [red] (14,10) -- (15,10) ;
\draw [red] (15,10) -- (16,10) ;
\draw [red] (16,10) -- (16,11) ;
\draw [dashed] (14,10) -- (14,11);
\draw [dashed] (14,11) -- (15,11);
\draw [dashed] (15,11) -- (16,11);
\end{scope}
\end{tikzpicture}
\end{center}
\caption{\label{fig_2_words}two $2$-binomial equivalent words}
\end{figure}
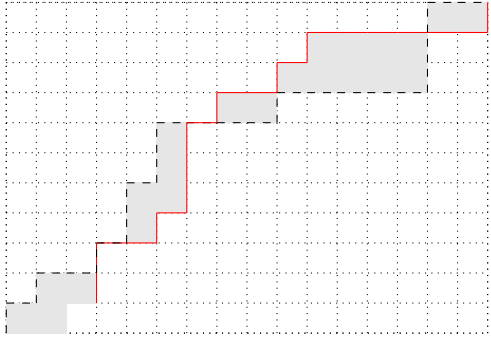

\medskip

In order to prove this more formally, we need to introduce some notions and notation that interpret various elements presented in Figure~\ref{fig_2_words}.

\medskip

Let $u$ and $v$ be two words (not necessarily $2$-binomially equivalent). 
The line representations of the two words may cross or join at several points. 
Considering one of this
point and the corresponding prefixes $p_u$ and $p_v$ of respectively $u$ and $v$,
we may observe that $\Psi(p_u) = \Psi(p_v)$. 
Moreover considering
the words between successive such points leads to a natural
decomposition of $u$ and $v$.
A sequence $(u_i, v_i)_{1 \leq i \leq k}$ 
is called a \textit{$\Psi$-decomposition} 
of $(u, v)$ if $u = u_1 \cdots u_k$,
$v = v_1 \cdots v_k$ with,
\begin{itemize}
\item  for all $i$, $1 \leq i \leq k$, $u_i \neq \varepsilon$ and $v_i \neq \varepsilon$,  
and, 
\item for all $i$, $1 \leq i < k$,
$\Psi(u_i) = \Psi(v_i)$ (or equivalently $\Psi(u_1\cdots u_i) = \Psi(v_1\cdots v_i)$).
\end{itemize}
Observe that when $\Psi(u) = \Psi(v)$, we also 
have $\Psi(u_k) = \Psi(v_k)$.
Such a $\Psi$-decomposition is said \textit{maximal} if there exists no longer
$\Psi$-decomposition of $(u, v)$.
Since elements of a $\Psi$-decomposition are not empty, such a maximal sequence exists and is unique (it may be empty if $\Psi(u) \neq \Psi(v)$):
the corresponding sequence $(u_1 \cdots u_i, v_1 \cdots v_i)_{1 \leq i < k}$ is the sequence of pairs of
nonempty proper prefixes of respectively $u$ and $v$ with same Parikh vector.

To illustrate this notion of $\Psi$-decomposition, in Figure~\ref{fig_2_words},
let $u$ be the word represented by a plain red line
and $v$ be the word represented by a dashed black line.
The maximal $\Psi$-decomposition of $(u, v)$
is $((aabab, babaa),$ 
$(b, b),$ 
$(a, a),$
$(ababbb, bbabba),$
$(a, a),$
$(baa, aab),$
$(babaaaa, aaaaabb),$
$(aab, baa))$.

\begin{lemma}
\label{L_Psi_difference}
Let $u$ and $v$ be two words over $\{a, b\}$
with $\Psi(u) = \Psi(v)$. Let $(u_i, v_i)_{1 \leq i \leq k}$
be a $\Psi$-decomposition of $(u, v)$. We have
$$\binom{u}{ab} - \binom{v}{ab} = \sum_{i = 1}^k \binom{u_i}{ab} - \binom{v_i}{ab}$$
\end{lemma}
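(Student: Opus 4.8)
The natural approach is induction on the length $k$ of the $\Psi$-decomposition, using Equation~\eqref{eq_binom_2_words} as the basic tool for splitting the coefficient $\binom{\,\cdot\,}{ab}$ across a concatenation. The crucial point is that the ``cross terms'' $|u'|_a|u''|_b$ that appear when splitting a concatenated word match up exactly for $u$ and for $v$, precisely because the $\Psi$-decomposition forces equality of Parikh vectors at the split points.

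\medskip

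First I would treat the base case $k = 1$, which is trivial since then $u = u_1$ and $v = v_1$. For the inductive step, suppose the result holds for $\Psi$-decompositions of length $k-1$. Given a $\Psi$-decomposition $(u_i, v_i)_{1 \leq i \leq k}$ of $(u, v)$, write $u = u' u_k$ and $v = v' v_k$ where $u' = u_1 \cdots u_{k-1}$ and $v' = v_1 \cdots v_{k-1}$. By the definition of a $\Psi$-decomposition, $\Psi(u_1 \cdots u_i) = \Psi(v_1 \cdots v_i)$ for all $i < k$; in particular $\Psi(u') = \Psi(v')$, so $|u'|_a = |v'|_a$ and $|u'|_b = |v'|_b$. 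Moreover, since $\Psi(u) = \Psi(v)$ by hypothesis, we also get $\Psi(u_k) = \Psi(v_k)$, hence $|u_k|_a = |v_k|_a$ and $|u_k|_b = |v_k|_b$. Now apply Equation~\eqref{eq_binom_2_words} to both decompositions:
\begin{equation}
\binom{u}{ab} = \binom{u'}{ab} + \binom{u_k}{ab} + |u'|_a|u_k|_b, \qquad
\binom{v}{ab} = \binom{v'}{ab} + \binom{v_k}{ab} + |v'|_a|v_k|_b. \notag
\end{equation}
Subtracting and using $|u'|_a|u_k|_b = |v'|_a|v_k|_b$, the cross terms cancel:
\begin{equation}
\binom{u}{ab} - \binom{v}{ab} = \left(\binom{u'}{ab} - \binom{v'}{ab}\right) + \left(\binom{u_k}{ab} - \binom{v_k}{ab}\right). \notag
\end{equation}

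\medskip

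To finish, I would apply the induction hypothesis to the pair $(u', v')$. This requires noting that $(u_i, v_i)_{1 \leq i \leq k-1}$ is a $\Psi$-decomposition of $(u', v')$ with $\Psi(u') = \Psi(v')$: the nonemptiness conditions are inherited, and the Parikh-equality conditions at intermediate split points are exactly the original ones. Hence $\binom{u'}{ab} - \binom{v'}{ab} = \sum_{i=1}^{k-1} \left(\binom{u_i}{ab} - \binom{v_i}{ab}\right)$, and combining with the displayed identity above gives $\binom{u}{ab} - \binom{v}{ab} = \sum_{i=1}^{k} \left(\binom{u_i}{ab} - \binom{v_i}{ab}\right)$, as desired.

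\medskip

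I do not expect a genuine obstacle here; the argument is a routine telescoping induction. The only point requiring a little care is the bookkeeping about which equalities of Parikh vectors are available at which stage — in particular making sure that the hypothesis $\Psi(u) = \Psi(v)$ is what allows $\Psi(u_k) = \Psi(v_k)$ at the last block (this is exactly the observation already flagged in the text right before the statement of the lemma), and that dropping the last block preserves the $\Psi$-decomposition property for the inductive call. An alternative, perhaps even cleaner, route is to bypass induction entirely: iterate Equation~\eqref{eq_binom_2_words} once to get $\binom{u}{ab} = \sum_{i=1}^k \binom{u_i}{ab} + \sum_{1 \leq i < j \leq k} |u_i|_a |u_j|_b$, do the same for $v$, and subtract, observing that every $|u_i|_a = |v_i|_a$ and $|u_j|_b = |v_j|_b$ so the entire double sum cancels. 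Either way the content is the cancellation of cross terms guaranteed by matching Parikh vectors.
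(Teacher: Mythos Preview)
Your proof is correct and takes essentially the same approach as the paper: induction on $k$, using Equation~\eqref{eq_binom_2_words} to split at the last block and cancel the cross terms via the matching Parikh vectors. The paper's write-up differs only cosmetically (it isolates the $k=2$ computation first, then merges the last two blocks to reduce $k$ to $k-1$), but the content is the same.
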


\begin{proof}
The result is immediate when $k = 1$. Assume $k = 2$.
By Formula~\ref{eq_binom_2_words} and 
since $|u_1|_a = |v_1|_a$ and $|u_2|_b = |v_2|_b$,
$\displaystyle 
\binom{u_1u_2}{ab} - \binom{v_1v_2}{ab} = 
\left[\binom{u_1}{ab} - \binom{v_1}{ab}\right]
+ \left[\binom{u_2}{ab} - \binom{v_2}{ab}\right]
+ |u_1|_a |u_2|_b - |v_1|_a |v_2|_b =
\left[\binom{u_1}{ab} - \binom{v_1}{ab}\right]
+ \left[\binom{u_2}{ab} - \binom{v_2}{ab}\right]$.
This proves the lemma for $k = 2$.

Let $(u_i, v_i)_{1 \leq i \leq k}$ be a $\Psi$-decomposition of $(u, v)$. 
Let $u_i' = u_i$ and $v_i' = v_i$ for all $i$, $1 \leq i \leq k-2$.
Let $u_{k-1}' = u_{k-1}u_k$ and $v_{k-1}' = v_{k-1}v_k$.
It is a simple observation that 
$(u_i', v_i')_{1 \leq i \leq k-1}$ is a $\Psi$-decomposition of $(u,v)$.
The proof of the lemma ends
by induction.
\end{proof}

A pair of binary words $(u, v)$ with $\Psi(u) = \Psi(v)$
is said to be \textit{$\Psi$-undecomposable}
if its maximal $\Psi$-decomposition is of length $1$
(the definition can be extended to pairs of words with
different Parikh vectors imposing in such a case that the length
of the maximal decomposition is $0$).
Being $\Psi$-undecomposable means that
there exist no pair $(p, p')$
of nonempty proper prefixes of $u$ and $v$ such that $\Psi(p) = \Psi(p')$.
Observe that if $(u_i, v_i)_{1 \leq i \leq k}$ is a maximal $\Psi$-decomposition, then each pair $(u_i, v_i)$ is $\Psi$-undecomposable. 
For nay pair of different words,
the $\Psi$-undecomposability implies that
the first letters of the two elements of the pair
are different (also the last letters are different).

\begin{lemma}
\label{L_Psi_value}
Let $u$ and $v$ be two words over $\{a, b\}$
with $\Psi(au) = \Psi(bv)$. Assume
$(au, bv)$ is $\Psi$-undecomposable. We have:
\begin{enumerate}
\item $|\pref_{i, b}(au)|_a > |\pref_{i, b}(bv)|_a$ for each $i$, $1  \leq i \leq |au|_b$. 
\item $\displaystyle\binom{au}{ab} - \binom{bv}{ab} > 0$.
\end{enumerate}
\end{lemma}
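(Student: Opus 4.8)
The plan is to prove part 1 first and then derive part 2 as an easy consequence. For part 1, I would argue by contradiction together with a minimality/first-crossing argument. Suppose that $\Psi(au)=\Psi(bv)$, that $(au,bv)$ is $\Psi$-undecomposable, but that there is some $i$ with $1\le i\le|au|_b$ for which $|\pref_{i,b}(au)|_a \le |\pref_{i,b}(bv)|_a$. Geometrically, $|\pref_{i,b}(w)|_a$ is the horizontal coordinate reached by the line representation of $w$ just after its $i$-th vertical step (its $i$-th occurrence of $b$). Since $au$ starts with $a$, after reading the first letter $au$ is strictly to the right of $bv$; so at ``$b$-level'' $1$ the inequality $|\pref_{1,b}(au)|_a > |\pref_{1,b}(bv)|_a$ holds (indeed $|\pref_{1,b}(au)|_a\ge 1$ while $|\pref_{1,b}(bv)|_a = 0$ since $bv$ begins with $b$). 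Let $i$ be the least index where strictness fails, i.e. $|\pref_{i,b}(au)|_a \le |\pref_{i,b}(bv)|_a$; then $i\ge 2$ and $|\pref_{i-1,b}(au)|_a > |\pref_{i-1,b}(bv)|_a$. The two line representations of $au$ and $bv$ therefore move, between $b$-level $i-1$ and $b$-level $i$, from a configuration where $au$ is strictly right of $bv$ to one where $au$ is at most as far right as $bv$; combined with $\Psi(au)=\Psi(bv)$ (so both lines end at the same point) this forces the two lines to meet or cross at some lattice point strictly between the endpoints. That crossing point gives a pair $(p,p')$ of nonempty proper prefixes of $au$ and $bv$ with $\Psi(p)=\Psi(p')$, contradicting $\Psi$-undecomposability. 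I would make the ``forces a crossing'' step precise using the discrete intermediate value idea: track the integer-valued quantity ``$a$-coordinate of $au$ minus $a$-coordinate of $bv$'' at the successive moments where both words have read the same number of $b$'s, and note that when it changes from positive to non-positive it must pass through $0$ at some such synchronized moment (the two coordinates are compared only at matching $b$-counts, so integrality applies), which is exactly the common-prefix condition.

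For part 2, I would use the prefix formula \eqref{eq_calcul_binom_ab_par_prefixes}, which reads $\binom{w}{ab} = \sum_{pb\text{ prefix of }w}|p|_a$, or equivalently, grouping prefixes by their $b$-count, $\binom{w}{ab} = \sum_{i=1}^{|w|_b} |\pref_{i,b}(w)|_a$ (using $|pb|_a=|p|_a$). Applying this to $w=au$ and $w=bv$ and subtracting termwise, using $|au|_b=|bv|_b$ (which follows from $\Psi(au)=\Psi(bv)$), gives
\[
\binom{au}{ab} - \binom{bv}{ab} \;=\; \sum_{i=1}^{|au|_b}\Bigl(|\pref_{i,b}(au)|_a - |\pref_{i,b}(bv)|_a\Bigr),
\]
and every summand is strictly positive by part 1, so the whole sum is strictly positive (the sum is nonempty since $au$ contains at least one $b$: indeed $bv$ does, and $\Psi(au)=\Psi(bv)$). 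This yields part 2.

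The main obstacle I anticipate is making the ``a sign change of an integer-valued difference between two broken lines forces a common lattice point'' argument airtight, because the two lines have different shapes and the comparison is only natural at the synchronized ``same number of $b$'s read'' checkpoints. The clean way around this is to phrase everything in terms of the sequences $\bigl(|\pref_{i,b}(au)|_a\bigr)_{0\le i\le |au|_b}$ and $\bigl(|\pref_{i,b}(bv)|_a\bigr)_{0\le i\le |bv|_b}$ (with the $i=0$ term equal to $0$ for $au$ — wait, here $au$ starts with $a$, so even $\pref$ before the first $b$ has positive $a$-count; one takes the prefix up to but not including the first $b$): at checkpoint $i-1$ the difference is $\ge 1$ and at checkpoint $i$ it is $\le 0$, but between consecutive checkpoints each of the two $a$-counts can only increase, so I would instead compare at the lattice point where $au$'s line, going up from level $i-1$ to level $i$, first attains $a$-coordinate equal to $|\pref_{i,b}(bv)|_a$ or less — and show that at that moment $bv$'s line has passed through the same lattice point, because $bv$ reached that $a$-coordinate at $b$-level $\le i-1 <i$. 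Formalizing this choice of matching prefixes, rather than invoking a vague crossing, is the delicate part; once the two matching nonempty proper prefixes with equal Parikh vectors are exhibited, the contradiction with $\Psi$-undecomposability is immediate.
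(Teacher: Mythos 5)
Your proposal is correct and follows essentially the same route as the paper: establish the case $i=1$, take the first index where the strict inequality fails, and use the resulting sandwich of $a$-coordinates to exhibit a pair of nonempty proper prefixes of $au$ and $bv$ with equal Parikh vectors (a common lattice point of the two line representations), contradicting $\Psi$-undecomposability; item~2 then follows from Formula~\eqref{eq_calcul_binom_ab_par_prefixes} exactly as you say. The only difference is cosmetic: you locate the crossing at the point $(|\pref_{i,b}(au)|_a,\,i-1)$ just below $au$'s $i$-th vertical step, whereas the paper matches the prefix $\pref_{i,b}(au)$ itself against a prefix of $bv$ of the same length lying in $\pref_{i,b}(bv)a^*$ — both choices yield the required contradiction.
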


\begin{proof}
Let us prove the first item.
Observe that for $i = 1$, $|\pref_{1, b}(au)|_a \geq 1 > 0 = |\pref_{1, b}(bv)|_a$.
Assume by contradiction that there
exists an integer $i$, $1 \leq i < |au|_b$, 
such that $|\pref_{i+1, b}(au)|_a \leq |\pref_{i+1, b}(bv)|_a$
and consider the smallest such integer: $|\pref_{i, b}(au)|_a > |\pref_{i, b}(bv)|_a$.
We have:
\begin{equation}
|\pref_{i, b}(bv)| < |\pref_{i, b}(au)| \leq |\pref_{i+1, b}(bv)|-1 \label{eqproof1}
\end{equation}

$\begin{array}{cl}
\text{Indeed~} |\pref_{i, b}(bv)| 
	&= i + |\pref_{i, b}(bv)|_a\\
	&< i + |\pref_{i, b}(au)|_a =|\pref_{i, b}(au)|\\
	&\leq i + |\pref_{i+1, b}(au)|_a \leq i + |\pref_{i+1, b}(bv)|_a\\
	&= |\pref_{i+1, b}(bv)| + i - |\pref_{i+1, b}(bv)|_b\\
	&=|\pref_{i+1, b}(bv)| -1
\end{array}$

Set $p_i = \pref_{i, b}(au)$ and $p_i' = \pref_{i, b}(bv)$.
Since the prefix of length $|\pref_{i+1, b}(bv)| -1$ of $bv$
belongs to $p_i'a^*$,
Equation~\eqref{eqproof1} implies that
the prefix of length $|p_i|$ of $bv$
is $p_i' a^{|p_i|-|p_i'|}$.
Observe that
$\Psi(p_i) = \Psi(p_i' a^{|p_i|-|p_i'|})$.
This contradicts the fact that $(au, bv)$ 
is $\Psi$-undecomposable. This ends the proof of the first item.

The second item is a direct consequence of the first item 
and 
of Formula~\eqref{eq_calcul_binom_ab_par_prefixes}.
\end{proof}

Given a pair of words $(u, v)$ with $\Psi(u) = \Psi(v)$,
the space between the line representations of the two words 
is the symmetric 
difference $\leftPart(u) \Delta \leftPart(v)$, or equivalently,
$\rightPart(u) \Delta \rightPart(v)$. Let denote $d(u, v) := \#(\leftPart(u) \Delta \leftPart(v)) =
\#(\rightPart(u) \Delta \rightPart(v))$. 
The function $d$ is a distance. The facts that ``$d(x, x) = 0$ for any word $x$"
and ``$d(x, y) = d(y, x)$ for any words $x$ and $y$" 
follow immediately the definition.
To see that ``$d(x, y)+d(y, z) \geq  d(x, z)$ 
for any words $x$, $y$ and $z$", it is sufficient 
to observe that if $(i, j) \not\in (\leftPart(x) \Delta \leftPart(y)) \cup (\leftPart(y) \Delta \leftPart(z))$
then also $(i, j) \not\in (\leftPart(x) \Delta \leftPart(z))$.

Note that $d$ is defined on arbitrary pairs of words that are not
necessarily $2$-binomially equivalent. For instance
$d(ab, ba) = 1$.

When $u \sim_2 v$, observe that $d(u,v)$ is even.
Indeed $d(u,v) = \#\leftPart(u) + \#\leftPart(v) - 2 \#(\leftPart(u) \cap \leftPart(v))$
and by Lemma~\ref{L_interpretation_binom_ab}, $\#\leftPart(u) =$ $\binom{u}{ab} =$ $\binom{v}{ab} =$ $\#\leftPart(v)$:
$d(u,v) = 2(\#\leftPart(u)-\#(\leftPart(u) \cap \leftPart(v))$.

\begin{lemma}
\label{L_d_dec}
Let $u$ and $v$ be two words over $\{a, b\}$
with $\Psi(u) = \Psi(v)$. 
Let $(u_i, v_i)_{1 \leq i \leq k}$
be a $\Psi$-decomposition of $(u, v)$.

\begin{equation}
\label{eq_somme_d}
d(u, v) = \sum_{i = 1}^k d(u_i, v_i)
\end{equation}
\end{lemma}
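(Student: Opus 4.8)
The statement concerns the distance $d(u,v)$ between the line representations of two words with the same Parikh vector, decomposed along a $\Psi$-decomposition. The natural strategy is to reduce to the case $k=2$ and then induct, exactly as was done for Lemma~\ref{L_Psi_difference}. So first I would establish the base case $k=2$: given $(u_1u_2, v_1v_2)$ with $\Psi(u_1)=\Psi(v_1)$ (hence also $\Psi(u_2)=\Psi(v_2)$), I want $d(u_1u_2,v_1v_2) = d(u_1,v_1) + d(u_2,v_2)$. The geometric picture is that concatenation glues the rectangle $\rect(u_1v_1\text{-part})$ in the lower-left to the rectangle for the second part, shifted up and to the right by $(|u_1|_a,|u_1|_b)$; since $\Psi(u_1)=\Psi(v_1)$, the shift is the same for $u$ and for $v$, so the two symmetric-difference regions sit in disjoint sub-rectangles of $\rect(u,v)$ and their sizes add.

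**Making the base case precise.** Concretely I would use the description of $\leftPart$ in terms of prefixes. Writing a prefix of $u_1u_2$ ending in $b$ as either a prefix of $u_1$ ending in $b$, or $u_1$ followed by a prefix of $u_2$ ending in $b$, Formula~\eqref{eq_calcul_binom_ab_par_prefixes} localizes the contribution of each $b$. For a $b$ inside the first block, the count $|p|_a$ of preceding $a$'s is determined entirely within $u_1$ (resp.\ $v_1$); for a $b$ inside the second block, the count is $|u_1|_a + (\text{local } a\text{-count in }u_2)$, and since $|u_1|_a=|v_1|_a$ the common offset cancels when forming the symmetric difference. Formally: identify $\leftPart(u_1u_2)$ with the disjoint union of $\leftPart(u_1) \subseteq [1,|u_1|_a]\times[1,|u_1|_b]$ and a translate of $\leftPart(u_2)$ by $(|u_1|_a,|u_1|_b)$, and likewise for $v$; because the translation vector agrees, $\leftPart(u_1u_2)\,\Delta\,\leftPart(v_1v_2)$ is the disjoint union of $\leftPart(u_1)\,\Delta\,\leftPart(v_1)$ and the translate of $\leftPart(u_2)\,\Delta\,\leftPart(v_2)$. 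Taking cardinalities gives $d(u_1u_2,v_1v_2) = d(u_1,v_1)+d(u_2,v_2)$.

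**The induction.** With the $k=2$ case in hand, the general case follows the same bookkeeping as in the proof of Lemma~\ref{L_Psi_difference}: given a $\Psi$-decomposition $(u_i,v_i)_{1\le i\le k}$ with $k\ge 3$, set $u_i'=u_i,\ v_i'=v_i$ for $i\le k-2$ and $u_{k-1}'=u_{k-1}u_k,\ v_{k-1}'=v_{k-1}v_k$; this is again a $\Psi$-decomposition of $(u,v)$, so by the induction hypothesis $d(u,v)=\sum_{i=1}^{k-2} d(u_i,v_i) + d(u_{k-1}u_k, v_{k-1}v_k)$, and the $k=2$ case applied to the pair $(u_{k-1}u_k,v_{k-1}v_k)$ (noting $\Psi(u_{k-1})=\Psi(v_{k-1})$) splits the last term as $d(u_{k-1},v_{k-1})+d(u_k,v_k)$, finishing the sum. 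The case $k=1$ is trivial.

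**Main obstacle.** None of this is deep; the one point that needs care is the geometric gluing in the base case — verifying that $\leftPart(u_1u_2)$ really is the disjoint union of $\leftPart(u_1)$ and the translate of $\leftPart(u_2)$ by $(|u_1|_a,|u_1|_b)$, i.e.\ that no square gets counted twice and none is omitted. This is where the hypothesis $|u_1|_a=|v_1|_a$ (equivalently $\Psi(u_1)=\Psi(v_1)$) is essential, since it is what makes the two translates coincide so that the offsets cancel in the symmetric difference. Everything else is the same reflexive/inductive argument already used for Lemma~\ref{L_Psi_difference}.
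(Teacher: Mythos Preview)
Your overall strategy --- reduce to $k=2$ and then induct exactly as in Lemma~\ref{L_Psi_difference} --- is precisely what the paper does. However, your base case contains a concrete error: the claim that $\leftPart(u_1u_2)$ is the disjoint union of $\leftPart(u_1)$ and the translate of $\leftPart(u_2)$ by $(|u_1|_a,|u_1|_b)$ is false. Take $u_1=ab$, $u_2=b$: then $\leftPart(u_1)=\{(1,1)\}$ and $\leftPart(u_2)=\emptyset$, yet $\leftPart(abb)=\{(1,1),(1,2)\}$. What is missing is the rectangle
\[
R=\{(i,j)\mid 1\le i\le |u_1|_a,\ |u_1|_b+1\le j\le |u_1u_2|_b\}
\]
sitting above $\leftPart(u_1)$ and to the left of the translated piece; the correct decomposition (written out in the paper) is $\leftPart(u_1u_2)=\leftPart(u_1)\cup R\cup\bigl(\Psi(u_1)+\leftPart(u_2)\bigr)$.

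This does not ultimately break the argument, because $R$ depends only on $|u_1|_a$, $|u_1|_b$, $|u_2|_b$, and the hypothesis $\Psi(u_1)=\Psi(v_1)$ (with $\Psi(u_2)=\Psi(v_2)$) forces the \emph{same} $R$ in the decomposition of $\leftPart(v_1v_2)$; hence $R$ cancels in the symmetric difference and one recovers the paper's Equation~\eqref{eqproof2}. So the fix is minor, and with it your proof coincides with the paper's. One related misattribution: the hypothesis $|u_1|_a=|v_1|_a$ is not what makes the decomposition of $\leftPart(u_1u_2)$ work (that is a statement about $u_1u_2$ alone); it is needed precisely so that the two copies of $R$ and the two translation vectors agree, which is what lets them cancel in $\leftPart(u_1u_2)\,\Delta\,\leftPart(v_1v_2)$.
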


\begin{proof}
When $k = 1$, Formula~\eqref{eq_somme_d} is trivial.

Assume $k = 2$.
Let recall that by definition 
$\leftPart(w) = \{ (i, j) \mid 1 \leq j \leq |w|_b, 1 \leq i \leq |\pref_{j, b}(w)|_a\}$ for any word $w$.
Thus $\leftPart(u_1u_2) = 
\leftPart(u_1) \cup R \cup S$
where $R = \{ (i, j) \mid |u_1|_b+1 \leq j \leq |u_1u_2|_b, 
1 \leq i \leq |u_1|_a \}$
and $S = \{ (i, j) \mid |u_1|_b+1 \leq j \leq |u_1u_2|_b,
|u_1|_a+1  \leq i \leq |\pref_{j,b}(u_1u_2)|_a \}$.
Since for $j \geq |u_1|_b+1$,
$|\pref_{j,b}(u_1u_2)|_a = |u_1|_a+|\pref_{j-|u_1|_b,b}(u_2)|_a$,
observe that
$S = \{ (|u_1|_a+i, |u_1|_b+j) \mid
1 \leq j \leq |u_2|_b, 1\leq i \leq |\pref_{j,b}(u_2)|_a\}$.
In other words
$S = \Psi(u_1)+\leftPart(u_2)$.
Since $((u_1,v_1),(u_2,v_2))$ is a $\Psi$-decomposition of $(u,v)$,
we have $\Psi(u_1) = \Psi(v_1)$ which means that
$|u_1|_a=|v_1|_a$ and $|u_1|_b = |v_1|_b$.
It follows that $R = \{ (i, j) \mid |v_1|_b+1 \leq j \leq |v_1v_2|_b, 
1 \leq i \leq |v_1|_a \}$. 
Consequently
$\leftPart(v_1v_2) = \leftPart(v_1) \cup R \cup (\Psi(v_1)+ \leftPart(v_2))$.
We deduce from what precedes that
\begin{equation}
\leftPart(u_1u_2)\Delta\leftPart(v_1v_2) =
\leftPart(u_1)\Delta\leftPart(v_1)
\cup
(\Psi(u_1)+\leftPart(u_2))\Delta(\Psi(v_1)+ \leftPart(v_2)) \label{eqproof2}
\end{equation}

Observe that $(i,j) \in (\Psi(u_1)+\leftPart(u_2))\Delta(\Psi(v_1)+ \leftPart(v_2))$
if and only if 
$(i-|u_1|_a,j-|u_1|_b) \in \leftPart(u_2)\Delta\leftPart(v_2)$.
Hence $\#(\Psi(u_1)+\leftPart(u_2))\Delta(\Psi(v_1)+ \leftPart(v_2))
= \#(\leftPart(u_2)\Delta\leftPart(v_2)) = d(u_2,v_2)$.
Since $d(u, v) = \#(\leftPart(u_1u_2)\Delta\leftPart(v_1v_2))$
and $d(u_1,v_1) = \#(\leftPart(u_1)\Delta\leftPart(v_1))$,
we get $d(u,v) = d(u_1, v_1)+d(u_2, v_2)$.

The proof of Formula~\ref{eq_somme_d} ends by induction.
\end{proof}

\begin{lemma}
\label{L_dist_Psi_undec}
Let $u$ and $v$ be two words over $\{a, b\}$
with $\Psi(u) = \Psi(v)$ and $(u, v)$ $\Psi$-undecomposable.
$$d(u,v) = \left|\binom{u}{ab}-\binom{u}{ba}\right|$$
\end{lemma}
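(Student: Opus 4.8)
The plan is to read $d(u,v)$ off the two line representations and to control their relative position with Lemma~\ref{L_Psi_value}. First handle the degenerate case $u=v$: then $\Psi$-undecomposability leaves $u$ with no nonempty proper prefix, so $|u|\le 1$ and both sides of the claimed identity are $0$. From now on $u\neq v$. Since $d(u,v)=d(v,u)$ and since (as recorded just before Lemma~\ref{L_Psi_value}) $\Psi$-undecomposability of a pair of distinct words forces their first letters to differ, we may assume $u=au'$ and $v=bv'$ for some words $u'$, $v'$.

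Apply Lemma~\ref{L_Psi_value} to the pair $(au',bv')$. Its first item gives $|\pref_{j,b}(u)|_a>|\pref_{j,b}(v)|_a$ for every $j$ with $1\le j\le|u|_b$ (and $|u|_b=|v|_b$ because $\Psi(u)=\Psi(v)$). Comparing with the definitions $\leftPart(w)=\{(i,j)\mid 1\le j\le|w|_b,\ 1\le i\le|\pref_{j,b}(w)|_a\}$ and $\rightPart(w)=\{(i,j)\mid 1\le j\le|w|_b,\ |\pref_{j,b}(w)|_a+1\le i\le|w|_a\}$, these inequalities say precisely that, row by row, $\leftPart(v)\subsetneq\leftPart(u)$ and, equivalently (the rectangles $\rect(u)$ and $\rect(v)$ coincide since $\Psi(u)=\Psi(v)$), $\rightPart(u)\subsetneq\rightPart(v)$. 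Hence $\leftPart(u)\,\Delta\,\leftPart(v)=\leftPart(u)\setminus\leftPart(v)$, and Lemma~\ref{L_interpretation_binom_ab} yields $d(u,v)=\#\leftPart(u)-\#\leftPart(v)=\binom{u}{ab}-\binom{v}{ab}$; dually, from $\rightPart(u)\,\Delta\,\rightPart(v)=\rightPart(v)\setminus\rightPart(u)$ and Lemma~\ref{L_interpretation_binom_ba}, $d(u,v)=\binom{v}{ba}-\binom{u}{ba}$. One then checks, using relation~\eqref{eq_relation_base1} for $u$ and for $v$ together with $|u|_a|u|_b=|v|_a|v|_b$, that either expression equals $\bigl|\binom{u}{ab}-\binom{u}{ba}\bigr|$, the absolute value absorbing the $u\leftrightarrow v$ normalization made above.

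The real content is the middle paragraph: faithfully converting item~1 of Lemma~\ref{L_Psi_value} into the strict set inclusion $\leftPart(v)\subsetneq\leftPart(u)$, and then transporting cardinalities through Lemmas~\ref{L_interpretation_binom_ab} and~\ref{L_interpretation_binom_ba}. I do not anticipate any genuine difficulty beyond careful bookkeeping with the prefix counts $|\pref_{j,b}(w)|_a$; note in particular that item~2 of Lemma~\ref{L_Psi_value} is not needed for the equality itself — it only records that the inclusion is strict, i.e.\ that $d(u,v)>0$ when $u\neq v$. The one point that must not be glossed over is the initial normalization $u=au'$, $v=bv'$, which is exactly what forces the absolute value in the statement.
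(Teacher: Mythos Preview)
Your argument is essentially the paper's own: reduce by symmetry to $u=au'$, $v=bv'$; use item~1 of Lemma~\ref{L_Psi_value} to get the row-by-row inclusion $\leftPart(v)\subsetneq\leftPart(u)$; conclude via Lemma~\ref{L_interpretation_binom_ab} that
\[
d(u,v)=\#\leftPart(u)-\#\leftPart(v)=\binom{u}{ab}-\binom{v}{ab}.
\]
Up to this line everything is correct and matches the paper verbatim.

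The trouble is your last sentence. You assert that ``using relation~\eqref{eq_relation_base1} for $u$ and for $v$ together with $|u|_a|u|_b=|v|_a|v|_b$'' one checks that $\binom{u}{ab}-\binom{v}{ab}$ equals $\bigl|\binom{u}{ab}-\binom{u}{ba}\bigr|$. This is false in general. Relation~\eqref{eq_relation_base1} gives you $\binom{u}{ab}-\binom{v}{ab}=\binom{v}{ba}-\binom{u}{ba}$, but \emph{not} $\binom{v}{ab}=\binom{u}{ba}$. Concretely, take $u=aabb$ and $v=baba$: the pair is $\Psi$-undecomposable, $d(u,v)=3=\binom{u}{ab}-\binom{v}{ab}$, yet $\bigl|\binom{u}{ab}-\binom{u}{ba}\bigr|=|4-0|=4$.

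What is going on is that the displayed statement of the lemma contains a typo: the intended right-hand side is $\bigl|\binom{u}{ab}-\binom{v}{ab}\bigr|$ (equivalently $\bigl|\binom{u}{ba}-\binom{v}{ba}\bigr|$), and that is exactly what the paper's proof --- and yours --- establishes. So your argument is correct for the intended statement; only drop the final ``one then checks'' step, which attempts to reconcile with a misprinted formula and cannot succeed.
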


\begin{proof}
The fact that $(u,v)$ is $\Psi$-decomposable
implies that words $u$ and $v$ begin with different letters.
Assume that $u$ begins with $a$ and $v$ begins with $b$.
By Lemma~\ref{L_Psi_value}, for each $j$, $1\leq j \leq |u|_b$,
$|\pref_{j,b}(u)|_a > |\pref_{j,b}(bv)|_a$.
Using the fact that $\Psi(u) = \Psi(v)$, that is, $|u|_a=|v|_a$ 
and $|u|_b=|v|_b$, it follows that
$$\leftPart(u)\Delta\leftPart(v) = \leftPart(u)\setminus\leftPart(v).$$
Indeed:
\begin{itemize}
\item $\leftPart(u) = \{(i, j) \mid 1 \leq j \leq |u|_b, 1 \leq i \leq |\pref_{j,b}(u)|_a\}$
\item $\leftPart(v) = \{(i, j) \mid 1 \leq j \leq |v|_b, 1 \leq i \leq |\pref_{j,b}(v)|_a\} \subseteq \leftPart(u)$
\item $\leftPart(u)\Delta\leftPart(v) = \{(i, j) \mid 1 \leq j \leq |u|_b, 
|\pref_{j,b}(v)|_a+1 \leq i \leq |\pref_{j,b}(u)|_a\}$.
\end{itemize}
Thus $d(u, v) = \#\leftPart(u)-\#\leftPart(v)$.
Lemma~\ref{L_interpretation_binom_ab} implies
$$d(u,v) = \binom{u}{ab}-\binom{v}{ab}.$$

Similarly when $u$ begins with $b$ and $v$ begins with $a$, 
$$d(u,v) = \binom{v}{ab}-\binom{u}{ab}.$$
\end{proof}

\begin{proof}[Proof of the if part of Theorem~\ref{T_equivalence_reecrite_sim2}] 
Let $u$ and $v$ be two different words with $u \sim_2 v$.
Let $(u_i, v_i)_{1 \leq i \leq k}$
be the maximal $\Psi$-decomposition of $(u, v)$.

Lemma~\ref{L_Psi_difference} and the fact that $\binom{u}{ab} = \binom{v}{ab}$ imply that 
$\sum_{i = 1}^k \binom{u_i}{ab} - \binom{v_i}{ab} = 0$.
Hence there exists $\alpha$ such that 
$\binom{u_\alpha}{ab} - \binom{v_\alpha}{ab} > 0$
and there exists $\beta$ such that 
$\binom{u_\beta}{ab} - \binom{v_\beta}{ab} < 0$

By Item~2 of Lemma~\ref{L_Psi_value}, the first letter of $u_\alpha$ and $v_\beta$ is $a$ 
and the first letter of $u_\beta$ and $v_\alpha$ is $b$. The fact that $\Psi(u_\alpha) = \Psi(v_\alpha)$ and $\Psi(u_\beta) = \Psi(v_\beta)$ implies that $a$ and $b$ occur both in $u_\alpha$ and $u_\beta$. Hence $ab$ is a factor of $u_\alpha$
and $ba$ is a factor of $u_\beta$.

Let $x$, $y$, $z$, $t$ be words such that $u_\alpha = xaby$
and $u_\beta = zbat$. 
Consider the sequence $(u_\gamma')_{1 \leq i \leq k}$
defined by $u_\alpha' = xbay$, $u_\beta' = zabt$ and
$u_\gamma' = u_\gamma$ for $\gamma \not\in \{\alpha, \beta\}$.
Let $u' = u_1'\cdots u_k'$. 
It is immediate that: $|u'|_a = |u|_a$, $|u'|_b = |u|_b$.
Observe also that $(u'_i, v_i)_{1 \leq i\leq k}$ is a $\Psi$-decomposition of $(u',v)$.
From Formula~\eqref{eq_baisse_1},
$\displaystyle \binom{u_\alpha'}{ab} = \binom{u_\alpha}{ab}-1$ and
$\displaystyle \binom{u_\beta'}{ab} = \binom{u_\beta}{ab}+1$.
Observe that $\displaystyle \binom{u_\alpha'}{ab}+\binom{u_\beta'}{ab} = \binom{u_\alpha}{ab}+\binom{u_\beta}{ab}$ and
$\displaystyle 
\left|\binom{u_\alpha}{ab}-\binom{v_\alpha}{ab}\right|+
\left|\binom{u_\beta}{ab}-\binom{v_\alpha}{ab}\right| = 
\left|\binom{u_\alpha}{ab}-\binom{v_\alpha}{ab}\right|+
\left|\binom{u_\beta}{ab}-\binom{v_\beta}{ab}\right|-2$.
Thus Lemma~\ref{L_Psi_difference} implies that
$\binom{u'}{ab} = \binom{u}{ab}$ (hence $u' \sim_2 u$), and Lemma~\ref{L_d_dec} implies
that $d(u', v) = d(u, v)-2$.

From what precedes by induction we can find a sequence of words 
$(U_i)_{0 \leq i \leq m}$ with $U_0= u$ and $U_m = v$ 
such that
for all $i$, $0 \leq i \leq m$, $U_i \sim_2 u$ and $U_i \equiv u$, and, for all $i$, $0 \leq i \leq m$, $d(U_i,v) = d(u, v)-2i$ (to end the induction, it is useful to know that $d(u, v)$ is even due to the fact that $u \sim_2 v$). In particular, $u \equiv^* v$.
\end{proof}

\begin{remark}
\label{rem_length_derivation}
In the previous proof $m = d(u, v)/2$.
The derivation sequence $(U_i)_{0 \leq i \leq m}$ from $u$ to $v$ is of minimal length. 
Indeed from Formula~\eqref{eq_baisse_1} it follows that, for any word $x$ and $y$ with $x \rightarrow y$ or more generally $x \equiv y$ and for any word $z$, $d(x, z) - d(y, z) \in \{-2, 0, +2\}$.
\end{remark}

\section{\label{secPartitions}On the structure of a \texorpdfstring{$2$}{2}-binomial equivalence class}

This section concerns only binary words.

\subsection{\label{subsec_first_lattice}A first lattice structure}
Let $[w]_{\sim_2}$ denote the equivalence class of the word $w$ for
the $2$-binomial equivalence $\sim_2$.
Let $G_2(w) = ([w]_{\sim_2}, \rightarrow)$ be the graph of the relation $\rightarrow$ restricted to the set $[w]_{\sim_2}$.
Observe that if $u \rightarrow v$ (for $u$, $v \in \{a, b\}^*$), 
then $u$ has a prefix $pa$ and $v$ has a prefix $pb$ for some word $p$.
Hence:

\begin{fact}\label{fact_ordre}
Given two words $u$ and $v$ with $u \rightarrow v$ or more generally $u \rightarrow^* v$, we have $u <_{lex} v$.
\end{fact}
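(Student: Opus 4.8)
The plan is to peel the statement down to a single application of $\rightarrow$, where it is essentially immediate, and then let transitivity of $<_{lex}$ do the rest.

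\textbf{The one-step case.} I would first treat $u \rightarrow v$. By the definition of $\rightarrow$ there are words $x$, $y$, $z$ with $u = x\,ab\,y\,ba\,z$ and $v = x\,ba\,y\,ab\,z$. These two words have equal length (indeed the same Parikh vector), they share the prefix $x$, and the letter at position $|x|+1$ is $a$ in $u$ but $b$ in $v$. Since $a < b$, the definition of the lexicographic order — which for two words of the same length is just the comparison at the first position where they differ — yields $u <_{lex} v$. This is exactly the observation recorded right before the statement, with $p = x$: $pa$ is a prefix of $u$ and $pb$ is a prefix of $v$.

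\textbf{From $\rightarrow$ to $\rightarrow^*$.} For the general case $u \rightarrow^* v$, I would take a derivation $u = z_0 \rightarrow z_1 \rightarrow \cdots \rightarrow z_k = v$ with $k \ge 1$, apply the previous paragraph to each arrow to obtain $z_0 <_{lex} z_1 <_{lex} \cdots <_{lex} z_k$, and conclude $u <_{lex} v$ by transitivity of the strict total order $<_{lex}$.

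\textbf{The only delicate point.} There is no real obstacle. The single remark worth making is that $\rightarrow$ preserves length (all words of $[w]_{\sim_2}$ have length $|w|$), so no prefix phenomenon can occur in the $<_{lex}$-comparison, and that a genuine step of $\rightarrow$ always changes the word, so the inequality is strict. I read $\rightarrow^*$ as the transitive closure of $\rightarrow$ (at least one step), consistently with $\equiv^*$ being reserved for the reflexive–transitive closure; with that reading the conclusion $u <_{lex} v$ is strict, as stated.
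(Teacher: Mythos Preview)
Your proof is correct and matches the paper's own argument, which is just the one-line observation preceding the Fact: $u$ has a prefix $pa$ and $v$ has a prefix $pb$, hence $u <_{lex} v$, with the extension to $\rightarrow^*$ left implicit by transitivity. Your remark on reading $\rightarrow^*$ as requiring at least one step is a sensible way to reconcile the strict inequality with the usual star notation.
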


Since moreover the rewriting rule $\rightarrow$ preserves the length of the word
(if $u \rightarrow v$, then $|u| = |v|$), it follows
that any path in $G_2(w)$ is finite.
In other words, the graph $G_2(w)$ is a directed acyclic graph, or, 
the relation $\rightarrow$ defines a partially order
on the set $[w]_{\sim_2}$. Actually the structure of $G_2(w)$ is stronger.

\begin{theorem}
\label{T_lattice}
Given any word $w$ over $A= \{a, b\}$,
the graph $G_2(w) = ([w]_{\sim_2}, \rightarrow)$ is a lattice.
Its least element is the unique word in 
$[w]_{\sim_2} \cap A^*\setminus A^*baA^*abA^*$ 
and its greatest element is the unique word in 
$[w]_{\sim_2} \cap A^*\setminus A^*abA^*baA^*$.
\end{theorem}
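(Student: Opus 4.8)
The plan is to show the two stated extremal words exist and are unique, and then to deduce the lattice property from a general order-theoretic principle. First I would establish uniqueness: if $w_1, w_2 \in [w]_{\sim_2}$ both avoid the factor pattern $A^*baA^*abA^*$ (i.e.\ neither has a $ba$ occurring before a $ba$, wait — before an $ab$), then by Theorem~\ref{T_equivalence_reecrite_sim2} we have $w_1 \equiv^* w_2$, so there is a derivation connecting them through $\rightarrow$ and $\leftarrow$ steps. The key observation is that a word avoiding $A^*baA^*abA^*$ admits no $\leftarrow$-step (applying $\leftarrow$ requires a factorization $v = xbayab z$, i.e.\ a $ba$ before an $ab$), and a word avoiding $A^*abA^*baA^*$ admits no $\rightarrow$-step. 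Using Fact~\ref{fact_ordre} ($u \rightarrow^* v$ implies $u <_{lex} v$), any derivation from such a $\rightarrow$-minimal word can only go "up" in lexicographic order, and then it must come back "down" to reach $w_2$, which is also $\rightarrow$-minimal; a short argument with the even parity of $d$ and Remark~\ref{rem_length_derivation} (each $\equiv$-step changes $d(\cdot,v)$ by $-2$, $0$, or $+2$) shows the only way to get $d(w_1,w_2)=0$ with $w_1$ admitting no $\leftarrow$-move is $w_1 = w_2$. Existence is easier: starting from any word in the class, repeatedly apply $\rightarrow$; since $\rightarrow$ strictly increases the lexicographic order on a finite set, this terminates at a word admitting no $\rightarrow$-step, i.e.\ a word in $[w]_{\sim_2} \cap A^*\setminus A^*abA^*baA^*$ — that is the candidate greatest element; symmetrically for the least.

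Next I would show these two words are genuinely the least and greatest elements of $([w]_{\sim_2}, \rightarrow^*)$ as a poset. For any $v$ in the class, I want $v \rightarrow^* m_{\max}$ where $m_{\max}$ is the $\rightarrow$-sink: since any maximal $\rightarrow$-path from $v$ is finite and ends at a $\rightarrow$-sink, and the $\rightarrow$-sink is unique (just shown), we get $v \rightarrow^* m_{\max}$. Dually $m_{\min} \rightarrow^* v$ for all $v$. So $m_{\min}$ and $m_{\max}$ are the bottom and top of the poset. (This also re-confirms that the order induced by $\rightarrow$ on the class is connected, consistent with $\equiv^* = \sim_2$.)

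Finally, the lattice property. The plan is to use the distance $d$ from Section~\ref{subsec_interpret_char} together with a suitable rank/height function. By Remark~\ref{rem_length_derivation}, every $\equiv$-step changes the quantity $d(\cdot, m_{\min})$ by exactly $\pm 2$ or $0$, and in fact along a $\rightarrow$-step from $m_{\min}$ it can only increase; combined with Lemma~\ref{L_dist_Psi_undec} and the $\Psi$-decomposition machinery, one shows $\rho(v) := d(v, m_{\min})/2$ is a well-defined grading: $u \rightarrow v$ implies $\rho(v) = \rho(u) + 1$. This makes $([w]_{\sim_2}, \rightarrow^*)$ a graded poset with $\hat 0 = m_{\min}$, $\hat 1 = m_{\max}$, and $\rho(\hat 1) = d(m_{\min}, m_{\max})/2$. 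To upgrade "graded poset with top and bottom" to "lattice", I would argue concretely using the geometric picture: a word $v$ is determined within its $\sim_2$-class by the region $\leftPart(v)$ (a staircase region inside $\rect(w)$ containing exactly $\binom{w}{ab}$ cells), and $u \rightarrow^* v$ holds iff $\leftPart(u) \supseteq \leftPart(v)$ — the derivation only ever moves cells leftward/downward, shrinking $\leftPart$. Then the meet of $u$ and $v$ corresponds to the region $\leftPart(u) \cup \leftPart(v)$ and the join to $\leftPart(u) \cap \leftPart(v)$ — provided these are again valid staircase regions realizable by a word in the class, which they are because both have the same cardinality $\binom{w}{ab}$...

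Wait, that cardinality claim is false for the union/intersection in general, so the correct statement needs the staircase structure: the realizable regions are exactly the "order ideals" of a fixed poset on the cells of $\rect(w)$ of a fixed size, and unions/intersections of such ideals of size $N$ need not have size $N$. The hard part — and the main obstacle — is precisely this: identifying the right combinatorial model for $[w]_{\sim_2}$ so that meets and joins are manifestly well-defined. I expect the clean route is to defer the explicit lattice operations to the isomorphism with partitions of the integer $\binom{w}{ab}$ (Young's lattice of partitions with $\binom{w}{a}$ parts each $\le \binom{w}{b}$, which is a well-known distributive lattice), established in the next subsection; the present theorem can then be proved by: (1) existence/uniqueness of $m_{\min}, m_{\max}$ as above, (2) observing $\rightarrow$ corresponds under the (to-be-built) bijection to the covering relation of that partition lattice, hence $([w]_{\sim_2},\rightarrow^*)$ is order-isomorphic to a bounded sublattice of Young's lattice and therefore a lattice itself. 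So the proof strategy splits cleanly: the "least/greatest element" half is self-contained via Fact~\ref{fact_ordre} and the uniqueness argument using $\equiv^* = \sim_2$, while the "lattice" half is best handled by exhibiting the partition model — and the one subtle point to get right is verifying that $\rightarrow$ is exactly the cover relation of that model (no two-step shortcuts, no skipped covers), which follows from Formula~\eqref{eq_baisse_1} showing each $\rightarrow$-step changes $\binom{\cdot}{ab}$... actually no, it changes it by $0$ within the class; rather each step changes $d(\cdot, m_{\min})$ by exactly $2$, giving the cover property.
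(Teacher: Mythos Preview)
Your existence argument (iterate $\rightarrow$ until no step is available, using Fact~\ref{fact_ordre} for termination) matches the paper's. Your uniqueness argument, however, is needlessly indirect and never actually closes: you invoke $\equiv^* = \sim_2$ and then try to run a derivation-plus-distance argument, but you do not establish that two $\leftarrow$-sinks in the same class must coincide. The paper's route is elementary and avoids derivations entirely: Lemma~\ref{L_lang_sans_ab...ba} shows $A^*\setminus A^*abA^*baA^* = b^*a^*b^* \cup b^*a^+ba^+b^*$, and Lemma~\ref{L_reconstruct_final} then proves that any word in this language is \emph{determined} by the three numbers $|w|_a$, $|w|_b$, $\binom{w}{ab}$ via an explicit formula. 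Since these three numbers are invariants of $[w]_{\sim_2}$, there can be at most one $\rightarrow$-sink in the class---no appeal to Theorem~\ref{T_equivalence_reecrite_sim2}, $d$, or Remark~\ref{rem_length_derivation} is needed. The least element is handled symmetrically by Lemma~\ref{L_reconstruct_init}. Once uniqueness is known, ``every $v$ satisfies $v\rightarrow^* w_{\max}$'' follows exactly as you say.

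Your attempt at the full lattice property contains a genuine error. You claim that $\rho(v)=d(v,m_{\min})/2$ is a grading with $u\rightarrow v \Rightarrow \rho(v)=\rho(u)+1$, i.e.\ that $\rightarrow$ is the covering relation. This is false: a single $\rightarrow$-step can jump several covers. Already in Figure~\ref{fig_lattice_example} one has $bab^3aba^3 \rightarrow b^2ab^2a^2ba^2$ directly, but also $bab^3aba^3 \rightarrow b^2abab^2a^3 \rightarrow b^2ab^2a^2ba^2$, so the first arrow is not a cover. Remark~\ref{rem_length_derivation} only says an $\equiv$-step changes $d(\cdot,z)$ by an element of $\{-2,0,+2\}$, not by exactly $2$. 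This is precisely why the paper introduces the refined relation $\rightarrow_{spa}$ in Section~\ref{subsec_bij_lattice_integers}: Lemma~\ref{L_cor_rewriting} shows $\rightarrow_{spa}$ (not $\rightarrow$) matches the covers of the dominance order on partitions, and the lattice structure is then inherited from $L_{n,|w|_b,|w|_a}$ via Theorem~\ref{T_lattice_isomorphism}. Your instinct to defer to the partition model is therefore correct, but the bridge must go through $\rightarrow_{spa}$; and note that the paper's own proof of Theorem~\ref{T_lattice} in fact only argues the least and greatest elements, leaving arbitrary meets and joins to that later isomorphism.
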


Figure~\ref{fig_lattice_example} illustrates this result.

\begin{figure}[!ht]
\begin{center}
\begin{tikzpicture}[scale=0.3]
\begin{scope}
\node[draw] (a) at (0,4) {$ab^5a^4$};
\node[draw] (b) at (8,4) {$bab^3aba^3$};
\node[draw] (d) at (16,4) {$b^2ab^2a^2ba^2$};
\node[draw] (f) at (24,4) {$b^3aba^3ba$};
\node[draw] (g) at (32,4) {$b^4a^5b$};

\node[draw] (c) at (12,0) {$b^2abab^2a^3$};
\node[draw] (e) at (20,0) {$b^3a^2baba^2$};

\draw [->] (a) -- (b);
\draw [->] (b) -- (c);
\draw [->] (b) -- (d);
\draw [->] (c) -- (d);
\draw [->] (c) -- (e);
\draw [->] (d) -- (e);
\draw [->] (d) -- (f);
\draw [->] (e) -- (f);
\draw [->] (f) -- (g);

\end{scope}
\end{tikzpicture}
\end{center}
\caption{\label{fig_lattice_example}The graph 
$([ab^5a^4]_{\sim_2}, \rightarrow)$}
\end{figure}
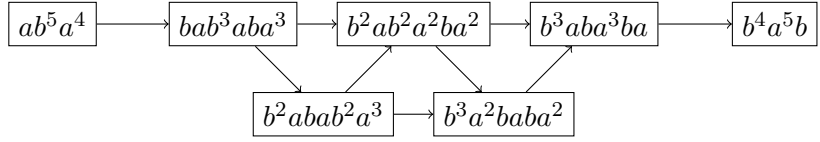

The following results are useful for the proof of Theorem~\ref{T_lattice}.
The second and third ones explain how can be reconstructed the particular words 
appearing in Theorem~\ref{T_lattice} from the information $|w|_a$, $|w|_b$ and $\binom{w}{ab}$ characterizing the class $[w]_{\sim_2}$.

\begin{lemma}
\label{L_lang_sans_ab...ba}
Let $A = \{a, b\}^*$. We have:
$A^* \setminus A^*abA^*baA^* = b^*a^*b^* \cup b^*a^+ba^+b^*$.
\end{lemma}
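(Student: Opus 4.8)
The plan is to prove the set equality $A^* \setminus A^*abA^*baA^* = b^*a^*b^* \cup b^*a^+ba^+b^*$ by a double inclusion, where the containment $\supseteq$ is a straightforward check and the containment $\subseteq$ is the substantive direction. For the easy inclusion, I would take each of the two listed families and verify directly that no word in it contains both an occurrence of $ab$ and a later non-overlapping occurrence of $ba$: a word in $b^*a^*b^*$ has at most one ``block'' of $a$'s, so after the first $ab$ there are no further $a$'s, hence no subsequent $ba$; a word in $b^*a^+ba^+b^*$ has its only internal $b$ sitting between two blocks of $a$'s, so the only factor $ba$ occurs at that $b$ and there is no factor $ab$ to its left.

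For the hard inclusion $\subseteq$, I would argue by contrapositive: suppose $w \in A^*$ is \emph{not} of either listed form, and show that $w$ contains a factor $ab$ followed by a non-overlapping factor $ba$, i.e.\ $w \in A^*abA^*baA^*$. Here ``factor'' for $ab$ and $ba$ just means they appear as (length-2) factors; the key observation is that $w \in A^*abA^*baA^*$ is equivalent to saying that in $w$ there is an occurrence of $ab$ as a factor that ends strictly before the start of some occurrence of $ba$ as a factor. The cleanest way to organize the case analysis is by the number $n$ of maximal blocks of $a$'s in $w$ (equivalently, write $w = b^{i_0} a^{j_1} b^{i_1} a^{j_2} \cdots a^{j_n} b^{i_n}$ with all $j_\ell \ge 1$ and all $i_1, \ldots, i_{n-1} \ge 1$, while $i_0, i_n \ge 0$). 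If $n = 0$ then $w \in b^*$, contained in $b^*a^*b^*$, contradiction. If $n = 1$ then $w = b^{i_0} a^{j_1} b^{i_1}$, which lies in $b^*a^*b^*$, again a contradiction. So $n \ge 2$. If $n = 2$, then $w = b^{i_0} a^{j_1} b^{i_1} a^{j_2} b^{i_2}$ with $i_1 \ge 1$; the assumption that $w \notin b^*a^+ba^+b^*$ forces $i_0 \ge 1$ or $i_1 \ge 2$ or $i_2 \ge 1$, and in each of these subcases I exhibit an explicit factor $ab$ ending before a factor $ba$ begins (for instance if $i_1 \ge 2$, the $ab$ at the $a^{j_1}b^{i_1}$ junction ends before the $ba$ at the $b^{i_1}a^{j_2}$ junction, using the two consecutive $b$'s in the middle). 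If $n \ge 3$, then regardless of the $i$'s there is a factor $ab$ at the junction $a^{j_1}b^{i_1}$ and a factor $ba$ at the junction $b^{i_2}a^{j_3}$ (these use disjoint letter positions since $i_1 \ge 1$ lies entirely between them), so $w \in A^*abA^*baA^*$.

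I expect the main obstacle to be purely bookkeeping: stating cleanly the normal form $w = b^{i_0} a^{j_1} b^{i_1} \cdots a^{j_n} b^{i_n}$, tracking which exponents are allowed to be zero, and making sure the exhibited $ab$ and $ba$ factors are genuinely non-overlapping (i.e.\ identifying actual positions in $w$, not just abstract factors). There is no real mathematical difficulty — the heart of the matter is the trivial remark that a word avoids the pattern ``$ab$ then later $ba$'' exactly when, reading left to right, once you have seen a descent $a\to b$ you never again see a $b$ immediately followed by an $a$; and a short combinatorial analysis of the block structure then pins this down to the two stated shapes. If one prefers, the whole argument can be phrased via the number of occurrences of the factor $ba$: words with no $ba$ factor are exactly $b^*a^*$ (hence in $b^*a^*b^*$), words with exactly one $ba$ factor are exactly $b^*a^+b^+a^*$, and among these avoiding $ab$-before-$ba$ forces the middle $b$-block to have length $1$ and the trailing $a$-block to be nonempty unless the word is already in $b^*a^*b^*$ — which recovers the two families.
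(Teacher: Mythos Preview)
Your plan is sound and would yield a correct proof, but two bookkeeping slips need fixing.

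In the $n=2$ case, since $b^*a^+ba^+b^*$ already allows arbitrary leading and trailing $b$-blocks, the assumption $w \notin b^*a^+ba^+b^*$ is \emph{equivalent} to $i_1 \geq 2$, not to the weaker disjunction you state. In the phantom subcases $i_0 \geq 1$ or $i_2 \geq 1$ with $i_1 = 1$ one \emph{cannot} exhibit a non-overlapping $ab\cdots ba$ (take $w = baba$). Fortunately the only subcase that actually survives your hypothesis is $i_1 \geq 2$, and that one you handle correctly.

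In the easy inclusion for $b^*a^+ba^+b^*$, your description is inaccurate on two counts: when $i_0 \geq 1$ there is a second factor $ba$ at the $b^{i_0}\,|\,a^{j_1}$ boundary, and there \emph{is} a factor $ab$ immediately to the left of the internal $ba$ --- it simply overlaps it in the shared $b$. The clean statement is that every factor $ab$ in such a word ends at a $b$ whose right neighbour is either an $a$ (overlap) or is past all remaining $a$'s, so no non-overlapping $ba$ follows.

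The paper's own proof of the forward inclusion is organized differently: rather than counting $a$-blocks, it looks at factors of the form $ab^ia$ with $i>0$. If no such factor occurs then $w \in b^*a^*b^*$; if one occurs then $i=1$ (else $ab\cdot b^{i-2}\cdot ba$ is a factor) and it is unique (two such factors would sandwich an $ab\cdots ba$), forcing $w \in b^*a^+ba^+b^*$. This is the same elementary case analysis as yours, just keyed on a different structural feature; neither approach buys anything the other does not.
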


\begin{proof}
Let $w$ be a word in $A^* \setminus A^*abA^*baA^*$. 
If it contains no factor in the form $ab^ia$ with $i > 0$, then one can check that
$w \in b^*a^*b^*$. 
If $w$ contains a factor in the form $ab^ia$ with $i >0$, then
$i = 1$ and $w$ contains only one occurrence of $aba$.
It follows that $w$ belongs to $b^*a^+ba^+b^*$. 
Conversely $b^*a^*b^* \cup b^*a^+ba^+b^* \subseteq A^* \setminus A^*abA^*baA^*$.
\end{proof}

Exchanging the role of letters $a$ and $b$, Lemma~\ref{L_lang_sans_ab...ba} also states that $A^* \setminus A^*baA^*abA^* = a^*b^*a^* \cup a^*b^+ab^+a^*$.

\begin{lemma}
\label{L_reconstruct_final}
A word $w$ in $A^* \setminus A^*abA^*baA^*$
is uniquely determined by the values of $|w|_a$, $|w|_b$ and
$\binom{w}{ab}$.
More precisely:
\begin{itemize}
\item if $|w|_a = 0$, $w = b^{|w|_b}$,
\item if $\binom{w}{ab} = |w|_a|w|_b$, $w = a^{|w|_a}b^{|w|_b}$,  and
\item 
if $|w|_a \neq 0$ and $\binom{w}{ab} \neq |w|_a|w|_b$,
$w = b^{|w|_b-i-1}a^jba^{|w|_a-j}b^i$ where $i$ and $j$
are the unique integers such that $\binom{w}{ab} = i |w|_a + j$ 
with $0 \leq j < |w|_a$.
\end{itemize}
\end{lemma}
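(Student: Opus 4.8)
The plan is to combine the structural description from Lemma~\ref{L_lang_sans_ab...ba} with the formula \eqref{eq_calcul_binom_ab_par_prefixes} computing $\binom{w}{ab}$ prefix by prefix. First I would recall that by Lemma~\ref{L_lang_sans_ab...ba}, every word $w \in A^* \setminus A^*abA^*baA^*$ is of one of two shapes: $w = b^p a^q b^r$ with $p,q,r \geq 0$, or $w = b^p a^q b a^{q'} b^r$ with $p,r \geq 0$ and $q,q' \geq 1$. In both cases I will compute $|w|_a$, $|w|_b$ and $\binom{w}{ab}$ explicitly, then invert.

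For the first shape $w = b^p a^q b^r$ one has $|w|_a = q$, $|w|_b = p+r$, and by \eqref{eq_calcul_binom_ab_par_prefixes} each of the $r$ trailing $b$'s is preceded by a prefix containing $q$ letters $a$, while each of the $p$ leading $b$'s is preceded by none; hence $\binom{w}{ab} = rq$. For the second shape $w = b^p a^q b a^{q'} b^r$ one has $|w|_a = q+q'$, $|w|_b = p+r+1$, and \eqref{eq_calcul_binom_ab_par_prefixes} gives: the $p$ leading $b$'s contribute $0$, the single middle $b$ contributes $q$, and each of the $r$ trailing $b$'s contributes $q+q'$, so $\binom{w}{ab} = q + r(q+q')$. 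Writing $n_a = |w|_a$, $n_b = |w|_b$, this is $\binom{w}{ab} = r\,n_a + q$ with $1 \leq q \leq n_a - 1$ (since $q' \geq 1$ forces $q \leq n_a-1$, and $q \geq 1$) and $0 \leq r \leq n_b - 1$ (since $p \geq 0$ forces $r \leq n_b-1$, using $p+r+1 = n_b$).

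Now I would organize the inversion by the three cases of the statement. If $n_a = 0$ then $w$ has no $a$, so $w = b^{n_b}$; this is consistent and unique. If $\binom{w}{ab} = n_a n_b$ then the word must avoid any $ba$ factor (the left part fills the whole rectangle), which among our two shapes forces $p = 0$ and the second shape impossible (a middle $b$ after some $a$ would create a $ba$ with the following $a$, costing area), giving $w = a^{n_a} b^{n_b}$. Otherwise $n_a \geq 1$ and $0 < \binom{w}{ab} < n_a n_b$. Here I claim $w$ cannot be of the first shape $b^p a^q b^r$ with both $p > 0$ and $r > 0$ unless... — more carefully, for the first shape $\binom{w}{ab} = r n_a$ is a multiple of $n_a$, so the remainder $j$ in the division $\binom{w}{ab} = i n_a + j$, $0 \le j < n_a$, is $0$; for the second shape $j = q \in \{1,\dots,n_a-1\}$ is nonzero. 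Either way, setting $i = r$ and $j$ the residue, the exponents are recovered as $q = j$ (and $q=n_a$, i.e. first shape, when $j=0$), $q' = n_a - j$, and $p = n_b - r - 1$ in the second-shape case (and $p = n_b - r$, $r$ read off as $i$, with no middle $b$, in the degenerate $j=0$ case which merges into $b^{n_b-i}a^{n_a}b^{i}$) — matching the displayed formula $w = b^{n_b-i-1}a^j b a^{n_a-j} b^i$, where when $j=0$ one interprets $a^0 b a^{n_a}$ as absorbing the lone $b$ into the block, i.e. the formula still parses to the first-shape word. The uniqueness of $(i,j)$ is just uniqueness of Euclidean division, and I would note the required inequalities $0 \le i \le n_b-1$ hold since $\binom{w}{ab} < n_a n_b$ gives $i < n_b$.

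The main obstacle is purely bookkeeping: reconciling the two syntactic shapes of Lemma~\ref{L_lang_sans_ab...ba} with the single closed formula in the third bullet, in particular handling the boundary residue $j = 0$ (where the "$ba$" in the formula degenerates and the word collapses to the $b^*a^*b^*$ shape) and confirming that the stated ranges $0 \le j < |w|_a$ together with $\binom{w}{ab} \ne |w|_a|w|_b$ pin down $i$ and $j$ uniquely. I do not expect any genuine difficulty beyond verifying these edge cases and the arithmetic of \eqref{eq_calcul_binom_ab_par_prefixes} on the two shapes.
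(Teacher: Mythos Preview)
Your approach is essentially the same as the paper's: invoke Lemma~\ref{L_lang_sans_ab...ba} to reduce to the two shapes $b^*a^*b^*$ and $b^*a^+ba^+b^*$, compute $\binom{w}{ab}$ directly on each (the paper does this without citing \eqref{eq_calcul_binom_ab_par_prefixes}, but the arithmetic is identical), and recover the exponents by Euclidean division of $\binom{w}{ab}$ by $|w|_a$, with the residue $j=0$ collapsing the formula to the first shape. One small slip: in your third case you write ``$0 < \binom{w}{ab} < n_a n_b$'', but the lemma's third bullet allows $\binom{w}{ab}=0$ (giving $i=j=0$ and $w=b^{n_b}a^{n_a}$); this does not affect your argument since the division still works there.
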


\begin{remark}
Since $\binom{w}{ab} \leq |w|_a|w|_b$, 
we can verify that, in the last part of the previous lemma,
$0 \leq i \leq |w|_b$. In the extreme case where $i = |w|_b$, 
we have $j = 0$ and $w = a^{|w|_a}b^{|w|_b}$. In other cases,
$|w|_b-i-1 \geq 0$.
\end{remark}

\begin{proof}[Proof of Lemma~\ref{L_reconstruct_final}]
Let $w$ in $A^* \setminus A^*abA^*baA^*$.
When $|w|_a = 0$, the lemma holds.

From now on, let us assume that $|w|_a \neq 0$.
By Lemma~\ref{L_lang_sans_ab...ba}, $w \in b^*a^*b^* \cup b^*a^+ba^+b^*$.

Assume first that $w = b^ia^{|w|_a}b^k$ 
for some integers $i$ and $k$. 
The relation $\binom{w}{ab}= |w|_a k$ 
implies that $k = \binom{w}{ab}/|w|_a$. 
It is straightforward that $i = |w|_b-k$. 
When $k < |w|_b$ ($\binom{w}{ab} \neq |w|_a|w|_b$),
we have $i \geq 1$.
The lemma is verified in this case
since $\binom{w}{ab} = k |w|_a + j$ with $j = 0$.

Assume now that $w = b^ia^jba^kb^\ell$ for some integers $i \geq 1$, $j \geq 1$, $k$ and $\ell$.
We have $\binom{w}{ab} = \ell |w|_a + j$ with $0 \leq j < |w|_a$,
$k = |w|_a-j$ and $i = |w|_b-1-\ell$.
The lemma holds in this last case.
\end{proof}

Exchanging the role of letters $a$ and $b$, 
Lemma~\ref{L_reconstruct_final} provides a way to reconstruct a word in $A^* \setminus A^*baA^*abA^*$ from the values of $|w|_a$, $|w|_b$ and 
$\binom{w}{ba}$. Next lemma does the same but using $\binom{w}{ab}$
instead of $\binom{w}{ba}$. Its proof is similar to the proof of Lemma~\ref{L_reconstruct_init} and it is let to readers.

\begin{lemma}
\label{L_reconstruct_init}
A word $w$ in $A^* \setminus A^*baA^*abA^*$
is uniquely determined by the values of $|w|_a$, $|w|_b$ and
$\binom{w}{ab}$.
More precisely
\begin{itemize}
\item 
if $|w|_b = 0$, $w = a^{|w|_a}$,
\item if $\binom{w}{ab} = |w|_a|w|_b$, $w = a^{|w|_a}b^{|w|_b}$, 
and
\item if $|w|_b \neq 0$ and $\binom{w}{ab} \neq |w|_a|w|_b$, $w = a^{i}b^{|w|_b-j}ab^{j}a^{|w|_a-i-1}$ where $i$ and $j$
are the unique integers such that $\binom{w}{ab} = i |w|_b + j$ 
with $0 \leq j < |w|_b$.

\end{itemize}
\end{lemma}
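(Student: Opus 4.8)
The plan is to mirror the proof of Lemma~\ref{L_reconstruct_final}, exchanging the roles of the letters $a$ and $b$, so that everything reduces to a short case analysis on the block structure of $w$ together with two direct computations of $\binom{w}{ab}$. First I would dispose of the degenerate case $|w|_b = 0$: then $w$ contains no letter $b$, hence $w = a^{|w|_a}$, giving the first bullet. From now on assume $|w|_b \geq 1$. Applying Lemma~\ref{L_lang_sans_ab...ba} with $a$ and $b$ interchanged (the form noted right after that lemma), $w \in a^*b^*a^* \cup a^*b^+ab^+a^*$, and these two sets of words are disjoint, so there are exactly two shapes to handle.

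\textbf{Shape 1: $w = a^i b^m a^k$ with $m = |w|_b \geq 1$.} Only the $i$ leading $a$'s precede a $b$, and each of them precedes all $m$ letters $b$, while the $k$ trailing $a$'s precede no $b$; hence $\binom{w}{ab} = i\,|w|_b$. (Alternatively this is Formula~\eqref{eq_binom_2_words} applied to the split $a^i \cdot b^m a^k$, or Formula~\eqref{eq_calcul_binom_ab_par_prefixes}.) Thus $i = \binom{w}{ab}/|w|_b$ and $k = |w|_a - i$ are both determined by the data. If $i = |w|_a$, equivalently $\binom{w}{ab} = |w|_a|w|_b$, this gives $w = a^{|w|_a}b^{|w|_b}$, the second bullet; otherwise $k \geq 1$ and $w = a^i b^{|w|_b} a^{|w|_a - i}$, which is exactly the third-bullet formula with $j = 0$, since then $\binom{w}{ab} = i\,|w|_b + 0$.

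\textbf{Shape 2: $w = a^i b^p a b^q a^k$ with $p \geq 1$, $q \geq 1$.} Here the $i$ leading $a$'s each precede all $p + q = |w|_b$ letters $b$, the middle $a$ precedes exactly the $q$ letters $b$ of the second block, and the $k$ trailing $a$'s precede none; therefore $\binom{w}{ab} = i\,|w|_b + q$ (again readable off Formula~\eqref{eq_binom_3_words} or \eqref{eq_calcul_binom_ab_par_prefixes}). Set $j = q$. From $q \geq 1$ and $p = |w|_b - q \geq 1$ we get $1 \leq j \leq |w|_b - 1$, so $i$ and $j$ are precisely the quotient and remainder in the Euclidean division of $\binom{w}{ab}$ by $|w|_b$, hence unique. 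From $i + 1 + k = |w|_a$ we get $k = |w|_a - i - 1 \geq 0$, and $p = |w|_b - j$, so $w = a^i b^{|w|_b - j} a b^j a^{|w|_a - i - 1}$, the third bullet; moreover $\binom{w}{ab} \neq |w|_a|w|_b$, because its remainder $j$ modulo $|w|_b$ is nonzero while $|w|_a|w|_b$ is a multiple of $|w|_b$.

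To finish, I would note that the two shapes are distinguished intrinsically by the data: writing $\binom{w}{ab} = i\,|w|_b + j$ with $0 \le j < |w|_b$, Shape~1 occurs iff $j = 0$ and Shape~2 iff $j \geq 1$, together with the comparison of $\binom{w}{ab}$ with $|w|_a|w|_b$ inside Shape~1. Hence $(|w|_a, |w|_b, \binom{w}{ab})$ determines $w$, and the explicit reconstruction is exactly the three bullets. I do not expect a genuine obstacle here; the content is just the two short computations of $\binom{w}{ab}$ plus the bookkeeping making the case split exhaustive and making the boundary case $j=0$ consistent between Shape~1 and the third-bullet formula. The only point needing a little care is checking the range constraints ($p,q \geq 1$ in Shape~2, $|w|_a - i - 1 \geq 0$, and $0 \le j < |w|_b$) so that the reconstructed words genuinely lie in $a^*b^*a^* \cup a^*b^+ab^+a^*$.
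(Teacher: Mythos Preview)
Your proof is correct and follows essentially the same approach as the paper, which simply states that the proof is analogous to that of Lemma~\ref{L_reconstruct_final} with the roles of $a$ and $b$ exchanged. Your case analysis on the two shapes $a^*b^*a^*$ and $a^*b^+ab^+a^*$, the two direct computations of $\binom{w}{ab}$, and the identification of $i,j$ via Euclidean division by $|w|_b$ are exactly the dual of the paper's argument for Lemma~\ref{L_reconstruct_final}.
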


\begin{proof}[Proof of Theorem~\ref{T_lattice}]
Let us construct a sequence of words: $w_1= w$ and, for any integer $i$ with $w_i$ defined, if $w_i$
contains a factor in $ab\{a, b\}^*ba$, let $w_{i+1}$ be any word such that 
$w_i \rightarrow w_{i+1}$. 
We continue the construction as long as possible.
From Fact~\ref{fact_ordre}, $w_i <_{lex} w_{i+1}$. Since the length of all words $w_i$ are the same, the constructed sequence is finite. With $n$ the length of this sequence, this means that $w_n$ contains no factor in $ab\{a, b\}^*ba$: $w_n \in A^* \setminus A^*abA^*baA^*$ and by construction $w \rightarrow^* w_n$. Set $w_{max} = w_n$. While constructing 
the sequence $(w_i)_{1\leq i \leq n}$, we may have made some choice. Note that Lemma~\ref{L_reconstruct_final} implies that, whatever is this choice, the last element of the sequence is always $w_{max}$.

Taking any word $v$ in $[w]_{\sim_2}$. We can similarly 
find a word $v' \in A^* \setminus A^*abA^*baA^*$ with $v \rightarrow^* v'$. Once again Lemma~\ref{L_reconstruct_final} implies that $v' = w_{max}$. This proves the theorem for the greatest element.

The proof for the least element is similar considering a sequence of words using relation $\leftarrow$ and Lemma~\ref{L_reconstruct_init}
\end{proof}

From now on, let $\init(w)$ and $\final(w)$ denote
respectively the least and the greatest elements of $[w]_{\sim_2}$.

Let us illustrate the constructions provided in
Lemma~\ref{L_reconstruct_final} and \ref{L_reconstruct_init} 
(see also Figure~\ref{fig_exemple_init_and_final} where $i$ and $j$ are as in \ref{L_reconstruct_init} and 
Lemma~\ref{L_reconstruct_final} respectively).
Consider the word $w = aabaabbababba$: $|w|_a = 7$, $|w|_b = 6$, $\binom{w}{ab} = 27$. Since $27 = 4 |w|_b + 3$, by Lemma~\ref{L_reconstruct_init}, $\init(w) = a^4b^3ab^3a^2$.
Since $27 = 3 |w|_a + 6$, by Lemma~\ref{L_reconstruct_final},  $\final(w) = b^2a^6bab^3$.

\begin{figure}[!ht]
\begin{center}
\begin{tikzpicture}[scale=0.7]
\begin{scope}
\draw [white] (-1,-1) -- (-1,7);
\draw [white] (-1,7) -- (8,7);
\draw [white] (8,7) -- (8,-1);
\draw [white] (8,-1) -- (-1,-1);
\draw [dashed] (0,0) -- (0,6);
\draw [dashed] (0,6) -- (7,6);
\draw [dashed] (7,6) -- (7,0);
\draw [dashed]  (7,0) -- (0,0);
\draw [dashed] (0,0) grid (4,6) ;
\draw [dashed] (4,4) grid (5,6) ;
\draw [<->] (0,7) -- (4,7) node[midway,below]{$i$};
\draw [<->] (-1,3) -- (-1,6) node[midway,left]{$j$};
\draw [red] (0,0) -- (1,0) node[midway,below]{a};
\draw [red] (1,0) -- (2,0) node[midway,below]{a};
\draw [red] (2,0) -- (3,0) node[midway,below]{a};
\draw [red] (3,0) -- (4,0) node[midway,below]{a};
\draw [blue] (4,0) -- (4,1) node[midway,right]{b};
\draw [blue] (4,1) -- (4,2) node[midway,right]{b};
\draw [blue] (4,2) -- (4,3) node[midway,right]{b};
\draw [red] (4,3) -- (5,3) node[midway,below]{a};
\draw [blue] (5,3) -- (5,4) node[midway,right]{b};
\draw [blue] (5,4) -- (5,5) node[midway,right]{b};
\draw [blue] (5,5) -- (5,6) node[midway,right]{b};
\draw [red] (5,6) -- (6,6) node[midway,below]{a};
\draw [red] (6,6) -- (7,6) node[midway,below]{a};
\end{scope}
\end{tikzpicture}
~~
\begin{tikzpicture}[scale=0.7]
\begin{scope}
\draw [white] (-1,-1) -- (-1,7);
\draw [white] (-1,7) -- (8,7);
\draw [white] (8,7) -- (8,-1);
\draw [white] (8,-1) -- (-1,-1);
\draw [dashed] (0,0) -- (0,6);
\draw [dashed] (0,6) -- (7,6);
\draw [dashed] (7,6) -- (7,0);
\draw [dashed]  (7,0) -- (0,0);
\draw [dashed] (0,2) grid (6,6) ;
\draw [dashed] (6,3) grid (7,6) ;
\draw [<->] (-1,3) -- (-1,6) node[midway,left]{$i$};
\draw [<->] (0,7) -- (6,7) node[midway,below]{$j$};
\draw [blue] (0,0) -- (0,1) node[midway,right]{b};
\draw [blue] (0,1) -- (0,2) node[midway,right]{b};%
\draw [red] (0,2) -- (1,2) node[midway,below]{a};%
\draw [red] (1,2) -- (2,2) node[midway,below]{a};
\draw [red] (2,2) -- (3,2) node[midway,below]{a};
\draw [red] (3,2) -- (4,2) node[midway,below]{a};
\draw [red] (4,2) -- (5,2) node[midway,below]{a};
\draw [red] (5,2) -- (6,2) node[midway,below]{a};%
\draw [blue] (6,2) -- (6,3) node[midway,right]{b};%
\draw [red] (6,3) -- (7,3) node[midway,below]{a};
\draw [blue] (7,3) -- (7,4) node[midway,right]{b};
\draw [blue] (7,4) -- (7,5) node[midway,right]{b};
\draw [blue] (7,5) -- (7,6) node[midway,right]{b};
\end{scope}
\end{tikzpicture}
\end{center}
\caption{\label{fig_exemple_init_and_final} $\init(aabaabbababba)$ and $\final(aabaabbababba)$}
\end{figure}
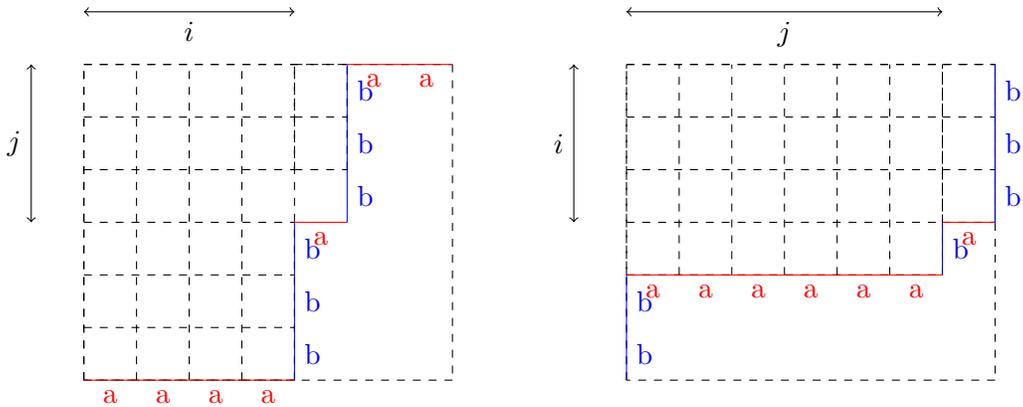

It may be observed that previous results have the following corollary.

\begin{corollary}
For two binary words $u$ and $v$, $u \sim_2 v$ if and only if $\init(u) = \init(v)$ if and only $\final(u) = \final(v)$.
\end{corollary}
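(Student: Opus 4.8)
The plan is to show the corollary is an immediate consequence of the lattice structure established in Theorem~\ref{T_lattice}, together with the well-definedness of the maps $\init$ and $\final$. First I would recall that Theorem~\ref{T_lattice} guarantees that $[w]_{\sim_2}$, ordered by $\rightarrow^*$, is a lattice; in particular it has a unique least element $\init(w)$ and a unique greatest element $\final(w)$. Crucially, these elements depend only on the equivalence class $[w]_{\sim_2}$, not on the chosen representative $w$: if $u \sim_2 v$ then $[u]_{\sim_2} = [v]_{\sim_2}$, so $\init(u) = \init(v)$ and $\final(u) = \final(v)$ by uniqueness. This gives the two ``only if'' directions of both biconditionals at once.

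\begin{proof}
Suppose $u \sim_2 v$. Then $[u]_{\sim_2} = [v]_{\sim_2}$, so by the uniqueness of the least and greatest elements in the lattice $G_2(u) = G_2(v)$ provided by Theorem~\ref{T_lattice}, we have $\init(u) = \init(v)$ and $\final(u) = \final(v)$.

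Conversely, suppose $\init(u) = \init(v)$. Since $u \in [u]_{\sim_2}$, the construction in the proof of Theorem~\ref{T_lattice} (iterating the rule $\leftarrow$, which preserves $\sim_2$ by Theorem~\ref{T_equivalence_reecrite_sim2}) yields $u \leftarrow^* \init(u)$, hence $\init(u) \sim_2 u$. Likewise $\init(v) \sim_2 v$. From $\init(u) = \init(v)$ we deduce $u \sim_2 \init(u) = \init(v) \sim_2 v$, so $u \sim_2 v$. The argument for $\final$ is identical, using the rule $\rightarrow$ in place of $\leftarrow$.
\end{proof}

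The only substantive point to be careful about is that $\init(w)$ and $\final(w)$ are genuinely functions of the class and not just of the word: this is exactly what the uniqueness clauses in Theorem~\ref{T_lattice} (established via Lemmas~\ref{L_reconstruct_final} and~\ref{L_reconstruct_init}, which reconstruct these words from $|w|_a$, $|w|_b$, $\binom{w}{ab}$) guarantee. No additional obstacle arises; the corollary is purely a matter of unwinding the definitions and invoking the lattice theorem.
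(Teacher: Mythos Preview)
Your proof is correct and matches the paper's approach: the paper simply states this as an immediate corollary of the preceding results (Theorem~\ref{T_lattice} and Lemmas~\ref{L_reconstruct_final}, \ref{L_reconstruct_init}) without writing out any argument, and your explicit unwinding is exactly the intended justification. The only minor simplification available is that you need not invoke Theorem~\ref{T_equivalence_reecrite_sim2} to argue $\init(u)\sim_2 u$: by definition $\init(u)$ is the least element of $[u]_{\sim_2}$, so $\init(u)\in[u]_{\sim_2}$ directly.
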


\begin{remark}
Theorem~\ref{T_lattice} implies that: 
for any word $w$, $\#[w]_{\sim_2} = 1$ if and only if $\init(w) = \final(w)$. This provides a new characterization of classes that are singletons. Theorem~3 in  \cite{Mateescu_Salomaa2004IJFCS} states that, 
for any $w \in \Sigma^*$ with $\Sigma = \{a, b\}$,
$\#[w]_{\sim_2} = 1$ if and only if
$w \not\in \Sigma^*ab\Sigma^*ba\Sigma^* \cup \Sigma^*ba\Sigma^*ab\Sigma^*$
if and only if $w \in b^*a^* + b^*ab^* + b^*aba^* + a^*b^* + a^*ba^* + a^*bab^*$. Figure~\ref{fig_singletons} shows the various shapes of these words.
It may be noted that if $|w|_a \geq 2$ and $|w|_b \geq 2$ (that is $w \not\in a^*ba^*\cup b^*ab^*$, then $\#[w]_{\sim_2} = 1$ if and only if 
$\binom{w}{ab} \in \{0, 1, |w|_a|w|_b-1, |w|_a|w|_b\}$.
\end{remark}

\begin{figure}[!ht]
\begin{center}
\begin{tikzpicture}[scale=0.2]
\begin{scope}
\draw [blue] (0,0) -- (0,6) node[midway,right]{$b^*$};
\draw [red] (0,6) -- (7,6) node[midway,below]{$a^*$};
\end{scope}
\end{tikzpicture}
~~~
\begin{tikzpicture}[scale=0.2]
\begin{scope}
\draw [blue] (0,0) -- (0,3) node[near start,right]{$b^*$};
\draw [red] (0,3) -- (2,3) node[midway,below]{$a$};
\draw [blue] (2,3) -- (2,6) node[midway,right]{$b^*$};
\end{scope}
\end{tikzpicture}
~~~
\begin{tikzpicture}[scale=0.2]
\begin{scope}
\draw [blue] (0,0) -- (0,4) node[near start,right]{$b^*$};
\draw [red] (0,4) -- (2,4) node[midway,below]{$a$};
\draw [blue] (2,4) -- (2,6) node[midway,right]{$b$};
\draw [red] (2,6) -- (7,6) node[near end,below]{$a^*$};
\end{scope}
\end{tikzpicture}
~~~
\begin{tikzpicture}[scale=0.2]
\begin{scope}
\draw [red] (0,0) -- (7,0) node[midway,above]{$a^*$};
\draw [blue] (7,0) -- (7,6) node[midway,right]{$b^*$};
\end{scope}
\end{tikzpicture}
~~~
\begin{tikzpicture}[scale=0.2]
\begin{scope}
\draw [red] (0,0) -- (4,0) node[near start,above]{$a^*$};
\draw [blue] (4,0) -- (4,2) node[midway,left]{$b$};
\draw [red] (4,2) -- (7,2) node[midway,above]{$a^*$};
\end{scope}
\end{tikzpicture}
~~~
\begin{tikzpicture}[scale=0.2]
\begin{scope}
\draw [red] (0,0) -- (6,0) node[midway,above]{$a^*$};
\draw [blue] (6,0) -- (6,2) node[midway,left]{b};
\draw [red] (6,2) -- (7,2) node[midway,above]{a};
\draw [blue] (7,2) -- (7,6) node[near end,left]{$b^*$};
\end{scope}
\end{tikzpicture}
\end{center}
\caption{\label{fig_singletons}Shapes of words $w$ with $\#[w]_{\sim_2} = 1$}
\end{figure}
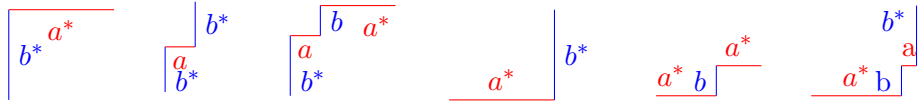

\begin{remark}
Theorem~\ref{T_lattice} also implies that,
if $u \sim_2 v$,
there exists a word $x$ such that $u \rightarrow^* x$ and
$v \rightarrow^* x$. Hence $u \equiv^* x \equiv^* v$. This provides 
a new proof of the if part of Theorem~\ref{T_equivalence_reecrite_sim2}.
\end{remark}

\begin{remark}\label{rem_bibli_init_final} (Bibliographic notes)
The fact that final words are representants
of the $2$-binomial class of binary word is used in \cite[Lemma 4]{Rigo_Salimov2015TCS} to show that
$$\#(A^n/\!\!\sim_2) = \frac{n^3+5n+6}{6}.$$

In \cite{Lejeune_Rigo_Rosenfeld2020IJAC}, 
the previous result of \cite{Rigo_Salimov2015TCS} 
is used (see \cite[Remark 4]{Lejeune_Rigo_Rosenfeld2020IJAC}) 
to determine the number of elements
of the set $LL(\sim_2, \{1, 2\})$ defined as the set $\{ w \in \{1, 2\}^* \mid \forall u \in [w]_{\sim_2} : w \leq_{lex} u\}$: in other 
words $LL(\sim_2, \{1, 2\})$ is the set of initial words over $\{1, 2\}$ that is the set of words that do not contain any factor $21y12$.
\end{remark}

\begin{remark}
It is a consequence of Remark~\ref{rem_length_derivation} that
a minimal path from $\init(w)$ to $\final(w)$ in $G_2(w)$ is 
of length $d(\init(w),\final(w))/2$ (also in the proof of the if part of Theorem~\ref{T_equivalence_reecrite_sim2} taking $u = \init(w)$ and $v = \final(w)$, we can verify that $\alpha < \beta$ due to the fact that $v$ is the greatest element of $[w]_{\sim_2}$ for the lexicographical order: this implies that $U_i \rightarrow U_{i+1}$ for all $i$). 
\end{remark}

Next section will provide an information about the length of maximal paths from $\init(w)$ to $\final(w)$ in $G_2(w)$. See Remark~\ref{rem_longest_paths}.

\subsection{\label{subsec_bij_lattice_integers}Bijections with sets of partitions of integers}

Under the scope of the study of Parikh matrices, in \cite{Teh_Kwa2015TCS} (see also \cite{Mathew_Thomas_Bera_Subramanian2019AAM,Teh2015IJFCS}), 
W.~C.~Teh and K.~H.~Kwa  state a connection between 
 partitions of integers and
 binomial coefficients of some special words that they called \textit{core words} (a first link between Parikh matrices (and so  the binomial coefficient they contains) and partitions 
 can also be found in \cite[Section 4]{Mateescu_Salomaa2004IJFCS}). 
The \textit{core word} $c$ of a word $w \in \{a,b \}^*$ is the unique word in $\varepsilon \cup a\{a,b \}^*b$ such 
that $w \in b^*ca^*$.
In this section, we consider this link between partitions and binomial coefficients
in a slightly different way 
restricting our attention to the $2$-binomial equivalence class of a fixed arbitrary word, that is, 
fixing the number of occurrences of $a$ and $b$ and the number of occurrences of $ab$ as a subword.

Let recall that a \textit{partition}
into k \textit{parts} of a nonnegative integer $n$
is given by a sequence $(\lambda_1, \ldots, \lambda_k)$ 
of non-increasing positive integers such that $n = \sum_{i = 1}^k \lambda_i$ (see \cite{Brylawski1973DM,Greene_Kleitman1986EJC} for instance). 
Note that $\lambda_1$ is the largest part of the partition and
possibly $\lambda_i =  0$ for the last parts.
Let $w$ be a word over $\{a, b\}$ with $|w|_b = k$
and let $p_1b$, \ldots, $p_kb$ be the prefixes of 
$w$ ending with $b$ enumerated by increasing length ($p_k b = \pref_{k,b}(w)$).
Formula~\ref{eq_calcul_binom_ab_par_prefixes} shows
that the sequence $(|p_i|_a)_{i = |w|_b, \ldots, 1}$
forms a partition of the integer $\binom{w}{ab}$.
We denote it $\partition_p(w)$:
$\partition_p(w) = (|p_{|w|_b}|_a, \ldots, |p_1|_a)$.
Observe that the number of parts of the partition is 
$|w|_b$.
The maximal part of the partition is bounded by $|w|_a$.
Note that given any word $v \sim_2 w$, 
we obtain similarly another partition of $\binom{w}{ab} = \binom{v}{ab}$ with $|v|_b = |w|_b$ parts and maximal part bounded by 
$|v|_a = |w|_a$.

Conversely let us consider a partition
$(\lambda_1, \cdots, \lambda_k)$ 
of a positive integer $n$ 
with
$\max_{i= 1,\ldots, n}\lambda_i \leq \ell$ for some integers
$k$ and $\ell$. 
Let $\lambda_{k+1} = 0$
and let 
$w = (\prod_{i = k}^1 a^{\lambda_{i}-\lambda_{i+1}}b)a^{\ell-\lambda_1}$
(that is, $w
= a^{\lambda_k}b
a^{\lambda_{k-1}-\lambda_k}b\cdots
ba^{\lambda_2-\lambda_3}
\cdots ba^{\lambda_{1}-\lambda_{2}}ba^{\ell-\lambda_1}$).
The construction of this word ensures that $\partition_p(w) = (\lambda_1, \cdots, \lambda_k)$, $|w|_a = \ell$ and $|w|_b = k$.
Moreover using Formula~\eqref{eq_calcul_binom_ab_par_prefixes}, 
one can verify that 
$\binom{w}{ab} = \sum_{i=1}^k \lambda_i = n$. 

The two previous paragraphs prove the next result.

\begin{lemma}
\label{L_bijection_partition}
Given a word $w$, 
there is a bijection between the set of elements of $[w]_{\sim_2}$
and the set of partitions of $\binom{w}{ab}$ into $|w|_b$
parts and maximal part bounded by $|w|_a$.
\end{lemma}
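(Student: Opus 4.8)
The plan is to establish the bijection by exhibiting two maps — one from words in $[w]_{\sim_2}$ to partitions, one in the reverse direction — and checking they are mutually inverse. In fact the statement's proof has essentially already been assembled in the two paragraphs preceding it, so my task is to package that discussion into a clean argument rather than to discover anything new.

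First I would fix the data $\ell = |w|_a$, $k = |w|_b$, and $n = \binom{w}{ab}$, and let $P$ denote the set of partitions of $n$ into $k$ parts with largest part at most $\ell$ (allowing trailing zero parts, as in the paper's convention). The forward map is $v \mapsto \partition_p(v)$: for $v \sim_2 w$ we have $|v|_a = \ell$, $|v|_b = k$, and $\binom{v}{ab} = n$, so listing the prefixes $p_1b, \ldots, p_kb$ of $v$ ending in $b$ by increasing length and setting $\lambda_i = |p_{k+1-i}|_a$ gives a non-increasing sequence (since $|p_j|_a \le |p_{j+1}|_a$) of nonnegative integers bounded by $\ell$, summing to $\binom{v}{ab} = n$ by Formula~\eqref{eq_calcul_binom_ab_par_prefixes}. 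Hence $\partition_p(v) \in P$, and the map is well-defined on $[w]_{\sim_2}$.

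Next I would define the reverse map exactly as in the paragraph before the lemma: given $(\lambda_1, \ldots, \lambda_k) \in P$, set $\lambda_{k+1} = 0$ and
$$w' = \Bparen{\prod_{i=k}^{1} a^{\lambda_i - \lambda_{i+1}}b}a^{\ell - \lambda_1}.$$
This is a legitimate word since $\lambda_i - \lambda_{i+1} \ge 0$ and $\ell - \lambda_1 \ge 0$, it has $|w'|_b = k$, $|w'|_a = \sum_{i=1}^k(\lambda_i-\lambda_{i+1}) + (\ell-\lambda_1) = \ell$, and the prefix ending in the $j$-th $b$ from the right has exactly $\lambda_j$ occurrences of $a$, so $\partition_p(w') = (\lambda_1,\ldots,\lambda_k)$. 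By Formula~\eqref{eq_calcul_binom_ab_par_prefixes}, $\binom{w'}{ab} = \sum_{i=1}^k \lambda_i = n$, so $w'$ agrees with $w$ on $|{\cdot}|_a$, $|{\cdot}|_b$, and $\binom{{\cdot}}{ab}$, which by Theorem~\ref{T_equiv_parikh_matrix_and_sum_positions} (or equivalently by the characterization via Parikh matrices / via $\Psi$ and $S_b$) means $w' \sim_2 w$; thus $w' \in [w]_{\sim_2}$.

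Finally I would check the two compositions are identities. Going $P \to [w]_{\sim_2} \to P$ is immediate from the computation $\partition_p(w') = (\lambda_1,\ldots,\lambda_k)$ just made. For $[w]_{\sim_2} \to P \to [w]_{\sim_2}$, I would take $v \sim_2 w$, set $(\lambda_1,\ldots,\lambda_k) = \partition_p(v)$, and observe that reconstructing $w'$ from this partition recovers $v$ itself: writing $v$ by marking its $b$'s, the block of $a$'s between the $(i{+}1)$-th and $i$-th $b$ (counted from the right) has length $\lambda_i - \lambda_{i+1}$ and the terminal block of $a$'s has length $\ell - \lambda_1$, which is precisely the definition of $w'$. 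So $w' = v$. The only point requiring care — and the place I'd expect a referee to look closely — is the bookkeeping in this last identification: getting the indexing of "prefixes ending in $b$, enumerated by increasing length" to line up with the product $\prod_{i=k}^{1}$ in the reverse construction, and confirming that the empty trailing block $a^{\ell-\lambda_1}$ and any trailing zero parts of the partition are handled consistently. Everything else is a direct appeal to Formula~\eqref{eq_calcul_binom_ab_par_prefixes} and to the $2$-binomial characterization already recorded in the paper.
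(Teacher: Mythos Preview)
Your proposal is correct and follows essentially the same approach as the paper: both directions of the bijection are exactly the maps $\partition_p$ and the explicit word-reconstruction $w' = \bigl(\prod_{i=k}^{1} a^{\lambda_i-\lambda_{i+1}}b\bigr)a^{\ell-\lambda_1}$ already described in the two paragraphs preceding the lemma. You are somewhat more explicit than the paper in verifying that the two compositions are identities, which is a welcome addition but not a different strategy.
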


In other words, given three integers $k$, $\ell$ and $m$
there is a bijection between the two following sets:
\begin{itemize}
\item the set of words $w$ such that
$|w|_a = k$, $|w|_b = \ell$ and $\binom{w}{ab} = m$;
\item the set of partitions of the integer $m$
with $\ell$ parts all bounded by $k$.
\end{itemize}

To illustrate what precedes, let us consider the word
$w = aabaabbababba$ (see Figure~\ref{fig_aabaabbababba}).
The partition of $27$ associated with this word
is the sequence $\partition_p(w) = (6, 6, 5, 4, 4, 2)$.
The partition associated with $\init(w)$ and $\final(w)$ (see Figure~\ref{fig_exemple_init_and_final}) are respectively
$\partition_p(\init(w)) = (5, 5, 5, 4, 4, 4)$ and $\partition_p(\final(w)) = (7, 7, 7, 6, 0, 0)$. 
It is usual to represent partitions by a Ferrer diagrams (in which the parts of value 0 are not represented).
This diagram can be visualized in our geometrical representations
applying an anticlockwise quarter-turn rotation.

It is well-known that there is a duality between
the set of partitions of an integer with $k$ parts bounded by $\ell$ and the set of partitions of the same integer 
with $\ell$ parts bounded by $k$.
We can find this dual partition
using Formula~\ref{eq_calcul_binom_ab_par_sufixes}.
Let define $\partition_s(w) = (|s_1|_b, \ldots, |s_{|w|_a}|_b)$
where $as_1$, \ldots, $as_{|w|_a}$ are the suffixes of 
$w$ beginning with $a$ enumerated by decreasing length 
(for $1 \leq i \leq |w|_a$, $as_i = \suff_{|w|_a-i+1, a}(w))$.
In a way similar to the prof of Lemma~\ref{L_bijection_partition}, 
we can prove:
\begin{lemma}
\label{L_bijection_partition_dual}
Given a word $w$, 
there is a bijection between the set of elements of $[w]_{\sim_2}$
and the set of partitions of $\binom{w}{ab}$ into at most $|w|_a$
parts and maximal part bounded by $|w|_b$.
\end{lemma}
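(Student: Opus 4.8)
The plan is to transcribe the proof of Lemma~\ref{L_bijection_partition}, exchanging the roles of prefixes and suffixes and of the letters $a$ and $b$, and using Formula~\eqref{eq_calcul_binom_ab_par_sufixes} in place of Formula~\eqref{eq_calcul_binom_ab_par_prefixes}. Concretely, I would exhibit two mutually inverse maps: $v \mapsto \partition_s(v)$ from $[w]_{\sim_2}$ to the set of partitions of $\binom{w}{ab}$ with at most $|w|_a$ parts and largest part at most $|w|_b$, and a construction sending each such partition to a word that is $2$-binomially equivalent to $w$.

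First I would check that $v \mapsto \partition_s(v)$ has the announced target. Listing the suffixes of a word $v$ that begin with $a$ by decreasing length as $as_1, \dots, as_{|v|_a}$, each $as_{i+1}$ is a proper suffix of $as_i$, so $|s_{i+1}|_b \le |s_i|_b$; hence $\partition_s(v) = (|s_1|_b, \dots, |s_{|v|_a}|_b)$ is a non-increasing sequence of nonnegative integers, i.e. (after discarding trailing zeros) a partition with at most $|v|_a$ parts. Its largest part $|s_1|_b$ is the number of $b$'s occurring strictly after the first $a$ of $v$, so $|s_1|_b \le |v|_b$, and Formula~\eqref{eq_calcul_binom_ab_par_sufixes} says the parts sum to $\binom{v}{ab}$. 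Since $v \sim_2 w$ forces $|v|_a = |w|_a$, $|v|_b = |w|_b$ and $\binom{v}{ab} = \binom{w}{ab}$, this gives the claim.

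Second, given a partition $(\mu_1, \dots, \mu_r)$ of $\binom{w}{ab}$ with $r \le |w|_a$ and $\mu_1 \le |w|_b$, I would pad it to length $|w|_a$ with zeros, put $\mu_{|w|_a+1} = 0$, and set $v = b^{|w|_b - \mu_1}\, a\, b^{\mu_1-\mu_2}\, a\, b^{\mu_2-\mu_3} \cdots a\, b^{\mu_{|w|_a-1}-\mu_{|w|_a}}\, a\, b^{\mu_{|w|_a}}$; the exponents are nonnegative because $\mu_1 \le |w|_b$ and the $\mu_i$ are non-increasing. A one-line count gives $|v|_a = |w|_a$ and $|v|_b = |w|_b$; by construction the $a$-suffixes of $v$, ordered by decreasing length, carry $b$-counts $\mu_1 \ge \cdots \ge \mu_{|w|_a}$, so $\partition_s(v)$ represents $(\mu_1, \dots, \mu_r)$; and Formula~\eqref{eq_calcul_binom_ab_par_sufixes} gives $\binom{v}{ab} = \sum_i \mu_i = \binom{w}{ab}$. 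Hence $v \sim_2 w$, so this construction is a well-defined right inverse of $\partition_s$, which is surjectivity. For injectivity I would observe that for an arbitrary word $v$ the exponents in the displayed factorisation are forced by $|v|_a$, $|v|_b$ and $\partition_s(v)$, so a word $v \in [w]_{\sim_2}$ is recovered from $\partition_s(v)$; the two maps are therefore mutual inverses and the lemma follows.

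I do not expect a genuine obstacle here: the argument is a routine dualisation of the proof of Lemma~\ref{L_bijection_partition}. The only points needing a little care are the bookkeeping between ``a sequence of exactly $|w|_a$ entries with possible trailing zeros'' and ``a partition with at most $|w|_a$ parts'', and the degenerate cases $|w|_a = 0$ (then necessarily $\binom{w}{ab} = 0$ and both sides are singletons consisting of the empty partition) and $\binom{w}{ab} = |w|_a|w|_b$ (partition $(|w|_b, \dots, |w|_b)$, word $a^{|w|_a}b^{|w|_b}$). Alternatively, one could avoid re-running the construction and instead compose the bijection of Lemma~\ref{L_bijection_partition} with the classical conjugation bijection (transpose of the Ferrers diagram) between partitions of $n$ with at most $k$ parts bounded by $\ell$ and partitions of $n$ with at most $\ell$ parts bounded by $k$, after observing that $\partition_s(w)$ is exactly the conjugate of $\partition_p(w)$; but the self-contained version above stays closer to the remark made just before the statement.
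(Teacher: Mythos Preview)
Your proposal is correct and follows precisely the route the paper indicates: the paper does not give a detailed proof of this lemma but simply says it can be proved ``in a way similar to the proof of Lemma~\ref{L_bijection_partition}'' after introducing $\partition_s$, and your argument is exactly that dualisation, using Formula~\eqref{eq_calcul_binom_ab_par_sufixes} in place of Formula~\eqref{eq_calcul_binom_ab_par_prefixes}. Your explicit inverse construction $v = b^{|w|_b-\mu_1}ab^{\mu_1-\mu_2}\cdots ab^{\mu_{|w|_a}}$ and your injectivity check via the unique $b^{k_0}ab^{k_1}\cdots ab^{k_{|v|_a}}$ factorisation are the natural analogues of the prefix-side constructions preceding Lemma~\ref{L_bijection_partition}, so there is no substantive difference in approach.
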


With $w = aabaabbababba$  (see Figures~\ref{fig_aabaabbababba}
and \ref{fig_exemple_init_and_final}), 
the partitions of $27$ associated with $w$,
$\init(w)$ and $\final(w)$
by the bijection behind Lemma~\ref{L_bijection_partition_dual} 
are respectively the sequences $\partition_s(w) = (6, 6, 5, 5, 3, 2)$,
$\partition_s(\init(w)) = (6, 6, 6, 6, 3, 0, 0)$ and 
$\partition_s(\final(w)) = (4, 4, 4, 4, 4, 4, 3)$. 
Their Ferrer's diagrams can be visualized by symmetry with the top border of the rectangle.

Let us note the following corollary of any of the two previous lemmas.

\begin{corollary}
\label{cor_number_of_elements}
Given a word $w$ with $\binom{w}{ab} \leq \min(|w|_a, |w|_b)$,
$$\#[w]_{\sim_2} = \text{number of partitions of the integer} \binom{w}{ab}.$$
\end{corollary}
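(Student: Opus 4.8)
The plan is to invoke Lemma~\ref{L_bijection_partition} and then to argue that, under the hypothesis $\binom{w}{ab}\le\min(|w|_a,|w|_b)$, the two side constraints appearing in that lemma---"exactly $|w|_b$ parts" and "largest part bounded by $|w|_a$"---become vacuous, so that the set counted by Lemma~\ref{L_bijection_partition} is in bijection with the set of \emph{all} partitions of $\binom{w}{ab}$.

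First I would recall from Lemma~\ref{L_bijection_partition} that $\#[w]_{\sim_2}$ equals the number of partitions of $n := \binom{w}{ab}$ into $|w|_b$ parts with maximal part bounded by $|w|_a$, where, following the definition recalled just before that lemma, a partition into $k$ parts is a non-increasing sequence $(\lambda_1,\dots,\lambda_k)$ of nonnegative integers summing to $n$, trailing zeros being allowed. Next I would set up the bijection with unrestricted partitions. Given any partition $(\mu_1,\dots,\mu_r)$ of $n$ in the usual sense (all $\mu_i$ positive), the number $r$ of parts satisfies $r \le n$ since each part is at least $1$; the hypothesis gives $n \le |w|_b$, hence $r \le |w|_b$, so one may append $|w|_b - r$ zeros to obtain a non-increasing sequence of length exactly $|w|_b$. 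The largest part also satisfies $\mu_1 \le n \le |w|_a$, so the cap on the maximal part is automatically met. This sends each unrestricted partition of $n$ to a partition counted by Lemma~\ref{L_bijection_partition}; conversely, deleting trailing zeros recovers an unrestricted partition of $n$, and the two maps are mutually inverse.

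Combining this bijection with Lemma~\ref{L_bijection_partition} then yields $\#[w]_{\sim_2}$ = (number of partitions of $\binom{w}{ab}$), which is the claim. There is no genuine obstacle here; the point to be careful about is simply that the paper's notion of "partition into $k$ parts" permits zero parts, which is exactly what legitimizes the padding step, and that $\min(|w|_a,|w|_b)$ enters because one needs both $n \le |w|_b$ (to pad up to $|w|_b$ parts) and $n \le |w|_a$ (to disregard the bound on the largest part). One could run the same argument through Lemma~\ref{L_bijection_partition_dual} instead, with the roles of $|w|_a$ and $|w|_b$ exchanged, obtaining the statement identically.
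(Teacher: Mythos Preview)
Your proof is correct and follows exactly the approach the paper intends: the corollary is stated as an immediate consequence of Lemma~\ref{L_bijection_partition} (or equivalently Lemma~\ref{L_bijection_partition_dual}), and you have supplied precisely the missing observation that the hypothesis $\binom{w}{ab}\le\min(|w|_a,|w|_b)$ makes both constraints vacuous. Your care in noting that the paper's convention allows trailing zero parts is well placed, since that is what makes the padding argument go through.
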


In \cite{Brylawski1973DM}, Brylawski proves 
that the set $L_n$ of partitions in $n$ parts
of a nonnegative integer $n$ is a lattice
for the \textit{dominance ordering} defined as follows. 
For $\lambda = (\lambda_1, \ldots, \lambda_n)$ and 
$\mu = (\mu_1, \ldots, \mu_n)$ two partitions of the integer $n$,
$\lambda \geq \mu$ if for each $j$, $1 \leq j \leq n$, 
$\sum_{i = 1}^j \lambda_i \geq \sum_{i = 1}^j \mu_i$.
Traditionally in order theory, an element $\lambda$ covers
another $\mu$ if there exist no elements $\nu$ with
$\lambda > \nu > \mu$.
Brylawski characterizes the covers in $L_n$.
Let $\lambda \succ \mu$ denote
the fact that $\lambda$ covers $\mu$.

\begin{lemma}{\cite[Prop. 2.3]{Brylawski1973DM}}
\label{L_covers_Ln}
In $L_n$, $\lambda = (\lambda_1, \ldots, \lambda_n)$ covers $\mu = (\mu_1, \ldots, \mu_n)$
if and only if there exist $j$ and $k$ such that 
$\lambda_j = \mu_j+1$,
$\lambda_k = \mu_k-1$,
$\lambda_i = \mu_i$ for all $i \not\in \{j, k\}$ and
one of the two cases holds:
\begin{enumerate}
\item $k = j+1$;
\item $\mu_j = \mu_k$ (or, equivalently, $\lambda_j = \lambda_k+2$).
\end{enumerate}
\end{lemma}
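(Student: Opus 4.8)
The natural framework is partial sums. For a partition $\nu=(\nu_1,\dots,\nu_n)$ of $n$ write $\sigma_m(\nu)=\sum_{i=1}^{m}\nu_i$, so that $\mu\le\lambda$ in the dominance order precisely when $\sigma_m(\mu)\le\sigma_m(\lambda)$ for every $m$, while $\sigma_0(\nu)=0$ and $\sigma_n(\nu)=n$ are common to all partitions of $n$. When $\mu\le\lambda$ put $t_m=\sigma_m(\lambda)-\sigma_m(\mu)\ge 0$; then $t_0=t_n=0$ and $t_m-t_{m-1}=\lambda_m-\mu_m$. Throughout, ``moving a box'' from row $p$ to row $q>p$ means replacing $(\lambda_p,\lambda_q)$ by $(\lambda_p-1,\lambda_q+1)$, and one uses the elementary fact that this yields a partition exactly when $\lambda_p-1\ge\lambda_{p+1}$ (strengthened to $\lambda_p-1\ge\lambda_{p+1}+1$ if $q=p+1$) and $\lambda_{k-1}\ge\lambda_k+1$ when $q>p+1$.

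\emph{The ``if'' direction.} Suppose $\lambda$ and $\mu$ are partitions agreeing outside $\{j,k\}$ with $\lambda_j=\mu_j+1$, $\lambda_k=\mu_k-1$, $j<k$, and either $k=j+1$ or $\mu_j=\mu_k$ (the case $j>k$ would give $\mu\succ\lambda$ instead, and $j=k$ is impossible). Then $t_m=1$ for $j\le m<k$ and $t_m=0$ otherwise, so $\mu<\lambda$. To see that nothing lies strictly between, take any partition $\nu$ with $\mu\le\nu\le\lambda$; then $0\le\sigma_m(\nu)-\sigma_m(\mu)\le t_m$, which forces $\nu_i=\mu_i$ for all $i\notin\{j,\dots,k\}$ and $\sum_{i=j}^{k}\nu_i=\sum_{i=j}^{k}\mu_i$. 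If $k=j+1$, the one free value $\nu_j\in\{\mu_j,\mu_j+1\}$ then determines $\nu$, so $\nu\in\{\mu,\lambda\}$. If instead $\mu_j=\mu_k=:c$, then $\mu_i=\lambda_i=c$ for $j<i<k$, so $\nu_j,\dots,\nu_k$ is a non-increasing sequence with entries in $\{c-1,c,c+1\}$ of sum $(k-j+1)c$; hence it has equally many entries equal to $c+1$ and to $c-1$, and the bound $\sigma_{j+1}(\nu)-\sigma_{j+1}(\mu)\le 1$ forbids two entries equal to $c+1$, leaving only the profiles $(c,\dots,c)$ and $(c+1,c,\dots,c,c-1)$, i.e. $\nu\in\{\mu,\lambda\}$. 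In all cases $\lambda$ covers $\mu$.

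\emph{The ``only if'' direction.} It suffices to prove: for any partitions $\mu<\lambda$ in $L_n$ there is a move of one of the two stated forms carrying $\lambda$ to some $\rho$ with $\mu\le\rho<\lambda$; for if $\lambda$ covers $\mu$ this $\rho$ must equal $\mu$, which is exactly the conclusion. Let $j$ be least with $\lambda_j\ne\mu_j$, so $\lambda_i=\mu_i$ for $i<j$, $\lambda_j\ge\mu_j+1$, and $j<n$, and let $b$ be least with $b>j$ and $t_b=0$; then $t_m\ge 1$ for $j\le m<b$ and $\lambda_b<\mu_b$. Now: if $\lambda_j\ge\lambda_{j+1}+2$, move one box from row $j$ to row $j+1$, a type-(1) move that keeps a partition and only lowers $\sigma_j$, which stays $\ge\sigma_j(\mu)$. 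Otherwise let $J$ be the last index with $\lambda_J=\lambda_j$; one checks $j\le J<b$, and if $\lambda_J\ge\lambda_{J+1}+2$ the type-(1) move from row $J$ to row $J+1$ works. In the remaining case $\lambda_{J+1}=\lambda_J-1$; let $k$ be least with $k>J$ and $\lambda_k<\lambda_J-1$, so $J+1<k$, $\lambda_{k-1}=\lambda_J-1$, $\lambda_k\le\lambda_J-2$, and a short partial-sum comparison (using $\mu_b<\mu_j$ together with monotonicity of $\mu$) shows $k\le b$. If $\lambda_k=\lambda_J-2$, move one box from row $J$ to row $k$: this is a type-(2) move since afterwards rows $J$ and $k$ have equal length. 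If $\lambda_k\le\lambda_J-3$, then $\lambda_{k-1}-\lambda_k\ge 2$ and one instead moves a box from row $k-1$ to row $k$, a type-(1) move. In every case the affected partial sums only decrease and, because $k\le b$, remain $\ge$ those of $\mu$, so the resulting $\rho$ has the required properties.

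The routine part is the ``if'' direction, which is pure bookkeeping with the $t_m$. \textbf{The main obstacle is the ``only if'' direction}: one must show that a legal single-box descent of one of the two permitted shapes is \emph{always} available without dropping below $\mu$. The delicate point---and the very reason the statement splits into two cases---is the configuration in which the top-most over-full row of $\lambda$ cannot feed its immediate successor because the gap there is only $1$; handling it requires the auxiliary index $b$ (the first place where the partial sums of $\lambda$ and $\mu$ meet again) to bound how far down a box may be pushed, together with the case split on whether $\lambda$ next drops by exactly $1$ (giving a type-(2) move) or by at least $2$ (giving a type-(1) move lower down).
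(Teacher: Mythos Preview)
Your argument is correct. Note, however, that the paper does not supply its own proof of this lemma: it is quoted verbatim from Brylawski \cite[Prop.~2.3]{Brylawski1973DM} and used as a black box, so there is no in-paper proof to compare against. What you have written is a clean, self-contained proof in the spirit of Brylawski's original: the ``if'' direction via the observation that the partial-sum gap vector $(t_m)$ is $\{0,1\}$-valued and supported on a single interval, and the ``only if'' direction by exhibiting, for any $\mu<\lambda$, a single admissible box move from $\lambda$ landing weakly above $\mu$. Two small points worth tightening in a final write-up: in the ``if'' part you silently use $j<k$, which indeed follows from $\lambda>\mu$ but deserves one explicit sentence; and the verification ``one checks $j\le J<b$'' and ``a short partial-sum comparison shows $k\le b$'' both rest on the chain $\lambda_b<\mu_b\le\mu_j\le\lambda_j-1=\lambda_J-1$, which you might display once rather than leave to the reader, since it is exactly what guarantees that every affected partial sum stays $\ge\sigma_m(\mu)$.
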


With the notation of this lemma, 
let $\lambda \succ_1 \mu$ 
(resp. $\lambda \succ_2 \mu$) denote the fact that
$\lambda \succ \mu$ with $k = j+1$ (resp. $\mu_j = \mu_k$).

The following result shows that the two rewriting rules on partitions of integers that occurs in Lemma~\ref{L_covers_Ln} can be translated to words. But this translation depends on the partition $\partition_p(w)$ or $\partition_s(w)$ that we choose. Actually the two translations are dual.

Let $\rightarrow_{a}$ and $\rightarrow_{b}$ be the two rewriting rules defined as follows. For $\alpha \in \{a, b\}$,
$u \rightarrow_\alpha v$ if there exist words $x$ and $y$, and
an integer $m$ such that $u = x ab\alpha^m ba y$
and $v = x ba\alpha^m ab y$.

\begin{lemma}
\label{L_cor_rewriting}
For any words $u$ and $v$:
\begin{enumerate}
\item $u \rightarrow_a v$ if and only if
$\partition_p(v) \succ_1 \partition_p(u)$
if and only if
$\partition_s(u) \succ_2 \partition_s(v)$.
\item $u \rightarrow_b v$ if and only if
$\partition_p(v) \succ_2 \partition_p(u)$
if and only if
$\partition_s(u) \succ_1 \partition_s(v)$.
\end{enumerate}
\end{lemma}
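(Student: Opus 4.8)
The plan is to prove the two equivalences that involve $\partition_p$, and then to reduce the two that involve $\partition_s$ to them. For the reduction, consider the involution $w \mapsto w^{\dagger} := \widetilde{\overline{w}}$ that reverses $w$ and exchanges the two letters. One checks that $\partition_s(w) = \partition_p(w^{\dagger})$, and, since the factors $ab\alpha^{m}ba$ and $ba\alpha^{m}ab$ are palindromes that are swapped by the letter exchange, that $u \rightarrow_a v$ if and only if $v^{\dagger} \rightarrow_b u^{\dagger}$, and $u \rightarrow_b v$ if and only if $v^{\dagger} \rightarrow_a u^{\dagger}$. Hence ``$u \rightarrow_a v \iff \partition_s(u) \succ_2 \partition_s(v)$'' follows from ``$u \rightarrow_b v \iff \partition_p(v) \succ_2 \partition_p(u)$'' applied to the pair $(v^{\dagger},u^{\dagger})$, and symmetrically for the other $\partition_s$ statement. (Equivalently, $\partition_s(w)$ is the conjugate of the partition $\partition_p(w)$, and conjugation is an order-reversing automorphism of the dominance lattice interchanging the two kinds of covers of Lemma~\ref{L_covers_Ln}.) So it suffices to establish $u \rightarrow_a v \iff \partition_p(v) \succ_1 \partition_p(u)$ and $u \rightarrow_b v \iff \partition_p(v) \succ_2 \partition_p(u)$.

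The bookkeeping behind both is the same. List the prefixes of a word $w$ ending with $b$ by increasing length, $p_1 b, \dots, p_n b$ with $n = |w|_b$, so that $\partition_p(w) = (|p_n|_a, \dots, |p_1|_a)$ by definition and by Formula~\eqref{eq_calcul_binom_ab_par_prefixes}. First suppose $u \rightarrow_a v$ and write $u = x\,ab\,a^{m}\,ba\,y$, $v = x\,ba\,a^{m}\,ab\,y$. The two displayed $b$'s are consecutive among the $b$'s of $u$ (only $a$'s separate them); say they are the $r$-th and the $(r+1)$-th. In $u$ they are preceded by $|x|_a+1$ and $|x|_a+1+m$ occurrences of $a$; in $v$, by $|x|_a$ and $|x|_a+m+2$; every other $b$ is preceded by the same number of $a$'s in $u$ and in $v$. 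Reading these counts from the last $b$ to the first, $\partition_p(v)$ is obtained from $\partition_p(u)$ by adding $1$ at the index of the $(r+1)$-th $b$ and subtracting $1$ at the following index (that of the $r$-th $b$), other entries unchanged; by case~1 of Lemma~\ref{L_covers_Ln} this is exactly $\partition_p(v) \succ_1 \partition_p(u)$ (the resulting summation identity also re-proves $\binom{v}{ab}=\binom{u}{ab}$). The case $u \rightarrow_b v$ is the same except that now $u = x\,ab\,b^{m}\,ba\,y = x\,a\,b^{m+2}\,a\,y$: the affected $b$'s form a maximal run of $m+2$ consecutive $b$'s, all preceded by $|x|_a+1$ occurrences of $a$; in $v = x\,ba\,b^{m}\,ab\,y$ the first of them loses one preceding $a$, the last gains one, and the $m$ in between are unchanged. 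In $\partition_p$ this lowers one entry of a run of equal entries by $1$ and raises by $1$ another entry of the same run $m+1$ positions earlier: this is case~2 of Lemma~\ref{L_covers_Ln}, i.e.\ $\partition_p(v) \succ_2 \partition_p(u)$. (When $m=0$ the rules $\rightarrow_a$ and $\rightarrow_b$ coincide, as do the two cover types, so there is no conflict.)

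For the converses I would run the computation backwards. If $\partition_p(v) \succ_1 \partition_p(u)$, Lemma~\ref{L_covers_Ln} gives an index $J$ with $\partition_p(v)_J = \partition_p(u)_J + 1$, $\partition_p(v)_{J+1} = \partition_p(u)_{J+1} - 1$, all other entries equal. Interpret $J$ and $J+1$ as two consecutive $b$'s of $u$, the earlier one $B$ (index $J+1$ of the partition) and the later one $B'$ (index $J$); since $\partition_p(u)$ is non-increasing, the integer $m := \partition_p(u)_J - \partition_p(u)_{J+1} \geq 0$ is the number of $a$'s between $B$ and $B'$, so $u$ reads $\cdots B\,a^{m}\,B' \cdots$ there. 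Monotonicity of $\partition_p(v)$ forces $\partition_p(u)_{J+1} > \partition_p(u)_{J+2}$ and $\partition_p(u)_{J-1} > \partition_p(u)_{J}$, which means $B$ is immediately preceded by an $a$ and $B'$ immediately followed by an $a$; hence $u = x\,ab\,a^{m}\,ba\,y$, and the prescribed changes in $\partition_p$ force $v = x\,ba\,a^{m}\,ab\,y$, i.e.\ $u \rightarrow_a v$. The converse for $\rightarrow_b$ versus $\succ_2$ is analogous, starting from a maximal run of equal parts of $\partition_p(u)$ whose endpoints are moved by $\pm 1$.

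The step I expect to be delicate is this last, reconstructive one at the boundary: one must make sure a $\succ_1$- or $\succ_2$-cover is witnessed by the exact pattern $ab\alpha^{m}ba$, in particular when the $b$ in question is the first or last $b$ of the word, where ``immediately preceded/followed by $a$'' must instead be deduced from $\partition_p(v)_J \geq 0$ together with the fact — built into the bijection of Lemma~\ref{L_bijection_partition} — that $u$ and $v$ share the Parikh vector (so that $\partition_p(u)$ and $\partition_p(v)$ have the same number of parts $|w|_b$ and the same bound $|w|_a$). The trivial extremes $|w|_a = 0$ or $|w|_b = 0$ (neither rule applies, singleton class, empty partition) should also be disposed of separately. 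The rest is routine verification.
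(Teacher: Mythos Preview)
Your proof is correct and the core bookkeeping for the $\partition_p$ equivalences is the same as the paper's: both compute how the $a$-counts before the two affected $b$'s change under the rewrite and match this to the two cases of Lemma~\ref{L_covers_Ln}. The route diverges for the $\partition_s$ equivalences. The paper establishes them by a second direct computation (tracking the $b$-counts after each $a$ in $u = x\,ab\,a^m\,ba\,y$ versus $v$, and reading off a $\succ_2$ cover), and only afterwards, in Remark~4.12, observes that the two items are dual under the involution $w\mapsto\widetilde{\overline{w}}$. You instead \emph{use} that duality as the proof mechanism: the identities $\partition_s(w)=\partition_p(w^{\dagger})$ and $u\rightarrow_a v \iff v^{\dagger}\rightarrow_b u^{\dagger}$ reduce both $\partition_s$ claims to the already-proved $\partition_p$ claims. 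This is cleaner and avoids repeating a computation, at the cost of having to verify those two identities (which you do correctly).

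Your converse argument is also more explicit than the paper's, which dispatches it as ``quite verbatim''. In particular you correctly isolate the only delicate point: when the affected $b$ is the first or last one, the inequality $\partition_p(u)_{J-1}>\partition_p(u)_J$ (resp.\ $\partition_p(u)_{J+1}>\partition_p(u)_{J+2}$) is unavailable, and one must instead use the bound $\partition_p(v)_1\le |v|_a$ (resp.\ $\partition_p(v)_{|w|_b}\ge 0$) together with $|u|_a=|v|_a$ to force the required flanking $a$. Note that this last equality is \emph{not} literally part of the hypothesis ``for any words $u$ and $v$''; without it the converse fails (e.g.\ $u=abab$, $v=baaab$ satisfy $\partition_p(v)\succ_1\partition_p(u)$ but $u\not\rightarrow_a v$). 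Both the paper and you are implicitly working inside a fixed class $[w]_{\sim_2}$, which is the only context in which the lemma is used, but it would be cleaner to say so explicitly rather than to appeal to Lemma~\ref{L_bijection_partition} as if it supplied the missing hypothesis.
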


\begin{proof}
Let $u$ and $v$ be words.
For $1 \leq i \leq |u|_b$,
let $p_i$ be the word such that $p_ib = \pref_{i, b}(u)$.
For $1 \leq i \leq |v|_b$,
let $p_i'$ be the word such that $p_i'b =\pref_{i, b}(v)$.

Assume first that $u \rightarrow_a v$: 
$u = xaba^mbay$ and 
$v = xbaa^maby$
for some words $x$ and $y$ and some integer $m$.

Let $j = |xab|_b = |xb|_b$.
Observe that $p_jb=xab$, $p_j'b = xb$, $p_{j+1}b= xaba^mb$,
$p_{j+1}'b= xbaa^mab$.
So $|p_j'b|_a = |p_jb|_a-1$,
$|p_{j+1}'b|_a = |p_{j+1}'b|_a+1$
and, for $i \not\in \{j, j+1\}$,
$|p_ib|_a = |p_i'b|_a$.
By definition, 
$\partition_p(u) = (\ldots, |p_{j+1}b|_a, |p_jb|_a, \ldots)$
and 
$\partition_p(v) = (\ldots, |p_{j+1}b|_a+1, |p_j'b|_a-1, \ldots)$
(with the left and right ``\ldots" being equals).
Hence $\partition_p(v) \succ_1 \partition_p(u)$. 

The converse, that is the fact that 
$\partition_p(v) \succ_1 \partition_p(u)$ implies $u \rightarrow_b v$, is quite verbatim.
Hypothesis $\partition_p(u) \prec_1 \partition_p(v)$
implies that for some integer $j$, 
$\partition_p(u) = (\ldots, |p_{j+1}b|_a, |p_jb|_a, \ldots)$
and 
$\partition_p(v) = (\ldots, |p_{j+1}'b|_a, |p_j'b|_a, \ldots)$.
Hence 
$|p_{j+1}'b|_a = |p_{j+1}b|_a-1$,
 $|p_j'b|_a = |p_jb|_a+1$,
and $|p_ib|_a = |p_i'b|_a$ for $i \not\in \{j, j+1\}$.
Hence $u = xaba^mbay$ and  $v = xbaa^maby$
with $xab = p_jb$, $x_b = p_j'b$, $xaba^mb = p_{j+1}b$
and $xaba^mab =p_{j+1}'b$ 
for some words $x$ and $y$ and some integer $m$.

We still assume that $u \rightarrow_a v$, 
$u = xaba^mbay$ and 
$v = xbaa^maby$.
For $1 \leq i \leq |u|_a$,
let $s_i$ be the word such that $as_i = \suff_{i,a}(u)$.
For $1 \leq i \leq |v|_a$,
let $s_i'$ be the word such that $as_i' = \suff_{i,a}(v)$.
Let remember that
$\partition_s(u) = (|as_{|u|_a}|_b, \ldots, |as_1|_b)$ and
$\partition_s(v) = (|as_{|u|_a}'|_b, \ldots, |as_1'|_b$)
Let $j = |y|_a+1$ Observe that:
\begin{itemize}
\item $as_j = ay$, $as_j' = aby$
\item $as_{j+i} = a^ibay$, $as_{j+1}' = a^iaby$, for $1 \leq i \leq m$,
\item $as_{j+m+1} = aba^mbay$, $as_{j+m+1}' = aa^maby$.
\end{itemize}
Thus
$\partition_s(u) = ( \ldots, 2+|y|_b, |y|_b+1, \ldots, |y|_b+1, |y|_b, \ldots)$
and
$\partition_s(u) = ( \ldots, 1+|y|_b, |y|_b+1, \ldots, |y|_b+1, |y|_b +1, \ldots)$:
more precisely
$|as_{j+m+1}|_b = |y|_b+2$,
$|as_{j+i}|_b = |y|_b+1$ for $1 \leq i \leq m$,
$|as_j|_b = |y|_b$,
$|as_{j+m+1}'|_b = |y|_b+1$,
$|as_{j+i}|_b = |y|_b+1$ for $1 \leq i \leq m$,
$|as_j|_b = |y|_b+1$ (and $|as_k|_b = |as_k'|_b$ for $k \not\in [j, j+m+1]$).
Hence $\partition_s(u) \succ_2 \partition_s(v)$.

The converse follows in a similar way than with the hypothesis 
$\partition_p(v) \succ_1 \partition_p(u)$.

The proof of the second item can be proved similarly.
\end{proof}

\begin{remark}
In their study of partitions of integers in the context of 
Parikh word representable graphs, L.~Mathew et al. \cite{Mathew_Thomas_Bera_Subramanian2019AAM}
defined the dual of a word $w$ as the image of the reverse of $w$ by the anti-morphic involution that maps $a$ to $b$. This corresponds in the geometrical representation of words 
to consider the word obtained by a symmetry using the main diagonal of $\rect(w)$.
It may be observed that the two items of Lemma~\ref{L_cor_rewriting}
are the same up to this duality.
\end{remark}

Let us also remark that if $u \rightarrow_a v$ or
$u \rightarrow_b v$ then $u \rightarrow v$. 
Actually it is a simple observation that 
for a word $x$, neither $ab$ nor $ba$ is a factor of $x$ if and only if in $x \in a^*\cup b^*$.
Hence relations $\rightarrow_a$ and $\rightarrow_b$ 
is the restriction of the relation $\rightarrow$ used on 
factors $abxba$ with $x$ containing no occurrence of $ab$ nor $ba$.

Next lemma shows that this restriction does not modify the 
general rewriting. Let $\rightarrow_{spa}$ be the union of 
relations $\rightarrow_a$ and $\rightarrow_b$ 
($spa$ stands for special palindromes; Remark~\ref{R_spa} explains this notation).
The relation $\rightarrow_{spa}^*$ denotes the
transitive closure of relation  $\rightarrow_{spa}$.

\begin{lemma}
\label{L_equivalence_spa_rightarrow}
For any words $u$ and $v$,
$u \rightarrow_{spa}^* v$ if and only if $u \rightarrow^* v$.
\end{lemma}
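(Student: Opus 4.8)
The implication $u\rightarrow_{spa}^* v\Rightarrow u\rightarrow^* v$ is immediate, since $\rightarrow_{spa}$ is contained in $\rightarrow$ (as observed just before the statement) and hence $\rightarrow_{spa}^*$ is contained in $\rightarrow^*$. For the converse the plan is to transport the problem to the dominance order on partitions through the bijection of Lemma~\ref{L_bijection_partition}, and there to exploit Brylawski's description of covers (Lemma~\ref{L_covers_Ln}) together with Lemma~\ref{L_cor_rewriting}. By transitivity of $\rightarrow_{spa}^*$ it is enough to handle a single step: assume $u = x\,ab\,y\,ba\,z$ and $v = x\,ba\,y\,ab\,z$, and show $u\rightarrow_{spa}^* v$.

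First I would compare $\partition_p(u)$ and $\partition_p(v)$. Listing the occurrences of $b$ from left to right, the $b$ coming from the explicit factor $ab$ is preceded by one more $a$ in $u$ than in $v$ (its prefix is $xab$ versus $xb$), the $b$ coming from the explicit factor $ba$ is preceded by one fewer (its prefix is $xaby\,b$ versus $xbay\,ab$), and every remaining occurrence of $b$ — those inside $x$, inside $y$, or inside $z$ — is preceded by the same number of $a$'s in both words. Since the $b$ of $ab$ precedes the $b$ of $ba$ in $u$, reversing this list to form $\partition_p$ shows that $\partition_p(v)$ is obtained from $\partition_p(u)$ by adding $1$ to one entry and subtracting $1$ from a strictly later entry. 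This is a dominance-increasing box move, so $\partition_p(u) < \partition_p(v)$ for the dominance order (and $u\ne v$ by Fact~\ref{fact_ordre}, so the inequality is strict). In particular, iterating, $u\rightarrow^* v$ implies $\partition_p(u) < \partition_p(v)$.

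Conversely, from $\partition_p(u) < \partition_p(v)$ I would build the $\rightarrow_{spa}$-derivation as follows. Work in the finite poset of partitions of the integer $\binom{u}{ab}$ ordered by dominance, where by Lemma~\ref{L_covers_Ln} a covering $\lambda\succ\mu$ is exactly a $\succ_1$- or a $\succ_2$-move. Pick a saturated chain $\partition_p(u)=\rho_0,\rho_1,\dots,\rho_s=\partition_p(v)$ with $\rho_{i+1}$ covering $\rho_i$ for every $i$. Going upwards along such a chain the partial sums $\sum_{j\le t}(\rho_i)_j$ only increase, so each $\rho_i$ still has at most $|u|_b$ parts, all bounded by $|u|_a$; hence $\rho_i = \partition_p(u_i)$ for a unique word $u_i\in[u]_{\sim_2}$ by Lemma~\ref{L_bijection_partition}, with $u_0=u$ and $u_s=v$. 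Since $\rho_{i+1}\succ_1\rho_i$ or $\rho_{i+1}\succ_2\rho_i$, Lemma~\ref{L_cor_rewriting} gives $u_i\rightarrow_a u_{i+1}$ or $u_i\rightarrow_b u_{i+1}$, i.e.\ $u_i\rightarrow_{spa}u_{i+1}$; concatenating yields $u=u_0\rightarrow_{spa}u_1\rightarrow_{spa}\cdots\rightarrow_{spa}u_s=v$, that is $u\rightarrow_{spa}^* v$. As a by-product this shows that $u\rightarrow^* v$, $u\rightarrow_{spa}^* v$ and $\partition_p(u)<\partition_p(v)$ are all equivalent.

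The only points needing care are the two combinatorial facts in the middle of the argument: the precise bookkeeping of how one $\rightarrow$-step alters the sequence of $a$-counts of the $b$-ending prefixes, and the remark that an ascending saturated chain cannot leave the set of partitions with at most $|u|_b$ parts each bounded by $|u|_a$ (clear because partial sums can only grow along such a chain). Everything else is just composing Lemmas~\ref{L_bijection_partition}, \ref{L_covers_Ln} and \ref{L_cor_rewriting}.
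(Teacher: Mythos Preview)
Your argument is correct. The bookkeeping in the first paragraph is accurate (one $a$-count goes down at the earlier $b$, one goes up at the later $b$, so after reversing you get a single dominance-increasing box move), and your justification that intermediate partitions along the saturated chain stay inside the $|u|_b\times|u|_a$ box is right: from $\rho_0\le\rho_i$ you get $\sum_{j\le|u|_b}(\rho_i)_j\ge n$, hence equality, so no part past index $|u|_b$; and from $\rho_i\le\rho_s$ you get $(\rho_i)_1\le(\rho_s)_1\le|u|_a$. Lemma~\ref{L_cor_rewriting} then converts each cover back into a $\rightarrow_a$- or $\rightarrow_b$-step.

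Your route is genuinely different from the paper's. The paper argues directly on words, by induction on $|u|$: strip a common outer letter if $x\ne\varepsilon$ or $z\ne\varepsilon$; otherwise look inside $y$ for a factor $ab$ or $ba$ and decompose the single $\rightarrow$-step into two shorter ones, each handled by the inductive hypothesis. This is elementary and self-contained: it does not need Brylawski's characterization of covers. Your approach instead pushes everything through the bijection with partitions and uses Lemma~\ref{L_covers_Ln} plus Lemma~\ref{L_cor_rewriting}; this is less elementary but more structural, and as you note it yields for free that $u\rightarrow^*v$, $u\rightarrow_{spa}^*v$ and $\partition_p(u)\le\partition_p(v)$ are equivalent---which is essentially the content of Theorem~\ref{T_lattice_isomorphism}. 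So your proof effectively merges the lemma and the subsequent theorem into one argument, at the cost of importing Brylawski's result.
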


\begin{proof}
We have already observed that $u \rightarrow_{spa} v$ 
implies $u \rightarrow v$. Hence $u \rightarrow_{spa}^* v$ 
implies $u \rightarrow^* v$.

To prove the converse 
we show that if $u \rightarrow v$ then
$u \rightarrow_{spa}^* v$: this implies that if $u \rightarrow^* v$ then
$u \rightarrow_{spa}^* v$. We act by induction on $|u| = |v|$. There is nothing to prove when $u = v$, especially when $u$ 
is the empty word.

So assume that $u = xabybaz$ and $v = xbayabz$.
If $x \neq \varepsilon$ or $z \neq \varepsilon$, by inductive hypothesis
$abyba \rightarrow_{spa}^* bayab$.
Thus
$xabybaz \rightarrow_{spa}^* xbayabz$
(actually $\rightarrow_{spa}^*$ is a congruence).
So assume $x = z = \varepsilon$.

If $y$ contains no occurrence of $ab$ nor of $ba$
then $u \rightarrow_{spa} v$. 
Otherwise $y = pabs$ or $y = pbas$ for some words $p$ and
$s$.
Assume first $y = pabs$.
Observe that $|absba|<|abyba|$ 
and $absba \rightarrow basab$. By inductive hypothesis
$absba \rightarrow_{spa}^* basab$.
Hence $abpabsba \rightarrow_{spa}^* abpbasab$.
Since $|abpba| < |abyba|$, we also have by induction
$abpba \rightarrow_{spa}^* bapab$
and so $abpbasab \rightarrow_{spa}^* bapabsab$.
Hence $u = abpabsba \rightarrow_{spa}^* bapabsab=bayab = v$.

The case $y = pbas$ can be treated similarly.
\end{proof}

Lemma~\ref{L_equivalence_spa_rightarrow} and Theorem~\ref{T_lattice} implies that the graph
$([w]_{\sim_2}, \rightarrow_{spa})$ has also a lattice structure.
Hence using Lemmas~\ref{L_bijection_partition}, \ref{L_bijection_partition_dual} and \ref{L_equivalence_spa_rightarrow}, we get next result
in which $L_{n,k,\ell}$ is the number of partitions of the integer $n$ in $k$ parts and the greatest part bounded by $\ell$.

\begin{theorem}
\label{T_lattice_isomorphism}
Let $w$ be a word over $\{a, b\}$.
Let $n = \binom{w}{ab}$. the graph
$([w]_{\sim_2}, \rightarrow_{spa})$ is a lattice 
isomorphic both to $L_{n, |w|_a, |w|_b}$ and $L_{n, |w|_b, |w|_a}$.
Its least element is $\init(w)$ and
its greatest element is $\final(w)$.
\end{theorem}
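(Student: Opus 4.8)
The plan is to derive the statement entirely from the lemmas already in place, the crux being that the bijections $\partition_p$ and $\partition_s$ carry the rewriting relation $\rightarrow_{spa}$ onto Brylawski's covering relation on partitions.

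\emph{Lattice structure and extremal elements.} Since $u\rightarrow_a v$ or $u\rightarrow_b v$ implies $u\rightarrow v$, the relation $\rightarrow_{spa}$ preserves length and the class $[w]_{\sim_2}$; and by Lemma~\ref{L_equivalence_spa_rightarrow} the closures $\rightarrow_{spa}^*$ and $\rightarrow^*$ coincide. Hence the partial order induced by $\rightarrow_{spa}$ on $[w]_{\sim_2}$ is literally the one induced by $\rightarrow$, which by Theorem~\ref{T_lattice} is a lattice with least element $\init(w)$ and greatest element $\final(w)$. This already settles two of the three assertions, and it remains only to recognise this lattice among the partition lattices.

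\emph{Isomorphism with $L_{n,|w|_b,|w|_a}$.} By Lemma~\ref{L_bijection_partition}, $w'\mapsto\partition_p(w')$ is a bijection of $[w]_{\sim_2}$ onto the partitions of $n=\binom{w}{ab}$ with $|w|_b$ parts bounded by $|w|_a$. By Lemma~\ref{L_cor_rewriting}, $u\rightarrow_a v\iff\partition_p(v)\succ_1\partition_p(u)$ and $u\rightarrow_b v\iff\partition_p(v)\succ_2\partition_p(u)$, and by Lemma~\ref{L_covers_Ln} the union $\succ_1\cup\succ_2$ is exactly the covering relation $\succ$ of $L_n$; thus $\partition_p$ sends the edge set of $([w]_{\sim_2},\rightarrow_{spa})$ bijectively onto the set of covering pairs of $L_n$ lying inside $L_{n,|w|_b,|w|_a}$. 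Next I would check that $L_{n,|w|_b,|w|_a}$ is an \emph{interval} of $L_n$: if $\lambda\le\nu\le\mu$ in the dominance order and $\lambda,\mu$ both fit in the $|w|_b\times|w|_a$ box, then $\nu_1\le\mu_1\le|w|_a$ and $\sum_{i\le|w|_b}\nu_i\ge\sum_{i\le|w|_b}\lambda_i=n$ force $\nu$ into the box as well; hence $L_{n,|w|_b,|w|_a}$ is a sublattice of $L_n$ whose covering relation is exactly the restriction of $\succ$. Consequently $\partition_p$ is a cover-preserving bijection between two finite lattices, hence an order isomorphism, hence a lattice isomorphism; that it is order-preserving rather than order-reversing is confirmed by observing that $\partition_p(\final(w))$ is the most concentrated (i.e.\ the top) partition of $L_{n,|w|_b,|w|_a}$.

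\emph{Isomorphism with $L_{n,|w|_a,|w|_b}$ and the main obstacle.} One repeats the previous argument with $\partition_s$ (Lemma~\ref{L_bijection_partition_dual}) in place of $\partition_p$, now using the dual equivalences of Lemma~\ref{L_cor_rewriting} ($u\rightarrow_a v\iff\partition_s(u)\succ_2\partition_s(v)$ and $u\rightarrow_b v\iff\partition_s(u)\succ_1\partition_s(v)$); equivalently, one transports the isomorphism just obtained across the classical conjugation (Ferrer\-diagram transposition) duality that interchanges $L_{n,|w|_b,|w|_a}$ and $L_{n,|w|_a,|w|_b}$. The real content, beyond quoting lemmas, is precisely this bookkeeping: verifying that a box of restricted partitions forms an interval of Brylawski's $L_n$ so that covers are inherited, and then carefully tracking orientations — $\partition_p$ comes out order-preserving while $\partition_s$ comes out order-reversing — so that both boxes $L_{n,|w|_a,|w|_b}$ and $L_{n,|w|_b,|w|_a}$ (which partition conjugation identifies with one another) are matched up with $([w]_{\sim_2},\rightarrow_{spa})$.
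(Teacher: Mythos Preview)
Your argument is correct and follows the same route as the paper, which merely records that the result follows from Lemmas~\ref{L_bijection_partition}, \ref{L_bijection_partition_dual} and~\ref{L_equivalence_spa_rightarrow} together with Theorem~\ref{T_lattice}; you fill in precisely the details the paper omits, namely that the boxed-partition set is convex (hence an interval) in $L_n$ so that Brylawski's covers restrict properly, and that $\partition_p$ and $\partition_s$ respectively preserve and reverse the order. One small caution: Ferrer transposition is an \emph{anti}-isomorphism between $L_{n,|w|_b,|w|_a}$ and $L_{n,|w|_a,|w|_b}$, so your ``equivalently, transport across conjugation'' clause yields an anti-isomorphism, not an isomorphism; this is consistent with $\partition_s$ being order-reversing, and the theorem as stated in the paper is speaking of the graph (Hasse diagram), for which the distinction disappears.
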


\begin{remark}
\label{R_spa}
The notation $\rightarrow_{spa}$ 
comes from the fact that words in $aba^*ba \cup abb^*ba$ are
special palindromes.
This relation has a connection
with palindromic amiability introduced by A.~Atanasiu et al. \cite{Atanasiu_Martin-Vide_Mateescu2002FI} (see also \cite{Fosse_Richomme2004IPL}).  
Set $u \equiv_{pal} v$ if $u = x \pi_1 y$ and $u = x \pi_2 y$
with $\pi_1$, $\pi_2$ palindromes and 
$(|\pi_1|_a, |\pi_1|_b) = (|\pi_2|_a, |\pi_2|_b)$.
Let $\equiv_{pal}^*$ the transitive closure of $\equiv_{pal}$.
Two words $u$ and $v$ are said palindromic amiable if $u \equiv_{pal}^* v$. In \cite{Atanasiu_Martin-Vide_Mateescu2002FI}, it is proved: 

\begin{theorem}\cite{Atanasiu_Martin-Vide_Mateescu2002FI}
\label{T_palindromic_amiability}
Two binary words have the same Parikh matrix (so are 2-binomially equivalent) if and only if they are  palindromic amiable.
\end{theorem}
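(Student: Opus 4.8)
The plan is to derive Theorem~\ref{T_palindromic_amiability} from the structure theorems already proved, using Theorem~\ref{T_lattice_isomorphism} for one implication and an elementary identity on palindromes for the other.

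For the implication ``$u \sim_2 v$ $\Rightarrow$ $u$ and $v$ palindromic amiable'', the key observation is that the rewriting rule $\rightarrow_{spa}$ is itself a special case of $\equiv_{pal}$. Indeed, in a step $x\,ab\alpha^m ba\, y \rightarrow_\alpha x\, ba\alpha^m ab\, y$ (with $\alpha \in \{a,b\}$), both factors $ab\alpha^m ba$ and $ba\alpha^m ab$ are palindromes, and they have the same Parikh vector (each consists of $m+2$ copies of $\alpha$ and two copies of the other letter). Hence $u \rightarrow_{spa} v$ implies $u \equiv_{pal} v$; since $\equiv_{pal}$ is symmetric, its transitive closure $\equiv_{pal}^*$ is an equivalence relation containing the equivalence generated by $\rightarrow_{spa}$. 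Now by Theorem~\ref{T_lattice_isomorphism} the graph $([w]_{\sim_2}, \rightarrow_{spa})$ is a lattice whose least element is $\init(w)$, so every $u \in [w]_{\sim_2}$ satisfies $\init(w) \rightarrow_{spa}^* u$, and therefore $\init(w) \equiv_{pal}^* u$. If $u \sim_2 v$ then $\init(u) = \init(v)$, so $u \equiv_{pal}^* \init(u) = \init(v) \equiv_{pal}^* v$, whence $u \equiv_{pal}^* v$.

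For the converse, since $\sim_2$ is a congruence it suffices to prove that $u \equiv_{pal} v$ implies $u \sim_2 v$; writing $u = x\pi_1 y$ and $v = x\pi_2 y$ with $\pi_1,\pi_2$ palindromes satisfying $\Psi(\pi_1)=\Psi(\pi_2)$, it then suffices to show $\pi_1 \sim_2 \pi_2$. Here I use that for any palindrome $\pi$ over $\{a,b\}$ one has $\binom{\pi}{ba} = \binom{\widetilde{\pi}}{ab} = \binom{\pi}{ab}$, so by Equation~\eqref{eq_relation_base1} $\binom{\pi}{ab} = \tfrac{1}{2}|\pi|_a|\pi|_b$ is determined by $\Psi(\pi)$ alone. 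Consequently $\pi_1$ and $\pi_2$ agree on $|\cdot|_a$, $|\cdot|_b$ and $\binom{\cdot}{ab}$; together with Equation~\eqref{eq_binom_aa} this determines the Parikh matrix in the binary case, so $\pi_1 \sim_2 \pi_2$ by Theorem~\ref{T_equiv_parikh_matrix_and_sum_positions}.

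I do not expect a genuine obstacle: the substantive content is already packaged in Theorem~\ref{T_lattice_isomorphism} and in the identity $\binom{\pi}{ab}=\binom{\pi}{ba}$ for palindromes. The only point requiring care is the bookkeeping of closures — one must invoke the symmetry of $\equiv_{pal}$ so that $\equiv_{pal}^*$ is an equivalence relation, and use that $\sim_2$ is a congruence to localize an $\equiv_{pal}$-step to the two inserted palindromic factors.
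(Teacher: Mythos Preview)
The paper does not give its own proof of Theorem~\ref{T_palindromic_amiability}; it is quoted as an external result from \cite{Atanasiu_Martin-Vide_Mateescu2002FI} inside Remark~\ref{R_spa}. Your argument is a correct self-contained derivation from the machinery developed earlier in the paper, and there is no circularity (none of Theorem~\ref{T_lattice_isomorphism}, Lemma~\ref{L_equivalence_spa_rightarrow}, or Theorem~\ref{T_lattice} invokes Theorem~\ref{T_palindromic_amiability}).

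Two minor comments. First, for the implication $u \sim_2 v \Rightarrow u \equiv_{pal}^* v$ you only need that every element of $[w]_{\sim_2}$ is reachable from $\init(w)$ by $\rightarrow_{spa}^*$; Theorem~\ref{T_lattice} combined with Lemma~\ref{L_equivalence_spa_rightarrow} already gives this, so the full isomorphism of Theorem~\ref{T_lattice_isomorphism} is stronger than necessary (though of course valid). The paper essentially hints at your argument in Remark~\ref{R_spa} by pointing out that the factors $ab\alpha^m ba$ and $ba\alpha^m ab$ are ``special palindromes'' and that $\rightarrow_{spa}$ ``has a connection with palindromic amiability'', without spelling it out. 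Second, for the converse, your observation that $\binom{\pi}{ab}=\binom{\pi}{ba}=\tfrac12|\pi|_a|\pi|_b$ for a palindrome $\pi$ is precisely the statement (used later in Section~\ref{sec_fair_words}) that palindromes are fair; invoking the congruence property of $\sim_2$ then finishes the job, exactly as in the proof of Lemma~\ref{L_equivalence_sim2_et_equivFair}. So your route is fully in the spirit of the paper, just assembling pieces the paper leaves unassembled.
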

\end{remark}

\begin{remark}
\label{rem_longest_paths}
As shown in the proof of Lemma~\ref{L_equivalence_spa_rightarrow},
$u \rightarrow_{spa} v$ implies $u \rightarrow v$
and $u \rightarrow v$ implies $u \rightarrow_{spa}^* v$.
So longest paths in the graph $([w]_{\sim_2}, \rightarrow_{spa})$
are longest paths in $G_2(w)$. See \cite{Greene_Kleitman1986EJC} for more information on these longest paths.
\end{remark}

The following result follows as a corollary of Theorems~\ref{T_equivalence_reecrite_sim2}
and Lemma~\ref{L_equivalence_spa_rightarrow}.

\begin{corollary}
\label{cor_charac_2bin_spa}
Two binary words $u$ and $v$ are 2-binomially equivalent
if and only if $u \equiv_{spa} v$.
\end{corollary}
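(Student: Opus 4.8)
Write $\equiv_{spa}$ for the equivalence relation generated by $\rightarrow_{spa}$ (formed from $\rightarrow_{spa}$ exactly as $\equiv^*$ is formed from $\rightarrow$). The whole point is that $\equiv_{spa}$ and $\equiv^*$ are the same relation; combined with Theorem~\ref{T_equivalence_reecrite_sim2} ($u \equiv^* v \iff u \sim_2 v$) this yields the corollary at once. So I would prove the two inclusions separately.

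The inclusion $\equiv_{spa}\subseteq\equiv^*$ is immediate from the remark recorded just before Lemma~\ref{L_equivalence_spa_rightarrow}: each $\rightarrow_a$- or $\rightarrow_b$-step is in particular a $\rightarrow$-step (it is $\rightarrow$ restricted to factors $abxba$ with $x\in a^*\cup b^*$), hence $\rightarrow_{spa}\subseteq\rightarrow$, and a chain witnessing $u\equiv_{spa}v$ is then also a chain of $\rightarrow$/$\leftarrow$-steps, i.e.\ $u\equiv^* v$, i.e.\ $u\sim_2 v$ by Theorem~\ref{T_equivalence_reecrite_sim2}. For the converse inclusion, take $u\equiv^* v$, equivalently $u\sim_2 v$, and fix a chain $u=z_0,z_1,\ldots,z_m=v$ with $z_i\rightarrow z_{i+1}$ or $z_{i+1}\rightarrow z_i$ for every $i$. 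A single step $z_i\rightarrow z_{i+1}$ is in particular $z_i\rightarrow^* z_{i+1}$, so Lemma~\ref{L_equivalence_spa_rightarrow} upgrades it to $z_i\rightarrow_{spa}^* z_{i+1}$, and symmetrically for the reversed steps; concatenating these sub-chains shows $u\equiv_{spa}v$. (If one prefers to route through Theorem~\ref{T_lattice}: $u\sim_2 v$ yields $\final(u)=\final(v)$ together with $u\rightarrow^*\final(u)$ and $v\rightarrow^*\final(v)$, which Lemma~\ref{L_equivalence_spa_rightarrow} turns into $u\rightarrow_{spa}^*\final(u)$ and $v\rightarrow_{spa}^*\final(u)$, so $u$ reaches $v$ by climbing to the common greatest element and coming back down along $\rightarrow_{spa}$.)

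I do not anticipate any real difficulty here: the statement is a formal consequence of Theorem~\ref{T_equivalence_reecrite_sim2}, Lemma~\ref{L_equivalence_spa_rightarrow} and the inclusion $\rightarrow_{spa}\subseteq\rightarrow$. The only point to be careful about is that $\rightarrow_{spa}$, unlike the relation $\equiv_{pal}$ of Remark~\ref{R_spa}, is \emph{not} symmetric, so the equivalence must be phrased via the relation generated by $\rightarrow_{spa}$ (equivalently, by passing through a common word such as $\final(u)$) rather than via a single $\rightarrow_{spa}$-step.
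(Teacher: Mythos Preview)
Your proposal is correct and follows exactly the route the paper indicates: the paper simply states that the corollary ``follows as a corollary of Theorem~\ref{T_equivalence_reecrite_sim2} and Lemma~\ref{L_equivalence_spa_rightarrow}'', and you have spelled out precisely how those two results combine (including the trivial inclusion $\rightarrow_{spa}\subseteq\rightarrow$). Your care in making explicit what $\equiv_{spa}$ means is well placed, since the paper never defines this symbol; your reading (the equivalence relation generated by $\rightarrow_{spa}$, by analogy with how $\equiv^*$ is built from $\rightarrow$) is the only sensible one.
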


\section{\label{sec_fair_words}Fair words}

\subsection{\label{subsec_fair_words_base}Definition, examples and basic properties}

Following A.~\cerny\ \cite{Cerny2009JALC}, 
a word $w$ over an arbitrary alphabet $A$
is \textit{fair} if $\binom{w}{ab} = \binom{w}{ba}$ for all pairs $(a, b)$ of letters. 
A.~\cerny\  justified his terminology as follows.
"\textit{Imagine members of $k \geq 1$ rivaling groups $a_1$, $a_2$, $a_k$ want to pass a narrow door. In which order they should pass? A solution to this problem is fair if, for any two distinct groups $a_i$, $a_j$, the number of member pairs, where a member of $a_i$ precedes a member of $a_j$, is the same as of those where the order is reversed. Any passing order can be denoted be a word on the alphabet $\Sigma = \{a_1, a_2, \ldots, a_k\}$, containing as many occurrences of $a_i$ as there are members if the group $a_i$. A word describing a fair solution will be called fair"}.
As mentioned in \cite{Cerny2007DAM} (a paper written after \cite{Cerny2009JALC}), fair words were already studied in 1979 in the binary case by H.~Prodinger \cite{Prodinger1979DM}. 
Prodinger's approach was different as \cerny's approach
since H.~Prodinger
considered fair words as a particular generalization
of unrestricted Dyck words.
Such an unrestricted Dyck word over $\{a, b\}$ 
is a word having the same number of occurrences of each letter, 
that is $|w|_a = |w|_b$ or $\binom{w}{a} = \binom{w}{b}$.
H.~Prodinger defined, for any words $x$ and $y$, the language
$\displaystyle D(x, y) = 
\left\{w \in \{a, b\}^* \mid \binom{w}{x} = \binom{w}{y}\right\}$
but studied only the language $D(ab, ba)$ which is the set of fair words over $\{a, b\}$.

As an immediate consequence of Formula~\eqref{eq_relation_base1}, we have

\begin{lemma}
\label{L_carac_fair_with_value_binom}
A word $w$ over an alphabet $A$ is fair if and only if, for all different letters $a$ and $b$, $\binom{w}{ab} = \frac{|w|_a|w|_b}{2}$.
\end{lemma}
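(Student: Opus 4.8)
The plan is to derive this directly from Equation~\eqref{eq_relation_base1}, which, although displayed in the excerpt in the context of a binary word, holds for any two distinct letters $a$ and $b$ of an arbitrary alphabet $A$: indeed $\leftPart(w)$ and $\rightPart(w)$ depend only on the occurrences of $a$ and $b$ in $w$ (equivalently on $\pi_{a,b}(w)$), and Lemmas~\ref{L_interpretation_binom_ab} and \ref{L_interpretation_binom_ba} together with the fact that $\leftPart(w)$ and $\rightPart(w)$ partition $\rect(w)$ give $\binom{w}{ab}+\binom{w}{ba}=|w|_a\,|w|_b$ in full generality.

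First I would fix two distinct letters $a,b\in A$ and record the identity $\binom{w}{ab}+\binom{w}{ba}=|w|_a\,|w|_b$. Then, for this pair, the equality $\binom{w}{ab}=\binom{w}{ba}$ is equivalent to $2\binom{w}{ab}=|w|_a\,|w|_b$, i.e. to $\binom{w}{ab}=\frac{|w|_a\,|w|_b}{2}$. Quantifying over all unordered pairs of distinct letters, $w$ is fair (i.e. $\binom{w}{ab}=\binom{w}{ba}$ for all distinct $a,b$, noting this is symmetric in $a$ and $b$) if and only if $\binom{w}{ab}=\frac{|w|_a\,|w|_b}{2}$ for all distinct $a,b$, which is the claim.

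There is no real obstacle here; the only point worth a sentence is to make explicit that Equation~\eqref{eq_relation_base1} is valid over arbitrary alphabets, not only in the binary case, since the statement of the lemma is phrased for an arbitrary alphabet $A$. Everything else is a one-line algebraic manipulation.
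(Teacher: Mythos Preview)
Your proposal is correct and follows exactly the paper's approach: the paper presents this lemma as an immediate consequence of Formula~\eqref{eq_relation_base1} without further argument. Your extra remark that Equation~\eqref{eq_relation_base1} holds over an arbitrary alphabet (via Lemmas~\ref{L_interpretation_binom_ab} and~\ref{L_interpretation_binom_ba}, which are indeed stated for arbitrary $A$) is a welcome clarification that the paper leaves implicit.
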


\begin{figure}[!ht]
\begin{center}
\begin{tikzpicture}[scale=0.8]
\begin{scope}
\draw (0,0) -- (0,6);
\draw (0,6) -- (7,6);
\draw (7,6) -- (7,0);
\draw (7,0) -- (0,0);
\draw (2, 5) node[below right] {\LARGE $\binom{w}{ab}$};
\draw (3, 2) node[below right] {\LARGE $\binom{w}{ba}=\binom{w}{ab}$};
\draw [red] (0,0) -- (1,0) node[midway,above]{a};
\draw [blue] (1,0) -- (1,1) node[midway,right]{b};
\draw [blue] (1,1) -- (1,2) node[midway,right]{b};
\draw [red] (1,2) -- (2,2) node[midway,below]{a};
\draw [red] (2,2) -- (3,2) node[midway,below]{a};
\draw [blue] (3,2) -- (3,3) node[midway,right]{b};
\draw [red] (3,3) -- (4,3) node[midway,below]{a};
\draw [red] (4,3) -- (5,3) node[midway,below]{a};
\draw [blue] (5,3) -- (5,4) node[midway,right]{b};
\draw [blue] (5,4) -- (5,5) node[midway,right]{b};
\draw [red] (5,5) -- (6,5) node[midway,below]{a};
\draw [blue] (6,5) -- (6,6) node[midway,right]{b};
\draw [red] (6,6) -- (7,6) node[midway,below]{a};\end{scope}
\end{tikzpicture}
\end{center}
\caption{\label{fig_fair_word_example}Example of fair word}
\end{figure}

Figure~\ref{fig_fair_word_example} shows an example of binary fair word
in our context of geometrical representation. 
It is useful observing that, as a reformulation of the definition using  Lemmas~\ref{L_interpretation_binom_ab} and \ref{L_interpretation_binom_ba}), a binary fair word $w$ is a word whose line representation cuts into two parts of same area the rectangle of height $|w|_a$ and width $|w|_b$: we find again the fact that $\binom{w}{ab} = \frac{|w|_a|w|_b}{2}$.

Since $\binom{w}{ab} = \frac{|w|_a|w|_b}{2}$ and $\binom{w}{ab}$ is an integer,
it follows that no fair word $w$ exists with $|w|_a$ and $|w|_b$ both odd. Words $a^kb^\ell a^k$ and $b^ka^\ell b^k$
show that given an even number of occurrences of $a$ or of $b$ one can construct a fair word. Actually as mentioned by A.~\cerny\ \cite{Cerny2009JALC} and by H.~Prodinger (see last line of Page 270 in \cite{Prodinger1979DM}), all palindromes are fair. Figure~\ref{fig_fair_word_example} shows an example of non palindromic fair words. 
A.~\cerny\ mentioned that smallest non-palindromic fair words are of length $7$ ($abbbaab$, $baabbba$,
$baaabba$ and $abbaaab$).


Generalizing previous examples of fair words, one can observe that:
\begin{itemize}
\item A word $w \in b^*a^*b^*$ is fair if and only if $w = b^na^mb^n$ for some $n, m$;
\item A word $w \in b^*a^*ba^*b^*$ is fair if and only if $w = b^na^mba^mb^n$ for some $n, m$.
\end{itemize}
It is also immediate from the definition of fair words that 
for two binary words $u$ and $v$ with $u \sim_2 v$: $u$ is fair
if and only if $v$ is fair.
Consequently, as a consequence of
Theorem~\ref{T_lattice} and Lemma~\ref{L_lang_sans_ab...ba} (and the definition of $\init(w)$ and 
$\final(w))$, we have
\begin{lemma}
Are equivalent for a binary word $w$:
\begin{itemize}
\item $w$ is fair
\item $\init(w)$ is a palindrome
\item $\final(w)$ is a palindrome
\end{itemize}
\end{lemma}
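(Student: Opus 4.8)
The plan is to reduce the statement to two purely structural facts already available: that $\init(w)$ and $\final(w)$ are $2$-binomially equivalent to $w$, and that, by Theorem~\ref{T_lattice} together with Lemma~\ref{L_lang_sans_ab...ba}, one has $\final(w) \in b^*a^*b^* \cup b^*a^+ba^+b^*$ and $\init(w) \in a^*b^*a^* \cup a^*b^+ab^+a^*$. Since fairness of a binary word is invariant under $\sim_2$ (as noted just above, e.g. via Lemma~\ref{L_carac_fair_with_value_binom}, it depends only on $|w|_a$, $|w|_b$ and $\binom{w}{ab}$), I would first record that $w$ is fair if and only if $\init(w)$ is fair if and only if $\final(w)$ is fair. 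Hence it suffices to prove: for a word $u$ lying in one of these two regular languages, $u$ is fair if and only if $u$ is a palindrome. The direction ``palindrome $\Rightarrow$ fair'' is already known in general (all palindromes are fair), so the work is the converse for these restricted words $u$.

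First I would handle $u = \final(w)$. If $u = b^i a^j b^k \in b^*a^*b^*$, then Formula~\eqref{eq_calcul_binom_ab_par_prefixes} (or a one-line direct count) gives $\binom{u}{ab} = jk$ and, dually, $\binom{u}{ba} = ij$; so $u$ fair forces $jk = ij$, i.e. $j = 0$ or $i = k$, which is exactly the condition for $u$ to be a palindrome in $b^*a^*b^*$. If $u = b^i a^j b a^k b^\ell$ with $j,k \geq 1$, the same count yields $\binom{u}{ab} - \binom{u}{ba} = (j-k) - (i-\ell)(j+k)$; since $j,k \geq 1$ gives $|j-k| < j+k$, the vanishing of this difference forces $i = \ell$ and $j = k$, again exactly the palindrome condition on $b^*a^+ba^+b^*$. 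These are precisely the two ``generalizing'' observations recorded just before the statement; the only addition needed is the (immediate) remark that a palindrome inside each of these languages has the indicated shape. For $u = \init(w)$, a word in $a^*b^*a^* \cup a^*b^+ab^+a^*$, I would invoke the letter-exchange symmetry $a\leftrightarrow b$: it sends $\binom{u}{ab}$ to $\binom{u}{ba}$, hence preserves fairness, it maps palindromes to palindromes, and it maps the languages above to $b^*a^*b^* \cup b^*a^+ba^+b^*$, so the claim for $\init(w)$ follows from that for $\final(w)$ (or can be redone by the same two-line computation).

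Putting it together: $w$ fair $\iff$ $\final(w)$ fair $\iff$ $\final(w)$ palindrome, and likewise $w$ fair $\iff$ $\init(w)$ palindrome, whence the three conditions of the lemma are pairwise equivalent. I do not expect a real obstacle here; the only points demanding a little care are the degenerate cases in which some exponent is $0$ (for instance $u = b^{i+k}$ when $j=0$, which is a palindrome although it is not literally of the form $b^n a^m b^n$ with an $a$-block present) and the bookkeeping that ``palindrome'', restricted to each of the four regular languages involved, is literally the algebraic condition produced by comparing $\binom{u}{ab}$ with $\binom{u}{ba}$.
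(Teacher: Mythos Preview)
Your proposal is correct and follows essentially the same approach as the paper: the paper presents the lemma as a direct consequence of the $\sim_2$-invariance of fairness, Theorem~\ref{T_lattice}, Lemma~\ref{L_lang_sans_ab...ba}, and the two preceding bullet observations characterizing fair words in $b^*a^*b^*$ and $b^*a^*ba^*b^*$ as palindromes. You carry out exactly this route, simply making explicit the binomial computations (and the letter-exchange symmetry for $\init(w)$) that the paper leaves to the reader.
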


\begin{remark}
For a word $w$, the fact that $\init(w) \rightarrow^* w$
can be interpreted using the previous lemma:
every fair word can be constructed from a palindrome by finitely many substitutions of some proper factor.
This answers Problem 23 in \cite{Cerny2009JALC}.
\end{remark} 

\begin{remark}
\label{rem_equiv2_fair_words}
Since 
$\binom{w}{ab} = \frac{|w|_a|w|_b}{2}$ for any fair word $w$ and any different letters $a$ and $b$, 
we have for any fair words $u$ and $v$

\centerline{$u \sim_2 v$ if and only if $\Psi(u) = \Psi(v)$.}
\end{remark}

To end this part let us show a new characterization of the $2$-binomial equivalence $\sim_2$ for binary words using a rewriting system allowing to replace any fair word by another fair word.
Let define the relation $\equiv_{fair}$ on words over $\{a, b\}$ as follows:
$u \equiv_{fair} v$ 
if there exist words $x$, $y$ and fair words $\pi_1$ and $\pi_2$
such that $u = x \pi_1 y$, $v = x \pi_2 y$ and $\Psi(\pi_1) = \Psi(\pi_2)$. 
Let $\equiv_{fair}^*$ be the transitive closure of $\equiv_{fair}$.

\begin{lemma}
\label{L_equivalence_sim2_et_equivFair}
For any words $u$ and $v$ over $\{a, b\}$, 
$u \sim_2 v$ if and only if
$u \equiv_{fair}^* v$.
\end{lemma}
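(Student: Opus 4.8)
The plan is to establish the two implications separately, leaning on the rewriting characterizations of $\sim_2$ already obtained in the paper.

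\emph{The implication $u \equiv_{fair}^* v \Rightarrow u \sim_2 v$.} First I would handle a single $\equiv_{fair}$-step. Suppose $u = x\pi_1 y$ and $v = x\pi_2 y$ with $\pi_1$, $\pi_2$ fair and $\Psi(\pi_1) = \Psi(\pi_2)$. By Lemma~\ref{L_carac_fair_with_value_binom}, $\binom{\pi_1}{ab} = \frac{|\pi_1|_a|\pi_1|_b}{2} = \frac{|\pi_2|_a|\pi_2|_b}{2} = \binom{\pi_2}{ab}$; together with $\Psi(\pi_1) = \Psi(\pi_2)$ this says exactly $\pi_1 \sim_2 \pi_2$. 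Since $\sim_2$ is a congruence, $x\pi_1 y \sim_2 x\pi_2 y$, i.e.\ $u \sim_2 v$. As $\sim_2$ is an equivalence relation, this passes from one step to the transitive closure $\equiv_{fair}^*$.

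\emph{The implication $u \sim_2 v \Rightarrow u \equiv_{fair}^* v$.} The crux is that every $\rightarrow_{spa}$-step is an $\equiv_{fair}$-step. In a $\rightarrow_a$-step the factor $aba^mba$ is replaced by $baa^mab = ba^{m+2}b$, and in a $\rightarrow_b$-step the factor $abb^mba = ab^{m+2}a$ is replaced by $bab^mab$; each of these four blocks is a palindrome (these are precisely the ``special palindromes'' named in Remark~\ref{R_spa}), hence fair, and the replaced and replacing blocks plainly have equal Parikh vectors ($(m+2,2)$ in the first case, $(2,m+2)$ in the second). Hence $u \rightarrow_{spa} v$ implies $u \equiv_{fair} v$, and, since $\equiv_{fair}$ is symmetric, $v \rightarrow_{spa} u$ implies $u \equiv_{fair} v$ as well. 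Now let $u \sim_2 v$. By Theorem~\ref{T_equivalence_reecrite_sim2}, $u$ and $v$ are linked by a finite chain $u = z_0, z_1, \ldots, z_k = v$ in which each consecutive pair satisfies $z_i \rightarrow z_{i+1}$ or $z_{i+1} \rightarrow z_i$. For each such pair, Lemma~\ref{L_equivalence_spa_rightarrow} lets me refine the $\rightarrow$-step into a chain of $\rightarrow_{spa}$-steps, so $z_i \equiv_{fair}^* z_{i+1}$ by the previous paragraph; concatenating over $i$ yields $u \equiv_{fair}^* v$. Alternatively one may route both $u$ and $v$ to the common word $\final(u) = \final(v)$ via $\rightarrow_{spa}^*$, using Theorem~\ref{T_lattice} and Lemma~\ref{L_equivalence_spa_rightarrow}, and then join the two chains.

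\emph{Main obstacle.} There is essentially no computational difficulty: the one thing to spot is that the blocks rewritten by $\rightarrow_{spa}$ are palindromes, i.e.\ that $\rightarrow_{spa}$ is subsumed by $\equiv_{fair}$, after which everything reduces to the already-proved equivalences $\rightarrow^* \Leftrightarrow \rightarrow_{spa}^*$ and $\sim_2 \Leftrightarrow \equiv^*$. A small point worth recording for cleanliness is that $\equiv_{fair}$ is reflexive (take $\pi_1 = \pi_2 = \varepsilon$, which is fair) and symmetric, so $\equiv_{fair}^*$ is an equivalence relation and the chains above may be freely reversed and concatenated.
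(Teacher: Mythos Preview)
Your proof is correct. The direction $\equiv_{fair}^* \Rightarrow \sim_2$ matches the paper's argument essentially verbatim (you phrase it via the congruence property of $\sim_2$, the paper via Formula~\eqref{eq_binom_3_words}, which is how the congruence was proved in the first place).

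For $\sim_2 \Rightarrow \equiv_{fair}^*$ you take a different route. The paper invokes the externally cited Theorem~\ref{T_palindromic_amiability} ($\sim_2 \Leftrightarrow \equiv_{pal}^*$) and then uses the inclusion $\equiv_{pal} \subseteq \equiv_{fair}$ coming from ``palindromes are fair''. You instead stay inside the paper's own development: via Theorem~\ref{T_equivalence_reecrite_sim2} and Lemma~\ref{L_equivalence_spa_rightarrow} you reduce to $\rightarrow_{spa}$-steps, and then observe that each such step swaps two palindromes with equal Parikh vector, hence is already a $\equiv_{fair}$-step (indeed a $\equiv_{pal}$-step). What this buys you is self-containment: your argument does not rely on the result of Atanasiu--Mart\'in-Vide--Mateescu, and in fact recovers the direction $\sim_2 \Rightarrow \equiv_{pal}^*$ of Theorem~\ref{T_palindromic_amiability} as a byproduct. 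The paper's approach is shorter on the page because it outsources that work to the citation.
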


\begin{proof}
If $u \sim_2 v$ then, 
by Theorem~\ref{T_palindromic_amiability},
$u \equiv_{pal}^* v$. Since any palindrome is fair,
it follows the definition that $x \equiv_{pal} y$ implies 
$x \equiv_{fair} y$ and that $x \equiv_{pal}^* y$ implies 
$x \equiv_{fair}^* y$. Hence $u \equiv_{fair}^* v$.

Conversely, assume that $u \equiv_{fair}^* v$.
There exist words $x$, $y$ and fair words, $\pi_1$ and $\pi_2$
such that $u = x \pi_1 y$, $u = x \pi_2 y$ and $\Psi(\pi_1) = \Psi(\pi_2)$.
Observe $\Psi(x\pi_1y) = \Psi(x\pi_2y)$.
Since $\pi_1$ and $\pi_2$ are {fair} words with $\Psi(\pi_1) = \Psi(\pi_2)$, $\binom{\pi_1}{ab} = \frac{|\pi_1|_a|\pi_1|_b}{2}
= \binom{\pi_2}{ab}$.
Consequently Formula~\eqref{eq_binom_3_words} implies
that $\binom{x\pi_1y}{ab} = \binom{x\pi_2y}{ab}$.
Hence $u \sim_2 v$.
\end{proof}

Observe that Lemma~\ref{L_equivalence_sim2_et_equivFair}
cannot be extended to arbitrary alphabets (as Theorem~\ref{T_palindromic_amiability}). 
For words $u = abcacab$ and $v = caabbac$,
we have $u \sim_2 v$ but $u \not\equiv_{fair} v$.
Indeed the only fair factors of $u$ are palindromes that are of length at most $3$.
For such a palindrome $\pi$, $[\pi]_{\sim_2}$ is a singleton
and so there is no other fair words $\pi'$ with $\Psi(\pi') = \Psi(\pi)$ 
by Remark~\ref{rem_equiv2_fair_words}.

\subsection{\label{subsec_language_properties}Language properties}

Observe as direct consequences of the definitions
that the set of fair words is closed by permutation of letters (exchange in the binary case) and by mirror image. In 1979, H.~Prodinger \cite{Prodinger1979DM} proved that the language of binary fair words is not context-free. 
Independently A.~Salomaa \cite{Salomaa2007FI} proved in 2007 this 
result for the language  of fair words over arbitrary alphabets.
He also stated that the language of fair words is context-sensitive.

In \cite{Prodinger1979DM}, H.~Prodinger studies the syntactic congruence of the language of binary fair words and he proved 
that, in the context of binary words, the syntactic congruence of the language 
of fair words is the $2$-binomial equivalence (see his Theorem 2 and the remark that follows on page 271). 
Actually his proof works quite verbatim for arbitrary alphabets. 
We explain this below.

Given a language $L$ over an alphabet $A$, that is a set of words over $A$, the \textit{syntactic congruence} $\sim_L$ 
is defined by $x \sim_L y$ if and only if 
for all $u, v \in A^*$, $uxv \in L$ if and only if $xvy \in L$.
For letters $a$ and $b$, let 
$\displaystyle \Delta_{ab}(w) = \binom{w}{ab}-\binom{w}{ba}$.
Let ${\cal F}$ be the language of fair words over $\{a, b\}$: 
it is the set of word $w$ such that $\Delta_{ab}(w) = 0$
(This language is also denoted $D(ab,ba)$ in \cite{Prodinger1979DM} as mentioned in Section~\ref{subsec_fair_words_base}).

\begin{theorem}(\cite{Prodinger1979DM} for the binary case)
\label{T_Prodinger_syntactic_congruence_extended}
Let $A$ be an alphabet. Are equivalent for any words $u$ and $v$:
\begin{enumerate}
\item $u \sim_{\cal F} v$;
\item for all pairs of letters $(a, b)$, $|u|_a = |v|_a$ and
$|u|_b = |v|_b$ and $\Delta_{ab}(u) = \Delta_{ab}(v)$;
\item for all pairs of letters $(a, b)$, $\pi_{a,b}(u) \sim_2 \pi_{a,b}(v)$.
\end{enumerate}
\end{theorem}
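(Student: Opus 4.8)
The plan is to establish the three items in the order $2\Leftrightarrow 3$, then $3\Rightarrow 1$, and finally $1\Rightarrow 2$, so that items~1 and~2 are tied together only through the softer item~3. For $2\Leftrightarrow 3$ I would simply unwind the projection: deleting from $w$ every letter other than $a,b$ changes neither $\binom{w}{ab}$ nor $\binom{w}{ba}$ nor the counts $|w|_a,|w|_b$, so $\pi_{a,b}(u)\sim_2\pi_{a,b}(v)$ says exactly that $|u|_a=|v|_a$, $|u|_b=|v|_b$ and $\binom{u}{ab}=\binom{v}{ab}$ (the binomials $\binom{\cdot}{aa},\binom{\cdot}{bb}$ being forced by the letter counts via~\eqref{eq_binom_aa}). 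By~\eqref{eq_relation_base1}, once the letter counts agree, $\binom{u}{ab}=\binom{v}{ab}$ is equivalent to $\Delta_{ab}(u)=\Delta_{ab}(v)$. Ranging over all pairs $(a,b)$ yields $2\Leftrightarrow 3$.

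For $3\Rightarrow 1$ I would first record two facts: a word $w$ over $A$ is fair if and only if every binary projection $\pi_{c,d}(w)$ is a binary fair word, and whether a binary word is fair depends only on its $\sim_2$-class (it is the equality $\binom{\cdot}{cd}=\binom{\cdot}{dc}$ of two $\sim_2$-invariants). Then, if~3 holds, for any $x,y\in A^*$ and any pair $(c,d)$ we have $\pi_{c,d}(xuy)=\pi_{c,d}(x)\,\pi_{c,d}(u)\,\pi_{c,d}(y)$ and, since $\pi_{c,d}(u)\sim_2\pi_{c,d}(v)$ and $\sim_2$ is a congruence over $\{c,d\}^*$, $\pi_{c,d}(xuy)\sim_2\pi_{c,d}(xvy)$; hence $xuy$ is fair iff $xvy$ is fair, i.e.\ $u\sim_{\cal F}v$.

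The real content is $1\Rightarrow 2$, where I would feed well-chosen contexts into the syntactic congruence. Since $u\widetilde u$ and $\widetilde u u$ are palindromes they are fair, so from $u\sim_{\cal F}v$ both $v\widetilde u$ and $\widetilde u v$ are fair. Expanding $\Delta_{cd}$ of a concatenation from~\eqref{eq_binom_2_words} (and its mirror), together with $\Delta_{cd}(\widetilde p)=-\Delta_{cd}(p)$ and $|\widetilde p|_e=|p|_e$, the relations $\Delta_{cd}(v\widetilde u)=0=\Delta_{cd}(\widetilde u v)$ give, on subtracting one from the other, $|u|_c|v|_d=|u|_d|v|_c$ for all pairs $(c,d)$, and then, on substituting back, $\Delta_{cd}(u)=\Delta_{cd}(v)$ for all pairs $(c,d)$. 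So $\Psi(u)$ and $\Psi(v)$ are proportional and every $\Delta$ already agrees; it remains to rule out $\Psi(u)=\lambda\Psi(v)$ with $\lambda\neq 1$ (the degenerate case $u=\varepsilon$ or $v=\varepsilon$ is handled separately, by appending one suitable letter to the nonempty word to destroy fairness). Assuming $\lambda>1$, the plan is to exhibit a word $y$ with $vy$ fair but $\Psi(y)$ \emph{not} proportional to $\Psi(v)$; once such a $y$ is available, a computation from~\eqref{eq_binom_2_words} gives $\Delta_{cd}(uy)-\Delta_{cd}(vy)=(\lambda-1)\bigl(|v|_c|y|_d-|v|_d|y|_c\bigr)$, which is nonzero for some $(c,d)$, so $uy$ fails to be fair while $vy$ is fair, contradicting $u\sim_{\cal F}v$. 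Hence $\lambda=1$, $\Psi(u)=\Psi(v)$, and combined with the equality of all the $\Delta$'s this is item~2.

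The hard part will be constructing that separating word $y$. I would take $y=\widetilde v\,z$, so that $v\widetilde v$ is a fair palindrome and $v\widetilde v z$ is fair precisely when $\Delta_{cd}(z)=2\bigl(|v|_d|z|_c-|v|_c|z|_d\bigr)$ for every pair $(c,d)$. Choosing $\Psi(z)$ to be a large even multiple of $\Psi(v)$ perturbed slightly so as to break proportionality makes all these targets lie within $O(\|v\|)$ of the ``palindrome value'' $\tfrac12|z|_c|z|_d$, so a palindrome with the prescribed Parikh vector can be nudged to hit them by finitely many transpositions of adjacent distinct letters, each of which shifts a single $\binom{\cdot}{cd}$ by $\pm 1$. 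This joint‑realizability point is exactly what evaporates when $|A|=2$ --- there every value of $\binom{y}{ab}$ in $[0,|y|_a|y|_b]$ occurs, so there is no such issue --- which is why Prodinger's binary computation transfers with essentially no change.
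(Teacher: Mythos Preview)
Your treatment of $2\Leftrightarrow 3$ and $3\Rightarrow 1$ is correct and matches the paper (the paper states $3\Rightarrow 1$ via Formula~\eqref{eq_binom_3_words}, you via the congruence property of $\sim_2$; these are the same computation).

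The gap is in $1\Rightarrow 2$. Your opening move---using that $u\widetilde u$ and $\widetilde u u$ are palindromes to deduce $\Delta_{cd}(u)=\Delta_{cd}(v)$ and $|u|_c|v|_d=|u|_d|v|_c$---is exactly Prodinger's argument and is fine. But your plan to upgrade proportionality of Parikh vectors to equality is where things go wrong. You want a $z$ over an arbitrary alphabet with prescribed values of $\Delta_{cd}(z)$ \emph{simultaneously for all pairs}. Your ``nudge a palindrome by adjacent transpositions'' sketch does not establish this: a transposition $cd\to dc$ requires an adjacent $cd$ to be present, and performing it may destroy adjacencies needed for other pairs. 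More seriously, arbitrary target values $\bigl(\binom{z}{cd}\bigr)_{c<d}$ are \emph{not} always jointly realizable (already with one occurrence of each of three letters one cannot have $\binom{z}{ab}=\binom{z}{bc}=\binom{z}{ca}=1$), so some genuine argument is needed to show your particular targets are reachable. Your paragraph does not supply it.

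The paper sidesteps this completely by a one-line strengthening of your own trick: instead of $u\widetilde u$, use $ux\,\widetilde{(ux)}$ and $\widetilde{(ux)}\,ux$ for a \emph{free} word $x$. These are again palindromes, so fairness of $vx\,\widetilde{(ux)}$ and $\widetilde{(ux)}\,vx$ gives, after the same add/subtract,
\[
|ux|_c\,|vx|_d=|ux|_d\,|vx|_c\qquad\text{for all }x.
\]
Now specialise $x=a$ and $x=b$: from $(|u|_a+1)|v|_b=|u|_b(|v|_a+1)$ and $|u|_a|v|_b=|u|_b|v|_a$ you read off $|u|_b=|v|_b$ directly (and symmetrically $|u|_a=|v|_a$). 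No realizability question ever arises. In your language: you were looking for $y$ with $vy$ fair and $\Psi(y)$ not proportional to $\Psi(v)$; the choice $y=a\,\widetilde{(va)}$ already makes $vy=va\,\widetilde{(va)}$ a palindrome, and $\Psi(y)=\Psi(v)+2e_a$ is non-proportional as soon as $v$ contains a second letter. So the elaborate $z$-construction is unnecessary.
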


\begin{proof}
Proof of $1 \Rightarrow 2$. We copy the proof of H.~Prodinger
just adapting it for arbitrary alphabets instead of binary alphabets.

Let $u$, $v$ and $x$ be words over $A$ such that
$u \sim_{\cal F} v$. 
Since $\sim_{\cal F}$ is a congruence,
$ux\widetilde{(ux)} \sim_{\cal F} vx\widetilde{(ux)}$
and 
$\widetilde{(ux)}ux \sim_{\cal F} \widetilde{(ux)}vx$.
Since $ux\widetilde{(ux)}$ is a palindrome, it is fair.
Hence $vx\widetilde{(ux)}$ is fair.
Similarly $\widetilde{(ux)}ux$ and  $\widetilde{(ux)}vx$ are fairs.
Thus using Formula~\eqref{eq_binom_2_words} and the fact that
$\Delta(\widetilde{w}) = -\Delta(w)$ for any word $w$, we observe that for any pair of letters $(a, b)$:
$$0 = \Delta_{ab}(vx\widetilde{(ux)}) = 
\Delta_{ab}(vx) -\Delta_{ab}(ux)
+|vx|_a |ux|_b -|vx|_b |ux|_a$$
and 
$$0 = \Delta_{ab}(\widetilde{(ux)}vx) = \Delta_{ab}(vx)-\Delta_{ab}(ux) 
+|ux|_a |vx|_b -|ux|_b |vx|_a$$
Thus adding and subtracting these equations, we obtain:
$\Delta_{ab}(ux) = \Delta_{ab}(vx)$ and
$|ux|_a |vx|_b = |ux|_b |vx|_a$.

With $x = \varepsilon$, we get $\Delta_{ab}(u) = \Delta_{ab}(v)$
and $|u|_a|v|_b = |u|_b|v|_a$.

With $x = a$, we get $(|u|_a+1)|v|_b = |u|_b(|v|_a+1)$ and so
$|u|_b = |v|_b$.

With $x = b$, we get $|u|_a(|v|_b+1) = (|u|_b+1)|v|_a$ and so
$|u|_a = |v|_a$.

\medskip

Proof of $2 \Rightarrow 3$.
By Formula~\eqref{eq_relation_base1}, 
$\binom{w}{ab} = \frac{1}{2}(\Delta_{ab}(w) + |w|_a|w|_b)$. 
Thus 
$\Delta_{ab}(u) = \Delta_{ab}(v)$
implies that $\binom{u}{ab} = \binom{v}{ab}$, and so
$2 \Rightarrow 3$ follows directly the definition of $\sim_2$.
Indeed $\pi_{ab}(u) \sim_2 \pi_{ab}(v)$ just means
$|u|_a = |v|_a$, $|u|_b = |v|_b$ and $\binom{u}{ab} = \binom{v}{ab}$.

\medskip

Proof of $3 \Rightarrow 1$.
If Item 3 holds, we have 
$|u|_a = |v|_a$, $|u|_b = |v|_b$ and $\binom{u}{ab} = \binom{v}{ab}$ for all pairs of letters $(a, b)$. 
Hence,
for any words $x$ and $y$ and any pair of letters $(a, b)$,
$|xuy|_a = |xvy|_a$, $|xuy|_b = |xvy|_b$ and, by Formula~\eqref{eq_binom_3_words}, $\binom{xuy}{ab} = \binom{xvy}{ab}$. So $xuy$ is fair if and only if $xvy$ is fair: $xuy \sim_{\cal F} xvy$.
\end{proof}

\subsection{\label{subsecSturm}Fair balanced words}

As already mentioned,
a fair word $w$ is a word whose line representation cuts into two parts of same area the rectangle of height $|w|_a$ and width $|w|_b$.
In usual geometry, the diagonal segment is one of the more
natural way to cut a rectangle into two parts of equal areas. 
It is well-known that infinite straight lines are represented by 
Sturmian sequences and that segments are represented by balanced words, that is, factors of Sturmian sequences (see \cite{Lothaire2002book} for instance). 
So a natural question is which are the balanced fair words.
We answer this question proving:

\begin{theorem}
\label{T_balanced}
A balanced word over $\{a, b\}$ is fair 
if and only if
it is a palindrome.
\end{theorem}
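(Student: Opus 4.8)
The statement is an ``if and only if''. The easy direction is that a palindrome is fair, which has already been recorded in Section~\ref{subsec_fair_words_base}. So the work is entirely in the forward direction: assuming $w$ is balanced \emph{and} fair, conclude that $w$ is a palindrome. I would first dispose of the degenerate cases $|w|_a=0$ or $|w|_b=0$ (then $w=a^n$ or $w=b^n$, a palindrome) and, more generally, use Lemma~\ref{L_carac_fair_with_value_binom}: fairness says $\binom{w}{ab}=\frac12|w|_a|w|_b$, so in particular $|w|_a|w|_b$ is even.

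\textbf{Main idea.} For a balanced word, the positions of the $b$'s (say $|w|_b=q$ and $|w|_a=p$, $|w|=n=p+q$) are governed by a single real parameter: there is a slope $\alpha$ and an intercept $\rho$ such that $b$ occurs in position $i$ iff $\lfloor \alpha i+\rho\rfloor-\lfloor\alpha(i-1)+\rho\rfloor=1$, with $\alpha$ forced to be close to $q/n$. Equivalently, the line representation of $w$ is a digitization of the segment from $(0,0)$ to $(p,q)$ with some vertical offset. Under this description, $S_b(w)=\sum_{i : w_i=b} i$ (the quantity from Section~\ref{subsec_sum_positions}) and $S_b'(w)=S_b(\widetilde w)$ are both explicitly controllable: as the intercept $\rho$ runs over its admissible interval, $S_b(w)$ is a monotone step function of $\rho$, and reversing $w$ sends $\rho\mapsto$ (essentially) $-\rho$ up to the endpoint bookkeeping. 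By Lemma~\ref{L_lien_entre_sommes}, $S_b(w)+S_b'(w)=|w|_b(|w|+1)=q(n+1)$ always, so $S_b(w)=S_b'(w)$ is equivalent to $S_b(w)=\frac{q(n+1)}{2}$, and by Lemma~\ref{L_lien_Sb_et_binom_ab} this is exactly the fairness condition $\binom{w}{ab}=\frac12 pq$. Now $\binom{w}{ab}=\binom{\widetilde w}{ba}$ and fairness of $\widetilde w$ (which is automatic, being a reverse of a fair word) together with Theorem~\ref{T_equiv_parikh_matrix_and_sum_positions} give $w\sim_2\widetilde w$. The crux is then: a \emph{balanced} word $w$ with $w\sim_2\widetilde w$ must actually equal $\widetilde w$. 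I would prove this by showing that a balanced word is \emph{uniquely determined} within its $2$-binomial class, i.e.\ each class $[w]_{\sim_2}$ contains at most one balanced word; since $\widetilde w$ is balanced (reverses of balanced words are balanced) and $2$-binomially equivalent to $w$, it follows $w=\widetilde w$.

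\textbf{Why ``at most one balanced word per class''.} Fix $\Psi(w)=(p,q)$ and $m=\binom{w}{ab}$. By Lemma~\ref{L_bijection_partition}, elements of $[w]_{\sim_2}$ correspond to partitions of $m$ with $q$ parts bounded by $p$; the partition attached to a word is $\partition_p(w)=(|p_q|_a,\dots,|p_1|_a)$ reading the $a$-heights at the successive $b$'s. A balanced word's partition is forced to be ``as flat as possible'': consecutive parts differ by at most... well, the key quantitative fact is that for a balanced word the gaps between successive $b$'s take only two consecutive values, so the sequence $(|p_{i+1}|_a-|p_i|_a)$ takes only two consecutive values, and moreover the pattern in which these two values alternate is itself balanced (a standard self-similarity property of Sturmian words / mechanical sequences). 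Among all partitions of $m$ into $q$ parts $\le p$, I claim there is a \emph{unique} one whose consecutive differences take only two consecutive values with a balanced interleaving and whose total is $m$: the multiset of difference-values is pinned down by $m,p,q$ (it's $\{\lceil p/q\rceil,\lfloor p/q\rfloor\}$ with the multiplicities forced by $\sum(\text{differences})=$ the appropriate residue), and the balanced-interleaving requirement then fixes the order of the differences uniquely. Hence at most one balanced word. Combined with the previous paragraph, $w=\widetilde w$.

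\textbf{Expected obstacle.} The delicate point is the last paragraph: making precise and proving ``a balanced word has a unique balanced partition with the given $(m,p,q)$''. One has to be careful about the interplay between the slope $\alpha$ (which a priori has a small interval of freedom, not just $\alpha=q/p$) and the intercept; two different parameter pairs could in principle give different balanced words with the same $(p,q,m)$. I expect the clean way around this is: the $2$-binomial invariants $(p,q,m)$ determine $\alpha$ and $\rho$ modulo the lattice ambiguities, because $m=\binom{w}{ab}=\sum_i|p_i|_a$ is (up to an explicit affine correction) $\sum_i\lfloor \alpha i+\rho\rfloor$, a strictly monotone function of $(\alpha,\rho)$ on the admissible range; strict monotonicity then yields uniqueness. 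So the real labor is a careful lemma that for balanced words the map $w\mapsto(\Psi(w),\binom{w}{ab})$ is injective, after which Theorem~\ref{T_equiv_parikh_matrix_and_sum_positions} and the observation $\binom{w}{ab}=\binom{\widetilde w}{ab}$ (valid because both $w$ and $\widetilde w$ are fair, by Lemma~\ref{L_carac_fair_with_value_binom} applied to $|w|_a=|\widetilde w|_a$, $|w|_b=|\widetilde w|_b$) finish the proof in one line. An alternative, perhaps shorter, route to the same injectivity lemma is to invoke directly that a balanced word is determined by its length and the quantity $S_b(w)$ (equivalently by its ``slope'' and ``intercept''), which is essentially classical for mechanical/Sturmian factors; I would phrase the argument through $S_b$ rather than through partitions, since Section~\ref{subsec_sum_positions} already supplies exactly the identity $S_b(w)+S_b(\widetilde w)=|w|_b(|w|+1)$ that forces $S_b(w)=S_b(\widetilde w)$ under fairness, and then ``balanced $+$ same length $+$ same $S_b$ $\Rightarrow$ equal'' closes the loop.
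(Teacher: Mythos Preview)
Your overall strategy---observe that fairness of $w$ forces $w\sim_2\widetilde{w}$, then argue that a $\sim_2$-class contains at most one balanced word---does not work, because the uniqueness claim is simply false. The words $abba$ and $baab$ are both balanced (check factors of length~2), both have Parikh vector $(2,2)$, and both have $\binom{\cdot}{ab}=2$, so they lie in the same $\sim_2$-class while being distinct. Your alternative phrasing ``balanced $+$ same length $+$ same $S_b$ $\Rightarrow$ equal'' fails on the same pair: $S_b(abba)=2+3=5=1+4=S_b(baab)$. More generally, whenever $\gcd(p,q)>1$ you can get several balanced words in one class (e.g.\ $aabaab$ and $abaaba$... no, those differ in $\binom{\cdot}{ab}$; but $abba$/$baab$ already suffices). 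The refined claim about consecutive differences lying in $\{\lfloor p/q\rfloor,\lceil p/q\rceil\}$ is also wrong for finite balanced words: in $baab$ with $p=q=2$ the single $a$-gap has length~$2$, not~$1$.

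The paper's proof takes a quite different route. It argues by minimal counterexample: using Lemma~\ref{L_fair_reduction} one strips equal first and last letters, reducing to $w=aw'b$ (say). Then it compares $w$ with $\widetilde{w}$ via the maximal $\Psi$-decomposition $((u_i,v_i))_{1\le i\le k}$ introduced in Section~\ref{subsec_interpret_char}. If every block starts the same way (with $u_i$ beginning in $a$), Lemma~\ref{L_Psi_value} makes $\binom{w}{ab}-\binom{\widetilde w}{ab}$ strictly positive, contradicting fairness. Otherwise some later block flips, and tracing the last letters of the previous block one produces factors $bxb$ of $w$ and $aya$ of $\widetilde w$ (hence $a\widetilde{y}a$ of $w$) with $|x|=|y|$, contradicting balancedness. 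So the balancedness is used in a very local way---to forbid two equal-length factors with $a$-counts differing by~$2$---rather than through any global injectivity statement.
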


Let recall that a word $w$ over $\{a, b\}$ is \textit{balanced}
if for any factors $u$ and $v$ of $w$ of same length,
$||u|_a-|v|_a| \leq 1$. 
Any factor of a balanced word is also balanced. In the next proofs, $\first(w)$ and
$\last(w)$ denote respectively the first and the last letter of the word $w$. We need the following intermediary result.

\begin{lemma}
\label{L_fair_reduction}
Given a word $w$ over an alphabet $A$ and $\alpha \in A$, $w$ is fair if and only if $\alpha w\alpha$ is fair
\end{lemma}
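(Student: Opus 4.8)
The plan is to prove the two implications separately, working from the characterisation of fairness given by Lemma~\ref{L_carac_fair_with_value_binom}: $w$ is fair if and only if $\binom{w}{ab} = \binom{w}{ba}$, equivalently (using Formula~\eqref{eq_relation_base1}) if and only if $2\binom{w}{ab} = |w|_a|w|_b$. The key tool will be Formula~\eqref{eq_binom_3_words} applied with the single letter $\alpha$ playing the role of both $x$ and $y$ (writing $\alpha w\alpha = \alpha \cdot w \cdot \alpha$), together with Equation~\eqref{eq_binom_aa} to handle the case $\alpha = a$ or $\alpha = b$ separately.

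First I would treat the case $\alpha \notin \{a,b\}$ (if $A$ has such a letter). Then $|\alpha|_a = |\alpha|_b = 0$, and Formula~\eqref{eq_binom_3_words} gives $\binom{\alpha w\alpha}{ab} = \binom{w}{ab}$ and likewise $\binom{\alpha w\alpha}{ba} = \binom{w}{ba}$, while $|\alpha w\alpha|_a = |w|_a$ and $|\alpha w\alpha|_b = |w|_b$; so the fairness conditions for $w$ and $\alpha w\alpha$ are literally identical for the pair $(a,b)$, and by symmetry for every pair of letters. Hence the equivalence is immediate in this case.

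The substantive case is $\alpha = a$ (the case $\alpha = b$ being symmetric, obtained by exchanging the roles of $a$ and $b$). Here I would compute, for an arbitrary pair of distinct letters $(c,d)$, how $\binom{awa}{cd}$ relates to $\binom{w}{cd}$. Using $awa = a\cdot w \cdot a$ and Formula~\eqref{eq_binom_3_words} with $x = y = a$: if $\{c,d\} \cap \{a\} = \emptyset$ nothing changes; if $c = a$ then $\binom{awa}{ad} = \binom{w}{ad} + |a|_a|w|_d + |a|_a|a|_d + |w|_a|a|_d = \binom{w}{ad} + |w|_d$, and $|awa|_a = |w|_a + 2$; similarly if $d = a$ then $\binom{awa}{ca} = \binom{w}{ca} + |w|_c$, $|awa|_a = |w|_a+2$, $|awa|_c = |w|_c$. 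The fairness condition for the pair $(a,d)$ reads $2\binom{awa}{ad} = |awa|_a|awa|_d = (|w|_a+2)|w|_d$, i.e.\ $2\binom{w}{ad} + 2|w|_d = (|w|_a+2)|w|_d = |w|_a|w|_d + 2|w|_d$, i.e.\ $2\binom{w}{ad} = |w|_a|w|_d$, which is exactly the fairness condition for $w$ on the pair $(a,d)$. The symmetric computation handles pairs $(c,a)$, and pairs not involving $a$ are untouched. So again the two systems of conditions coincide, giving the equivalence. (One should also note Equation~\eqref{eq_binom_aa} guarantees the conditions on pairs like $(\alpha,\alpha)$ are vacuous, so no extra constraint appears.)

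I do not expect a genuine obstacle here; the only point requiring a little care is bookkeeping over all pairs of letters and making sure the prepended and appended copy of $\alpha$ are both accounted for in Formula~\eqref{eq_binom_3_words} (the "$|x|_a|y|_b$" cross term with $x=y=\alpha$ contributes when $\{a,b\}=\{ \alpha, \cdot\}$ in the right way). In the binary case this reduces to the single identity $\binom{awa}{ab} = \binom{w}{ab} + |w|_b$ together with $|awa|_a = |w|_a + 2$, from which $2\binom{awa}{ab} - |awa|_a|awa|_b = 2\binom{w}{ab} - |w|_a|w|_b$, making the equivalence transparent.
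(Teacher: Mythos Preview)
Your proposal is correct and follows essentially the same approach as the paper: case-split on whether $\alpha$ belongs to the pair $\{a,b\}$ under consideration, and use the key identity $\binom{awa}{ab}=\binom{w}{ab}+|w|_b$ (which you derive from Formula~\eqref{eq_binom_3_words}, while the paper states it directly). The only cosmetic difference is that the paper compares $\binom{awa}{ab}$ with $\binom{awa}{ba}$ directly---both pick up the same $+|w|_b$ shift, so their equality is preserved---whereas you route the argument through the equivalent condition $2\binom{w}{ab}=|w|_a|w|_b$; the underlying computation is the same.
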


\begin{proof}
Let $a$ and $b$ be two different letters in $A$.
If $\alpha \not\in w$, $\binom{w}{ab} = \binom{\alpha w\alpha}{ab}$
and $\binom{w}{ba} = \binom{\alpha w\alpha}{ba}$ since any occurrence of $ab$ or $ba$ in $\alpha w\alpha$ necessarily occurs in $w$.
Assume $\alpha = a$. 
Observe that 
$\binom{awa}{ab} = \binom{w}{ab} + |w|_b$
and $\binom{awa}{ba} = \binom{w}{ba} + |w|_b$.
Hence $\binom{w}{ab} = \binom{w}{ba}$
if and only if $\binom{a wa}{ab} = \binom{a wa}{ba}$.
Similarly, we can prove the same result when $\alpha = b$. The lemma follows the definition of fairness.
\end{proof}

\begin{proof}[Proof of Theorem~\ref{T_balanced}.]
Since any palindrome is a fair word, we only have to prove
that any fair balanced word is a palindrome. 
 Assume by contradiction that
 there exists a fair balanced word $w$ which is not a palindrome.
 Choose $w$ of minimal length.
 
 If $w = aw'a$ or $w = bw'b$, then $w'$ is balanced 
 since any factor of a balanced word is balanced.
 By Lemma~\ref{L_fair_reduction}, 
 the word $w'$ is also fair. 
 Observe finally also that $w'$ is not a palindrome since $w$ is not a palindrome.
 This contradicts the choice made for $w$.
 Thus $w = aw'b$ or $w = bw'a$.
 
 Since we will consider simultaneously $w$ and $\widetilde{w}$,
 without loss of generality,
 we can assume that $w = aw'b$ and $\widetilde{w} = b\widetilde{w'}a$.
Let $((u_i, v_i))_{1 \leq i \leq k}$ denote
 the maximal $\Psi$-decomposition of $(w, w')$.
 Observe that $\first(u_1) = a$ and $\first(v_1)=b$.

 Assume that for all $i$, $1 \leq i \leq k$, $u_i = v_i$
 or both $\first(u_i) = a$ and $\first(v_i) = b$.
 By Lemma~\ref{L_Psi_difference},
 $\binom{w}{ab}-\binom{\widetilde{w}}{ab}= \sum_{i = 1}^k \binom{u_i}{ab}-\binom{v_i}{ab}$.
 Since each $(u_i, v_i)$ is $\Psi$-undecomposable by 
 Lemma~\ref{L_Psi_value} and by hypothesis above,
 $\binom{u_1}{ab}-\binom{v_1}{ab} > 0$,
 and, for all $i$, $2 \leq i \leq k$,
 $\binom{u_i}{ab}-\binom{v_i}{ab} \geq 0$.
 Hence $\binom{w}{ab}-\binom{\widetilde{w}}{ab} > 0$.
 But $\binom{\widetilde{w}}{ab}=\binom{w}{ba}$ and, since $w$ is fair,
 $\binom{w}{ba}=\binom{w}{ab}$. We have a contradiction.
 
 So there exists an integer $\ell$ such that
 $2 \leq \ell \leq k$, $\first(u_\ell) = b$ 
 and $\first(v_\ell) = a$.
 Let $x = u_2 \cdots u_{\ell-1}$ and $y = v_2 \cdots v_{\ell-1}$.
 The words $u_1xb$ and $v_1ya$ are respectively factors of $w$ and $\widetilde{w}$.
 
 Now consider the $\Psi$-undecomposable pair $(u_1, v_1)$.
 The last letters of $u_1$ and $v_1$ 
 are different because of undecomposability.
If $u_1$ ends with $a$
and $v_1$ ends with $b$,
then letting $i = |u_1|_b = |v_1|_b$,
$|\pref_{i,b}(u_1)|_a < |u_1|_a = |v_1|_a = |\pref_{i,b}(v_1)|_a$.
This contradicts Lemma~\ref{L_Psi_value}(1).
So $u_1$ ends with $b$ and $v_1$ ends with $a$.
 Thus $bxb$ and $aya$ are respectively factors
 of $w$ and $\widetilde{w}$. So $bxb$ and $a\widetilde{y}a$ are factors of $w$.
 Since 
 $\Psi(x)= \sum_{i=2}^\ell \Psi(u_i)$,
 $\Psi(y)= \sum_{i=2}^\ell \Psi(v_i)$,
 and, for $2 \leq i \leq \ell-1$, $\Psi(u_i) = \Psi(v_i)$ (by definition of the $\Psi$-decomposition),
 we have $\Psi(x) = \Psi(y)$.
 This implies that $|x|=|\widetilde{y}|$ and $|bxb|=|a\widetilde{y}a|$: this contradicts the 
 balancedness of $w$. This ends the proof since this last contradiction raises from the hypothesis
 $w$ is fair and balanced.  
 \end{proof}

From Remark~\ref{rem_equiv2_fair_words},
we know that any equivalence class of a fair word $w$ over $\{a,b\}$ by $\sim_2$
is determined by the numbers of occurrences of $a$ and $b$.
We know also that at least one of these numbers must be even.
Next proposition shows that 
any equivalence class of a fair word over $\{a,b\}$
contains a palindromic balanced word.
For $\alpha \in \{a, b\}$,
let $L_\alpha$ be the free monoid morphism defined on $\{a, b\}^*$
by $L_\alpha(\alpha) = \alpha$ and $L_\alpha(\beta) = \alpha\beta$
for $\beta \neq \alpha$.

\begin{proposition}
\label{P_existence_palindrome_fair}
Given two integers $k \geq 0$ and $\ell \geq 0$
with at least one even, there exists a palindromic balanced fair $w$ with $|w|_a = k$
and $|w|_b = \ell$.
\end{proposition}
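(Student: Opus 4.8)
The plan is to reduce the statement to the following: produce a \emph{balanced palindrome} $w$ with $|w|_a = k$ and $|w|_b = \ell$. Such a $w$ is automatically fair, since every palindrome is fair (as recalled after Lemma~\ref{L_carac_fair_with_value_binom}), so the proposition follows. The degenerate cases are immediate: $w = a^{k}$ works if $\ell = 0$, and $w = b^{\ell}$ if $k = 0$. So from now on I may assume $k, \ell \geq 1$ with at least one of them even, and I set $n = k+\ell \geq 2$.

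For the main case I would realize $w$ as a length-$n$ mechanical (Beatty) word whose slope is chosen \emph{close to but different from} $\ell/n$. Precisely, the set $I = (\frac{\ell-1}{n}, \frac{\ell+1}{n}) \cap (0,1)$ is a nonempty open interval; pick an irrational $\theta \in I$, put $\rho = \frac12(\ell+1-n\theta)$, and define $w = w_1\cdots w_n \in \{a,b\}^{n}$ by $w_i = b$ if $\lfloor i\theta + \rho\rfloor > \lfloor (i-1)\theta + \rho\rfloor$ and $w_i = a$ otherwise. There are then three things to verify. First, $w$ is balanced, being a factor of the Sturmian sequence of slope $\theta$ (see \cite{Lothaire2002book}). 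Second, $w$ has the correct Parikh vector: from $\theta \in I$ one gets $n\theta \in (\ell-1, \ell+1)$, hence $\rho \in (0,1)$ and $n\theta + \rho \in (\ell, \ell+1)$, so by telescoping $|w|_b = \lfloor n\theta + \rho\rfloor - \lfloor\rho\rfloor = \ell$ and $|w|_a = k$. Third — the heart of the matter — $w$ is a palindrome: one has $n\theta + 2\rho = \ell+1 \in \mathbb Z$, so $(n-i)\theta + \rho = (\ell+1) - (i\theta+\rho)$; provided $i\theta + \rho \notin \mathbb Z$ for all $0 \leq i \leq n$, this gives $\lfloor (n-i)\theta + \rho\rfloor = \ell - \lfloor i\theta+\rho\rfloor$, and substituting this into the definition of $w_{n+1-i}$ yields $w_{n+1-i} = w_i$ for every $i$.

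The step that needs care — and the one where the hypothesis ``at least one of $k,\ell$ even'' is used — is precisely the condition $i\theta + \rho \notin \mathbb Z$ for $0 \leq i \leq n$. Since $i\theta+\rho = \frac12\bigl((2i-n)\theta + \ell + 1\bigr)$ and $\theta$ is irrational, the only index that could make this an integer is $i$ with $2i = n$, which forces $n$ even, hence $k \equiv \ell \pmod 2$; as $k,\ell$ are not both odd they must then both be even, so $i\theta+\rho = \frac{\ell+1}{2} \notin \mathbb Z$. (When $n$ is odd there is no index $i$ with $2i=n$, so there is nothing to check.) This is also the mechanism explaining why a construction of this type must break down when $k$ and $\ell$ are both odd, in agreement with the elementary fact that a palindrome cannot contain an odd number of each letter.

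Finally, if one prefers to avoid any reference to infinite Sturmian sequences, the very same computation goes through with $\theta = p/q$ rational, $\gcd(p,q) = 1$ and $q > n$, chosen inside $I$: then $w$ is a proper factor of the periodic word $C^{\omega}$, where $C$ is the Christoffel word of slope $p/q$, hence balanced, and the inequality $q > n$ makes $i\theta+\rho \notin \mathbb Z$ obvious whenever $2i \neq n$, the case $2i = n$ being handled exactly as above.
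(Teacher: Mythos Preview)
Your proof is correct and takes a genuinely different route from the paper's. The paper argues by induction on $\max(k,\ell)$: at each step it applies a Sturmian morphism $L_\alpha$ (defined by $L_\alpha(\alpha)=\alpha$, $L_\alpha(\beta)=\alpha\beta$) to a shorter palindromic balanced word $\pi$, invoking a result of Droubay--Justin--Pirillo to certify that $L_a(\pi)a$ is again a palindrome, and checking balancedness by hand. You instead produce the word in one shot as a length-$n$ mechanical word with slope $\theta$ near $\ell/n$ and intercept $\rho=\frac12(\ell+1-n\theta)$ chosen so that $n\theta+2\rho\in\mathbb Z$; the palindrome property then falls out of the symmetry $\lfloor(n-i)\theta+\rho\rfloor=\ell-\lfloor i\theta+\rho\rfloor$, balancedness is inherited from the ambient Sturmian sequence, and the parity hypothesis on $(k,\ell)$ is used exactly once, to rule out $i\theta+\rho\in\mathbb Z$ at the midpoint $2i=n$.

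What each approach buys: the paper's inductive construction is self-contained modulo the cited palindrome lemma and makes the recursive (continued-fraction) structure of balanced palindromes visible; your direct construction avoids induction and any external citation, and it pinpoints precisely where the parity obstruction lives. Your alternative with rational $\theta=p/q$, $q>n$, is a nice touch and works for the same reason (the bound $|2i-n|\le n<q$ together with $\gcd(p,q)=1$ forces $(2i-n)\theta\notin\mathbb Z$ unless $2i=n$); just note that the resulting word is a factor of the periodic extension of a \emph{conjugate} of the Christoffel word rather than of $C^{\omega}$ itself, which is harmless since all such periodic words are balanced.
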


\begin{proof}
The proof acts by induction on $\max(k, \ell)$

First if $k = \ell$, then the two integers are even.
There exists an integer $m$ such that $k = 2m$.
The word $(ab)^m(ba)^m$ is a palindromic balanced words with
$k$ occurrences of $a$
and $\ell$ occurrences of $b$.
Second if $k = 0$ (resp. $\ell = 0$), the word $b^\ell$ (resp. $a^k$)
answers the proposition.

Assume now that $1 \leq \ell < k$.
Since at least one of the integers $k$ and $\ell$ is even,
also at least one of the two integers $k-\ell-1$ and $\ell$
is even. By inductive hypothesis,
there exists a palindromic balanced word $\pi$
such that $|\pi|_a = k-\ell-1$ and $|\pi|_b = \ell$.

By \cite{DroubayJustinPirillo2001TCS}, it is known that $L_a(\pi)a$ is a palindrome (this can also be checked easily).
Also $L_a(\pi)a$ is balanced. 
Indeed if it is not balanced, it contains two factors $axa$ and $bxb$. 
One can then verify that $x = L_a(y)a$ for some word $y$, and further, both words $aya$ and $byb$ are factors of $\pi$ contradicting its balancedness. 
Finally observe
that $|L_a(\pi)a|_a = |\pi|_a+|\pi|_b+1 = k$ and $|L_a(\pi)a|_b = \ell$.
The word $L_a(w)a$ answers the proposition.

The case $1 \leq \ell < k$ can be treated similarly using $L_b$ instead of $L_a$.
\end{proof}

\subsection{\label{subsec_number_fair_words}On the number of fair words}

One aim of A.~\cerny\ in \cite{Cerny2009JALC}
is the study of the number of fair words.
In \cite{Cerny2008IJFCS}, he mentioned 
Prodinger's paper \cite{Prodinger1979DM} 
in which the following results were already stated.
Let $\sigma$ be defined by $\sigma(a) = +1$, $\sigma(b) = -1$.

\begin{lemma}[\cite{Prodinger1979DM}]
For each word $w = a_1\cdots a_n$ ($a_i \in \{a, b\}$),

$$\binom{w}{ab} - \binom{w}{ba} = \frac{1}{2} \sum_{k = 1}^n \sigma(a_k) (n+1-2k)$$
\end{lemma}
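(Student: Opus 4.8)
The plan is to give a direct double-counting argument; the relations recalled earlier also yield a quick alternative, which I will sketch at the end.

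First I would observe that a subword occurrence of $ab$ in $w = a_1 \cdots a_n$ is nothing but a pair of indices $(k, \ell)$ with $1 \le k < \ell \le n$, $a_k = a$ and $a_\ell = b$, and similarly $\binom{w}{ba}$ counts the pairs $(k, \ell)$ with $1 \le k < \ell \le n$, $a_k = b$ and $a_\ell = a$. Hence
$$\binom{w}{ab} - \binom{w}{ba} = \sum_{1 \le k < \ell \le n} c_{k,\ell},$$
where $c_{k,\ell} = 1$ if $(a_k, a_\ell) = (a, b)$, $c_{k,\ell} = -1$ if $(a_k, a_\ell) = (b, a)$, and $c_{k,\ell} = 0$ otherwise. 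The key elementary identity is that, in all four cases for the pair $(a_k, a_\ell)$, one has $c_{k,\ell} = \tfrac12\bigl(\sigma(a_k) - \sigma(a_\ell)\bigr)$: both sides are $0$ when $a_k = a_\ell$, equal $1$ when $(a_k, a_\ell) = (a,b)$, and equal $-1$ when $(a_k, a_\ell) = (b,a)$.

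It then remains to evaluate $\tfrac12 \sum_{1 \le k < \ell \le n}\bigl(\sigma(a_k) - \sigma(a_\ell)\bigr)$ by exchanging the order of summation: for fixed $k$ the term $\sigma(a_k)$ occurs once for each $\ell > k$, i.e.\ with multiplicity $n - k$, while for fixed $\ell$ the term $-\sigma(a_\ell)$ occurs once for each $k < \ell$, i.e.\ with multiplicity $\ell - 1$. Collecting the coefficient of $\sigma(a_k)$ for each fixed index $k$ gives $(n - k) - (k - 1) = n + 1 - 2k$, whence
$$\binom{w}{ab} - \binom{w}{ba} = \frac12 \sum_{k=1}^n \sigma(a_k)(n + 1 - 2k),$$
as claimed. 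There is no genuine obstacle here; the only point requiring minor care is the bookkeeping of the multiplicities $n - k$ and $\ell - 1$ (the off-by-one in the index ranges).

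Alternatively, writing $S_a(w)$ for the sum of the positions of the occurrences of $a$ in $w$, Lemma~\ref{L_lien_Sb_et_binom_ab} applied to the letters $b$ and $a$ in turn gives $\binom{w}{ab} = S_b(w) - \tfrac12|w|_b(|w|_b+1)$ and $\binom{w}{ba} = S_a(w) - \tfrac12|w|_a(|w|_a+1)$. Subtracting, and using $S_a(w) + S_b(w) = \tfrac12 n(n+1)$ together with $|w|_a - |w|_b = \sum_{k=1}^n \sigma(a_k)$ and $|w|_a + |w|_b = n$, one again reaches the stated formula after a short computation (note $S_a(w) - S_b(w) = \sum_k \sigma(a_k)\,k$ and $\tfrac12 (|w|_a(|w|_a+1) - |w|_b(|w|_b+1)) = \tfrac12 (n+1)\sum_k \sigma(a_k)$). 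This route is slightly longer but re-uses results already established in the paper rather than arguing from scratch.
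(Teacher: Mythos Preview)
Your proof is correct. The direct double-counting argument is clean: the identity $c_{k,\ell} = \tfrac12(\sigma(a_k) - \sigma(a_\ell))$ does all the work, and the bookkeeping with multiplicities $n-k$ and $k-1$ is handled without error. The alternative route via Lemma~\ref{L_lien_Sb_et_binom_ab} is also fine; the two intermediate identities you record, $S_a(w) - S_b(w) = \sum_k k\,\sigma(a_k)$ and $\tfrac12\bigl(|w|_a(|w|_a+1) - |w|_b(|w|_b+1)\bigr) = \tfrac12(n+1)\sum_k \sigma(a_k)$, check out and combine to give the claim.

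Note that the paper itself does not supply a proof of this lemma: it is quoted from Prodinger~\cite{Prodinger1979DM} and used as a black box. So there is no ``paper's own proof'' to compare against. Your first argument is self-contained and would serve well as an inserted proof; the second has the virtue of tying the formula back to $S_b(w)$, which the paper exploits elsewhere (notably in Theorem~\ref{T_least_squares}), so it makes the internal connections more visible even if it is marginally longer.
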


\begin{corollary}[\cite{Prodinger1979DM}]
The number of fair words of length $n$ is\\
the number of solutions $(\epsilon_1, \ldots, \epsilon_n)$ with $\epsilon_i \in \{-1, +1\}$ of
$$\sum_{k = 1}^n \epsilon_k (n+1-2k) = 0$$
\end{corollary}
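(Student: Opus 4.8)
The plan is to obtain the corollary directly from the preceding lemma, together with an obvious relabelling of binary words; there is essentially no combinatorial work left once the lemma is available. A binary word $w = a_1\cdots a_n$ over $\{a,b\}$ is fair exactly when $\binom{w}{ab} = \binom{w}{ba}$, i.e. when $\binom{w}{ab} - \binom{w}{ba} = 0$, and by the lemma this difference equals $\tfrac12\sum_{k=1}^n \sigma(a_k)(n+1-2k)$. Hence $w$ is fair if and only if $\sum_{k=1}^n \sigma(a_k)(n+1-2k) = 0$.

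Next I would make the relabelling precise. The map $w = a_1\cdots a_n \mapsto (\sigma(a_1),\ldots,\sigma(a_n))$ is a bijection from $\{a,b\}^n$ onto $\{-1,+1\}^n$: its inverse recovers $w$ from a tuple $(\epsilon_1,\ldots,\epsilon_n)$ by setting $a_k = a$ when $\epsilon_k = +1$ and $a_k = b$ when $\epsilon_k = -1$. Under this bijection, the fair words of length $n$ correspond precisely to the tuples $(\epsilon_1,\ldots,\epsilon_n)\in\{-1,+1\}^n$ satisfying $\sum_{k=1}^n \epsilon_k(n+1-2k)=0$, by the equivalence established in the previous paragraph. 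Counting the two sides of this bijection yields the asserted equality of cardinalities, which is the corollary.

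So the real content lies in the lemma, and for completeness I would derive it from earlier material rather than from the definition of the binomial coefficients. Writing $S_b(w)$ and $S_b'(w)$ for the sum of the positions of the occurrences of $b$ in $w$, read respectively from the left and from the right, one checks straight from the definitions that $S_b(w) - S_b'(w) = \sum_{k\,:\,a_k=b}(2k-n-1)$. Meanwhile Lemmas~\ref{L_lien_Sb_et_binom_ab} and~\ref{L_lien_Sb'_et_binom_ab} give (in the binary case) $S_b(w) = \tfrac{|w|_b(|w|_b+1)}{2} + \binom{w}{ab}$ and $S_b'(w) = \tfrac{|w|_b(|w|_b+1)}{2} + \binom{w}{ba}$, so $\binom{w}{ab} - \binom{w}{ba} = -\sum_{k\,:\,a_k=b}(n+1-2k)$. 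Since $\sum_{k=1}^n (n+1-2k) = 0$ we have $\sum_{k\,:\,a_k=a}(n+1-2k) = -\sum_{k\,:\,a_k=b}(n+1-2k)$, and therefore $\sum_{k=1}^n \sigma(a_k)(n+1-2k) = -2\sum_{k\,:\,a_k=b}(n+1-2k) = 2\bigl(\binom{w}{ab} - \binom{w}{ba}\bigr)$, which is exactly the lemma.

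The main obstacle, such as it is, is purely bookkeeping: keeping the orientation conventions for $S_b'$ and for the weights $n+1-2k$ consistent, and invoking the vanishing identity $\sum_{k=1}^n(n+1-2k)=0$ to pass from a one-sided sum over the positions of $b$ to the balanced sum over all positions weighted by $\sigma(a_k)$. The corollary itself is immediate once the lemma and the relabelling bijection are in place.
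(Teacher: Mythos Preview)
Your argument is correct. The paper itself does not spell out a proof of this corollary (it is simply attributed to Prodinger), but the derivation you give---apply the preceding lemma to rewrite fairness as $\sum_{k}\sigma(a_k)(n+1-2k)=0$, then use the obvious bijection $\{a,b\}^n\to\{-1,+1\}^n$ given by $\sigma$---is precisely the intended one and matches how the corollary is meant to follow from the lemma.

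Your supplementary derivation of the lemma via $S_b$ and $S_b'$ is also correct and is a pleasant alternative to Prodinger's original computation: instead of manipulating the binomial coefficients directly, you exploit the identities $S_b(w)-S_b'(w)=\binom{w}{ab}-\binom{w}{ba}$ (from Lemmas~\ref{L_lien_Sb_et_binom_ab} and~\ref{L_lien_Sb'_et_binom_ab}) together with the elementary observation $S_b(w)-S_b'(w)=\sum_{k:a_k=b}(2k-n-1)$ and the vanishing of $\sum_{k=1}^n(n+1-2k)$. This ties the Prodinger lemma back to the paper's own ``sum of positions'' material in Section~\ref{subsec_sum_positions}, which the paper does not do explicitly; it is a slightly different route but yields the same formula with no extra cost.
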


\begin{theorem}[\cite{Prodinger1979DM}]
The number $f(n)$ of fair words of length $n$ 
verifies
$$f(n) \sim 2^{2\lfloor(n-1)/2\rfloor+1}\left(\frac{3}{\pi}\right)^{1/2}\left\lfloor\frac{n}{2}\right\rfloor^{-3/2}$$
\end{theorem}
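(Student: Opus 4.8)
By the Corollary above, $f(n)$ is the number of sign vectors $\varepsilon=(\varepsilon_1,\dots,\varepsilon_n)\in\{-1,1\}^n$ with $\sum_{k=1}^n\varepsilon_k c_k=0$, where I write $c_k:=n+1-2k$. The first step is to exploit the antisymmetry $c_{n+1-k}=-c_k$: pairing the index $k$ with $n+1-k$ gives $\sum_{k=1}^n\varepsilon_k c_k=\sum_{k=1}^{m}c_k(\varepsilon_k-\varepsilon_{n+1-k})$, where $m:=\lfloor n/2\rfloor$, the central index $k=(n+1)/2$ contributing nothing when $n$ is odd since $c_k=0$ there. Setting $Z_k:=\tfrac12(\varepsilon_k-\varepsilon_{n+1-k})\in\{-1,0,1\}$ for $1\le k\le m$, the variables $Z_1,\dots,Z_m$ are independent when $\varepsilon$ is uniform on $\{-1,1\}^n$, each with $\Pr[Z_k=0]=\tfrac12$ and $\Pr[Z_k=\pm1]=\tfrac14$. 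Since $\sum_k\varepsilon_k c_k=2T$ with $T:=\sum_{k=1}^m c_k Z_k$, this yields
\[
   f(n)=2^n\,\Pr[T=0],
\]
so the problem reduces to a local estimate for the integer-valued sum $T$ of independent, symmetric, mean-zero terms $c_kZ_k$.

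Next I would record the two quantities that govern this estimate. The variance is $\sigma_n^2:=\mathbb{E}[T^2]=\tfrac12\sum_{k=1}^m c_k^2$. For $n=2m$ the coefficients $c_1,\dots,c_m$ are the odd numbers $2m-1,2m-3,\dots,1$, whence $\sigma_n^2=\tfrac16 m(2m-1)(2m+1)$; for $n=2m+1$ they are $2m,2m-2,\dots,2$, whence $\sigma_n^2=\tfrac13 m(m+1)(2m+1)$. In both cases $\sigma_n^2\sim\tfrac23 m^3$. The second quantity is the span (the maximal $h$ with $\mathrm{supp}(T)\subseteq h\mathbb{Z}$): the value $1$ occurs among the $c_k$ exactly when $n$ is even (it is $c_m$), whereas for $n$ odd every $c_k$ is even and $c_m=2$; since $Z_m$ attains $\pm1$, the support of $T$ generates $\mathbb{Z}$ when $n$ is even but only $2\mathbb{Z}$ when $n$ is odd, so $h_n=1$ for even $n$ and $h_n=2$ for odd $n$.

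The core of the argument is then a local central limit theorem for $T$. One may write $\Pr[T=0]=\frac1{2\pi}\int_{-\pi}^{\pi}\prod_{k=1}^m\cos^2(c_k\theta/2)\,d\theta$, the characteristic function of $c_kZ_k$ being $\tfrac12+\tfrac12\cos(c_k\theta)=\cos^2(c_k\theta/2)$, and locate the points of $[-\pi,\pi]$ at which the integrand equals $1$: only $\theta=0$ when $n$ is even, and $\theta\in\{0,\pm\pi\}$ when $n$ is odd (precisely because all $c_k$ are then even). Near each such point the integrand is well approximated by $e^{-\sigma_n^2 t^2/2}$ in the local coordinate $t$, contributing $\tfrac1{\sqrt{2\pi}\,\sigma_n}$, while a standard Fourier estimate shows the rest of the integral is $o(\sigma_n^{-1})$; hence $\Pr[T=0]\sim\frac{h_n}{\sqrt{2\pi}\,\sigma_n}$. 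Equivalently one may just invoke a standard lattice local limit theorem for triangular arrays: the Lyapunov (indeed Lindeberg) condition holds at once since $|c_kZ_k|\le 2m=o(\sigma_n)$ uniformly in $k$. Using $\sigma_n^2\sim\tfrac23 m^3$ gives $\Pr[T=0]\sim\tfrac{h_n}{2}\bigl(\tfrac3\pi\bigr)^{1/2}m^{-3/2}$, and therefore
\[
   f(n)=2^n\Pr[T=0]\sim 2^{n-1}h_n\Bigl(\tfrac3\pi\Bigr)^{1/2}m^{-3/2}.
\]
A final check that $2^{n-1}h_n=2^{2\lfloor(n-1)/2\rfloor+1}$ (treating the two parities separately) and that $m=\lfloor n/2\rfloor$ turns this into the claimed estimate.

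I expect the only genuinely delicate step to be the local limit theorem itself — in particular, pinning down the correct span $h_n$ and controlling the characteristic-function integral away from its peaks. This is exactly what makes the answer parity-dependent: the extra peak at $\theta=\pm\pi$ for odd $n$ is what produces the factor $2$ and hence the $\lfloor(n-1)/2\rfloor$ in the exponent. Everything else — the pairing reduction, the two variance computations, and the verification of the Lyapunov condition — is routine bookkeeping.
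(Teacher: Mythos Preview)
The paper does not prove this theorem; it merely quotes it from Prodinger~\cite{Prodinger1979DM} as one of several known facts about the growth of $f(n)$, so there is no ``paper's proof'' to compare against. Your sketch stands on its own and must therefore be judged on its merits.

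Your argument is correct. The pairing $k\leftrightarrow n+1-k$ via the antisymmetry $c_{n+1-k}=-c_k$ is exactly the right reduction, and the computation of $\sigma_n^2\sim\tfrac23 m^3$ in both parities checks out. The identification of the span $h_n$ (equal to $1$ for even $n$ and $2$ for odd $n$) is the crucial observation and you have it right: for even $n$ the smallest coefficient is $c_m=1$, while for odd $n$ every $c_k$ is even with $c_m=2$. The local limit estimate $\Pr[T=0]\sim h_n/\sqrt{2\pi}\,\sigma_n$ then gives the claimed asymptotics after the bookkeeping you outline, and the final parity check $2^{n-1}h_n=2^{2\lfloor(n-1)/2\rfloor+1}$ is straightforward.

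The one place where a referee might push back is the local limit theorem itself, since the summands $c_kZ_k$ form a genuine triangular array rather than an i.i.d.\ sequence. You rightly flag this as the delicate step. The Fourier route you describe is the cleanest: one has to show that $\prod_{k=1}^m\cos^2(c_k\theta/2)$ decays fast enough away from its peaks. Because the $c_k$ form an arithmetic progression with common difference $2$, for any $\theta$ at distance $\delta>0$ from the peak set a positive proportion of the factors satisfy $\cos^2(c_k\theta/2)\le 1-c\delta^2$ for some absolute $c>0$, and this gives exponential decay of the product in $m$. If you were to write this up in full, that is the estimate that would need a couple of explicit lines; everything else is, as you say, routine.
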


In \cite{Cerny2009JALC}, the numbers
of fair words of length $n$ over a $k$-ary alphabet
were provided when $k = 2$ and $n \leq 20$, $k = 3$ and $n \leq 16$, $k = 4$ and $n \leq 12$, $k = 5$ and $n \leq 10$.
When $k = 2$, theses values are: 1, 2, 2, 4, 4, 8, 8, 20, 18, 52,48, 152, 138, 472, 428, 1520, 1392, 5044, 4652, 17112, 15884.
Removing the first value 1, we get 
sequence A222955 in the On-line Encyclopedia of integer sequences \cite{OEISA222955}.
Here follows the description of this sequence:
``\textit{Number of nX1 0..1 arrays with every row and column least squares fitting to a zero slope straight line, with a single point array taken as having zero slop}".
On this page, when visited, no link was made with fair words
and the paper by H.~Prodinger was not cited.
We prove this link.

Here we consider the alphabet $\{0, 1\}$ instead of the alphabet
$\{a, b\}$. 
As mentioned, for instance, in \cite{wiki_least_squares},
the method of least squares is an old (more than two centuries) and well-known ``\textit{mathematical optimization technique
that aims to determine the best fit function by minimizing the sum of the squares of the differences between the observed values and the predicted values of the model}".
In our context, the model is a straight line with slope $\beta$ and intercept $\alpha$.
Let $S = \{ (x_i, y_i) \mid 1 \leq i \leq n\}$
be a set of points and let $f_S(\alpha, \beta) = 
\sum_{i = 1}^n (\alpha+x_i\beta-y_i)^2$.
The least square method (for this model) aims to
know the values $\alpha$ and $\beta$ for which $f_S(\alpha,\beta)$
is minimal. It is known that $f_S$ is minimal in an unique pair
$(\alpha,\beta)$ and this pair may be determined by the two local
conditions: $\partial_\alpha f_S(\alpha,\beta) = 0$ and $\partial_\beta f_S(\alpha,\beta) =0$.
In our problem, the set $S$ is $S(w) = \{(i, w_i) \mid 1 \leq i \leq n \}$:
$$f_{S(w)}(\alpha,\beta) = \sum_{i = 1}^n (\alpha+i\beta-w_i)^2$$.

\begin{theorem}
\label{T_least_squares}
Let $w$ be a word over $\{0,1\}$.
The minimization of $f_{S(w)}(\alpha, \beta)$ 
is obtained when $\beta = 0$
if and only if $w$ is fair. 
In this minimal case, $\alpha = \frac{|w|_1}{|w|}$.
\end{theorem}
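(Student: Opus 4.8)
The plan is to write down the classical normal equations for the least squares fit of $S(w)$ and translate the condition ``the minimum is attained at $\beta = 0$'' into an identity about the sum of positions of the letter $1$ in $w$, which is exactly the quantity controlled by Lemma~\ref{L_lien_Sb_et_binom_ab}. Write $n = |w|$ and let $S_1(w)$ denote the sum of the positions of the occurrences of $1$ in $w$ (this is $S_b(w)$ of Section~\ref{subsec_sum_positions} with $b = 1$). Since $w_i \in \{0, 1\}$ we have $\sum_{i=1}^n w_i = |w|_1$ and $\sum_{i=1}^n i\,w_i = S_1(w)$. The function $f_{S(w)}$ is a quadratic in $(\alpha, \beta)$ whose Hessian is positive definite as soon as $n \geq 2$ (the determinant $n\sum_{i=1}^n i^2 - \bigl(\sum_{i=1}^n i\bigr)^2$ is strictly positive since $1, \ldots, n$ are not all equal), so in that case $f_{S(w)}$ has a unique minimiser, namely the unique solution $(\alpha, \beta)$ of $\partial_\alpha f_{S(w)} = 0$ and $\partial_\beta f_{S(w)} = 0$, that is,
$$n\alpha + \frac{n(n+1)}{2}\,\beta = |w|_1, \qquad \frac{n(n+1)}{2}\,\alpha + \frac{n(n+1)(2n+1)}{6}\,\beta = S_1(w).$$
When $n = 1$ the single-point convention of the statement gives $\beta = 0$ and $\alpha = w_1 = |w|_1/|w|$, and every word of length $1$ is fair, so the theorem holds trivially; assume $n \geq 2$ from now on.

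Next I would note that the minimiser has $\beta = 0$ exactly when the pair $(|w|_1/n,\,0)$ solves both equations: if $\beta = 0$ at the minimiser then the first equation forces $\alpha = |w|_1/n$, and conversely if $(|w|_1/n,\,0)$ solves the system it is the unique critical point, hence the minimiser. For this pair the first equation holds automatically, so the condition reduces to the second one, namely $\frac{|w|_1}{n}\cdot\frac{n(n+1)}{2} = S_1(w)$, i.e.
$$S_1(w) = \frac{|w|_1\,(n+1)}{2}.$$
Moreover, when this holds, the first normal equation with $\beta = 0$ yields $\alpha = |w|_1/n = |w|_1/|w|$, as claimed in the statement. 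So it only remains to prove that the displayed identity is equivalent to $w$ being fair.

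For this last equivalence I would apply Lemma~\ref{L_lien_Sb_et_binom_ab} with $b = 1$ (so that $A \setminus \{b\} = \{0\}$), which gives $S_1(w) = \frac{|w|_1(|w|_1+1)}{2} + \binom{w}{01}$. Substituting this and using $|w|_0 + |w|_1 = n$, the identity $S_1(w) = \frac{|w|_1(n+1)}{2}$ becomes
$$\binom{w}{01} = \frac{|w|_1(n+1)}{2} - \frac{|w|_1(|w|_1+1)}{2} = \frac{|w|_1\,(n - |w|_1)}{2} = \frac{|w|_0\,|w|_1}{2},$$
and by Lemma~\ref{L_carac_fair_with_value_binom} (there is a single pair of distinct letters over $\{0,1\}$) this is precisely the statement that $w$ is fair. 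This completes the argument.

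All the computations involved are elementary; the only points requiring a bit of care are the non-degeneracy of $f_{S(w)}$ for $n \geq 2$ (so that ``the minimum is attained at $\beta = 0$'' unambiguously means ``$\beta = 0$ in the unique critical point''), and the bookkeeping identifying $\sum_{i} i\,w_i$ with $S_1(w)$, which is what makes Lemma~\ref{L_lien_Sb_et_binom_ab} directly applicable. I do not see any substantial obstacle beyond these.
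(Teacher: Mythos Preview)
Your proposal is correct and follows essentially the same approach as the paper: both set up the two normal equations $\partial_\alpha f_{S(w)}=0$ and $\partial_\beta f_{S(w)}=0$, reduce the condition ``$\beta=0$ at the minimum'' to the identity $S_1(w)=\frac{|w|_1(n+1)}{2}$, and then invoke Lemma~\ref{L_lien_Sb_et_binom_ab} together with Lemma~\ref{L_carac_fair_with_value_binom} to identify this with fairness. Your version is in fact slightly more careful than the paper's in that you explicitly justify the uniqueness of the minimiser via positive definiteness of the Hessian and handle the degenerate case $n=1$, whereas the paper simply asserts uniqueness and implicitly assumes $n\geq 2$ in the backward direction.
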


\begin{proof}
Let $n$ denote the length of $w$: $w = w_1 \cdots w_n$ with $w_i \in \{0, 1\}$ for each $i$, $1 \leq i \leq n$.
We search the minimum of $f_{S(w)}$.
Let us make three preliminary observations.
\begin{itemize}
\item Observe that $\sum_{i=1}^n w_i = |w|_1$, and so,
$\partial_\alpha f_{S(w)}(\alpha,\beta) = 
\sum_{i=1}^n 2(\alpha+i\beta-w_i) = 
2n\alpha +2\beta\sum_{i=1}^ni -2|w|_1$.
\item Following Section~\ref{subsec_sum_positions},
let $S_1(w) = \sum_{i=1}^n iw_i$ 
be the sum of positions of $1$s in $w$.
We have
$\partial_\beta f_{S(w)}(\alpha, \beta) = 
\sum_{i = 1}^n 2i (\alpha+i\beta-w_i) =
\alpha n(n+1) + 2\beta \sum_{i = 1}^n i^2 - 2 S_1(w)$.
\item By Lemma~\ref{L_lien_Sb_et_binom_ab},
$S_1(w) = \frac{|w|_1(|w|_1+1)}{2} + \binom{w}{01}$.
Thus the word $w$ is fair
if and only if
 $\binom{w}{01} = \frac{|w|_0|w|_1}{2}$
if and only if 
$S_1(w) = \frac{|w|_1(|w|+1)}{2} = \frac{|w|_1(n+1)}{2}$.
\end{itemize}

Assume now that $\beta = 0$.
From $\partial_\alpha f_{S(w)}(\alpha, 0) = 0$, 
we get $\alpha = \frac{|w|_1}{n}$.
From $\partial_\beta f_{S(w)}(\alpha, 0) = 0$, 
we get $S_1(w) = \alpha \frac{n(n+1)}{2} = 
\frac{|w|_1(n+1)}{2}$. The word $w$ is fair.

Assume now that the word $w$ is fair.
From $\partial_\alpha f_{S(w)}(\alpha, 0) = 0$, 
we get $\beta n(n+1)= 2(|w|_1-n\alpha)$.
From $\partial_\beta f_{S(w)}(\alpha, 0) = 0$, 
we get $\frac{\beta}{3}n(n+1)(2n+1)=2S_1(w)-\alpha n(n+1)
=(n+1)(|w|_1-n\alpha)$.
Thus $|w|_1-n\alpha = \frac{\beta n(n+1)}{2}= \frac{\beta}{3}n(2n+1)$.
Since $n \geq 2$, $\beta=0$ and $\alpha = \frac{|w|_1}{n}$.
\end{proof}

The web page~\cite{OEISA222955} also contains the following conjecture:
``\textit{A binary word is counted iff it has the same sum of positions of 1's as its reverse, or, equivalently, the same sum of partial sums as its reverse. - Gus Wiseman, Jan 07 2023}".
This can be proved as a direct consequence of results of Section~\ref{subsec_sum_positions}.

\begin{theorem}
A word over $\{a,b\}$ is fair if and only if
it has the same sum of positions of $b$'s than its mirror image.
\end{theorem}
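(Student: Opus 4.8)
The plan is to reduce the statement to the already-established link between the sum of positions of a letter and the binomial coefficient counting an ordered pair of letters, namely Lemma~\ref{L_lien_Sb_et_binom_ab}. Write $S_b(w)$ for the sum of positions of the $b$'s in $w$ as in Section~\ref{subsec_sum_positions}. By definition ``$w$ fair'' means $\binom{w}{ab}=\binom{w}{ba}$, so it suffices to show $S_b(w)-S_b(\widetilde{w})=\binom{w}{ab}-\binom{w}{ba}$, and the theorem follows at once.

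\textbf{Key steps.} First I would apply Lemma~\ref{L_lien_Sb_et_binom_ab} in the binary case to $w$ itself, obtaining $S_b(w)=\frac{|w|_b(|w|_b+1)}{2}+\binom{w}{ab}$. Next I would apply the same lemma to the mirror image $\widetilde{w}$; since $|\widetilde{w}|_b=|w|_b$, this gives $S_b(\widetilde{w})=\frac{|w|_b(|w|_b+1)}{2}+\binom{\widetilde{w}}{ab}$. Then I would invoke the elementary identity $\binom{\widetilde{w}}{ab}=\binom{w}{ba}$ (already noted just before Lemma~\ref{L_lien_Sb'_et_binom_ab}). Subtracting the two displayed equalities, the quadratic terms cancel and I get
\[
S_b(w)-S_b(\widetilde{w})=\binom{w}{ab}-\binom{w}{ba}.
\]
Consequently $S_b(w)=S_b(\widetilde{w})$ if and only if $\binom{w}{ab}=\binom{w}{ba}$, i.e.\ if and only if $w$ is fair.

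\textbf{Alternative route and the ``hard'' part.} One could equally observe that $S_b(\widetilde{w})=S_b'(w)$ and then use Formula~\eqref{eq_lien_entre_sommes}, $S_b(w)+S_b'(w)=|w|_b(|w|+1)$, to see that $S_b(w)=S_b'(w)$ is equivalent to $S_b(w)=\tfrac12|w|_b(|w|+1)$; combining this with Lemma~\ref{L_lien_Sb_et_binom_ab} yields $\binom{w}{ab}=\tfrac12|w|_a|w|_b$, which is the fairness criterion of Lemma~\ref{L_carac_fair_with_value_binom}. Either way the argument is a two-line combination of prior results; the only point deserving a word of care is the passage to the mirror image, where one must recall that reversal exchanges the roles of $ab$ and $ba$ (equivalently, of prefixes ending in $b$ and suffixes beginning with $b$) while preserving the letter counts. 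There is no genuine obstacle here — the content of the statement is entirely carried by Lemma~\ref{L_lien_Sb_et_binom_ab}.
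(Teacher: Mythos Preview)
Your proof is correct and essentially identical to the paper's own argument: apply Lemma~\ref{L_lien_Sb_et_binom_ab} to both $w$ and $\widetilde{w}$, use $\binom{\widetilde{w}}{ab}=\binom{w}{ba}$, and subtract. The paper presents exactly these two displayed equalities and concludes in the same way.
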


\begin{proof}
Since $\binom{\widetilde{w}}{ab} = \binom{w}{ba}$,
by Lemma~\ref{L_lien_Sb_et_binom_ab}, we have
$$S_b(w) = \frac{|w|_b(|w|_b+1)}{2} + \binom{w}{ab}$$
$$S_b(\widetilde{w}) = \frac{|w|_b(|w|_b+1)}{2} + \binom{w}{ba}$$
The theorem follows immediately.
\end{proof}

\section{\label{sec_conclusion}Conclusion and questions}

\subsection{About characterizations of binomial equivalence}

As indicated by the title,
the main purposes  of this paper
are: the presentation and study of the geometrical 
interpretation of the numbers of 
occurrences of subwords $ab$ and $ba$ in a word;
the study of the structure of a $2$-binomial equivalence class of binary words 
and the link with the lattice structure of the set
of partitions of the value $\binom{w}{ab}$; 
a study of the family of binary fair words.
In addition, 
we have listed many characterizations 
of $2$-binomially equivalent words.
Some has been previously stated 
in the context of the study of Parikh or precedence matrices\footnote{Let remember that, in Section~\ref{subsec_storing}, 
we pointed out 
that a Parikh matrix (or a precedence matrix) stores 
all the information to know the number
of occurrences in a binary word of all subwords of length $1$ and $2$}.
This is the case of the characterizations
stated in 
Theorems~\ref{T_equiv_precedence_matrix}, \ref{T_equiv_parikh_matrix_and_sum_positions},
\ref{T_equivalence_reecrite_sim2}
 and~\ref{T_palindromic_amiability}.
Characterization in the binary case of
Theorem~\ref{T_Prodinger_syntactic_congruence_extended}
was stated before this context of Parikh matrices.
Finally the characterizations provided
by Corollary~\ref{cor_charac_2bin_spa} and Lemma~\ref{L_equivalence_sim2_et_equivFair} are new.
Theorem~\ref{T_equiv_precedence_matrix} and \ref{T_Prodinger_syntactic_congruence_extended} are valid
over arbitrary alphabets.
In the context of Parikh matrices, 
many studies have considered generalizations of these characterizations, notably characterizations implying rewriting rules, but they have been made 
essentially on binary and ternary alphabets
and only partials result have been obtained. 
See for instance the article \cite{Mercas_Teh2025TCS}
and its references.  
On a ternary or larger alphabet, having the same Parikh matrix for two words is not equivalent to be $k$-binomially equivalent (whatever is $k$).
Hence it is natural to search for some generalizations of
previous characterizations for words
that are $k$-binomially equivalent ($k \geq 2$) over
arbitrary alphabets (instead of for words that are Parikh equivalent.
This seems to be a new question
but it is probably as difficult as in the context of Parikh matrices.
Another natural question is the existence of a geometrical interpretation for others binomial coefficient of words.

\subsection{About the maximal number of fair factors of a word}

Since any palindrome is fair,
fair words appear as a generalization of palindromes.
This simple remark may open many questions about
generalizing notions studied around palindromes. In this section
and the next one, we 
provide some examples of such questions. 

The word $abbbaab$ was 
mentioned as been one of the smallest non-palindromic fair words.
It contains $9$ fair words:
$\varepsilon$, $a$, $aa$, $b$, $bb$, $bbb$, $baab$, $abbba$ and $abbbaab$ itself. 
This shows that the following result \cite{DroubayJustinPirillo2001TCS} by X.~Droubay, J.~Justin and G.~Pirillo is no longer true while replacing ``palindrome factors" by ``fair factors":
\textit{A word $w$ has at most $|w|+1$ different palindrome factors}. This raise the questions:
What is the maximal numbers of fair words in a word $w$? Is the maximal number of fair words is obtained for rich words, that is, words that have a maximal number of palindromes? And only for rich words? (See, for instance, \cite{Glen_Justin_Widmer_Zamboni2009EJC} for more information on palindromic richness).

Let us quote that  the difference between the maximal numbers of fair and palindromic words is not bounded. Let us give an example. Using for instance the next Lemma~\ref{L_concat_2_fair_words}, it may be observed by induction that $w^k$ is fair for any fair word $w$ and any integer $k \geq 1$. Thus
$(abbbaab)^k$ is fair for any integer $k \geq 1$. It contains $k$ non-palindromic factors that are fair: the words $(abbbaab)^k$, $1 \leq i \leq k$ (it may be also observe that $(abbbaab)^k$ contains only finitely many palindromes since it contains no palindrome of length 6 or more).

Next example shows that the maximal number of fair
words of length $n$ is in $\Theta(n^2)$. Since the number of
palindromes in a word of length $n$ is in $O(n)$,
the maximal difference between the number of fair
words and palindromes is also in $\Theta(n^2)$. 
Let us first observe that:

\begin{lemma}
\label{L_concat_2_fair_words}
If $u$ and $v$ are fairs words over $\{a, b\}$, then 
the word $uv$ is fair
if and only if
$|u|_a|v|_b = |u|_b|v|_a$.
\end{lemma}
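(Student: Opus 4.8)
The plan is to compute $\binom{uv}{ab}$ and $\binom{uv}{ba}$ directly from Formula~\eqref{eq_binom_2_words} and its mirror analogue, using the hypothesis that $u$ and $v$ are already fair. First I would recall that by Lemma~\ref{L_carac_fair_with_value_binom}, $u$ fair means $\binom{u}{ab} = \binom{u}{ba} = \frac{|u|_a|u|_b}{2}$, and likewise $\binom{v}{ab} = \binom{v}{ba} = \frac{|v|_a|v|_b}{2}$. Then by Formula~\eqref{eq_binom_2_words},
\[
\binom{uv}{ab} = \binom{u}{ab} + \binom{v}{ab} + |u|_a|v|_b,
\qquad
\binom{uv}{ba} = \binom{u}{ba} + \binom{v}{ba} + |u|_b|v|_a.
\]
Subtracting and using fairness of $u$ and $v$ to cancel the first four terms, we get
\[
\binom{uv}{ab} - \binom{uv}{ba} = |u|_a|v|_b - |u|_b|v|_a .
\]

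From here the statement is immediate in both directions. If $|u|_a|v|_b = |u|_b|v|_a$, then the right-hand side is $0$, so $\binom{uv}{ab} = \binom{uv}{ba}$, i.e. $uv$ is fair. Conversely, if $uv$ is fair, then $\binom{uv}{ab} - \binom{uv}{ba} = 0$, which forces $|u|_a|v|_b = |u|_b|v|_a$. So the two conditions are equivalent, which is exactly the claim.

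There is essentially no obstacle here: the whole argument is a two-line calculation once Formula~\eqref{eq_binom_2_words} and Lemma~\ref{L_carac_fair_with_value_binom} are in hand. The only thing to be slightly careful about is that Formula~\eqref{eq_binom_2_words} concerns $ab$, so I should either invoke its mirror version for $ba$ explicitly or note that $\binom{uv}{ba} = \binom{\widetilde{v}\,\widetilde{u}}{ab}$ together with $|u|_b|v|_a = |\widetilde{v}|_a|\widetilde{u}|_b$ and the fact that $u$ fair iff $\widetilde{u}$ fair (which follows since $\binom{\widetilde{u}}{ab} = \binom{u}{ba}$ and $\binom{\widetilde{u}}{ba} = \binom{u}{ab}$). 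Either way this is routine. Alternatively, one could argue geometrically: concatenating $u$ and $v$ glues their line representations, and the extra rectangular block contributed to $\leftPart$ has area $|u|_a|v|_b$ while the corresponding block in $\rightPart$ has area $|u|_b|v|_a$; balance of the two halves of $\rect(uv)$ is then equivalent to balance of these two blocks given that $u$ and $v$ are already balanced in this sense.
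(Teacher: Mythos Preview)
Your proof is correct and follows exactly the paper's approach: apply Formula~\eqref{eq_binom_2_words} to both $ab$ and $ba$, subtract, and use the fairness of $u$ and $v$ to cancel all terms except $|u|_a|v|_b - |u|_b|v|_a$. The paper's proof is the same one-line computation, so there is nothing to add.
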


\begin{proof}
This follows from the definition of a fair word, Relation~\eqref{eq_binom_2_words} ($\binom{uv}{ab} = \binom{u}{ab}+\binom{v}{ab}+|u|_a|v|_b$) and the relation obtained exchanging the roles of the letters $a$ and $b$ ($\binom{uv}{ba} = \binom{u}{ba}+\binom{v}{ba}+|u|_b|v|_a$).
\end{proof}

Let recall that the Thue-Morse word is the fixed point of the morphism $\mu$ defined by $\mu(a) = ab$ and
$\mu(b) = ba$. The words $\mu^2(a) = abba$ and
$\mu^2(b)  = baab$ are palindromes and so they are fairs.
An easy induction on the length of $u$ over $\{a, b\}$ using Lemma~\ref{L_concat_2_fair_words} shows that $\mu^2(u)$ is a fair word for any word $u$. So the number of fair words in $\mu^2(w)$
is in $\Omega(\#\text{fact}(w))$ 
where $\text{fact}(w)$ is the set of factors of $w$.

Is is known \cite{Shallit1993GC}, 
that the maximal number of distinct factors 
in a word of length $n$ is $2^{k+1}-1+\left(\frac{n-k+1}{2}\right)$
where $2^k+k-1 \leq n < 2^{k+1}+k$.
This number is in $\Theta(n^2)$.
J.~Shallit  provides examples of
words for which this upper bound is attained \cite{Shallit1993GC}.
From what precedes, for such a word $w$, 
the maximal number of distinct factors 
in $\mu^2(w)$ is also in $\Theta(|\mu^2(w)|)$.
Let remember that the number
of palindromes in $\mu^2(w)$ is bounded by $|\mu^2(w)|+1$.
Hence the maximal difference between 
the numbers of fair factors and of palindromes in a word $\mu^2(u)$ is in $\Theta(|\mu(u)|^2)$.

\subsection{Fair length}

The \textit{palindromic length} of a finite word $w$ was defined \cite{Frid_Puzynina_Zamboni2013AAM} as the minimal number of palindromes occurring in any decomposition of 
$w$ over palindromes. For instance, the palindromic length of $abbbaab$ is 3 since $abbbaab = abbba.a.b = a.bb.baab$ and, moreover, $abbbaab$ is neither a palindrome, neither the concatenation of two palindromes. Similarly we can define the \textit{fair length} of a nonempty word $w$ as the least integer $k$ such that $w = u_1\cdots u_k$ with $u_1$, \ldots, $u_k$ nonempty fair words. For instance, 
the fair length of $abbbaab$ is $1$, and,
the fair length of $abbbaababbabaab$ is $2$ since $abbabaab$ is fair (the palindromic length of this word is $3$). As a consequence of Theorem~\ref{T_balanced}, the fair length of any balanced word is its palindromic length. In addition to the algorithmic question of computing efficiently the fair length of a word, a natural question is to compute for each integer $n$, the maximal fair length of words of length $n$ as was done by O.~Ravsky for palindromic lengths in \cite{Ravsky2003JALC}. In \cite{Frid_Puzynina_Zamboni2013AAM}, 
A.~Frid, S.~Puzynina and L.Q.~Zamboni opened a conjecture concerning 
the palindromic lengths:
\textit{In every infinite word which is not ultimately periodic, the palindromic length of factors (version: of prefixes) is unbounded}. J.~Rukavicka has recently showed that this conjecture is true \cite{Rukavicka2026EJC}.
When replacing ``palindromic length" by ``fair length",
the result is no longer true as we show it now.

\begin{proposition}
\label{P_Thue_Morse_fair_lengths}
The fair lengths of factors of the Thue-Morse word are bounded by $4$.
\end{proposition}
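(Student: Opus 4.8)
The plan combines closure of fair words under suitable concatenations with recognizability of the Thue-Morse substitution and a few explicit fairness identities. Since $\bt=\mu^2(\bt)$, the letters of $\bt$ fall into consecutive length-$4$ blocks, each equal to $\mu^2(a)=abba$ or $\mu^2(b)=baab$; each block is a balanced palindrome, hence fair with Parikh vector proportional to $(1,1)$, so by Lemma~\ref{L_concat_2_fair_words} (used repeatedly) \emph{any} concatenation of blocks from $\{abba,baab\}$ is fair --- in particular $\mu^2(u)$ is fair for every word $u$. It therefore suffices to cut an arbitrary factor $w$ of $\bt$ into at most four nonempty fair pieces.

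Recognizability of the substitution gives $w=x\,\mu^2(v)\,y$ with $v$ a factor of $\bt$, $x$ a (possibly empty) proper suffix of a block and $y$ a (possibly empty) proper prefix of a block; thus $|x|,|y|\le 3$, so $x\in\{\varepsilon,a,b,ba,ab,bba,aab\}$ and $y\in\{\varepsilon,a,b,ab,ba,abb,baa\}$. A finite check shows every binary word of length $\le 3$ has fair length $\le 2$ (the non-fair ones, $ab,ba,abb,aab,bba,baa$, are each a letter times one of $a,b,aa,bb$). So if $\mu^2(v)=\varepsilon$, or $x$ is empty or a single letter, or $y$ is empty or a single letter, the obvious two- or three-block split already yields fair length $\le 4$; there remains the case $\mu^2(v)\neq\varepsilon$ with $x\in\{ba,bba,ab,aab\}$ and $y\in\{ab,abb,ba,baa\}$.

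For that case I would use the four identities: $s\,\mu^2(u)\,p$ is fair for all $u$ whenever $s$ is the length-$k$ suffix and $p$ the length-$k$ prefix of one and the same block type, $k\in\{2,3\}$ --- that is, $(s,p)\in\{(ba,ab),(bba,abb),(ab,ba),(aab,baa)\}$. Each follows at once from Formula~\eqref{eq_binom_3_words}: because the blocks are palindromes, the cross-terms involving $\mu^2(u)$ cancel, so the counts of $ab$ and of $ba$ agree regardless of $u$. If $(x,y)$ is such a ``matched'' pair, $w$ itself is fair. Otherwise, let $\alpha$ be the type of block of which $x$ is a suffix. If $\mu^2(v)$ contains a block $\mu^2(\alpha)$, take the \emph{last} one, say $B_\ell$; then $x\cdot B_1\cdots B_{\ell-1}\cdot(\text{length-}|x|\text{ prefix of }B_\ell)$ is a fair prefix of $w$ by the matching identity, and the remainder $(\text{last }4-|x|\text{ letters of }B_\ell)\cdot B_{\ell+1}\cdots B_r\cdot y$ is a short head followed by at most two blocks of the other type (cube-freeness of $\bt$ forbids a third) followed by $y$, hence has length $\le 13$ and, by a finite check, fair length $\le 3$; so $w$ has fair length $\le 4$. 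If $\mu^2(v)$ contains no block $\mu^2(\alpha)$, then $v$ is a power of the other letter of length $\le 2$, so $|w|\le 14$, and the finitely many such $w$ are verified directly.

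The crux --- and the reason the bound is $4$ and not the $5$ produced by the naive ``suffix $\cdot$ fair $\cdot$ prefix'' split --- is concentrated in this last case: one unit must be saved, and the matched fairness identities together with the cube-freeness bound on the residual tail (plus the associated finite checks) are exactly what achieves it. I expect the only genuine work to be organizing those finite checks and fixing the length threshold below which brute force is invoked; the rest is routine use of the fair-concatenation lemma and of the recognizability of the Thue-Morse substitution.
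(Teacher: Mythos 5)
Your proof is correct and follows essentially the same route as the paper: the decomposition $w=x\,\mu^2(v)\,y$ with $|x|,|y|\le 3$, the fairness of $\mu^2(v)$ via Lemma~\ref{L_concat_2_fair_words}, the ``matched'' suffix/prefix identities making $s\,\mu^2(u)\,p$ fair (which the paper obtains from Lemma~\ref{L_fair_reduction} as $b(a\mu^2(u)a)b$, etc.), and, in the remaining hard case, peeling off a fair prefix at the last block of the type matching $x$ --- which is exactly the paper's factorization $u=vax$ with $x\in\{\varepsilon,b,bb\}$. The finite verifications you defer (that the residual tails have fair length at most $3$, and the short words arising when no matching block exists) do go through, and the paper's own proof defers checks of the same kind at the same level of detail.
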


\begin{proof}[Proof of Proposition~\ref{P_Thue_Morse_fair_lengths}]

Any factor of the Thue-Morse word is in the form $s\mu^2(u)p$ with $u$ a factor of the Thue-Morse word,
$s$ a proper suffix of $\mu^2(a)$ or $\mu^2(b)$ 
($s \in \{\varepsilon, a, ba, bba, b, ab, aab\}$) and 
$p$ a proper prefix of $\mu^2(a)$ or $\mu^2(b)$ 
($p \in \{\varepsilon, a, ab, abb, b, ba, baa\}$).

As already mentioned,
an easy induction on the length of $u$ over $\{a, b\}$ using Lemma~\ref{L_concat_2_fair_words} shows that $\mu^2(u)$ is a fair word. Since $a$, $b$, $bb$, $aa$ are fair words, the fair length of any factor
in $\{\varepsilon, a, ba, bba, b, ab, aab\}\mu^2(u)
\{\varepsilon, a, b\}$ 
or in $\{\varepsilon, a, b\}\mu^2(u)
\{\varepsilon, a, ab, abb, b, ba, baa\}$
is bounded by $4$.

We have to consider the fair length of words
in $\{ba, bba, ab, aab\}\mu^2(u)\{ab, abb, ba, baa\}$.
By Lemma~\ref{fig_2_words}, $a\mu^2(u)a$ and $b\mu^2(u)b$ are fair words, so the fair length of words
in $\{ba, bba\}\mu^2(u)\{ab, abb\}$
and in $\{ab, aab\}\mu^2(u)\{ba, baa\}$ is bounded by $3$.

It remains to consider the fair length of words in
$\{ba, bba\}\mu^2(u)\{ba, baa\}$ and of words in 
$\{ab, aab\}\mu^2(u)\{ab, abb\}$.
Up to an exchange of letters, we only have to consider the
fair length of words in
$\{ba, bba\}\mu^2(u)\{ba, baa\}$.

Consider the word $ba\mu^2(u)ba$. If $u \in \{\varepsilon, b, bb\}$, the fair length is at most $3$. Otherwise
$u = vax$ with $x \in \{\varepsilon, b, bb\}$.
Hence $ba\mu^2(u)ba = ba\mu^2(v)ab.ba\mu^2(x)ba$ and has fair length at most $4$ since by Lemma~\ref{fig_2_words},
$ba\mu^2(v)ab$ is fair and since the fair length of 
$ba\mu^2(x)ba$ is at most $3$ as we have just seen.

Similarly we can check that the fair length of
words $ba\mu^2(u)baa$, $bba\mu^2(u)ba$, $bba\mu^2(u)baa$
is bounded by $4$.
\end{proof}

Bound $4$ in Proposition~\ref{P_Thue_Morse_fair_lengths} is optimal: the factor $aababbaa$ of the Thue-Morse word has fair length 4 (it is the smallest one and all fair words occurring in any of its decompositions are palindromes since they are of length at most $3$).


\subsection{Miscellaneous}

In the context of infinite words,
X.~Droubay and J.~Pirillo introduced the notion of \textit{palindrome complexity} as the function that, given an infinite word $w$ counts, for each integer $n$, the number of distinct palindromes of length $n$ occurring in $w$ \cite{DroubayPirillo1999TCS} (See also \cite{Allouche_Baake_Cassaigne_Damanik2003TCS}).
They have characterized the palindrome complexity 
of Sturmian words, that is, aperiodic infinite binary words whose factors are balanced: they are words having exactly one palindrome of each even length and two of each odd length. It is natural to introduce the notion of \textit{fair complexity} as the function that, given an infinite word $w$ counts, for each integer $n$, the number of distinct fair words of length $n$ occurring in $w$. 
Theorem~\ref{T_balanced} implies that the fair complexity of a Sturmian word is its palindrome complexity.

About studies around infinite words and fairness, let us observe that, in \cite{Cerny2007DAM,Salomaa2007IJFCS},
A.\cerny\ and A.~Salomaa studied another aspect related to fair words in infinite words. They prove (only in the binary case in \cite{Salomaa2007IJFCS}) that: ``\textit{if the first $k+1$ words in the sequence generated by a D0L system over a $k$-letter  alphabet are fair then all words in the sequence are fair}".

To end with questions, perhaps the main problem about 
fair words which stays open is the original {\cerny}'s question \cite{Cerny2009JALC}: count the number of fair words over an arbitrary alphabet.

\bigskip

\noindent
\textbf{Acknowledgements.}
The author would like to thanks Michel Rigo for his invitation to the ``Workshop on words and subwords" he organized at Liège in May 2025 (12-16th): this invitation
encouraged the author starting the study of this paper 
whose results were partially presented at the workshop.

\addcontentsline{toc}{section}{Bibliography}
\bibliographystyle{plainnat} 
\bibliography{around_fair_words}

\begin{thebibliography}{45}
\providecommand{\natexlab}[1]{#1}
\providecommand{\url}[1]{\texttt{#1}}
\expandafter\ifx\csname urlstyle\endcsname\relax
  \providecommand{\doi}[1]{doi: #1}\else
  \providecommand{\doi}{doi: \begingroup \urlstyle{rm}\Url}\fi

\bibitem[Allouche et~al.(2003)Allouche, Baake, Cassaigne, and
  Damanik]{Allouche_Baake_Cassaigne_Damanik2003TCS}
J.-P. Allouche, M.~Baake, J.~Cassaigne, and D.~Damanik.
\newblock Palindrome complexity.
\newblock \emph{Theoret. Comput. Sci.}, 292:\penalty0 9--31, 2003.
\newblock \doi{https://doi.org/10.1016/S0304-3975(01)00212-2}.

\bibitem[Atanasiu et~al.(2002)Atanasiu, Mart{\'{\i}}n{-}Vide, and
  Mateescu]{Atanasiu_Martin-Vide_Mateescu2002FI}
A.~Atanasiu, C.~Mart{\'{\i}}n{-}Vide, and A.~Mateescu.
\newblock On the injectivity of the {P}arikh matrix mapping.
\newblock \emph{Fundam. Inform.}, 49\penalty0 (4):\penalty0 289--299, 2002.
\newblock \doi{https://doi.org/10.3233/FUN-2002-49401}.

\bibitem[Brylawski(1973)]{Brylawski1973DM}
T.~Brylawski.
\newblock The lattice of integer partitions.
\newblock \emph{Discrete Math.}, 6\penalty0 (3):\penalty0 201--219, 1973.
\newblock \doi{https://doi.org/10.1016/0012-365X(73)90094-0}.

\bibitem[\v{C}ern{\'{y}}(2007)]{Cerny2007DAM}
A.~\v{C}ern{\'{y}}.
\newblock On fairness of {D0L} systems.
\newblock \emph{Discrete Appl. Math.}, 155\penalty0 (13):\penalty0 1769--1773,
  2007.
\newblock \doi{https://doi.org/10.1016/J.DAM.2007.04.004}.

\bibitem[\v{C}ern{\'{y}}(2008)]{Cerny2008IJFCS}
A.~\v{C}ern{\'{y}}.
\newblock On subword symmetry of words.
\newblock \emph{Int. J. Found. Comput. Sci.}, 19\penalty0 (1):\penalty0
  243--250, 2008.
\newblock \doi{https://doi.org/10.1142/S0129054108005656}.

\bibitem[\v{C}ern{\'{y}}(2009)]{Cerny2009JALC}
A.~\v{C}ern{\'{y}}.
\newblock On fair words.
\newblock \emph{J. Autom. Lang. Comb.}, 14\penalty0 (2):\penalty0 163--174,
  2009.
\newblock \doi{https://doi.org/10.25596/JALC-2009-163}.

\bibitem[Droubay and Pirillo(1999)]{DroubayPirillo1999TCS}
X.~Droubay and G.~Pirillo.
\newblock Palindromes and {S}turmian words.
\newblock \emph{Theoret. Comput. Sci.}, 223:\penalty0 73--85, 1999.
\newblock \doi{https://doi.org/10.1016/S0304-3975(97)00188-6}.

\bibitem[Droubay et~al.(2001)Droubay, Justin, and
  Pirillo]{DroubayJustinPirillo2001TCS}
X.~Droubay, J.~Justin, and G.~Pirillo.
\newblock Episturmian words and some constructions of {de Luca} and {Rauzy}.
\newblock \emph{Theoret. Comput. Sci.}, 255:\penalty0 539--553, 2001.
\newblock \doi{https://doi.org/10.1016/S0304-3975(99)00320-5}.

\bibitem[Dud{\'\i}k and Schulman(2003)]{Dudik_Schulman2003JCTA}
M.~Dud{\'\i}k and L.~J. Schulman.
\newblock Reconstruction from subsequences.
\newblock \emph{J. Combin. Theory Ser. A}, 103\penalty0 (2):\penalty0 337--348,
  2003.
\newblock \doi{https://doi.org/10.1016/S0097-3165(03)00103-1}.

\bibitem[Fleischmann et~al.(2023)Fleischmann, Haschke, H{\"{o}}fer, Huch,
  Mayrock, and Nowotka]{Fleischmann_Haschk_Hofer_Huch_Mayrock_Nowotka2023TCS}
P.~Fleischmann, L.~Haschke, J.~H{\"{o}}fer, A.~Huch, A.~Mayrock, and
  D.~Nowotka.
\newblock Nearly \emph{k}-universal words - investigating a part of {Simon's}
  congruence.
\newblock \emph{Theoret. Comput. Sci.}, 974:\penalty0 114113, 2023.
\newblock \doi{https://doi.org/10.1016/j.tcs.2023.114113}.

\bibitem[Foss\'e and Richomme(2004)]{Fosse_Richomme2004IPL}
S.~Foss\'e and G.~Richomme.
\newblock Some characterizations of {P}arikh matrix equivalent binary words.
\newblock \emph{Inform. Process. Lett.}, 92\penalty0 (2):\penalty0 77--82,
  2004.
\newblock \doi{https://doi.org/10.1016/j.ipl.2004.06.011}.

\bibitem[Frid et~al.(2013)Frid, Puzynina, and
  Zamboni]{Frid_Puzynina_Zamboni2013AAM}
A.~E. Frid, S.~Puzynina, and L.~Q. Zamboni.
\newblock On palindromic factorization of words.
\newblock \emph{Adv. Appl. Math.}, 50\penalty0 (5):\penalty0 737--748, 2013.
\newblock \doi{https://doi.org/10.1016/J.AAM.2013.01.002}.

\bibitem[Glen et~al.(2009)Glen, Justin, Widmer, and
  Zamboni]{Glen_Justin_Widmer_Zamboni2009EJC}
A.~Glen, J.~Justin, S.~Widmer, and L.~Q. Zamboni.
\newblock Palindromic richness.
\newblock \emph{European J. Combin.}, 30\penalty0 (2):\penalty0 510--531, 2009.
\newblock \doi{https://doi.org/10.1016/J.EJC.2008.04.006}.

\bibitem[Golafshan and Rigo(2025)]{Golafshan_Rigo2025words}
M.~Golafshan and M.~Rigo.
\newblock Binomial coefficients of multidimensional arrays.
\newblock In G.~Gamard and J.~Leroy, editors, \emph{Combinatorics on Words},
  volume 15729 of \emph{Lect. Notes Comput. Sci.}, pages 139--152, Cham, 2025.
  Springer Nature Switzerland.
\newblock \doi{https://doi.org/10.1007/978-3-031-97548-6\_13}.

\bibitem[Golm et~al.(2022)Golm, Nahvi, Gabrys, and
  Milenkovic]{Golm_Nahvi_Gabrys_Milenkovic2022ISIT}
R.~Golm, M.~Nahvi, R.~Gabrys, and O.~Milenkovic.
\newblock The gapped $k$-deck problem.
\newblock In \emph{2022 IEEE International Symposium on Information Theory
  (ISIT)}, pages 49--54, 2022.
\newblock \doi{https://doi.org/10.1109/ISIT50566.2022.9834537}.

\bibitem[Greene and Kleitman(1986)]{Greene_Kleitman1986EJC}
C.~Greene and D.~J. Kleitman.
\newblock Longest chains in the lattice of integer partitions ordered by
  majorization.
\newblock \emph{European J. Combin}, 7:\penalty0 1--10, 1986.
\newblock \doi{https://doi.org/10.1016/S0195-6698(86)80013-0}.

\bibitem[Hardin((Accessed on 2025/06/29))]{OEISA222955}
R.~H. Hardin.
\newblock Sequence {A222955}.
\newblock The Online Encyclopedia of Integer Sequences, (Accessed on
  2025/06/29).
\newblock URL \url{https://oeis.org/A222955}.

\bibitem[Lejeune et~al.(2020{\natexlab{a}})Lejeune, Leroy, and
  Rigo]{Lejeune_Leroy_Rigo2020JCTA}
M.~Lejeune, J.~Leroy, and M.~Rigo.
\newblock Computing the $k$-binomial complexity of the {Thue--Morse} word.
\newblock \emph{J. Combin. Theory Ser. A}, 176:\penalty0 105284,
  2020{\natexlab{a}}.
\newblock \doi{https://doi.org/10.1016/j.jcta.2020.105284}.

\bibitem[Lejeune et~al.(2020{\natexlab{b}})Lejeune, Rigo, and
  Rosenfeld]{Lejeune_Rigo_Rosenfeld2020IJAC}
M.~Lejeune, M.~Rigo, and M.~Rosenfeld.
\newblock The binomial equivalence classes of finite words.
\newblock \emph{Int. J. Algebra Comput.}, 30\penalty0 (07):\penalty0
  1375--1397, 2020{\natexlab{b}}.
\newblock \doi{https://doi.org/10.1142/S0218196720500459}.

\bibitem[Leroy et~al.(2016)Leroy, Rigo, and
  Stipulanti]{Leroy_Rigo_Stipulanti2016AAM}
J.~Leroy, M.~Rigo, and M.~Stipulanti.
\newblock Generalized {Pascal} triangle for binomial coefficients of words.
\newblock \emph{Adv. Appl. Math.}, 80:\penalty0 24--47, 2016.
\newblock \doi{https://doi.org/10.1016/j.aam.2016.04.006}.

\bibitem[Lothaire(1983)]{Lothaire1983book}
M.~Lothaire.
\newblock \emph{Combinatorics on Words}, volume~17 of \emph{Encyclopedia of
  Mathematics and its Applications}.
\newblock Addison-Wesley, 1983.
\newblock Reprinted in the {\em Cambridge Mathematical Library}, Cambridge
  University Press, UK, 1997.

\bibitem[Lothaire(2002)]{Lothaire2002book}
M.~Lothaire.
\newblock \emph{Algebraic Combinatorics on Words}, volume~90 of
  \emph{Encyclopedia of Mathematics and its Applications}.
\newblock Cambridge University Press, 2002.

\bibitem[L{\"{u}} et~al.(2024)L{\"{u}}, Chen, Wen, and
  Wu]{Lu_Chen_Wen_Wu2024TCS}
X.{-}T. L{\"{u}}, J.~Chen, Z.{-}X. Wen, and W.~Wu.
\newblock On the $2$-binomial complexity of the generalized {Thue-Morse} words.
\newblock \emph{Theoret. Comput. Sci.}, 986, 2024.
\newblock \doi{https://doi.org/10.1016/J.TCS.2023.114342}.

\bibitem[Manvel et~al.(1991)Manvel, Meyerowitz, Schwenk, Smith, and
  Stockmeyer]{Manvel_Meyerowitz_Schwenk_Smith_Stockmeyer1991DM}
B.~Manvel, A.~Meyerowitz, A.~Schwenk, K.~Smith, and P.~Stockmeyer.
\newblock Reconstruction of sequences.
\newblock \emph{Discrete Math.}, 94\penalty0 (3):\penalty0 209--219, 1991.
\newblock \doi{https://doi.org/10.1016/0012-365X(91)90026-X}.

\bibitem[Mateescu and Salomaa(2004)]{Mateescu_Salomaa2004IJFCS}
A.~Mateescu and A.~Salomaa.
\newblock Matrix indicators for subword occurrences and ambiguity.
\newblock \emph{Int. J. Found. Comput. Sci.}, 15\penalty0 (2):\penalty0
  277--292, 2004.
\newblock \doi{https://doi.org/10.1142/S0129054104002418}.

\bibitem[Mateescu et~al.(2001)Mateescu, Salomaa, Salomaa, and
  Yu]{Mateescu_Salomaa_Salomaa_Yu2001TIA}
A.~Mateescu, A.~Salomaa, K.~Salomaa, and S.~Yu.
\newblock A sharpening of the {P}arikh mapping.
\newblock \emph{{RAIRO} Theor. Inform. Appl.}, 35\penalty0 (6):\penalty0
  551--564, 2001.
\newblock \doi{https://doi.org/10.1051/ITA:2001131}.

\bibitem[Mathew et~al.(2019)Mathew, Thomas, Bera, and
  Subramanian]{Mathew_Thomas_Bera_Subramanian2019AAM}
L.~Mathew, N.~Thomas, S.~Bera, and K.~G. Subramanian.
\newblock Some results on {P}arikh word representable graphs and partitions.
\newblock \emph{Adv. Appl. Math.}, 107:\penalty0 102--115, 2019.
\newblock \doi{https://doi.org/10.1016/J.AAM.2019.02.009}.

\bibitem[Merca\c{s} and Teh(2025)]{Mercas_Teh2025TCS}
R.~Merca\c{s} and W.~C. Teh.
\newblock Ternary is still good for {P}arikh matrices.
\newblock \emph{Theoret. Comput. Sci.}, 1055:\penalty0 115489 (10 pages), 2025.
\newblock \doi{https://doi.org/10.1016/j.tcs.2025.115489}.

\bibitem[Pin and Silva(2014)]{Pin_Silva2014EJC}
J.{-}{\'{E}}. Pin and P.~V. Silva.
\newblock A noncommutative extension of {M}ahler's theorem on interpolation
  series.
\newblock \emph{European J. Combin.}, 36:\penalty0 564--578, 2014.
\newblock \doi{https://doi.org/10.1016/j.ejc.2013.09.009}.

\bibitem[Prodinger(1979)]{Prodinger1979DM}
H.~Prodinger.
\newblock On a generalization of the {Dyck}-language over a two letter
  alphabet.
\newblock \emph{Discrete Math.}, 28\penalty0 (3):\penalty0 269--276, 1979.
\newblock \doi{https://doi.org/10.1016/0012-365X(79)90134-1}.

\bibitem[Ravsky(2003)]{Ravsky2003JALC}
O.~Ravsky.
\newblock On the palindromic decomposition of binary words.
\newblock \emph{J. Autom. Lang. Comb.}, 8\penalty0 (1):\penalty0 75--83, 2003.
\newblock \doi{https://doi.org/10.25596/JALC-2003-075}.

\bibitem[Renard et~al.(2025{\natexlab{a}})Renard, Rigo, and
  Whiteland]{Renard_Rigo_Whiteland2025DM}
A.~Renard, M.~Rigo, and M.~A. Whiteland.
\newblock $q$-{P}arikh matrices and $q$-deformed binomial coefficients of
  words.
\newblock \emph{Discrete Math.}, 384:\penalty0 1--21, 2025{\natexlab{a}}.
\newblock \doi{https://doi.org/10.1016/j.disc.2024.114381}.

\bibitem[Renard et~al.(2025{\natexlab{b}})Renard, Rigo, and
  Whiteland]{Renard_Rigo_Whiteland2025JAC}
A.~Renard, M.~Rigo, and M.~A. Whiteland.
\newblock Introducing $q$-deformed binomial coefficients of words.
\newblock \emph{J. Algebr. Comb.}, 61:\penalty0 article 25, 2025{\natexlab{b}}.
\newblock \doi{https://doi.org/10.1007/s10801-025-01384-9}.

\bibitem[Richomme and Rosenfeld(2023)]{Richomme_Rosenfeld2023STACS}
G.~Richomme and M.~Rosenfeld.
\newblock Reconstructing words using queries on subwords or factors.
\newblock In \emph{40th International Symposium on Theoretical Aspects of
  Computer Science (STACS 2023)}, volume 254 of \emph{LIPIcs}, pages
  52:1--52:15, 2023.
\newblock \doi{https://doi.org/10.4230/LIPIcs.STACS.2023.52}.

\bibitem[Rigo and Salimov(2015)]{Rigo_Salimov2015TCS}
M.~Rigo and P.~Salimov.
\newblock Another generalization of abelian equivalence: Binomial complexity of
  infinite words.
\newblock \emph{Theoret. Comput. Sci.}, 601:\penalty0 47--57, 2015.
\newblock \doi{https://doi.org/10.1016/j.tcs.2015.07.025}.

\bibitem[Rigo et~al.(2023)Rigo, Stipulanti, and
  Whiteland]{Rigo_Stipulanti_Whiteland2024ISIT}
M.~Rigo, M.~Stipulanti, and M.~A. Whiteland.
\newblock Gapped binomial complexities in sequences.
\newblock In \emph{{IEEE} International Symposium on Information Theory, {ISIT}
  2023, Taipei, Taiwan, June 25-30, 2023}, pages 1294--1299. {IEEE}, 2023.
\newblock \doi{https://doi.org/10.1109/isit54713.2023.10206676}.

\bibitem[Rigo et~al.(2024)Rigo, Stipulanti, and
  Whiteland]{Rigo_Stipulanti_Whiteland2024EJC}
M.~Rigo, M.~Stipulanti, and M.~A. Whiteland.
\newblock Characterizations of families of morphisms and words via binomial
  complexities.
\newblock \emph{European J. Combin.}, 118, 2024.
\newblock \doi{https://doi.org/10.1016/J.EJC.2024.103932}.

\bibitem[Rukavicka(2026)]{Rukavicka2026EJC}
J.~Rukavicka.
\newblock Palindromic length of infinite aperiodic words.
\newblock \emph{European J. Combin.}, 131:\penalty0 104237 (10 pages), 2026.
\newblock \doi{https://doi.org/10.1016/j.ejc.2025.104237}.

\bibitem[Salomaa(2003)]{Salomaa2003BEATCS}
A.~Salomaa.
\newblock Counting (scattered) subwords.
\newblock \emph{Bull. {EATCS}}, 81:\penalty0 165--179, 2003.

\bibitem[Salomaa(2007{\natexlab{a}})]{Salomaa2007FI}
A.~Salomaa.
\newblock Subword balance in binary words, languages and sequences.
\newblock \emph{Fund. Inform.}, 75\penalty0 (1-4):\penalty0 469--482,
  2007{\natexlab{a}}.
\newblock \doi{https://doi.org/10.3233/FUN-2007-751-427}.

\bibitem[Salomaa(2007{\natexlab{b}})]{Salomaa2007IJFCS}
A.~Salomaa.
\newblock Comparing subword occurrences in binary {D0L} sequences.
\newblock \emph{Int. J. Found. Comput. Sci.}, 18\penalty0 (6):\penalty0
  1395--1406, 2007{\natexlab{b}}.
\newblock \doi{https://doi.org/10.1142/S0129054107005431}.

\bibitem[Shallit(1993)]{Shallit1993GC}
J.~Shallit.
\newblock On the maximum number of distinct factors of a binary string.
\newblock \emph{Graphs Combin.}, 9:\penalty0 197--200, 1993.
\newblock \doi{https://doi.org/10.1007/BF02988306}.

\bibitem[Teh(2015)]{Teh2015IJFCS}
W.~C. Teh.
\newblock On core words and the {P}arikh matrix mapping.
\newblock \emph{Int. J. Found. Comput. Sci.}, 26\penalty0 (1):\penalty0
  123--142, 2015.
\newblock \doi{https://doi.org/10.1142/S0129054115500069}.

\bibitem[Teh and Kwa(2015)]{Teh_Kwa2015TCS}
W.~C. Teh and K.~H. Kwa.
\newblock Core words and {P}arikh matrices.
\newblock \emph{Theoret. Comput. Sci.}, 582:\penalty0 60--69, 2015.
\newblock \doi{https://doi.org/10.1016/J.TCS.2015.03.037}.

\bibitem[{Wikipedia contributors}(2025)]{wiki_least_squares}
{Wikipedia contributors}.
\newblock Least squares --- {Wikipedia}{,} the free encyclopedia, 2025.
\newblock URL
  \url{https://en.wikipedia.org/w/index.php?title=Least_squares&oldid=1313216997}.
\newblock [Online; accessed 29-June-2025].

\end{thebibliography}

\pagebreak

\section*{Appendix}
\addcontentsline{toc}{section}{Appendix}

\subsection*{A. Computing the number of occurrences of \texorpdfstring{$ab$}{ab}}
\addcontentsline{toc}{subsection}{A. Computing the number of occurrences of \texorpdfstring{$ab$}{ab}}

The next pictures illustrate the use of Equation~\eqref{eq_calcul_binom_ab_par_prefixes} ($
\binom{w}{ab} = \sum_{pb \text{~prefix of~} w} |p|_a$) to compute $\binom{w}{ab}$.

\begin{center}
\begin{figure}[!ht]
\begin{tikzpicture}[scale=0.7]
\begin{scope}
\draw (0,0) -- (0,6);
\draw (0,6) -- (7,6);
\draw (7,6) -- (7,0);
\draw (7,0) -- (0,0);
\draw [dashed] (0,0) grid (2,1);
\draw [red] (0,0) -- (1,0) node[midway,below]{a};
\draw [red] (1,0) -- (2,0) node[midway,below]{a};
\draw [blue] (2,0) -- (2,1) node[midway,right]{b};
\draw [red] (2,1) -- (3,1) node[midway,below]{a};
\draw [red] (3,1) -- (4,1) node[midway,below]{a};
\draw [blue] (4,1) -- (4,2) node[midway,right]{b};
\draw [blue] (4,2) -- (4,3) node[midway,right]{b};
\draw [red] (4,3) -- (5,3) node[midway,below]{a};
\draw [blue] (5,3) -- (5,4) node[midway,right]{b};
\draw [red] (5,4) -- (6,4) node[midway,below]{a};
\draw [blue] (6,4) -- (6,5) node[midway,right]{b};
\draw [blue] (6,5) -- (6,6) node[midway,right]{b};
\draw [red] (6,6) -- (7,6) node[midway,below]{a};
\end{scope}
\end{tikzpicture}
~~
\begin{tikzpicture}[scale=0.7]
\begin{scope}
\draw (0,0) -- (0,6);
\draw (0,6) -- (7,6);
\draw (7,6) -- (7,0);
\draw (7,0) -- (0,0);
\draw [dashed] (0,0) grid (2,1);
\draw [dashed] (0,1) grid (4,2);
\draw [red] (0,0) -- (1,0) node[midway,below]{a};
\draw [red] (1,0) -- (2,0) node[midway,below]{a};
\draw [blue] (2,0) -- (2,1) node[midway,right]{b};
\draw [red] (2,1) -- (3,1) node[midway,below]{a};
\draw [red] (3,1) -- (4,1) node[midway,below]{a};
\draw [blue] (4,1) -- (4,2) node[midway,right]{b};
\draw [blue] (4,2) -- (4,3) node[midway,right]{b};
\draw [red] (4,3) -- (5,3) node[midway,below]{a};
\draw [blue] (5,3) -- (5,4) node[midway,right]{b};
\draw [red] (5,4) -- (6,4) node[midway,below]{a};
\draw [blue] (6,4) -- (6,5) node[midway,right]{b};
\draw [blue] (6,5) -- (6,6) node[midway,right]{b};
\draw [red] (6,6) -- (7,6) node[midway,below]{a};
\end{scope}
\end{tikzpicture}
~~
\begin{tikzpicture}[scale=0.7]
\begin{scope}
\draw (0,0) -- (0,6);
\draw (0,6) -- (7,6);
\draw (7,6) -- (7,0);
\draw (7,0) -- (0,0);
\draw [dashed] (0,0) grid (2,1);
\draw [dashed] (0,1) grid (4,2);
\draw [dashed] (0,2) grid (4,3);
\draw [red] (0,0) -- (1,0) node[midway,below]{a};
\draw [red] (1,0) -- (2,0) node[midway,below]{a};
\draw [blue] (2,0) -- (2,1) node[midway,right]{b};
\draw [red] (2,1) -- (3,1) node[midway,below]{a};
\draw [red] (3,1) -- (4,1) node[midway,below]{a};
\draw [blue] (4,1) -- (4,2) node[midway,right]{b};
\draw [blue] (4,2) -- (4,3) node[midway,right]{b};
\draw [red] (4,3) -- (5,3) node[midway,below]{a};
\draw [blue] (5,3) -- (5,4) node[midway,right]{b};
\draw [red] (5,4) -- (6,4) node[midway,below]{a};
\draw [blue] (6,4) -- (6,5) node[midway,right]{b};
\draw [blue] (6,5) -- (6,6) node[midway,right]{b};
\draw [red] (6,6) -- (7,6) node[midway,below]{a};
\end{scope}
\end{tikzpicture}
\caption{\label{comput_binom_ab_etapes_1_a_3}Three first steps computing $\binom{w}{ \ca\ca\cb\ca\ca\cb\cb\ca\cb\ca\cb\cb\ca}$}
\end{figure}

\begin{figure}[!ht]
\begin{tikzpicture}[scale=0.7]
\begin{scope}
\draw (0,0) -- (0,6);
\draw (0,6) -- (7,6);
\draw (7,6) -- (7,0);
\draw (7,0) -- (0,0);
\draw [dashed] (0,0) grid (2,1);
\draw [dashed] (0,1) grid (4,2);
\draw [dashed] (0,2) grid (4,3);
\draw [dashed] (0,3) grid (5,4);
\draw [red] (0,0) -- (1,0) node[midway,below]{a};
\draw [red] (1,0) -- (2,0) node[midway,below]{a};
\draw [blue] (2,0) -- (2,1) node[midway,right]{b};
\draw [red] (2,1) -- (3,1) node[midway,below]{a};
\draw [red] (3,1) -- (4,1) node[midway,below]{a};
\draw [blue] (4,1) -- (4,2) node[midway,right]{b};
\draw [blue] (4,2) -- (4,3) node[midway,right]{b};
\draw [red] (4,3) -- (5,3) node[midway,below]{a};
\draw [blue] (5,3) -- (5,4) node[midway,right]{b};
\draw [red] (5,4) -- (6,4) node[midway,below]{a};
\draw [blue] (6,4) -- (6,5) node[midway,right]{b};
\draw [blue] (6,5) -- (6,6) node[midway,right]{b};
\draw [red] (6,6) -- (7,6) node[midway,below]{a};
\end{scope}
\end{tikzpicture}
~~
\begin{tikzpicture}[scale=0.7]
\begin{scope}
\draw (0,0) -- (0,6);
\draw (0,6) -- (7,6);
\draw (7,6) -- (7,0);
\draw (7,0) -- (0,0);
\draw [dashed] (0,0) grid (2,1);
\draw [dashed] (0,1) grid (4,2);
\draw [dashed] (0,2) grid (4,3);
\draw [dashed] (0,3) grid (5,4);
\draw [dashed] (0,4) grid (6,5);
\draw [red] (0,0) -- (1,0) node[midway,below]{a};
\draw [red] (1,0) -- (2,0) node[midway,below]{a};
\draw [blue] (2,0) -- (2,1) node[midway,right]{b};
\draw [red] (2,1) -- (3,1) node[midway,below]{a};
\draw [red] (3,1) -- (4,1) node[midway,below]{a};
\draw [blue] (4,1) -- (4,2) node[midway,right]{b};
\draw [blue] (4,2) -- (4,3) node[midway,right]{b};
\draw [red] (4,3) -- (5,3) node[midway,below]{a};
\draw [blue] (5,3) -- (5,4) node[midway,right]{b};
\draw [red] (5,4) -- (6,4) node[midway,below]{a};
\draw [blue] (6,4) -- (6,5) node[midway,right]{b};
\draw [blue] (6,5) -- (6,6) node[midway,right]{b};
\draw [red] (6,6) -- (7,6) node[midway,below]{a};
\end{scope}
\end{tikzpicture}
~~
\begin{tikzpicture}[scale=0.7]
\begin{scope}
\draw (0,0) -- (0,6);
\draw (0,6) -- (7,6);
\draw (7,6) -- (7,0);
\draw (7,0) -- (0,0);
\draw [dashed] (0,0) grid (2,1);
\draw [dashed] (0,1) grid (4,2);
\draw [dashed] (0,2) grid (4,3);
\draw [dashed] (0,3) grid (5,4);
\draw [dashed] (0,4) grid (6,5);
\draw [dashed] (0,5) grid (6,6);
\draw [red] (0,0) -- (1,0) node[midway,below]{a};
\draw [red] (1,0) -- (2,0) node[midway,below]{a};
\draw [blue] (2,0) -- (2,1) node[midway,right]{b};
\draw [red] (2,1) -- (3,1) node[midway,below]{a};
\draw [red] (3,1) -- (4,1) node[midway,below]{a};
\draw [blue] (4,1) -- (4,2) node[midway,right]{b};
\draw [blue] (4,2) -- (4,3) node[midway,right]{b};
\draw [red] (4,3) -- (5,3) node[midway,below]{a};
\draw [blue] (5,3) -- (5,4) node[midway,right]{b};
\draw [red] (5,4) -- (6,4) node[midway,below]{a};
\draw [blue] (6,4) -- (6,5) node[midway,right]{b};
\draw [blue] (6,5) -- (6,6) node[midway,right]{b};
\draw [red] (6,6) -- (7,6) node[midway,below]{a};
\end{scope}
\end{tikzpicture}

\caption{\label{fig_comput_binom_ab_etapes_4_a_6}Three last steps computing $\binom{w}{ \ca\ca\cb\ca\ca\cb\cb\ca\cb\ca\cb\cb\ca}$}
\end{figure}
\end{center}

\pagebreak

\subsection*{B. Partitions associated with words}
\addcontentsline{toc}{subsection}{B. Partitions associated with words}

The next figures shows the partitions associated with words $w$ such that $2 \leq \binom{w}{ab} \leq 5$.

\medskip


\begin{figure}[!ht]
\begin{center}
\begin{tikzpicture}[scale=0.4]
\begin{scope}
\draw [blue] (0,0) -- (0,4) node[midway,right]{$b^*$};
\draw [red] (1,6) -- (7,6) node[midway,below]{$a^*$};
\draw [dashed] (0,4) grid (1,6) ;
\draw [red] (0,4) -- (1,4) node[midway,below]{a};
\draw [blue] (1,4) -- (1,5) node[midway,right]{b};
\draw [blue] (1,5) -- (1,6) node[midway,right]{b};
\end{scope}
\end{tikzpicture}
~~~
\begin{tikzpicture}[scale=0.4]
\begin{scope}
\draw [blue] (0,0) -- (0,5) node[midway,right]{$b^*$};
\draw [red] (2,6) -- (7,6) node[midway,below]{$a^*$};
\draw [dashed] (0,5) grid (2,6) ;
\draw [red] (0,5) -- (1,5) node[midway,below]{a};
\draw [red] (1,5) -- (2,5) node[midway,below]{a};
\draw [blue] (2,5) -- (2,6) node[midway,right]{b};
\end{scope}
\end{tikzpicture}
\end{center}
\caption{\label{fig_card_2}Partitions when $\binom{w}{ab} = 2$}
\end{figure}

\begin{figure}[!ht]
\begin{center}
\begin{tikzpicture}[scale=0.4]
\begin{scope}
\draw [blue] (0,0) -- (0,3) node[midway,right]{$b^*$};
\draw [red] (1,6) -- (7,6) node[midway,below]{$a^*$};
\draw [dashed] (0,3) grid (1,6) ;
\draw [red] (0,3) -- (1,3) node[midway,below]{a};
\draw [blue] (1,3) -- (1,4) node[midway,right]{b};
\draw [blue] (1,4) -- (1,5) node[midway,right]{b};
\draw [blue] (1,5) -- (1,6) node[midway,right]{b};
\end{scope}
\end{tikzpicture}
~~~
\begin{tikzpicture}[scale=0.4]
\begin{scope}
\draw [blue] (0,0) -- (0,4) node[midway,right]{$b^*$};
\draw [red] (0,4) -- (1,4) node[midway,below]{a};
\draw [blue] (1,4) -- (1,5) node[midway,right]{b};
\draw [red] (1,5) -- (2,5) ;
\draw [red] (1.6,4.35) node[right]{a};
\draw [blue] (2,5) -- (2,6) node[midway,right]{b};
\draw [red] (2,6) -- (7,6) node[midway,below]{$a^*$};
\draw [dashed] (0,4) grid (1,6) ;
\draw [dashed] (1,5) grid (2,6) ;
\end{scope}
\end{tikzpicture}
~~~
\begin{tikzpicture}[scale=0.4]
\begin{scope}
\draw [blue] (0,0) -- (0,5) node[midway,right]{$b^*$};
\draw [red] (3,6) -- (7,6) node[midway,below]{$a^*$};
\draw [dashed] (0,5) grid (3,6) ;
\draw [red] (0,5) -- (1,5) node[midway,below]{a};
\draw [red] (1,5) -- (2,5) node[midway,below]{a};
\draw [red] (2,5) -- (3,5) node[midway,below]{a};
\draw [blue] (3,5) -- (3,6) node[midway,right]{b};
\end{scope}
\end{tikzpicture}
\end{center}
\caption{\label{fig_card_3}Partitions when $\binom{w}{ab} = 3$}

\end{figure}

\begin{figure}[!ht]
\begin{center}
\begin{tikzpicture}[scale=0.4]
\begin{scope}
\draw [dashed] (0,0) grid (1,4) ;
\draw [red] (0,0) -- (1,0) ;
\draw [blue] (1,0) -- (1,1) ;
\draw [blue] (1,0) -- (1,2) ;
\draw [blue] (1,0) -- (1,3) ;
\draw [blue] (1,0) -- (1,4) ;
\draw [red] (1,4) -- (4,4) ;
\end{scope}
\end{tikzpicture}
~~~
\begin{tikzpicture}[scale=0.4]
\begin{scope}
\draw [dashed] (0,1) grid (1,4) ;
\draw [dashed] (1,3) grid (2,4) ;
\draw [blue] (0,0) -- (0,1) ;
\draw [red] (0,1) -- (1,1) ;
\draw [blue] (1,1) -- (1,2) ;
\draw [blue] (1,1) -- (1,3) ;
\draw [red] (1,3) -- (2,3) ;
\draw [blue] (2,3) -- (2,4) ;
\draw [red] (2,4) -- (4,4) ;
\end{scope}
\end{tikzpicture}
~~~
\begin{tikzpicture}[scale=0.4]
\begin{scope}
\draw [dashed] (0,2) grid (2,4) ;
\draw [blue] (0,0) -- (0,2) ;
\draw [red] (0,2) -- (2,2) ;
\draw [blue] (2,2) -- (2,4) ;
\draw [red] (2,4) -- (4,4) ;
\end{scope}
\end{tikzpicture}
~~~
\begin{tikzpicture}[scale=0.4]
\begin{scope}
\draw [dashed] (0,2) grid (1,4) ;
\draw [dashed] (1,3) grid (3,4) ;
\draw [blue] (0,0) -- (0,2) ;
\draw [red] (0,2) -- (1,2) ;
\draw [blue] (1,2) -- (1,3) ;
\draw [red] (1,3) -- (3,3) ;
\draw [blue] (3,3) -- (3,4) ;
\draw [red] (3,4) -- (4,4) ;
\end{scope}
\end{tikzpicture}
~~~
\begin{tikzpicture}[scale=0.4]
\begin{scope}
\draw [dashed] (0,3) grid (4,4) ;
\draw [blue] (0,0) -- (0,3) ;
\draw [red] (0,3) -- (4,3) ;
\draw [blue] (4,3) -- (4,4) ;
\end{scope}
\end{tikzpicture}
\end{center}
\caption{\label{fig_card_4}Partitions when $\binom{w}{ab} = 4$}
\end{figure}

\begin{figure}[!ht]  
\begin{center}
\begin{tikzpicture}[scale=0.4]
\begin{scope}
\draw [dashed] (0,0) grid (1,5) ;
\draw [red] (0,0) -- (1,0) ;
\draw [blue] (1,0) -- (1,5) ;
\draw [red] (1,5) -- (5,5) ;
\end{scope}
\end{tikzpicture}
~~~
\begin{tikzpicture}[scale=0.4] 
\begin{scope}
\draw [dashed] (0,1) grid (1,5) ;
\draw [dashed] (1,4) grid (2,5) ;
\draw [blue] (0,0) -- (0,1) ;
\draw [red] (0,1) -- (1,1) ;
\draw [blue] (1,1) -- (1,4) ;
\draw [red] (1,4) -- (2,4) ;
\draw [blue] (2,4) -- (2,5) ;
\draw [red] (2,5) -- (5,5) ;
\end{scope}
\end{tikzpicture}
~~~
\begin{tikzpicture}[scale=0.4] 
\begin{scope}
\draw [dashed] (0,2) grid (1,5) ;
\draw [dashed] (1,3) grid (2,5) ;
\draw [blue] (0,0) -- (0,2) ;
\draw [red] (0,2) -- (1,2) ;
\draw [blue] (1,2) -- (1,3) ;
\draw [red] (1,3) -- (2,3) ;
\draw [blue] (2,3) -- (2,5) ;
\draw [red] (2,5) -- (5,5) ;
\end{scope}
\end{tikzpicture}
~~~
\begin{tikzpicture}[scale=0.4] 
\begin{scope}
\draw [dashed] (0,2) grid (1,5) ;
\draw [dashed] (1,4) grid (3,5) ;
\draw [blue] (0,0) -- (0,2) ;
\draw [red] (0,2) -- (1,2) ;
\draw [blue] (1,2) -- (1,4) ;
\draw [red] (1,4) -- (3,4) ;
\draw [blue] (3,4) -- (3,5) ;
\draw [red] (3,5) -- (5,5) ;
\end{scope}
\end{tikzpicture}
~~~
\begin{tikzpicture}[scale=0.4] 
\begin{scope}
\draw (0,5.3) ;
\draw [dashed] (0,3) grid (2,5) ;
\draw [dashed] (2,4) grid (3,5) ;
\draw [blue] (0,0) -- (0,3) ;
\draw [red] (0,3) -- (2,3) ;
\draw [blue] (2,3) -- (2,4) ;
\draw [red] (2,4) -- (3,4) ;
\draw [blue] (3,4) -- (3,5) ;
\draw [red] (3,5) -- (5,5) ;
\end{scope}
\end{tikzpicture}
~~~
\begin{tikzpicture}[scale=0.4] 
\begin{scope}
\draw (0,5.3) ;
\draw [dashed] (0,3) grid (1,5) ;
\draw [dashed] (1,4) grid (4,5) ;
\draw [blue] (0,0) -- (0,3) ;
\draw [red] (0,3) -- (1,3) ;
\draw [blue] (1,3) -- (1,4) ;
\draw [red] (1,4) -- (4,4) ;
\draw [blue] (4,4) -- (4,5) ;
\draw [red] (4,5) -- (5,5) ;
\end{scope}
\end{tikzpicture}
~~~
\begin{tikzpicture}[scale=0.4] 
\begin{scope}
\draw (0,5.3) ;
\draw [dashed] (0,4) grid (5,5) ;
\draw [blue] (0,0) -- (0,4) ;
\draw [red] (0,4) -- (5,4) ;
\draw [blue] (5,4) -- (5,5) ;
\end{scope}
\end{tikzpicture}
\end{center}
\caption{\label{fig_card_5}Partitions when $\binom{w}{ab} = 5$}
\end{figure}

\end{document}